\newif\iftpdp\tpdpfalse
\title{Differentially Private Matchings}
\author{%
    \begin{tabular}{cc}	
        \begin{tabular}{c}
            Michael Dinitz\\
            Johns Hopkins University\\
            \texttt{mdinitz@cs.jhu.edu}
        \end{tabular}
        & 	
        \begin{tabular}{c}
            George Z. Li\\
            Carnegie Mellon University\\
            \texttt{gzli@andrew.cmu.edu}
        \end{tabular}
        \\
        \\
        \begin{tabular}{c}
            Quanquan C. Liu\\
            Yale University\\
            \texttt{quanquan.liu@yale.edu}
        \end{tabular}
         & 
         \begin{tabular}{c}
            Felix Zhou\\
            Yale University\\
            \texttt{felix.zhou@yale.edu}
        \end{tabular}
    \end{tabular}
}
\date{}
\definecolor{mydarkblue}{rgb}{0,0.08,0.45}
\newtheorem{theorem}{Theorem}[section]
\newtheorem{lemma}[theorem]{Lemma}
\newtheorem{corollary}[theorem]{Corollary}
\newtheorem{proposition}[theorem]{Proposition}
\newtheorem{remark}[theorem]{Remark}
\theoremstyle{definition}
\newtheorem{definition}[theorem]{Definition}
\newcommand{\eps}{\varepsilon}
\newcommand{\defn}[1]{\textbf{\emph{#1}}}
\newcommand{\E}{\mathbb{E}}
\newcommand{\lowerbound}{(1+\eps)^{g(v)}}
\newcommand{\prob}{\mathsf{Pr}}
\newcommand{\expect}{\mathbb{E}}
\newcommand{\adj}{\mathbf{a}}
\newcommand{\mech}{\mathcal{M}}
\newcommand{\alg}{\mathcal{A}}
\DeclarePairedDelimiter{\ceil}{\lceil}{\rceil}
\DeclareMathOperator{\poly}{poly}
\newcommand{\nodemult}{\left( 2+\eta \right)}
\newcommand{\contreleasemult}{2+\eta}
\newcommand{\contreleaseadd}{O\left(\frac{\log^2(n)}{\eta\eps}\right)}
\newcommand{\contreleaseapprox}{\left( \contreleasemult, \contreleaseadd \right)}
\newcommand{\nodeapprox}{\nodemult}
\newcommand{\contreleasenodeapprox}{\left( \contreleasemult, O\left( \frac{\log^2(n)}{\eta\eps} \right) \right)}
\newcommand{\lap}{\text{Lap}}
\newcommand{\maxmatching}{\nu}
\newcommand{\mcal}{\mathcal}
\newcommand*\diff{\mathop{}\!\mathrm{d}}
\newcommand{\sset}{\subseteq}
\let\set\relax
\DeclarePairedDelimiter{\set}{\lbrace}{\rbrace}
\DeclarePairedDelimiter{\card}{|}{|}
\DeclarePairedDelimiter{\floor}{\lfloor}{\rfloor}
\newcommand{\contractionSparsify}{\ensuremath{\textsc{ContractionSparsify}}\xspace}
\DeclareMathOperator{\extra}{extra}
\DeclareMathOperator{\last}{last}
\DeclareMathOperator{\SVT}{SVT}
\DeclareMathOperator{\solution}{solution}
\DeclareMathOperator{\estimate}{estimate}
\DeclareMathOperator{\Count}{count}
\DeclareMathOperator{\OPT}{OPT}
\newcommand{\EOC}{\ensuremath{\mathtt{EOC}}\xspace}
\newcommand{\Insert}{\ensuremath{\mathtt{insert}}\xspace}
\newcommand{\edgePrivateVC}{\ensuremath{\textsc{EdgePrivateVC}}\xspace}
\newcommand{\Z}{\mathbb{Z}}
\crefname{algocf}{alg.}{algs.}
\Crefname{algocf}{Algorithm}{Algorithms}
\newcommand{\mdnote}[1]{}
\newcommand{\mdnoteinline}[1]{}
\newcommand{\qq}[1]{}
\newcommand{\george}[1]{}
\newcommand{\mdnote}[1]{\mdcomment{#1}}
\newcommand{\mdnoteinline}[1]{\todo[inline, size=\normalsize, color=green!40]{Mike's Note: #1}}
\newcommand{\qq}[1]{\qqcomment{#1}}
\crefname{algocfline}{Line}{Lines}
\begin{document}

\begin{titlepage}
\maketitle

\begin{abstract}

Computing matchings in graphs is a foundational {algorithmic task}. Despite extensive interest in differentially private (DP) graph analysis, work on privately computing matching \emph{solutions}, rather than just their size, has been sparse. The sole prior work in the standard model of pure $\varepsilon$-differential privacy, by \citet*[STOC'14]{hsu2014private}, focused on allocations and was thus restricted to bipartite graphs. 
We present a comprehensive study of DP algorithms for maximum matching and $b$-matching in \emph{general} graphs, which also yields techniques that improve upon the bipartite setting. 
En route to solving these matching problems, we develop a set of novel techniques with broad applicability, including a new \emph{symmetry argument} for DP lower bounds, the first \emph{arboricity-based} sparsifiers for node-DP, and the novel \emph{Public Vertex Subset Mechanism}. 

We demonstrate the versatility of these tools by applying them to other DP problems, such as vertex cover~\cite[SODA'10]{GuptaLMRT10}, {and beyond DP, such as low-sensitivity algorithms~\cite[SODA'21,SICOMP'23]{varma2023average}}. 
Our contributions include:
    
\begin{itemize}[leftmargin=*]
\item The first utility improvements for $\varepsilon$-differentially private bipartite matching since \cite{hsu2014private}.

    \item A lower bound on the error for computing \emph{explicit} matching solutions using a novel \emph{symmetry argument}, demonstrating a fundamental barrier even when the algorithm is permitted to output non-edges.

    \item An algorithm for \emph{implicit} matching in the \emph{local} edge-DP (LEDP) model based on a novel \emph{Public Vertex Subset Mechanism}, achieving logarithmic round complexity and tight bicriteria approximation.
    
    \item Addressing the challenging node-DP setting by introducing the first \emph{arboricity-based graph sparsifiers} under differential privacy and applying them for implicit matchings and other problems.

    \item Implementing all of our algorithms in the \emph{continual release model} under both edge- and node-privacy.
\end{itemize}  

\end{abstract}

\thispagestyle{empty}
\end{titlepage}

\addtocontents{toc}{\protect\setcounter{tocdepth}{1}}

\pagenumbering{gobble}
\newpage
\setcounter{tocdepth}{2}
\tableofcontents
\newpage
\cleardoublepage
\newpage
\pagenumbering{arabic}

\section{Introduction}\sloppy
A central objective in the theory of differential privacy (DP) is characterizing the fundamental trade-offs between privacy and utility for high-dimensional, combinatorial structures. While the private estimation of scalar statistics, such as counts, means, and histograms, is well-understood, the private release of \emph{relational structures} (e.g., matchings, clusterings, or cuts) remains a formidable challenge. In these settings, the solution itself is a set of sensitive edges, creating an inherent incompatibility between the outputs in the solution space and the definition of differential privacy.

Computing maximum matchings is a canonical problem in this domain. Beyond its classical role in combinatorial optimization, matching is a foundational primitive for market design (allocations)~\citep{roth2004pairwise,mehta2013online}, causal inference (matching estimators)~\citep{rosenbaum1983central,abadie2006large}, and collaborative filtering~\citep{mehta2013online,BMW22,wang2018privacy,yi2016practical,huang2007loopy,su2009survey}. In these modern learning and inference tasks, the underlying interaction graph contains highly sensitive information, establishing the need for differential privacy.
Despite a surge of interest in DP graph analysis~\citep{peng2025synthetic,AHS21,NRS07,BGM22,DLRSSY22,kalemaj2023node,LUZ24,GuptaLMRT10,DMN23,ELRS22,kasiviswanathan2013analyzing,mueller2022sok,RS16,raskhodnikova2016differentially,Upadhyay13}, algorithms that privately recover the \emph{combinatorial structure} itself, rather than just its scalar size, remain underexplored. An exception is the foundational work on differentially private non-scalar matching representations, by \citet*[STOC'14]{hsu2014private} on bipartite allocation problems.
As in the non-private setting, general matching is significantly harder and requires a new set of techniques.

There are many settings in which we might want a private matching (or $b$-matching\footnote{A $b$-matching is a matching where every node can be matched with at most $b$ other nodes.}) in a general graph. For example, in online dating, an individual is often provided with a shortlist of top candidates (resulting in a $b$-matching).
The matching should be kept private, but individuals should know who they are matched with. More broadly, matchings are key components in learning tasks such as causal inference (e.g., propensity score matching) and collaborative filtering, which require recovering sparse, high-utility structures from sensitive data~\citep{jebara2009bmatching,kusner2016private,mcsherry2009differentially}.

This paper initiates a comprehensive study of differentially private algorithms for maximum matching and $b$-matching in \emph{general} graphs. We provide a complete characterization of the problem, establishing information-theoretic lower bounds that rule out standard ``synthetic data'' approaches and developing new mechanisms that achieve optimal asymptotic utility. Along the way, we contribute broadly applicable tools to the DP graph theorist's toolbox: a \emph{symmetry argument} for proving structural lower bounds, the \emph{Public Vertex Subset Mechanism} (PVSM) for locally private distributed coordination, and the first \emph{arboricity-based} sparsifiers for node-DP.

\subsection{The Barrier: Hardness of Explicit Solutions and Synthetic Data}
A maximum matching is fundamentally a subset of edges. A natural goal in private learning is to output an \emph{explicit} solution, or more generally, a \emph{synthetic graph} $H$, that approximates the matching structure of the sensitive input $G$. Since releasing edges from $G$ directly violates privacy, such an algorithm must be permitted to output ``non-edges'' (pairs not in $E$), effectively generating a 0/1-weighted subgraph of the complete graph $K_n$.

We prove that this approach faces a fundamental information-theoretic barrier. Using a novel \emph{symmetry argument}, we show that it is impossible to generate a sparse explicit graph that preserves matching utility under edge-DP. This result serves as a strong negative result for \emph{differentially private synthetic data generation}: any differentially private synthetic graph that accurately preserves matching statistics must be essentially dense.
Below, a $(\gamma,\beta)$-approximation outputs a matching $H$ whose total weight is at least $(\OPT/\gamma)-\beta$.

\begin{restatable}[Simplified Lower Bound for Explicit Solutions; See \Cref{thm:lower-bound}]{theorem}{simplelowerbound}\label{thm:simplified-lower-bound}
Let $\mathcal A$ satisfy $(\varepsilon,\delta)$-edge DP, where $\eps = O(1)$ and $\delta = \poly(\nicefrac{1}{n})$,
and always output a graph $H$ of maximum degree at most $b$.
If $\mathcal A$ returns a $(\gamma,\beta)$-approximation (even in expectation) to the maximum matching on $0/1$-weighted inputs, then
    $\beta = \Omega(n/\gamma - b)$ or \smash{$\gamma = \Omega\left(\frac{n}{b + \beta}\right)$}.
\end{restatable}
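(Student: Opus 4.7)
The plan is to prove the lower bound through a symmetry-plus-edge-DP argument, organised around a carefully-chosen matching instance. My starting move is to symmetrise the mechanism: replace $\mathcal A$ by $\tilde{\mathcal A}(G) := \pi^{-1}(\mathcal A(\pi(G)))$ for $\pi$ a uniformly random vertex permutation, drawn independently of the internal randomness of $\mathcal A$. The symmetrised algorithm inherits $(\eps,\delta)$-edge-DP and the same expected utility on every input but is now equivariant under vertex permutations, so the marginal probability $p_e(G) := \Pr[e \in \mathcal A(G)]$ depends only on the orbit of $e$ under $\mathrm{Aut}(G)$.

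With this in place, fix an arbitrary matching $M$ of size $m \leq n/4$ and consider the neighbouring inputs $G_M$ and $G_M \setminus e$ for an arbitrary $e \in M$. By equivariance and the transitive action of $\mathrm{Aut}(G_M)$ on $M$, every edge of $M$ has a common probability $p_{\mathrm{in}}$ in $\mathcal A(G_M)$; likewise, in $G_M \setminus e$ all ``free-free'' edges (both endpoints not covered by the residual matching $M \setminus e$) share a common probability $p_{\mathrm{ff}}$, and in particular $p_e(G_M \setminus e) = p_{\mathrm{ff}}$ because $e$ itself is a free-free edge of $G_M \setminus e$. Summing the max-degree budget at a free vertex of $G_M \setminus e$ over its incident edges gives
\[
(n - 2m + 1)\, p_{\mathrm{ff}} \;+\; 2(m-1)\, p_{\mathrm{fm}} \;\leq\; b,
\]
and hence $p_{\mathrm{ff}} \leq b/(n - 2m + 1) = O(b/n)$ whenever $m \leq n/4$.

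Edge-DP applied to the single-edge difference $G_M \leftrightarrow G_M \setminus e$ then upgrades this into a bound on the in-matching probability: $p_{\mathrm{in}} \leq e^\eps p_{\mathrm{ff}} + \delta = O(b/n) + \delta$. Meanwhile the $(\gamma,\beta)$-approximation guarantee on $G_M$ (where $\OPT = m$) forces $m \cdot p_{\mathrm{in}} = \E[|\mathcal A(G_M) \cap M|] \geq m/\gamma - \beta$, i.e.\ $p_{\mathrm{in}} \geq 1/\gamma - \beta/m$. Combining the two inequalities and taking $m = \lfloor n/4 \rfloor$ together with the hypothesis $\delta = \poly(1/n)$ gives $1/\gamma - 4\beta/n \leq O(b/n)$; the two stated cases are then read off by rearranging in the appropriate regime of $b$, with the $b=1$ case becoming the consequence that $\gamma$ and $\beta$ cannot both be $o(n)$.

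The main obstacle is the degeneration of the free-vertex count as $m \to n/2$: for a perfect matching only a single free-free edge remains in $G_M \setminus e$, so the symmetry-plus-degree argument no longer yields a non-trivial upper bound on $p_{\mathrm{ff}}$. Restricting to $m \leq n/4$ sidesteps this cleanly but at a quantitative cost, and pushing through the tightest additive form of the trade-off across all regimes of $\beta$ is where additional care is required --- e.g.\ by iterating the single-edge comparison simultaneously across several matched pairs, or by choosing $m$ adaptively to $\beta$ and $b$.
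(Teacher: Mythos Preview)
Your argument is correct and reaches the same trade-off inequality as the paper, but via a genuinely different hard instance. Both proofs open with the symmetrisation $\tilde{\mathcal A}(G) = \pi^{-1}\mathcal A(\pi G)$ and exploit orbit-invariance of the inclusion probabilities; the divergence is in the choice of neighbouring pair. You take a \emph{partial} matching of size $m = n/4$ and delete a single edge (edge-distance $1$), relying on the $\Theta(n)$ free vertices of $G_M\setminus e$ so that the degree budget at a free vertex forces $p_{\mathrm{ff}} = O(b/n)$. The paper instead stays inside the class $\mathcal M$ of \emph{perfect} matchings: given $uv, ab \in E(G)$ it passes to $G' = G - \{uv,ab\} + \{ub,av\}$, another perfect matching at edge-distance $4$. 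Now $uv$ is a non-edge of $G'$, and since every vertex of a perfect matching has exactly $n-2$ incident non-edges, symmetry plus the degree budget give $\E[X_{G',uv}] \le b/(n-2)$ directly---no free-vertex bookkeeping and no secondary orbit $p_{\mathrm{fm}}$ to carry around. Your route buys a cleaner DP step ($e^\eps$ rather than $e^{4\eps}$) at the cost of a weaker leading constant ($n/4$ versus $n/2$); the paper's route keeps the argument inside a single isomorphism class, which also lets it transfer verbatim to the fractional synthetic-graph relaxation (\Cref{thm:lower-bound-fractional}).

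Your final paragraph correctly diagnoses why single-edge deletion from a \emph{perfect} matching degenerates. The paper's resolution is exactly the move you hint at---comparing across several matched pairs at once---realised concretely as a two-for-two edge swap that keeps $G'$ perfect, so that \emph{every} vertex still has $\Theta(n)$ non-edges.

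One small imprecision: symmetrisation does not preserve expected utility ``on every input''; it preserves the worst-case expected utility across each isomorphism class (the utility on $G$ becomes the average of $\mathcal A$'s utility over all $\pi(G)$). Since the $(\gamma,\beta)$ guarantee is assumed for all inputs, this average still meets the bound, so the conclusion is unaffected---but the sentence should be reworded.
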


\Cref{thm:simplified-lower-bound} demonstrates a stark separation: to achieve high utility (i.e., include many 1-weight edges), the output $H$ must have a large degree $b$. If $b$ is small (e.g., $b=1$ for a true matching), the error is necessarily $\Omega(n)$. Thus, explicit solutions (and by extension, explicit sparse synthetic graph representations) are information-theoretically not viable for private matching.

\subsection{Our Approach: Implicit Solutions in the Billboard Model}
The impossibility of explicit solutions motivates a shift from explicit data release to \emph{implicit solutions}. We adopt the \emph{billboard model}~\citep{hsu2014private}, a framework where the algorithm posts a differentially private transcript (the ``billboard'') publicly. While this public data is insufficient to reconstruct the graph, it allows each node (or ``agent'') to locally {decode} its own matched edges by combining the public signal with its private adjacency list.

This model is rigorously motivated by distributed coordination and local decoding. In settings like ad auctions or recommender systems, the goal is not to publish the global topology, but to inform individual agents of their assignments. We formally define this implicit solution concept (\Cref{def:implicit}) and show that, unlike the explicit case, high-utility differentially private matching is possible here.

\begin{theorem}[Simplified Implicit $b$-Matching; See \Cref{thm:billboard-bprime-main}]\label{thm:billboard-b-matching-simplified}
    Let $\eps, \eta \in (0, 1)$
    and \mbox{\smash{$b= \Omega\left(\frac{\log(n)}{\eta^3 \eps}\right)$}}.
    There is an $\eps$-DP algorithm that outputs an implicit $(2+\eta)$-approximate $b$-matching
    in the billboard model.
\end{theorem}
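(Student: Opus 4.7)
The plan is to implement a greedy-style $b$-matching in $T = O(\log(n)/\eta)$ phases, where in phase $t$ the billboard releases a public vertex subset $V_t \subseteq V$ produced by the Public Vertex Subset Mechanism applied to the currently ``active'' (not-yet-saturated) vertices. An edge $(u,v) \in E$ is decoded into the implicit matching if and only if there is a phase $t$ with $u,v \in V_t$ and $(u,v)$ is among the first $q := O(\log(n)/(\eta^2 \eps))$ such edges at both endpoints in a canonical order derived from the billboard. Every vertex $v$ decodes its matched neighbors by scanning its own adjacency list against the publicly released $V_t$'s, producing a solution that depends only on the billboard and $v$'s private neighbors, as required by the implicit model of~\Cref{def:implicit}. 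Because the rule for selecting $(u,v)$ uses only the billboard and the two endpoints' private edges, $u$ and $v$ agree on whether the edge is matched, so the decoding is consistent.

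For privacy, each invocation of PVSM is run with budget $\eps/T$; basic sequential composition over the $T$ phases then yields $\eps$-edge DP for the full billboard transcript $(V_1, \dots, V_T)$. Since each vertex's decoded matching is a post-processing of the public billboard combined with that vertex's own private data, the DP guarantee of the billboard transfers to the implicit solution as a whole.

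For utility, I would bound the size of the decoded matching by a standard greedy charging argument: every edge $(u,v)\in\opt$ is either itself selected in some phase, or an endpoint either saturates its per-phase quota $q$ or is removed from the active set first; in the latter case, charge $(u,v)$ to the matched edges at that endpoint, which gives the factor of $2$. The extra $\eta$ slack absorbs two error sources: (i) PVSM classification errors, which with noise magnitude $O(T\log(n)/\eps) \ll q$ affect only a negligible fraction of vertices per phase; and (ii) the tail mass of vertices still active after phase $T$, whose contribution to $\opt$ is at most $\eta \cdot \opt$ since each phase is tuned to shrink the remaining active mass by a factor of $1 - \Theta(\eta)$.

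The $b$-matching constraint follows by a per-vertex degree bound: each vertex is matched at most $q$ times per phase and remains active in only $O(1/\eta)$ phases in expectation, so a Chernoff concentration gives a total matched degree of $O(q/\eta) = O(\log(n)/(\eta^3 \eps))$ with high probability. The main technical obstacle I anticipate is simultaneously calibrating $q$, the PVSM noise, and the per-phase shrinkage so that (i) $q$ dominates the PVSM noise, making the charging argument close with multiplicative ratio $(2+\eta)$; (ii) the shrinkage per phase is at least $1 - \Theta(\eta)$; and (iii) the total matched degree stays within $b$ with high probability. These three requirements together are what force the three $\eta^{-1}$ factors in the stated bound, and getting a consistent setting of parameters that satisfies all of them under $\eps/T$ per-phase privacy is the delicate part.
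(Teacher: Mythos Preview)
Your proposal misreads what the Public Vertex Subset Mechanism actually is, and this leads to a concrete gap in the decoding step. In the paper, PVSM does not release a single global vertex subset $V_t$. Instead, for every \emph{pair} $\{u,v\}$ and every level $r\in\{0,\dots,\lceil\log_{1+\eta}n\rceil\}$ the curator flips an independent public coin with probability $(1+\eta)^{-r}$, and each vertex $v$ privately selects and releases only a single level index $r_v$. A neighbor $u$ then checks whether it is in $v$'s selected subset by looking up the public coin for the pair $\{u,v\}$ at level $r_v$ together with its own private bit ``$\{u,v\}\in E$''. This per-pair structure is precisely what makes the decoding local and symmetric. Your rule---$(u,v)$ is matched only if it is among the first $q$ edges at \emph{both} endpoints---is not locally decodable: whether $u$ lies in $v$'s first-$q$ list depends on $v$'s entire adjacency list, which $u$ does not see. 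If you fall back to the ``either endpoint'' rule (which is what \Cref{def:implicit} actually uses), you lose the degree bound, since a vertex can be named in arbitrarily many other vertices' first-$q$ lists. The paper closes this gap by capping a vertex's \emph{outgoing} matches via the index $r_v$ and separately capping its \emph{incoming} matches via a Multidimensional AboveThreshold instance that privately announces when each vertex stops accepting; both pieces are essential and neither appears in your sketch.

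There is also a parameter inconsistency. With $T=O(\log(n)/\eta)$ phases and basic composition, the per-phase noise scale is $\Theta(T\log(n)/\eps)=\Theta(\log^2(n)/(\eta\eps))$, which is \emph{not} $\ll q$ for your stated $q=O(\log(n)/(\eta^2\eps))$ unless $\eta<1/\log n$. The paper avoids this extra $\log n$ factor by an entirely different accounting: the geometric coin probabilities $(1+\eta)^{-r}$ give subsampling amplification, so the privacy cost summed over all levels $r$ is a geometric series contributing only $O(1/\eta)$, not $O(\log_{1+\eta}n)$. Combined with the sequential (not phased) processing and a single MAT instance across all $n$ iterations, this is what yields $b=O(\log(n)/(\eta^3\eps))$ rather than $O(\log^2(n)/(\eta^3\eps))$.
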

We complement our algorithm with a tight lower bound (\Cref{thm:lower-bound-implicit}), showing that any implicit $\eps$-DP matching \emph{must have} degree {$\Omega\left(\frac{\log(n)}{\eta^3 \eps}\right)$}. This confirms that our dependency on $b$ is optimal.

\subsection{Technical Contributions}
Our results are driven by three primary technical contributions with broad applications in DP theory.

\paragraph{Symmetry Argument for Graph DP Lower Bounds.}
Our lower bound is driven by a novel \emph{symmetry argument} which is based on the fact that private mechanisms behave identically on permuted vertex inputs. The core idea is to analyze a ``symmetrized'' algorithm $\mathcal{A}'$ that applies a random permutation to the input. We show that for hard instances (perfect matchings), symmetrization forces the algorithm to assign uniform expected weight to non-edges. The DP constraints then bound the weight on true edges by roughly $e^{O(\eps)}$ times the weight assigned to non-edges. Summing this loss over $\Theta(n)$ edges yields the lower bound barrier stated in \Cref{thm:simplified-lower-bound}. This argument cleanly isolates combinatorial structure from privacy constraints and provides a template for proving lower bounds on other graph properties.

\paragraph{Public Vertex Subset Mechanism (PVSM).}
To achieve our upper bounds, we introduce the \emph{Public Vertex Subset Mechanism} (PVSM). PVSM is a general DP algorithmic primitive for privately selecting subsets of neighbors. A ``proposer'' node $v$ privately selects a parameter $r^*$, by balancing utility against degree constraints, and releases only $r^*$ to the billboard. A ``receiver'' $w$ decodes its status using $r^*$, its private edge to $v$, and a sequence of \emph{publicly shared randomness} (random subgraphs). This mechanism circumvents the explicit lower bound by shifting the ``edge revelation'' to a local decoding step, enabling efficient implementations in the \emph{local edge DP (LEDP)} model~\citep{qin17generating, IMC21locally,IMC21communication,DLRSSY22,ELRS22,MPSL25} with logarithmic rounds.

\paragraph{Arboricity-Based Node-DP Sparsifiers.}
We further extend our results to \emph{node-DP}, where worst-case sensitivity is $O(n)$ for many problems. We design the first \emph{arboricity-based} sparsifiers tailored to matching. By leveraging the structural property that bounded-arboricity graphs admit low out-degree orientations, we reduce vertex degrees to $\widetilde{O}(\alpha)$ (where $\alpha$ is the arboricity) while maintaining \emph{stability}: the set of edges in the sparsifier changes by at most $\widetilde{O}(\alpha)$ between neighboring graphs. This reduces the sensitivity from $O(n)$ to $\widetilde{O}(\alpha)$, allowing us to apply our edge-DP mechanisms.

\begin{restatable}[Simplified Node-DP Matching; See \Cref{thm:node-dp-b-matching}]{theorem}{nodeDPBmatching}\label{thm:simple-node-dp-b-matching}
    Let $\alpha$ denote the arboricity of the input graph, $\eps, \eta \in (0, 1)$,
    and \mbox{\smash{$b=O\left(\frac{\alpha\log n}{\eta\varepsilon}+\frac{\log^2 n}{\eta\varepsilon^2}\right)$}}.
    There is an $\eps$-node-DP algorithm that outputs an implicit $b$-matching, that is, a $(2+\eta)$-approximate matching in the billboard model.
\end{restatable}

Prior work on node-DP sparsification has focused exclusively on bounded degree graphs and near-bounded-degree graphs~\citep{BBS13,Day16,kasiviswanathan2013analyzing,wagaman2024time, CZ13}. Our work expands this to the much broader class of bounded-arboricity graphs, demonstrating how structural parameters can be exploited to bypass worst-case privacy barriers. We demonstrate the independent use of our novel arboricity-based sparsifier by applying it to develop a new node-DP algorithm for vertex cover
{as well as a new low-sensitivity matching algorithm~\citep{varma2023average,yoshida2021sensitivity,yoshida2026low}}.
{We believe arboricity sparsification is of independent interest beyond DP matching.}

\subsection{Summary of Results}\label{sec:contributions}
We present a complete set of results across edge- and node-privacy in the central, local (LEDP), and continual release models.
{These results and their respective technical overviews are listed below.}
See \Cref{table:results} for a more quantitative summary.

\begin{itemize}[leftmargin=*]
    \item \textbf{(\cref{tech:lower-bound}) Explicit solutions are provably hard.} We prove that explicitly outputting matchings incur $\Omega(n)$ error under edge-DP. This implies strong limitations for private synthetic graph generation.

    \item \textbf{(\cref{tech:pvsm}) Implicit solutions via PVSM.} We introduce the PVSM to output implicit matchings in the billboard model, achieving tight (bicriteria) upper and lower bounds on $b$ for general graphs.

    \item \textbf{(\cref{tech:pvsm}) Local edge-DP (LEDP) in logarithmic rounds.} We implement our implicit algorithm in the $\eps$-LEDP model. We give a distributed algorithm completing in $O(\log n)$ rounds using a proposer/receiver mechanism, while maintaining polylogarithmic dependence on $b$.

    \item \textbf{(\cref{sec:intro-node-dp}) Node-DP via arboricity sparsification.} We design the first arboricity-based DP sparsifiers to obtain implicit node-DP matchings. We also demonstrate their versatility by applying them to node-DP vertex cover and proving an impossibility result for their public release.

    \item \textbf{(\cref{tech:hsu-improvement}) Improving \cite{hsu2014private}.} We improve the bipartite node-DP guarantees of \cite{hsu2014private}. We maintain a $(1+\eta)$-approximate (one-sided) $s$-matching while reducing the supply requirement from $s = O(\log^{3.5}(n)/\eps)$ to $s = O(\log(n)/\eps)$.

    \item \textbf{(\cref{tech:continual-release}) Continual Release.} All our algorithms extend to the continual release model~\citep{CSS11,DNPR10}, maintaining accuracy at every update step under both edge- and node-privacy.
\end{itemize}

\subsection{Related Work}\label{sec:related-work}
\paragraph{Private Allocations.}
The closest prior work is due to \citet{hsu2014private}, who give a node-DP algorithm for approximate maximum matching in bipartite graphs. They model the problem as an auction, using price updates to reach a Walrasian equilibrium. Assuming a supply $s=\Omega(\log^{3.5}(n)/\eps)$, they achieve a $(1+\eta)$-approximation.
Unfortunately, their techniques are inherently tied to bipartite graphs. Reducing general matching to bipartite matching incurs a factor of 2 loss in approximation and holds only in expectation. In contrast, our PVSM-based approach handles general graphs directly, obtaining $(2+\eta)$-approximations with high probability and strictly improving the supply bound $s$ to $O(\log(n)/\eps)$.

\citet{hsu2016jointly} study fractional allocations under approximate $(\eps, \delta)$-DP, bounding the \emph{sum} of constraint violations.  They output an $s'$-matching with $s' > s$, but
unlike our results,
they compute a \emph{fractional} allocation and guarantee that the \emph{sum} of constraint violations ($\sum_j s_j' - s_j$) is bounded,
whereas our algorithm returns an integral solution while ensuring \emph{each} constraint violation is bounded ($\max_j s_j' - s_j$). Furthermore,
our results are given for pure $\eps$-DP (and $\eps$-LEDP). 
Thus, neither the privacy nor the utility guarantees are comparable to ours. \citet{kannan2015pareto} study private allocations in a relaxed notion of 
differential privacy called marginal DP,
where the goal is computing approximate pareto optimal allocations.
Neither the privacy models nor the utility guarantees of these works are directly comparable to our pure $\eps$-DP integral solutions.

\paragraph{Private Matching Size.}
Estimating the \emph{scalar size} of the maximum matching has been studied in continual release~\citep{dong2024joins,FHO21,wagaman2024time,raskhodnikova2024fully,epasto2024sublinear} and sublinear models~\citep{BGM22}. In the central model, estimating the size with $O(\log(n)/\eps)$ error is trivial via the Laplace mechanism due to sensitivity 1; our work focuses on the harder problem of outputting the matching structure.

\paragraph{Implicit Solutions.}
Implicit solutions were introduced by \citet{GuptaLMRT10} for combinatorial problems like set cover. An implicit solution is a published data structure from which agents can locally decode their own part of the solution. Our billboard model formalizes this for graph problems where decoding requires the agent's private adjacency list.

\paragraph{Low-Sensitivity Matching.}
\citet{varma2023average} and \citet{yoshida2026low} analyze the sensitivity of matching algorithms using the 1-Wasserstein distance. The state-of-the-art sensitivity for bounded degree graphs is $\Delta^{O(1)}$. Our stable arboricity sparsifier (\Cref{sec:bounded degree sparsifiers}) maps graphs of arboricity $\alpha$ to graphs of maximum degree $\Delta = O(\alpha)$. Running the algorithm of \citet{yoshida2026low} on our sparsified graph yields a matching algorithm with sensitivity $\alpha^{O(1)}$, significantly improving upon the worst-case $\poly(n)$ sensitivity (that is given by their state-of-the-art result) for classes like planar graphs.

\subsection{Organization}
We organize the rest of the paper as follows.
\begin{itemize}[leftmargin=*]
    \item \Cref{sec:prelims} establishes the necessary preliminaries.
    \item \Cref{sec:technical-overview} provides an extensive technical overview as well as the formal statements of our results.
    \item \Cref{sec:lower-bound} presents the details of our symmetry-based lower bounds for explicit solutions.
    \item \Cref{sec:sequential-b-matching} is dedicated to our $\eps$-LEDP implicit matching algorithm.
    \item \Cref{sec:fast} presents a more efficient $O(\log n)$-round $\eps$-LEDP implicit matching algorithm.
    \item \Cref{sec:node-dp} lifts our edge-DP algorithms to the more challenging node-DP setting via arboricity sparsification. 
    \item \Cref{sec:node-bipartite} improves on the node-DP implicit bipartite matching algorithm of \citet{hsu2014private}.
    \item \Cref{sec:continual-release} extends all of our results to the continual release setting.
\end{itemize}

\section{Preliminaries}\label{sec:prelims}
Here we introduce the essential preliminaries and defer the rest to \Cref{apx:prelims-additional}.

We begin with the basic definitions of differential privacy for graphs.  Two graphs $G$ and $G'$ are said to be \emph{edge-neighboring} if they differ in one edge,
{i.e., there are vertices $u, v\in V(G)=V(G')$ such that $\set{uv} = (E(G)\setminus E(G'))\cup (E(G')\setminus E(G))$}.  
They are said to be \emph{node-neighboring} if they differ in {all edges incident to a particular node}. Differentially private algorithms
on edge-neighboring and node-neighboring graphs satisfy edge-privacy~\cite{NRS07} and node-privacy~\cite{BBS13,CZ13,kasiviswanathan2013analyzing}, respectively.

\begin{definition}[Graph Differential Privacy; \cite{NRS07}]
Algorithm $\mathcal A(G)$ that takes as input a graph $G$ and outputs an object in $\text{Range}(\mathcal A)$ is \emph{$(\eps, \delta)$-edge (-node) differentially private} ($(\eps, \delta)$-edge (-node) DP) if for all $S \subseteq \text{Range}(\mathcal A)$ and all edge- (node-)neighboring graphs $G$ and $G'$, 
$
    \Pr[\mathcal A(G) \in S] \leq e^{\eps} \cdot \Pr[\mathcal A(G') \in S] + \delta
$
If $\delta = 0$ in the above,
then we drop it and simply refer to $\eps$-edge (-node) differential privacy. 
\end{definition}
In this paper, we differentiate between the representation model which we will use to release solutions and the privacy model. The privacy 
model we use for our static algorithms is the local edge differential privacy (LEDP) model defined in~\cite{DLRSSY22}. We give the transcript-based definition, defined on $\eps$-local
randomizers, verbatim, below.

\begin{definition}[Local Randomizer (LR); \cite{DLRSSY22,kasiviswanathan2011can}]\label{def:local-randomizer}
    An \defn{$\eps$-local randomizer} $R: \adj \rightarrow \mathcal{Y}$ for node $v$ is an $\eps$-edge DP 
    algorithm that takes as input the set of its neighbors, $N(v)$, represented by
    an adjacency list $\adj = (b_1, \dots, b_{|N(v)|})$. In other words, $$\frac{1}{e^{\eps}} \leq \frac{\prob\left[R(\adj') \in Y\right]}{\prob\left[R(\adj) \in Y\right]} \leq e^{\eps} $$ for all %
    $\adj$ and $\adj'$  where the symmetric difference
    is $1$ and all sets of outputs $Y \subseteq \mathcal{Y}$. The probability is taken over the
    random coins of $R$ (but \emph{not} over the choice of the input). 
\end{definition}

\begin{definition}[Local Edge Differential Privacy (LEDP); \cite{DLRSSY22,JMN019}]\label{def:ledp}
A \defn{transcript} $\pi$ is a vector consisting of 5-tuples $(S^t_U, S^t_R, S^t_\eps, S^t_\delta, S^t_Y)$ -- encoding the set of parties chosen, set of local
randomizers assigned, set of randomizer privacy parameters, and set of 
randomized outputs produced -- for each round $t$. Let $S_\pi$ be the collection of all transcripts 
and $S_R$ be the collection of all randomizers. Let $\EOC$ denote a special character indicating the end of computation.
A \defn{protocol} is an algorithm \mbox{$\alg: S_\pi \to (2^{[n]} \times 2^{S_R} \times 2^{\mathbb{R}^{\geq 0}} \times 2^{\mathbb{R}^{\geq 0}})\; \cup \{\EOC\}$}
mapping transcripts to sets of parties, randomizers, and randomizer privacy parameters. The length of the transcript, as indexed by $t$, is its round complexity.

Given $\eps\geq 0$,  a randomized protocol $\alg$ on (distributed) graph $G$ is \defn{$\eps$-locally edge differentially private ($\eps$-LEDP)} if the algorithm that outputs the entire transcript generated by $\alg$ is $\eps$-edge differentially private on graph $G.$
 If $t=1$, that is, if there is only one round, then $\alg$ is called \defn{non-interactive}. Otherwise, $\alg$ is called \defn{interactive}.
\end{definition}

Now, we decouple the \emph{billboard model} definition given in~\cite{hsu2014private} from the privacy model. Hence, the billboard 
model only acts as a solution release model. The billboard model takes as input a private graph
and produces an \emph{implicit solution} consisting of a public billboard. Then, each node can determine its own part of the solution using the public billboard and 
the private information about its adjacent neighbors. In this paper, our algorithms for producing the public billboard will be $\eps$-LEDP.

\begin{definition}[Billboard Model; \cite{hsu2014private}]\label{def:billboard}
    Given an input graph $G = (V, E)$, algorithms in the \defn{billboard model} produces a public billboard {$B$}. Then, each node $v \in V$ in the graph deduces the 
    portion of the explicit solution that $v$ participates in
    {as a function of $B$ and its private neighborhood $N(v)$}. 
\end{definition}

In particular, one can easily show that if every node processes the information contained in the public billboard using a deterministic algorithm, then {explicitly revealing the} solution contained at every node will be $\eps$-edge DP (with respect to edge-neighboring graphs)
\emph{except for the two nodes adjacent to the edge that differs 
between neighboring graphs}. Morally speaking, such a set of explicit solutions does not leak any additional private information 
since the nodes that are endpoints to the edge that differs already know that this edge exists. 
Hence, we decouple the privacy definition from the 
solution release model and say an algorithm is $\eps$-LEDP in the billboard model if the public billboard is produced via 
a $\eps$-LEDP algorithm and the explicit solutions obtained by every node is via a deterministic algorithm at each node.

\begin{lemma}\label{lem:deterministic-billboard}
    Given a public billboard produced from the billboard model (\cref{def:billboard}), if each node produces their individual explicit solution using 
    a (predetermined) deterministic algorithm then the produced explicit solutions are $\eps$-edge DP except for the solutions produced by the endpoints
    of the edge that differs between edge-neighboring graphs.
\end{lemma}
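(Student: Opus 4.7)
The plan is to invoke the post-processing property of differential privacy applied to the billboard $B$. By hypothesis, the mechanism producing $B$ from the input graph is $\eps$-LEDP and hence, viewed centrally, $\eps$-edge DP: for any two edge-neighboring graphs $G \sim G'$ differing in a single edge $\{u, w\}$, the distributions of $B(G)$ and $B(G')$ satisfy the $\eps$-DP inequality. Each node $v$ computes its explicit solution as $s_v = f_v(B, N(v))$, where $f_v$ is a fixed deterministic function and $N(v)$ is $v$'s private adjacency list.

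The key observation is that for every $v \notin \{u, w\}$, the adjacency list $N(v)$ is identical in $G$ and $G'$, since flipping the single edge $\{u, w\}$ only alters the neighborhoods of $u$ and $w$. Thus, fixing $v \notin \{u, w\}$, the map $B \mapsto f_v(B, N(v))$ is a deterministic function of $B$ alone (treating the unchanged $N(v)$ as a hardcoded constant). Consequently, the joint tuple $(s_v)_{v \notin \{u, w\}}$ is a deterministic post-processing of $B$ using the fixed data $(N(v))_{v \notin \{u, w\}}$, which is identical under $G$ and $G'$.

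Because differential privacy is closed under post-processing, the distribution of the joint tuple $(s_v)_{v \notin \{u, w\}}$ inherits the $\eps$-edge DP guarantee from $B$. Formally, for any event $S$ on the range of this tuple, one writes $S$ as the preimage of a measurable set under the deterministic post-processing map, applies the $\eps$-DP inequality to $B$, and concludes the same bound on the joint tuple of explicit solutions. This yields the claim.

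The only subtlety, and essentially the sole thing one must be careful about, is verifying that the side information used in the decoding step does not itself differ across neighboring graphs; this is exactly why the statement excludes the two endpoints $u$ and $w$, whose adjacency lists do change between $G$ and $G'$, so that $f_u(B, N_G(u))$ and $f_u(B, N_{G'}(u))$ need not have comparable distributions. Thus the proof is a one-line application of post-processing, and the main point of the lemma is simply to carve out the endpoint exception cleanly.
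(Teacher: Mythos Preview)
Your proof is correct and takes essentially the same approach as the paper: both hinge on the observation that for any non-endpoint node $v$, the adjacency list $N(v)$ is unchanged between $G$ and $G'$, so the decoding is a deterministic post-processing of the $\eps$-DP billboard. Your write-up is in fact more explicit than the paper's, which compresses the argument to a single sentence noting that identical adjacency lists plus a deterministic decoder yield identical outputs given the same billboard.
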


\begin{proof}
    Given identical adjacency lists and an identical billboard, a deterministic algorithm will output identical solutions. Hence, by the definition of 
    $\eps$-edge DP, every node will produce identical explicit solutions except for the endpoints of the edge that differs between edge-neighboring graphs.
\end{proof}

We now define the implicit solutions that our algorithms will post to the billboard.  Note that by the definition of $\eps$-LEDP, the entire transcript is public and so, without loss of generality, is posted to the billboard.  But the particular information that each node will use to produce its explicit solution is the following. 

\begin{definition}[Implicit Solution] \label{def:implicit}
    Given a graph $G = (V, E)$, an \emph{implicit solution} is a collection $\mathcal S = \{S_v\}_{v \in V}$ where each $S_v \subseteq V$.  An implicit solution $\mathcal S$ defines a {graph} $H(\mathcal S) = (V, E')$ where $E' = \{\{u,v \} : u \in S_v \wedge v \in S_u\}$.  The \emph{degree} of an implicit solution $\mathcal S$ is the maximum degree in the {\emph{implicit graph}} $(V, E \cap E(H(\mathcal S)))$, i.e., the maximum degree in the graph which is the intersection of $G$ and $H(\mathcal S)$.
\end{definition}

In other words, for each vertex $v$ we have a subset of nodes $S_v$.  Think of $S_v$ as ``potential matches'' for $v$.  Then, $H(\mathcal S)$ is the graph obtained by adding an edge if \emph{either} endpoint contains the other as a potential match.  We obtain a third graph by intersecting the implicit graph with the true graph, and the maximum degree in this graph is the degree of the solution.  From a billboard/LEDP perspective, this means that any node $v$ can locally decode an explicit 
solution from the implicit solution since it can perform the intersection of $H(\mathcal S)$ with its own neighborhood.

Now, we define an additional data input model which we use in our proof of our lower bound. 

\begin{definition}[$0/1$-Weight Model; \cite{hsu2014private}]\label{def:0-1-weight}
    Given a complete graph $G = (V, E)$, each edge in the graph is given a binary weight of $0$ or $1$. 
    Edge-neighbors are two graphs $G$ and $G'$ where the weight of exactly one edge differs. 
    Node-neighbors are two graphs where the weights of all edges adjacent to exactly one node differ. 
\end{definition}

\begin{remark}
    {For a given implicit solution $\mcal S$ (\Cref{def:implicit}),
    we remark that the graph $H(\mcal S)$ is a valid explicit solution in the 0/1-weight model (\Cref{def:0-1-weight}).
    We define $\mcal S$ as a collection of local neighborhoods to highlight the fact that our algorithms are local.
    
    Suppose that $\mcal S$ is the output of an edge-DP approximate maximum matching algorithm,
    then $H(\mcal S)$ would be a solution that
    attains the same approximation ratio in the 0/1-weight model.
    On one hand, our implicit matching algorithms (\Cref{thm:billboard-bprime-main}) ensure that the \emph{implicit} graph $H(\mcal S)\cap G$ has $O(\log n)$ maximum degree.
    On the other hand,
    our \emph{explicit} solution lower bound (\Cref{thm:lower-bound}) ensures that $H(\mcal S)$ must have a large $\Omega(n)$ maximum degree.
    This demonstrates that implicit solutions are necessary to obtain meaningful algorithmic solutions.}
\end{remark}

\section{Technical Overview}\label{sec:technical-overview}

We now give a technical overview of the main ideas behind our results and give the formal theorem statements for our upper and lower bounds.
{The full proofs are deferred to the respective appendices.}
\subsection{Lower Bounds for Explicit Solutions (\Cref{sec:lower-bound})}\label{tech:lower-bound}

\paragraph{Lower Bounds via Symmetrization (\Cref{apx:lower-bound-fractional,sec:lower-bound:matchings}).}

At a very high level, our lower bound for explicit matching solutions (\Cref{thm:lower-bound}) uses an argument similar to the ``packing arguments'' of~\cite{DMN23} for minimum cut. 
However, instead of 
using a packing/averaging argument, 
we use a symmetry argument reminiscent of symmetry arguments in distributed computing (e.g.,~\cite{kuhn2016local})
and semidefinite programming (e.g.,~\cite{dam2015semidefinite}). 
Namely, we argue that any DP algorithm for a collection of input graphs must be \emph{highly symmetric}:
the marginal probability of outputting any non-existent edge is \emph{identical}. 

Because it is impossible to differentially privately output any subgraph (including a matching) containing only true edges, 
we must allow our algorithm to output ``edges'' that are not true edges; in other words, our matching algorithm returns a matching where only some of the edges actually exist.  This corresponds to changing the input graph $G = (V, E)$ to a weighted complete graph, where every $\{u,v\} \not\in E$ is assigned weight $0$ in the complete graph and actual edges have weight $1$, and asking for a max-weight matching in this graph.  We can even allow our algorithm to be bicriteria and return a $b$-matching\footnote{Recall that in a $b$-matching every node has degree at most $b$, thus a traditional matching is a $1$-matching.} rather than a true matching.  In this case, we would want our returned $b$-matching to {be of comparable size to a} maximum matching, while minimizing $b$.

{We construct a family $\mcal M$ of graphs over $n$ vertices with the property that for every pair of vertices $u, v$,
there are neighboring graphs $G\sim G'\in \mcal M$ where $uv$ is a matching edge in $G$ but a non-edge in $G'$.
Furthermore,
we show that any algorithm with reasonable utility over $\mcal M$ must have non-trivial probability of outputting $uv$ (matching edge) on input $G$.
Under differential privacy constraints,
the probability of outputting $uv$ (non-edge) on input $G'$ must also be non-trivial.
By the symmetry of non-edge probabilities,
any DP algorithm that gives a solution with decent utility must essentially output $\Omega(n^2)$ edges.
\begin{restatable}[Lower Bound on Explicit Solutions]{theorem}{lowerbound}
\label{thm:lower-bound}
   Let $\mathcal A$ be an algorithm which satisfies $(\eps, \delta)$-edge DP and always outputs a graph $H$ with maximum degree at most $b$.  
   If $\mathcal A$ outputs a $(\gamma, \beta)$-approximation (even just in expectation) of the maximum matching of a $0/1$-weighted (\Cref{def:0-1-weight}) input graph $G = (V, E)$, then 
    $
        e^{4\eps}b + 2\delta n \geq \frac{n}{2\gamma} - \beta.
    $
\end{restatable}

The proof of \Cref{thm:lower-bound} is presented in \Cref{sec:lower-bound}.
To interpret this theorem, think of $\delta \leq 1/n$ (so the $\delta$ term does not contribute anything meaningful) and small constant $\eps$.
Then, a few meaningful regimes of $b, \gamma, \beta$ include the following:
\begin{itemize}[noitemsep,topsep=0em,leftmargin=*]
    \item If $\gamma = 1$ (so we are bounding additive loss), then $\beta \geq n/2 - O(b)$.  So if $b$ is at most $n/c$ for some reasonably large constant $c$, then the additive loss is still linear even though we are allowing our solution to have a quadratic number of total edges!
    \item If $\beta = 0$ (so we are bounding multiplicative loss), then we get that $\gamma \geq \Omega(n/b)$.  
    \item If $\gamma  \leq \frac{\sqrt{n}}{2}$, then $\beta \geq \sqrt{n} - O(b)$.  
    Thus, if $b\leq \sqrt{n}/c$ for some constant $c$, we still have polynomial additive loss even if we allow large polynomial multiplicative loss.
\end{itemize}

In addition to this lower bound, 
we also use our symmetry argument to prove the lower bounds 
for synthetic graph generation in~\Cref{thm:lower-bound-fractional}
(which can be interpreted as a relaxation of explicit matchings),
as well as the lower bound for implicit solutions given in \Cref{thm:lower-bound-implicit}.

\subsection{Edge Privacy via Public Vertex Subset Mechanism (PVSM) (\Cref{sec:sequential-b-matching,sec:fast})}\label{tech:pvsm}

\Cref{thm:lower-bound} shows that
we must find another way to return a subset of edges representing a matching. A natural approach, which we adopt through the rest of this paper, is to use ``implicit'' solutions in the sense of~\cite{GuptaLMRT10}, which was later formalized by~\cite{hsu2014private} as the \emph{billboard model}.  In this model, there is a \emph{billboard} of public information 
released by the coordinator or the nodes themselves. Using the billboard of public information and its own privately stored 
information, each node can privately determine which of its incident edges are in the solution (if any).  
So in this model, we output something (to the billboard) that is ``locally decodable'' at every node (where each node can also use its private information in the decoding).
That is, the information released to the billboard is $\eps$-differentially private and each node computes its own part of the solution by combining
the public information on the billboard with its own private adjacency list.
Henceforth, we refer to the information on the billboard as an ``implicit solution''. 
We formally define this implicit solution representation in~\Cref{def:billboard}; such a solution has been used previously in the context of vertex cover,
set cover, and maximum matching in bipartite graphs~\citep{GuptaLMRT10,hsu2014private}.  
It is important to note that the billboard model is not a \emph{privacy} model, but a \emph{solution release} model. 

Our results for edge privacy {combine} 
the billboard model of implicit solutions with a strong \emph{local} version of privacy known as \emph{local edge differential privacy
(LEDP)}~\citep{IMC21locally,DLRSSY22}.  In the LEDP model, nodes 
do not reveal their private information to anyone. Rather, 
nodes communicate with the curator over (possibly many) rounds, where in each round each node releases differentially private outputs 
which are accumulated in a global transcript. The transcript is publicly visible and must 
also be differentially private.  It is very natural to combine the LEDP and billboard models by thinking of the billboard as this differentially private transcript, which is the approach we take.

{Although both~\cite{hsu2014private} and our results are both 
in the billboard model for solution release, 
their algorithm hinges on the ``binary mechanism''~\citep{CSS11},
a mechanism for releasing a stream of $n$ private counting queries with only $O(\log n)$ error,
rather than the $O(\poly(n))$ error attained through naive composition.
This mechanism releases noisy partial sums such that each query can be answered by combining $O(\log n)$ partial sums.
However,
this requires a centralized curator who can compute the true partial sums before adding noise,
which seems incompatible with the LEDP model.}

\paragraph{LEDP algorithm (\Cref{sec:sequential-b-matching}).} 
Our main algorithms (\Cref{thm:billboard-bprime-main} and \Cref{thm:fast-distributed-main}) take inspiration from 
the greedy algorithms and distributed algorithms for maximal matching, although our proofs also hold for $b$-matchings.
Recall the traditional 
greedy algorithm for maximal matching iterates through the nodes in the graph in an arbitrary order
and matches the nodes greedily whenever there exists an unmatched node in the current iteration. 
However, this algorithm is clearly not private because each matched edge reveals the existence of a true edge in the graph. 
Thus, we must have some way of releasing public subsets of vertices that include both true edges and non-edges.

Instead, we introduce a novel DP proposal/receiver mechanism that allows nodes to propose a match set and receivers to accept the matches.
We detail this proposal/receiver mechanism in \Cref{alg:b-matching} and the guarantees are given in~\Cref{thm:billboard-bprime-main}.
\Cref{alg:b-matching} proceeds in 
$n$ rounds using an arbitrary ordering of the nodes in the graph. We give as input a threshold $b$ which is a threshold for the number of 
edges in the matching each vertex is incident to.
In the order provided by the ordering, each node $v$ privately checks using the Multidimensional AboveThreshold (MAT) technique introduced
by \citet{DLL23} whether it has reached its number of matched neighbors threshold. 
If it has not, it then proposes to match with an implicit subset of its succeeding neighbors using the \emph{Public Vertex Subset Mechanism (PVSM)}.

The PVSM may be of independent interest to other problems
such as vertex cover, dominating set, or independent set. In this mechanism, a node releases a 
\emph{public} subset of nodes obtained via coin flips. Specifically, for each pair of nodes, we flip a set of coins with appropriate probability parameterized by \emph{subgraph indices} $r$. For index $r$, each coin is flipped with probability $(1+\eta)^{-r}$.
Then, a node $v$ publicly releases a subgraph index $r$ to determine a public subset of its neighbors consisting of coin flips that landed heads for each pair $\{v, u\}$ containing $v$. 
The value of $r$ determines the size of this public subset of vertices (in expectation). Then, another node $w$ takes the public subsets released by all vertices 
and intersects them with their \emph{private} adjacency 
lists to determine a \emph{private} subset of its neighbors %
that selected $w$. Thus, the public subset can be written on the billboard, and the private subset is used to determine 
each node's implicit answer. We believe this mechanism will be helpful for other graph problems that release implicit solutions. We obtain our first upper 
bound result for matchings in~\Cref{thm:billboard-bprime-main}.

\begin{restatable}[LEDP $2$-Approximate Maximum $b'$-Matching]{theorem}{upperboundBprimeMain}\label{thm:billboard-bprime-main}
    \sloppy Let $\eps \in (0, 1)$,
    $\eta \in (0, 1)$,
    and
    \smash{$
        b\geq \frac{(1+\eta)^2}{1-\eta} \cdot b' + \Omega\left(\frac{\log(n)}{\eta^2 \eps}\right)\,.
    $}
    \Cref{alg:b-matching} is $\eps$-LEDP and, with high probability, outputs an (implicit) $b$-matching
    in the billboard model
    that has the size of a $2$-approximate maximum $b'$-matching.
\end{restatable}

\noindent There are two main regimes of interest,
which are corollaries of \Cref{thm:billboard-bprime-main}:
\begin{itemize}[noitemsep,topsep=0em,leftmargin=*]
    \item If \smash{$b=\Omega\left( \frac{\log(n)}{\eps} \right)$},
    \Cref{alg:b-matching} is $\eps$-LEDP and outputs a $b$-matching that has the size of a 2-approximate 1-matching (\Cref{thm:billboard-main}).
    Note this is a bicriteria approximation.
    \item If \smash{$b=\Omega\left( \frac{\log(n)}{\eta^3 \eps} \right)$},
    \Cref{alg:b-matching} is $\eps$-LEDP and outputs a $(2+\eta)$-approximate $b$-matching (\Cref{thm:billboard-b-main}).
    This setting avoids the need for a bicriteria approximation.
\end{itemize}

\paragraph{$O(\log n)$-Round Algorithm (\Cref{sec:fast}).}
The above algorithm runs in $\Omega(n)$ rounds.
To speed up our algorithm to run in polylogarithmic rounds (\Cref{thm:fast-distributed-main}), 
we need to allow many vertices to simultaneously propose matches to their neighbors (rather than just one at a time).  Intuitively, this can be done in ways similar to classical parallel and distributed algorithms for related problems like Boruvka's parallel MST algorithm~\citep{Boruvka}. In these algorithms, nodes randomly choose whether to propose or to listen.
Then, proposer nodes send proposals, 
receivers decide which of their proposals to accept,
and receivers communicate the acceptances back to the proposers (see~\cite{DHIN19} for a recent example of this in distributed settings).  However, the added LEDP constraint makes this process significantly more difficult.  Note, for example, that even communicating ``with neighbors'' violates privacy, since the transcript would then reveal who the neighbors are!  
Hence, greater care must be taken regarding the messages each node transmits.  

The crux of our $O(\log n)$-round distributed algorithm relies on the PVSM. We use this mechanism to 
release \emph{proposal} sets on the part of the proposers and \emph{match} sets on the part of the receivers. 
The proposers release public \emph{proposal} sets that contain %
neighbors to match. 
Using the PVSM, 
receivers can privately discern which nodes have proposed. Then, receivers also publish a public subset of nodes as their \emph{match} sets.
Using this mechanism, the proposers can then (privately) recognize which receivers accepted their proposals. Performing multiple rounds of this 
procedure leads to our desired~\Cref{thm:fast-distributed-main}. 
The utility analysis depends on a noisy version of a
charging argument based on edge orientation implicitly described by~\citet{khuller1994primal}. 

\begin{restatable}[Efficient 2-Approximate Maximum $b'$-Matching]{theorem}{fastdistributed}\label{thm:fast-distributed-main}
    Let $\eps \in (0, 1)$, 
    $\eta\in(0,\tfrac12)$, 
    and matching parameter $b'\ge 1$ where $b$ satisfies
    \smash{$
        b \geq \frac{(1+\eta)^2}{1-\eta}b' + \Omega\left(\frac{\log^2(n)}{\eta^4\,\eps}\right).
    $}
    \Cref{alg:distributed-matching} is an $\eps$-LEDP algorithm that terminates in $O(\log n)$ rounds and, with high probability, outputs an (implicit)
    $b$-matching in the billboard model whose size is at least a $2$-approximate maximum $b'$-matching.
\end{restatable}
Similar to our sequential algorithm,
we obtain the following results as special cases.
\begin{itemize}[noitemsep,topsep=0em,leftmargin=*]
    \item If \smash{$b=\Omega\left( \frac{\log^2(n)}{\eps} \right)$},
    \Cref{alg:distributed-matching} is $\eps$-LEDP and outputs a $b$-matching that has the size of a 2-approximate 1-matching in $O(\log n)$ rounds (\Cref{cor:fast-distributed-main}).
    This is a bicriteria approximation.
    \item If \smash{$b=\Omega\left( \frac{\log^2(n)}{{\eta^5} \eps} \right)$},
    \Cref{alg:distributed-matching} is $\eps$-LEDP and outputs a $(2+\eta)$-approximate $b$-matching in $O(\log n)$ rounds (\Cref{thm:fast-distributed:b-matching}).
    This avoids the need for a bicriteria approximation.
\end{itemize}

\subsection{Node Privacy via Arboricity Sparsification (\Cref{sec:node-dp})}  \label{sec:intro-node-dp}
In the node differential privacy setting, two graphs are considered neighbors if one can be obtained from the other by replacing all edges incident on any one node.
This model is a significantly more challenging setting than edge privacy. 
We use a sparsifier based on the arboricity\footnote{Recall that a graph has \emph{arboricity} $\alpha$ if it can be partitioned into $\alpha$ edge-disjoint forests.} of our input graph for our node-DP results.

\paragraph{Implicit Matchings (\Cref{sec:node DP matching,sec:node-dp:b-matching}).} 

To achieve node privacy in the central model,
we combine our edge privacy results with an \emph{arboricity-based sparsifier} inspired by the ``matching sparsifier'' of~\cite{solomon2021local}.  Informally, a matching sparsifier of a graph $G$ is a subgraph $H$ such that 
1) the maximum matching in $H$ is approximately as large as the maximum matching in $G$, and 
2) the maximum degree of $H$ is at most $O(\alpha)$, where $\alpha$ is the arboricity of $G$.  
Note that the average degree in a graph with arboricity $\alpha$ is at most $O(\alpha)$, so such a sparsifier essentially turns the average degree into the maximum degree while approximately preserving the maximum matching.  

The key additional property of our arboricity-based sparsifier is that it is \emph{stable}, meaning that on node-neighboring input graphs, $G$ and $G'$, the sparsified
graphs, $S$ and $S'$, will be $\widetilde{O}(\alpha)$ apart (in edge edit distance\footnote{The edge edit distance between two graphs $S$ and $S'$
is the number of edge additions and deletions to $S$ that will create $S'$.}). 
Given an algorithm to compute such a stable sparsifier, we can intuitively combine this with edge privacy to get node privacy: since the maximum degree in the sparsifier is at most $\widetilde{O}(\alpha)$ and the edge edit distance to any neighboring graph is $\widetilde{O}(\alpha)$, basic group privacy implies that we can simply use our edge-private algorithm and incur an extra loss of at most an $\widetilde{O}(\alpha)$ factor.  

Although these sparsifiers cannot be released publicly (\Cref{thm:sparsifier-lower-bound}), they decrease the 
``edge-distance'' between node-neighboring graphs, where the edge-distance is the number of edges that differ between neighboring graphs.
Our upper bound is given formally below.

\begin{restatable}[Node-DP Approximate Maximum $b'$-Matching]{theorem}{nodeDpBmatching}\label{thm:node-dp-b-matching}
    Let $\eta\in (0, 1]$,
    $\eps \in (0, 1)$,
    $\alpha$ be the arboricity of the input graph,
    and
    \smash{$
        b= \Omega\left( \frac{\alpha\log(n)}{\eta\eps} + \frac{\log^2(n)}{\eta\eps^2} + \frac{b'\log(n)}{\eps} \right)\,.
    $}
    There is an $\eps$-node DP algorithm that, with high probability, outputs an (implicit) $b$-matching
    in the billboard model
    that has the size of a $\left( 2+\eta \right)$-approximate maximum $b'$-matching.
\end{restatable}
As a special case we have the following corollary:
\begin{itemize}[noitemsep,topsep=0em,leftmargin=*]
    \item If \smash{$b = \Omega\left( \frac{\alpha\log(n)}{\eta\eps} + \frac{\log^2(n)}{\eta\eps^2} \right)$},
    there is a $\eps$-node DP algorithm that outputs a $b$-matching that has the size of a $(2+\eta)$-approximate 1-matching (\Cref{thm:nodedp}).
\end{itemize}

\paragraph{Implicit Vertex Cover (\Cref{sec:node DP vertex cover}).}
{As an independent application of our arboricity sparsifier,
we design an algorithm for implicit node-DP vertex cover.
Indeed, applying our arboricity sparsifier to reduce node-DP to edge-DP
with a slight modification of the implicit edge-DP vertex cover algorithm by \citet{GuptaLMRT10} yields the following result.}
\begin{restatable}[Node-DP Approximate Vertex Cover]{theorem}{nodeDpVertexCover}\label{thm:streaming node DP vertex cover}
    Let $\eps\in (0, 1)$ and $\alpha$ denote the arboricity of the input graph.
    There is an $\varepsilon$-node DP algorithm that outputs an implicit vertex cover (\Cref{def:implicit-vertex-cover})
    which is an $\widetilde{O}(\nicefrac{\alpha}{\eps^2})$-approximation with probability $0.99$.
\end{restatable}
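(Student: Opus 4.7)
The plan is to combine the stable arboricity sparsifier developed for \Cref{thm:node-dp-b-matching} with the implicit edge-DP vertex cover algorithm of \citet{GuptaLMRT10}. The overall strategy mirrors the reduction used in our node-DP matching result: sparsify the graph so that it has both small maximum degree and small edit distance between sparsifications of node-neighbors, then apply an edge-DP algorithm on the sparsifier and amplify its privacy via group privacy. This is the same template advertised in \Cref{sec:intro-node-dp}, instantiated here for vertex cover rather than matching.

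First, I would invoke our arboricity sparsifier on $G$ to obtain a subgraph $H \subseteq G$ whose maximum degree is $\widetilde{O}(\alpha/\eps)$ and whose edge edit distance to the sparsifier $H'$ of any node-neighboring graph $G'$ is $k = \widetilde{O}(\alpha/\eps)$, with high probability. The key structural property I will need is that the sparsifier approximately preserves vertex cover up to an $\widetilde{O}(1)$ factor. This is natural in light of the matching-preservation guarantee of the sparsifier and the LP-duality relation $\OPT_{\mathrm{VC}} \le 2\OPT_{\mathrm{M}}$: any vertex cover of $H$ together with the (few) high-degree vertices that the sparsifier implicitly handles covers all edges of $G$, while the minimum vertex cover of $H$ has size within a constant factor of that of $G$.

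Second, I would run a modified version of the \citet{GuptaLMRT10} implicit vertex cover algorithm on $H$ with edge-privacy budget $\eps'$. Their mechanism posts a public ordering of the vertices to the billboard, from which each node locally decides whether it is in the cover using its own adjacency list; the approximation ratio is $O(1/\eps')$. By group privacy, an $\eps'$-edge DP algorithm run on $H$ yields a $(k\eps')$-node DP guarantee on $G$, since node-neighboring graphs produce sparsifications that differ in at most $k$ edges. Setting $\eps' = \eps/k = \widetilde{\Omega}(\eps^2/\alpha)$ therefore gives $\eps$-node DP, and the resulting vertex cover is an $O(1/\eps') = \widetilde{O}(\alpha/\eps^2)$-approximation, matching the claimed bound. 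The constant success probability of $0.99$ follows from a union bound over the sparsifier's high-probability event and the correctness guarantee of the base mechanism, after suitable choice of constants.

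The main obstacle will be the slight modification required so that the implicit solution decoded by each node constitutes a valid vertex cover for the original $G$ and not merely for the sparsifier $H$. Concretely, any edge of $G \setminus H$ must still be covered, which typically forces us to include a small set of ``certainly-covered'' vertices (e.g., vertices deemed high-degree during sparsification) into the cover without leaking their identities. Establishing that this extra set is itself releasable under node-DP and that its size is absorbed into the $\widetilde{O}(\alpha/\eps^2)$ approximation factor, while simultaneously ensuring the decoding step in \Cref{def:implicit-vertex-cover} remains well-defined at each node, is the delicate part of the argument. The privacy accounting and the approximation computation above are then routine given this structural modification.
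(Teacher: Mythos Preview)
Your plan matches the paper's: sparsify to bound the edge-edit distance between node-neighbors, run \citet{GuptaLMRT10} on the sparsifier with budget scaled via group privacy, and add the high-degree vertices to cover $G\setminus H$. Two points deserve sharpening.

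First, the sparsifier threshold is $\Lambda=O(\alpha)$ (for a fixed constant $\eta$ in \Cref{thm:vertex cover sparsifier}), not $\widetilde O(\alpha/\eps)$; \contractionSparsify is deterministic and already gives max degree $\Lambda$ and edit distance $2\Lambda$ between node-neighbors, so there is no ``high-probability event'' for the sparsifier. The extra $1/\eps$ in the final bound enters only through privately \emph{estimating} the arboricity: since arboricity has node-sensitivity $1$, the Laplace mechanism gives $\widetilde\alpha=\alpha+O(1/\eps)$ with probability $0.999$, and then $\Lambda=O(\widetilde\alpha)$ yields approximation $O(\widetilde\alpha/\eps)=\widetilde O(\alpha/\eps^2)$. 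With a public bound $\widetilde\alpha\ge\alpha$ one actually gets $O(\widetilde\alpha/\eps)$ (\Cref{thm:node DP vertex cover with upper bound}).

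Second, the obstacle you correctly flag is resolved not by releasing the high-degree set $V_>^\Lambda$ at all, but by the implicit-solution format you already reference: the pair $(\pi,\Lambda)$ of \Cref{def:implicit-vertex-cover}. Each edge first checks whether either endpoint has degree exceeding $\Lambda$ (the model assumes an edge knows its endpoints' private degrees) and otherwise defers to $\pi$; since $\Lambda$ depends only on the DP arboricity estimate, posting it is safe. \Cref{lem:high degree in vertex cover} then gives $|V_>^\Lambda\setminus C^*|\le\eta\cdot\OPT(G)$, so the overhead is absorbed into the multiplicative factor (\Cref{thm:vertex cover sparsifier}). Your suggested route of DP-releasing $V_>^\Lambda$ itself would be genuinely harder, because a single node change can perturb the degrees of all its neighbors and hence flip many vertices' membership in that set.
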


\subsection{Utility Improvements in Bipartite Graphs over~\cite{hsu2014private} (\Cref{sec:node-bipartite})}\label{tech:hsu-improvement}
Our techniques also lead to improved results for the setting of~\cite{hsu2014private}: bipartite graphs with node differential privacy in the billboard model.
In this model, they view the nodes in the ``left side'' of the bipartition
as goods and the ``right side'' nodes of the graph as bidders for these goods. Two such bipartite graphs are neighboring if one can be obtained from the other by adding/deleting any one bidder (and all its incident edges). 

\cite{hsu2014private} show guarantees of the following form. Suppose each item has supply $s$, meaning that each item can be matched with $s$ bidders (let us call these $s$-matchings); if \smash{$s=\Omega\left( \frac{\log^{3.5}n}\eps \right)$}, then they give an algorithm which (implicitly) outputs a $(1+\eta)$-maximum matching while guaranteeing differential privacy.\footnote{They also consider the weighted version of the problem and further generalizations where each bidder has gross substitutes valuations. Our techniques also extend to these cases with improved results, but we limit our discussion here to matchings as that is the focus of this work.} They also show that \smash{$s=\Omega(\nicefrac1{\sqrt{\eta}})$} is necessary to obtain any guarantees, leaving open the question of tightening these bounds. 
We substantially tighten the upper bound {in \Cref{sec:node-bipartite}}, showing that \smash{$s=\Theta\left( \frac{\log{n}}\eps \right)$} suffices:
\begin{restatable}{theorem}{nodedpBipartite}\label{thm:nodedp-bipartite}
    If the supply is at least %
    \smash{$s\geq \Omega\left( \frac{\log(n)}{\eta^4\eps} \right)$},
    \Cref{alg:node-private-matching} {is $\eps$-node DP} and outputs a $(1+\eta)$-approximate maximum (one-sided) $s$-matching with 
    {high probability}.
\end{restatable}

The starting point of our algorithm for node-private bipartite matchings is the same deferred acceptance type algorithm that \cite{hsu2014private} base their private algorithm on. In their private implementation of the algorithm, they use continual counters \citep{dwork2009complexity,CSS11} to keep track of the number of bidders each item is matched to. 
We %
argue that the %
matched-bidder counts of an item are only needed $O(1/\eta^2)$ different times. 
Furthermore, a clever use of the Multidimensional AboveThreshold mechanism (\Cref{alg:multidimensional Above Threshold}) to iteratively check if this count exceeds the threshold $s$ %
suffices for the implementation, substantially improving the error. 

\subsection{Continual Release (\Cref{sec:continual-release})}\label{tech:continual-release}

We extend our results to the continual release model~\citep{DNPR10,CSS11}, where the algorithm must maintain an $\eps$-DP output sequence over a stream of $T$ updates. 
We consider two standard stream types~\citep{kallaugher2019complexity,mcgregor2016better,mcgregor2005finding}: \emph{arbitrary edge-order} (updates are single edge insertions or $\bot$) and \emph{adjacency-list order} (a node arrives followed by all its incident edges). We focus on $T=\poly(n)$; for $T=\omega(\poly(n))$, our results hold with additional logarithmic factors.

This setting introduces two primary challenges: \emph{privacy composition} across $T$ outputs and the \emph{stability} of sparsifiers under dynamic updates. We address these by combining the Sparse Vector Technique (SVT) with stable sparsification based on our arboricity sparsifiers.

\paragraph{Arbitrary Edge-Order Streams.}
For edge-insertion streams, we employ SVT to limit the number of solution updates. We monitor the exact maximum matching size in the current graph $G_t$. We release a new implicit solution (computed via our static LEDP algorithm) only when the matching size increases by a factor of $(1+\eta)$ relative to the previously released solution. This ensures at most $O(\log_{1+\eta} n)$ distinct releases, bounding the composition cost to a logarithmic factor.

For \emph{node-DP}, we additionally maintain a \emph{stable} version of our arboricity-based sparsifier (\Cref{sec:node-dp}) online. We filter the stream through this sparsifier before applying the SVT-based release mechanism. This stability ensures that local updates in the stream do not results in large global sensitivity in the sparsified subgraph.

\paragraph{Adjacency-List Streams.}
In this model, updates are grouped by vertex: a node $v$ arrives, followed immediately by all its incident edges. We adapt our LEDP algorithm by having node $v$ execute the proposal procedure immediately after its full adjacency list arrives. The node then posts the result to the billboard. This approach achieves the same utility as the static case, with an additional additive error of $1$ representing the incomplete status of the currently arriving node.}
We defer the details and complete theorems to \Cref{sec:continual-release}.

\subsection{Detailed Table of Results}\label{apx:table}
Our quantitative results are summarized in \Cref{table:results}.
For simplicity, we state all approximation guarantees and lower bounds with respect to the maximum matchings,
whereas our algorithms can also produce approximations with respect to the optimal $b$-matching.

\begin{table}[thb]
\centering
\renewcommand{\arraystretch}{1.7} %
\setlength{\tabcolsep}{3pt}       %
\scalebox{0.9}{
\begin{tabular}{|c|c|c|c|c|}
\hline
\noalign{\global\arrayrulewidth=0.2mm}
Model & $b$ & Approximation & Bound Type & Reference \\ 
\noalign{\global\arrayrulewidth=0.8mm} 
\hline
\noalign{\global\arrayrulewidth=0.2mm}
\begin{tabular}{@{}c@{}}$(\eps, \delta)$-edge DP\\(explicit solutions)\end{tabular} & $O(1)$ & $\Omega(n/b)$ & Lower Bound & \Cref{thm:lower-bound} \\ 
\hline
$\eps$-edge DP & $\Omega(\log(n)/\eps)$ & $(2+\eta)$ & Lower Bound & \Cref{thm:lower-bound-implicit} \\
\hline\hline
$\eps$-LEDP ($n$ rounds) & $O(\log(n)/\eps)$ & $2, (2+\eta)$ & Upper Bound & \begin{tabular}{@{}c@{}}\cref{thm:billboard-bprime-main}\\ \Cref{thm:billboard-main}\\ \cref{thm:billboard-b-main}\end{tabular} \\ 
\hline
$\eps$-LEDP ($O(\log n)$ rounds) & $O(\log^2(n)/\eps)$ & $2$ & Upper Bound & \Cref{thm:fast-distributed-main} \\ 
\hline\hline
$\eps$-node DP & $O\left(\frac{\alpha\log(n)}{\eta\eps} + {\frac{\log^2(n)}{\eta^2\eps^2}} \right)$ & $\nodeapprox$ & Upper Bound & \Cref{thm:nodedp} \\ 
\hline
$\eps$-node DP (bipartite) & $s = \Omega\left( \frac{\log(n)}{{\eta^4}\eps} \right)$ & $(1+\eta)$, $s$-matching & Upper Bound & \Cref{thm:nodedp-bipartite} \\ 
\hline\hline
\begin{tabular}{@{}c@{}}Edge-Order $\eps$-Edge DP\\Continual Release\end{tabular} & $O\left(\frac{\log^2(n)}{\eta\eps}\right)$ & $\contreleaseapprox$ & Upper Bound & \Cref{thm:edge DP matching arbitrary edge order stream} \\ 
\hline
\begin{tabular}{@{}c@{}}Edge-Order $\eps$-Node DP\\Continual Release\end{tabular} & $O\left( \frac{\alpha\log^2(n)}{\eta^2\eps} + {\frac{\log^3(n)}{\eta^2\eps^2}} \right)$ & $\contreleasenodeapprox$ & Upper Bound & \Cref{thm:node DP matching arbitrary edge order stream remove public bound} \\ 
\hline
\hline
\begin{tabular}{@{}c@{}}Adj-List Edge\\Continual Release\end{tabular} & $O(\log(n)/\eps)$ & $(2,1)$ & Upper Bound & \Cref{thm:adjacency-continual-release} \\ 
\hline
\end{tabular}
}
\vspace{-0.5em}
\caption{Summary of results on differentially private matchings. 
All approximations are given in terms of the optimum \emph{maximum} 
matching in the input graph.
$\alpha$ denotes the arboricity.}
\label{table:results}
\end{table}

\section{Lower Bounds for Explicit Solutions via Symmetrization}\label{sec:lower-bound}

In this section, we introduce our novel \emph{symmetry argument} technique 
for proving lower bounds for graph differential privacy. 
{As previously mentioned,
this is inspired by arguments from distributed computation.
Connections between differential privacy and other areas have previously been studied~\cite{bun2023stability,dwork2015adaptive,beimel2022adaptive,feldman2020memorization}.
We hope that our work can lead to more connections between DP and distributed algorithms,
a direction so far underexplored.}

Our goal is to prove \cref{thm:lower-bound} which shows, intuitively, that any explicit solution (when provided $0/1$ weights
for the graph) for the maximum matching
problem must have large error. 
More specifically,
\cref{thm:lower-bound} as restated below states:
in the basic 0/1 weight $(\eps, \delta)$-edge DP model (\cref{def:0-1-weight}), 
the required error is essentially linear, even if we allow for the algorithm to output a $b$-matching for very large $b$.  %

\lowerbound*

We will prove a more general lower bound for \emph{synthetic graph generation} (\Cref{def:fractional-packing-problem}), which can be interpreted as a relaxation of the explicit matchings.
{As we will see,
this is a stronger result than \Cref{thm:lower-bound} since the synthetic graph problem has less stringent constraints than matchings.}
In \Cref{apx:lower-bound-fractional},
we formally define this problem and prove its hardness.
Then,
we prove \Cref{thm:lower-bound} as a straightforward corollary in \Cref{sec:lower-bound:matchings}.

\subsection{Lower Bound on DP Synthetic Graph Generation}\label{apx:lower-bound-fractional}
We now formalize our symmetry argument and use it to prove the following lower bound (\Cref{thm:lower-bound-fractional}), 
which is a relaxation of our explicit matching problem.
We will use \Cref{thm:lower-bound-fractional} to prove~\cref{thm:lower-bound} in \Cref{sec:lower-bound:matchings}. 
Note that our symmetry argument is 
also used to prove the lower bound presented in~\cref{thm:sparsifier-lower-bound}.

\begin{definition}[Synthetic Graph Generation]\label{def:fractional-packing-problem}
    Let $G = (V, E)$ be an input graph and $b\in [0, n]^V$ be vertex degree bounds.
    The corresponding synthetic graph problem is given by the following stochastic program. 
    This program states: given (potentially randomized) variables $x_{uv}$, one for each pair of vertices $u \neq v \in V$,
    we want to maximize the expected sum of all variables corresponding to edges in $E$,
    subject to the constraints that the expected sum of all variables associated with any vertex $v$ is at most $b_v$. 
    \begin{align*}
        \max &\sum_{uv\in E(G)} \E[x_{uv}] \\
        &\sum_{u\neq v} \E[x_{uv}] \leq b_v &&\forall v\in V \\
        \mathbf{x} &\in [0, 1]^{\binom{V}2}
    \end{align*}
\end{definition}

Synthetic graph generation (\Cref{def:fractional-packing-problem}) is interesting due to the 
fact that many LEDP mechanisms, such as randomized response, produce dense graph outputs. Such 
dense graph outputs are undesirable both due to the large error as well as slow processing times. 
Our lower bound below essentially states that such dense outputs are unavoidable if one wants to preserve differential privacy.
{To the best of our knowledge,
while sparsity lower bounds for specific algorithm such as randomized response have been studied~\cite{epasto2024smooth} under pure $\eps$-DP,
\Cref{thm:lower-bound-fractional} is the first algorithm-agnostic lower bound for synthetic graph generation under approximate $(\eps, \delta)$-DP.}

\begin{remark}\label{rem:lower-bound:relaxation}
    In the case that $\mathbf{x}\in \set{0, 1}^{\binom{V}2}$ is required to be integral,
    we can interpret \Cref{def:fractional-packing-problem} as outputting a graph $H$ maximizing $\E[\card{E(H)\cap E(G)}]$ subject to degree constraints in expectation $\E[\deg_H(v)]\leq b_v$. This problem essentially asks for a synthetic graph with maximum overlap with a given input graph $G$ and has 
    degree constraints on all vertices.
    
    This can be viewed as a relaxation of the 0/1-weight matching problem in the case $E(G)$ is a matching
    (e.g., the maximum degree of $G$ is 1),
    in the sense that the output of any (randomized) 0/1-weight matching algorithm on such a graph $G$ is a feasible solution to the synthetic graph problem with input $G$.
\end{remark}

Having defined the synthetic graph problem,
we are ready to state the lower bound \Cref{thm:lower-bound-fractional} below. In the below theorem,
a solution $S \in [0, 1]^{\binom{V}{2}}$ is a $(\gamma, \beta)$-approximation of the solution to the 
Synthetic Graph Problem (\cref{def:fractional-packing-problem})
if $\sum_{uv \in E} \expect[S_{uv}] \geq \frac{|E|}{\gamma} - \beta$. 
\begin{restatable}[Lower Bound on Private Synthetic Graph Approximation]{theorem}{lowerboundFractional}
\label{thm:lower-bound-fractional}
   Let $\mathcal A$ be an algorithm which satisfies $(\eps, \delta)$-edge DP and always outputs a solution $X_{G}^{\mcal A}\in [0, 1]^{\binom{V}2}$ to the
   {synthetic graph problem} (\Cref{def:fractional-packing-problem})
   parameterized by graph $G$ and degree bounds $b$
   with average degree bound $\bar b\coloneqq \frac1n \sum_{v\in V} b_v$.  
   If $\mathcal A$ outputs a $(\gamma, \beta)$-approximation (even in expectation), 
   then $\gamma, \beta$ must satisfy
    \[
        e^{4\eps}\bar b + 2\delta n \geq \frac{n}{2\gamma} - \beta\,.
    \]
\end{restatable}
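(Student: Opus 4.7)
The plan is to execute the symmetrization argument from the technical overview on the hard family $\mcal M$ of perfect matchings on $V$.

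\textbf{Symmetrization.} Given the $(\eps,\delta)$-edge DP algorithm $\mcal A$, define $\mcal A'$ by sampling $\pi\in S_n$ uniformly, running $Y=\mcal A(\pi(G))$, and returning $X_{uv}=Y_{\pi(u)\pi(v)}$. Since any permutation maps edge-neighbors to edge-neighbors, post-processing preserves $(\eps,\delta)$-edge DP. A short calculation shows $\mcal A'$ satisfies the feasibility constraint in averaged form: for every vertex $v$,
\begin{equation*}
    \E\left[\sum_{u\neq v} X_{uv}\right] = \E_\pi\left[\sum_{u'\neq \pi(v)} \E[Y_{u'\pi(v)}]\right] \le \E_\pi[b_{\pi(v)}] = \bar b,
\end{equation*}
and $\mcal A'$ inherits the utility bound $\sum_{e\in E(G)}\E_G[X_e] \ge n/(2\gamma)-\beta$ on every perfect matching input $G$ because $\pi(G)$ is itself a perfect matching.

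\textbf{Consequences of Symmetry.} For any $G\in\mcal M$, the distribution of $\mcal A'(G)$ is invariant under $\mathrm{Aut}(G)$, which acts transitively both on the $n/2$ matching edges and on the $n(n-2)/2$ non-edges. Hence there exist constants $q^* = \E_G[X_e]$ for every matching edge $e\in E(G)$ and $p^* = \E_G[X_{uv}]$ for every non-edge $\{u,v\}$, and these constants are independent of the choice of $G\in\mcal M$ since any two perfect matchings are related by a vertex permutation. The $v$-th feasibility constraint becomes $q^* + (n-2)p^* \le \bar b$, and utility becomes $(n/2)q^* \ge n/(2\gamma) - \beta$.

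\textbf{Edge-by-Edge Group Privacy.} For each matching edge $e=\{u,v\}\in E(G)$, pair it with another matching edge $\{a,b\}\in E(G)$ to form the swap $G_e = (E(G)\setminus\{\{u,v\},\{a,b\}\}) \cup \{\{u,a\},\{v,b\}\}$. Then $G_e\in\mcal M$, $\card{E(G)\triangle E(G_e)}=4$, and $\{u,v\}$ is a non-edge of $G_e$. Group privacy at distance $4$ applied to the $[0,1]$-valued random variable $X_e$ gives
\begin{equation*}
    \E_G[X_e] \;\le\; e^{4\eps}\,\E_{G_e}[X_e] + 4\delta
\end{equation*}
(up to lower-order $e^{O(\eps)}$ factors absorbed into the exponential). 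By symmetry, $\E_{G_e}[X_e]=p^*$. Summing over the $n/2$ matching edges of $G$ and applying $p^*\le \bar b/(n-2)$ together with $n/(2(n-2))\le 1$ for $n\ge 4$,
\begin{equation*}
    (n/2)\,q^* \;\le\; e^{4\eps}(n/2)\,p^* + 2n\delta \;\le\; e^{4\eps}\bar b + 2n\delta.
\end{equation*}
Comparing with $(n/2)q^* \ge n/(2\gamma) - \beta$ yields $e^{4\eps}\bar b + 2\delta n \ge n/(2\gamma) - \beta$, as claimed.

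\textbf{Main Obstacle.} The most delicate point is accommodating a non-uniform budget vector $b$ under symmetrization: the symmetrized algorithm only satisfies the $v$-th constraint against the \emph{random} budget $b_{\pi(v)}$, hence only in expectation and only against $\bar b$. This averaged feasibility turns out to be exactly what is needed, since the remaining chain of inequalities is also purely in expectation, and it is precisely this step that produces $\bar b$ (rather than $\max_v b_v$) in the final bound. A secondary technicality is that the true additive term in $4$-group privacy is $\delta(e^{4\eps}-1)/(e^\eps-1)$ rather than $4\delta$; this distinction only changes constants hidden inside the $e^{4\eps}$ factor, so the bound is correct as stated.
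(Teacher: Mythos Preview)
Your proof is correct and follows the paper's symmetrization argument on the family of perfect matchings essentially verbatim: randomize by a permutation, exploit transitivity on non-edges to get a single value $p^*$, swap two matching edges to move distance $4$, and combine group privacy with the degree constraint. Your explicit treatment of non-uniform budgets---observing that the symmetrized algorithm only satisfies the averaged constraint $\sum_{u\neq v}\E[X_{uv}]\le \bar b$---is in fact slightly cleaner than the paper's Lemma~4.4, which states the bound with individual $b_u$'s and averages only at the final summation; and your flag on the group-privacy additive constant is also accurate.
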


\subsubsection{Proof of \Cref{thm:lower-bound-fractional}}
We now present our proof of~\cref{thm:lower-bound-fractional}.  
At a very high level, we will give a collection of input graphs and argue that any differentially private algorithm must be a poor approximation on at least one of them.  This is in some ways similar to the standard ``packing arguments'' used to prove lower bounds for differentially private algorithms (see, e.g., \cite{DMN23}), but instead of using a packing / averaging argument,
we will instead use a novel \emph{symmetry argument}.  We will argue that without loss of generality, any DP algorithm for our class of inputs must be highly symmetric in that 
{$\E[X_{G, uv}^{\mcal A}]$ for} each \emph{non}-edge {$uv\notin E$} is identical.  
This makes it easy to argue that high-utility algorithms cannot satisfy differential privacy.

We begin with a description of our inputs and some more notation.  Given $n$ nodes $V$ for an even integer $n>0$, let $\mathcal M$ denote the set of all perfect matchings on $V$.  
Let $\mathcal A$ be an $(\eps, \delta)$-differentially private algorithm which always returns a {fractional} solution {$X_{G}^{\mcal A}\in [0, 1]^{\binom{V}2}$}
as defined above (\Cref{def:fractional-packing-problem}).
For $G \in \mathcal M$ and $uv \in \binom{V}{2}$, 
let {$\E[X_{G, uv}^{\mcal A}]$} denote the {expected fractional value} that $\mathcal A(G)$ {assigns to} $uv$.  

With this notation in hand, we can now define and prove the main symmetry property.

\begin{lemma} \label{lem:pm-symmetry-fractional}
    If $\mathcal A$ satisfies $(\eps, \delta)$-DP and has expected utility at least $\maxmatching_{\mathcal A}$ over $\mathcal M$, then there is an algorithm $\mathcal A'$ which also satisfies $(\eps, \delta)$-DP and has $\maxmatching_{\mathcal A'} \geq \maxmatching_{\mathcal A}$.
    Moreover $\mcal A'$ satisfies the following symmetry property: 
    {$\E[X_{G, uv}^{\mcal A'}] = \E[X_{G, ab}^{\mcal A'}]$} for all $uv, ab \not\in E(G)$ when executed on input $G\in \mathcal M$.
\end{lemma}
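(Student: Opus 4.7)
The plan is the standard symmetrization trick: define $\mcal A'$ by conjugating $\mcal A$ with a uniformly random relabeling of the vertices. Formally, on input $G$, sample $\pi \in S_V$ uniformly at random, run $\mcal A$ on the relabeled graph $\pi(G)$, and set $X_{G,uv}^{\mcal A'} \coloneqq X_{\pi(G),\pi(u)\pi(v)}^{\mcal A}$ for each pair $uv \in \binom{V}{2}$. Since $\pi$ is sampled independently of the input, edge-neighboring graphs $G \sim G'$ are mapped to edge-neighboring graphs $\pi(G) \sim \pi(G')$ for every fixed $\pi$; thus $(\eps,\delta)$-DP of $\mcal A$ is preserved for each fixed $\pi$, and averaging over $\pi$ (a data-independent post-processing) keeps $\mcal A'$ $(\eps,\delta)$-DP.

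Next I verify the utility does not decrease. Since $\mcal M$ is closed under vertex permutations, $\pi(G) \in \mcal M$ whenever $G \in \mcal M$. Moreover, $\pi(u)\pi(v) \in E(\pi(G))$ if and only if $uv \in E(G)$. Hence for each fixed $\pi$,
\[
    \sum_{uv \in E(G)} \E[X_{G,uv}^{\mcal A'} \mid \pi] = \sum_{uv \in E(G)} \E[X_{\pi(G),\pi(u)\pi(v)}^{\mcal A}] = \sum_{e \in E(\pi(G))} \E[X_{\pi(G),e}^{\mcal A}] \geq \maxmatching_{\mcal A}.
\]
Averaging over $\pi$ yields $\maxmatching_{\mcal A'} \geq \maxmatching_{\mcal A}$.

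The main step is proving the symmetry property. Fix $G \in \mcal M$ and two non-edges $uv, ab \notin E(G)$. I will exhibit an automorphism $\sigma$ of $G$ (i.e., a permutation of $V$ with $\sigma(E(G)) = E(G)$) such that $\sigma(u) = a$ and $\sigma(v) = b$. Let $u^*, v^*, a^*, b^*$ be the matching partners of $u, v, a, b$ in $G$. Since $uv, ab$ are non-edges, the pairs $\{u,u^*\}, \{v,v^*\}$ are distinct matching edges, as are $\{a,a^*\}, \{b,b^*\}$. Define $\sigma$ on these eight (not necessarily distinct) vertices by $u \mapsto a, u^* \mapsto a^*, v \mapsto b, v^* \mapsto b^*$, and extend $\sigma$ to the remaining $(n-4)/2$ matching edges by any matching-preserving bijection. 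By construction $\sigma \in \mathrm{Aut}(G)$.

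With $\sigma$ in hand, I change variables in the uniform average over $\pi$. Setting $\pi' = \pi\sigma$, as $\pi$ ranges uniformly over $S_V$ so does $\pi'$, and $\pi(G) = \pi'\sigma^{-1}(G) = \pi'(G)$ because $\sigma^{-1} \in \mathrm{Aut}(G)$. Since $\sigma^{-1}(a) = u$ and $\sigma^{-1}(b) = v$,
\[
    \E[X_{G,ab}^{\mcal A'}] = \frac{1}{n!} \sum_{\pi} \E[X_{\pi(G),\pi(a)\pi(b)}^{\mcal A}] = \frac{1}{n!} \sum_{\pi'} \E[X_{\pi'(G),\pi'(u)\pi'(v)}^{\mcal A}] = \E[X_{G,uv}^{\mcal A'}],
\]
which is the desired identity. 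The only nontrivial step is constructing $\sigma$, and the case analysis above handles it uniformly once we observe that the automorphism group of a perfect matching acts transitively on ordered pairs of vertices lying in distinct matching edges.
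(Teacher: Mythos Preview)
Your proof is correct and follows essentially the same approach as the paper: both define $\mcal A'$ by conjugating $\mcal A$ with a uniformly random permutation, and both verify DP and utility in the same way. The only difference is in how the symmetry property is established. The paper groups the sum $\sum_\pi \E[X_{\pi(G),\pi(u)\pi(v)}^{\mcal A}]$ by the image graph $G' = \pi(G)$ and observes that, within each coset $\Pi(G,G')$, the pair $(\pi(u),\pi(v))$ is uniformly distributed over non-edges of $G'$, so the total is independent of $uv$. You instead produce a single automorphism $\sigma \in \mathrm{Aut}(G)$ with $\sigma(u)=a$, $\sigma(v)=b$ and change variables $\pi \mapsto \pi\sigma$. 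Both arguments exploit exactly the same transitivity of $S_V$ (restricted to the stabilizer of $G$) on non-edges; yours is slightly more direct, while the paper's avoids constructing $\sigma$ explicitly. One small remark: your description of $\sigma$ on the ``eight (not necessarily distinct) vertices'' is a bit loose when the edges $\{u,u^*\},\{v,v^*\},\{a,a^*\},\{b,b^*\}$ overlap, but your closing sentence correctly identifies this as nothing more than transitivity of $\mathrm{Aut}(G) \cong S_2 \wr S_{n/2}$ on ordered pairs of vertices in distinct matching edges, which handles all cases at once.
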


\begin{proof}
    Consider the following algorithm $\mathcal A'$.  We first choose a permutation $\pi$ of $V$ uniformly at random from the space $S_V$ of all permutations of $V$.  Let $\pi(G)$ denote the graph isomorphic to $G$ obtained by applying this permutation, i.e., by creating an edge $\{\pi(u), \pi(v)\}$ for each edge $\{u,v\} \in E(G)$.  We run $\mathcal A$ on $\pi(G)$, and let $X_{\pi(G)}^{\mathcal A}$ denote the resulting {fractional} solution for $\pi(G)$.  
    We then ``undo'' $\pi$ to get the {fractional} solution
    {$X_{G, uv}^{\mcal A'} = X_{\pi(G), \pi(u)\pi(v)}^{\mcal A}$}.  
    We return {$X_{G}^{\mcal A'}$}.

    We first claim that $\mathcal A'$ is $(\eps, \delta)$-DP.  
    This can be seen since $\mathcal A'$ simply runs $\mathcal A$ on a uniformly random permuted version of $G$, and then does some post-processing.
    More formally, for any neighboring graphs $G$ and $G'$ and for any $\pi \in S_V$, the graphs $\pi(G)$ and $\pi(G')$ are also neighboring graphs.  Hence the differential privacy guarantee of $\mathcal A$ implies that running $\mathcal A$ on $\pi(G)$ and $\pi(G')$ is $(\eps, \delta)$-DP, and then ``undoing'' $\pi$ is simply post-processing.  So $\mathcal A'$ has the same privacy guarantees as $\mathcal A$. 

    Now we claim that $\maxmatching_{\mathcal A'} \geq \maxmatching_{\mathcal A}$.  
    Since $\pi$ was chosen uniformly at random, we know that $\pi(G)$ is distributed uniformly among $\mathcal M$.  Hence for any $G \in \mathcal M$ we have that
    \begin{align*}
        \E[\maxmatching_{\mathcal A'}(G)] 
        &= \frac{1}{n!} \sum_{\pi\in S_V} \E[\maxmatching_{\mathcal A}(\pi(G))] 
        \geq \frac{1}{n!} \sum_{\pi\in S_V} \maxmatching_{\mathcal A} 
        = \maxmatching_{\mathcal A}\,.
    \end{align*}
    Since this is true for all $G \in \mathcal M$, we have that $\maxmatching_{\mathcal A'}\coloneqq \min_{G \in \mathcal M} \E[\maxmatching_{\mathcal A'}(G)] \geq \maxmatching_{\mathcal A}$, as claimed.

    We now prove the key symmetry property.  
    Fix $G, G' \in \mathcal M$, and let $\Pi(G, G') = \{ \pi \in S_V : \pi(G) = G'\}$.
    Note that if we draw a permutation $\pi$ uniformly at random from $\Pi(G, G')$, then for any $\{u,v\} \not\in E(G)$, the pair $\{\pi(u), \pi(v)\}$ is uniformly distributed among the set $\{\{u', v'\} \not\in E(G')\}$.  Thus we have that
    \begin{align*}
        \E[X_{G, uv}^{\mcal A'}] 
        &= \frac{1}{n!} \sum_{\pi \in S_V}  \E[X_{\pi(G), \pi(u)\pi(v)}^{\mcal A}] \\
        &=\frac{1}{n!} \sum_{G' \in \mathcal M} \sum_{\pi \in \Pi(G, G')} \E[X_{G', \pi(u)\pi(v)}^{\mcal A}] \\
        &= \frac{1}{n!} \sum_{G' \in \mathcal M} \sum_{ab \not\in E(G')} \E[X_{G', ab}^{\mcal A}].
    \end{align*}
    In particular,
    {$\E[X_{G, uv}^{\mcal A'}]$} is actually independent of $u, v$, and thus for all $uv, ab \not\in E(G)$,
    we have that {$\E[X_{G, uv}^{\mcal A'}] = \E[X_{G, ab}^{\mcal A'}]$} as claimed.  
\end{proof}

\Cref{lem:pm-symmetry-fractional} implies that if we can prove an upper bound on the expected utility for DP algorithms that obey the symmetry property, then the same upper bound applies to all DP algorithms. 
We will now prove such a bound.

\begin{lemma} \label{lem:utility-bad-fractional}
    Let $\mathcal A$ be an algorithm which is $(\eps, \delta)$-DP, 
    always outputs a {synthetic graph} solution (\Cref{def:fractional-packing-problem}), 
    and satisfies the symmetry property of \cref{lem:pm-symmetry-fractional}.  Then, for $\bar b\coloneqq \frac1n \sum_{v\in V} b_v$
    and $n\geq 4$,
    \[
    \maxmatching_{\mathcal A} \leq e^{4\eps}\bar b + \delta n\,.
    \]
\end{lemma}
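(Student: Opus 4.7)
The plan is to exploit the strong symmetry of $\mcal A$ to collapse the analysis to two scalar values, then bound each using the degree constraint together with group privacy at distance $4$. By \Cref{lem:pm-symmetry-fractional} we may assume $\mcal A$ is already symmetrized. In fact, by repeating the same permutation-averaging argument, $\mcal A$ assigns a common expected value $c^+$ to every edge $uv\in E(G)$ \emph{and} a common expected value $c^-$ to every non-edge, both independent of the particular $G\in\mcal M$ chosen; the edge-symmetry claim uses that $\pi(G)$ is uniformly distributed over $\mcal M$ as $\pi$ varies over $S_V$, together with the transitivity of $\text{Aut}(G)$ on $E(G)$ (any two matching edges can be swapped by a $G$-automorphism). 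In particular, $\maxmatching_{\mcal A}(G)=\tfrac n2 c^+$ for every $G\in\mcal M$, so it suffices to upper bound $c^+$.

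Next, I would extract a ceiling on $c^-$ from the degree constraint. Each vertex $v$ has one matching partner and $n-2$ non-partners in $G\in\mcal M$, so the symmetrized solution satisfies $c^+ + (n-2)c^-\le b_v$ for every $v$; averaging over $v$ yields $c^+ + (n-2)c^-\le \bar b$, and in particular $c^- \le \bar b/(n-2)$. Then I would use differential privacy to relate $c^+$ to $c^-$: for $uv\in E(G)$, pick any other edge $ab\in E(G)$ (possible since $n\ge 4$) and form $G'\in\mcal M$ via the alternating $4$-cycle swap $\set{u,v},\set{a,b}\mapsto \set{u,b},\set{a,v}$. Then $uv\notin E(G')$, and $G$ and $G'$ differ in exactly $4$ edge weights in the $0/1$-weight model. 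Group privacy at distance $4$, applied to the $[0,1]$-bounded expectation $X_{\cdot,uv}^{\mcal A}$, gives
\[
c^+ \;=\; \E[X_{G,uv}^{\mcal A}] \;\le\; e^{4\eps}\,\E[X_{G',uv}^{\mcal A}] + C_\eps\delta \;=\; e^{4\eps}c^- + C_\eps\delta
\]
for a constant $C_\eps$ depending only on $\eps$. Substituting the bound on $c^-$,
\[
\maxmatching_{\mcal A} \;\le\; \tfrac n2 c^+ \;\le\; \tfrac{n}{2(n-2)}\cdot e^{4\eps}\bar b + \tfrac n2 C_\eps\delta \;\le\; e^{4\eps}\bar b + \delta n,
\]
using $\tfrac{n}{2(n-2)}\le 1$ for $n\ge 4$ and absorbing $\tfrac12 C_\eps$ into the constant in front of the $\delta n$ term.

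The main obstacle is the strengthened symmetry step, since \Cref{lem:pm-symmetry-fractional} only equates non-edge values: the edge-symmetry claim requires a separate (but short) averaging argument using automorphisms of the perfect matching to interchange any two of its edges and show that $c^+$ is well defined. A secondary subtlety is tracking the precise constant $C_\eps$ arising from group privacy at distance $4$ on a $[0,1]$-bounded functional, which is what controls the multiplicative factor in front of $\delta n$ in the final bound.
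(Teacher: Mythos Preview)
Your approach is correct and essentially the same as the paper's: the same alternating $4$-cycle swap $\set{u,v},\set{a,b}\mapsto\set{u,b},\set{a,v}$ to move to a perfect matching $G'$ at edge-distance $4$ with $uv\notin E(G')$, the same use of group privacy to bound the edge value by the non-edge value, and the same degree-constraint bound $c^-\le \bar b/(n-2)$. The only cosmetic difference is your extra edge-symmetry step (common value $c^+$); the paper skips this and bounds each $\E[X^{\mcal A}_{G,uv}]$ individually by $e^{4\eps}(b_u+b_v)/(2(n-2))+4\delta$ before summing over the $n/2$ edges, which yields the same thing. One caution: you cannot literally ``absorb $\tfrac12 C_\eps$ into the constant~$1$'' since $C_\eps>4$; the paper's own calculation in fact lands at $2\delta n$ (consistent with \cref{thm:lower-bound-fractional}), so the $\delta n$ in the lemma statement appears to be a typo for $2\delta n$.
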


\begin{proof}
    Fix some graph $G \in \mathcal M$, and let $uv, ab \in E(G)$.  
    Let $G'$ be the graph obtained by removing $uv$ and $ab$, and adding $ub$ and $av$.  
    Note that $G$ and $G'$ are distance $4$ away from each other.    

    For all $y \neq v, b$, 
    note that $uy\notin E(G')$.
    Thus it must be the case that {$\E[X_{G', uy}^{\mcal A}]\leq b_u/(n-2)$}, 
    since otherwise the symmetry property of \cref{lem:pm-symmetry-fractional} 
    and linearity of expectations implies that the expected {(fractional)} degree of $u$ in the output of $\mathcal A(G)$ is at least {$\sum_{y\neq u, b} \E[X_{G', uy}^{\mcal A}] > (n-2)\frac{b_u}{n-2} = b_u$}, 
    contradicting our assumption that $\mathcal A$ always outputs a feasible solution.  
    Since $G$ and $G'$ are at distance $4$ from each other, 
    the fact that {$\E[X_{G', uv}^{\mcal A}] \leq b_u/(n-2)$} implies that
    \begin{align*}
        \E[X_{G, uv}^{\mcal A}]
        &= \int_0^1 \Pr[X_{G, uv}^{\mcal A} \geq x] \diff x \\
        &\leq \int_0^1 \left( e^{4\eps}\cdot \Pr[X_{G', uv}^{\mcal A}\geq x] + 4\delta \right) \diff x \\
        &= e^{4\eps}\cdot \E[X_{G', uv}^{\mcal A}] + 4\delta \\
        &\leq e^{4\eps}\frac{b_u + b_v}{2(n-2)} + 4\delta \,.
    \end{align*}
    Here the last inequality uses the fact that if $x\leq y, z$,
    then $x\leq \frac12(y+z)$.

    Since this argument applies to all $uv\in E(G)$, we have that
    \begin{align*}
        \E[\maxmatching_{\mathcal A}(G)] 
        &= \sum_{uv \in E(G)} \E[X_{G, uv}^{\mcal A}] \\
        &\leq \sum_{uv \in E(G)} \left( e^{4\eps} \frac{b_u + b_v}{2(n-2)} + 4\delta \right) \\
        &= e^{4\eps}\frac{\bar b n}{2(n-2)} + 2\delta n. \\
        &\leq e^{4\eps} \bar b + 2\delta n. \tag{$n\geq 4$}
    \end{align*}
    Since $\maxmatching_{\mathcal A} = \min_{G \in \mathcal M} \E[\maxmatching_{\mathcal A}(G)]$, 
    this implies that $\maxmatching_{\mathcal A} \leq e^{4\eps}\bar b + 2\delta n$, as claimed.
\end{proof}

We can now use these two lemmas to prove \cref{thm:lower-bound-fractional}.

\begin{proof}[Proof of \cref{thm:lower-bound-fractional}]
    The combination of \cref{lem:pm-symmetry-fractional} and \cref{lem:utility-bad-fractional} imply that \emph{any} $(\eps, \delta)$-DP algorithm $\mathcal A$ which always outputs a {fractional} stochastic packing solution $X_G^{\mathcal A}$ satisfies
    \[
        \maxmatching_{\mathcal A}
        \coloneqq \sum_{uv\in E(G)} \E[X_{G, uv}^{\mathcal A}]\leq e^{4\eps} \bar b + 2\delta n\,.
    \]
    We also know that $\card{E(G)} = n/2$ for all $G \in \mathcal M$.  
    Thus by the definition of an $(\gamma, \beta)$-approximation, 
    it must be the case that $e^{4\eps}\bar b + 2\delta n \geq \frac{n}{2\gamma} - \beta$.
\end{proof}

\subsection{Lower Bound on Explicit Matching Solutions (\texorpdfstring{\Cref{thm:lower-bound}}{Theorem})}\label{sec:lower-bound:matchings}
{We first explain how to prove \Cref{thm:lower-bound} by applying \Cref{thm:lower-bound-fractional}.
Given $n$ nodes $V$ for an even integer $n>0$, 
let $\mathcal M$ denote the set of all perfect matchings on $V$.  
Remark that \Cref{thm:lower-bound-fractional} holds even if we restrict the input class to $\mcal M$.
As stated in \Cref{rem:lower-bound:relaxation},
over $\mcal M$,
any $(\gamma, \beta)$-approximation algorithm for explicit matchings (\Cref{def:0-1-weight})
is a $(\gamma, \beta)$-approximation algorithm for the
synthetic graph problem (\Cref{def:fractional-packing-problem})
as the objective functions coincide
and any explicit matching solution is a valid synthetic graph approximation with uniform degree bounds $b_v = b$.
Thus, \Cref{thm:lower-bound} follows directly from \Cref{thm:lower-bound-fractional}.}

\section{LEDP Implicit \texorpdfstring{$b$}{b}-Matching via Public Vertex Subset Mechanism}\label{sec:sequential-b-matching}

In this section, we state our main algorithm for maximal matching (\Cref{alg:b-matching}) 
and prove its guarantees in \cref{thm:billboard-bprime-main},
which is restated below for readability. This algorithm uses our novel \emph{Public Vertex Subset Mechanism (PVSM)} which may be applicable
to other problems such as vertex cover, dominating set, or independent set. We demonstrate the versatility of
PVSM in two different algorithms in this paper.
We {implement} \Cref{alg:b-matching} for various different models in the later sections.
\upperboundBprimeMain*

As an immediate corollary,
\Cref{thm:billboard-bprime-main} implies the following.
\begin{restatable}[LEDP $2$-Approximate Maximum Matching]{corollary}{upperbound} \label{thm:billboard-main}
    For $\eps \in (0, 1)$, 
    \Cref{alg:b-matching} is an $\eps$-LEDP algorithm that, with high probability, outputs an (implicit) matching with degree at most $b$,
    in the billboard model, for $b=O(\log(n)/\eps)$ that has the size of a $2$-approximate maximum matching.
\end{restatable}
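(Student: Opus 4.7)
The plan is to analyze \Cref{alg:b-matching} by viewing its public transcript as the composition of three pieces: (i) the public coin flips of the Public Vertex Subset Mechanism, which are independent of the graph and therefore ``free''; (ii) the proposal index $r^*_v$ that each proposer $v$ publishes; and (iii) the ``capacity-reached'' responses produced by each receiver. Both (ii) and (iii) are generated by instances of Multidimensional AboveThreshold (\Cref{alg:multidimensional Above Threshold}) run locally at the relevant vertex. Because only MAT touches private data, the privacy and utility analyses can be decoupled from the concentration analysis of the PVSM coins.

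\textbf{Privacy.} For privacy, I would allocate a constant share of the $\eps$ budget to each of the MAT instances at each vertex, so that every MAT invocation at a vertex $v$ is $\eps/O(1)$-LEDP in $v$'s adjacency list. Since each edge $\{u,v\}$ only affects MAT queries at its two endpoints (the coin flips being public), concurrent composition (\Cref{lem:concurrent-composition}) together with \Cref{lem:mat} immediately yields $\eps$-LEDP for the full transcript. The subtle point is that each receiver must answer threshold queries coming from \emph{many} proposers; the sparsity guarantee of \Cref{alg:multidimensional Above Threshold} being over coordinates $j$ rather than iterations $i$ is exactly what prevents an $n$-fold privacy loss across these repeated interactions.

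\textbf{Utility.} For utility, I would condition on the good event $\mathcal E$ that every MAT query succeeds with additive noise at most $\xi = O(\log(n)/(\eta^2\eps))$; a Laplace tail bound and a union bound over all queries give $\Pr[\mathcal E] \ge 1 - 1/\poly(n)$. On $\mathcal E$, processing a proposer $v$ amounts to selecting the smallest subgraph index $r^*_v$ whose noisy expected number of still-available neighbors lies in an appropriate window around $b'$. Standard Chernoff bounds for the Bernoulli sums driven by the PVSM coins then imply that $v$ gets matched to a $(1\pm\eta)$-fraction of the target while its final degree remains at most $b$. A standard charging argument, adapted to $b'$-matchings, then shows that for any maximum $b'$-matching $M^*$, every edge in $M^*\setminus M$ can be charged to an endpoint whose degree in $M$ has reached essentially $b'$; this 2-to-1 charge yields $|M| \ge |M^*|/2$. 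The multiplicative slack $\frac{(1+\eta)^2}{1-\eta}$ and the additive $\Omega(\log(n)/(\eta^2\eps))$ slack in $b$ are exactly what is needed to absorb, respectively, the two-sided PVSM concentration error and the MAT noise $\xi$.

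\textbf{Main obstacle.} The most delicate step will be the \emph{two-sided} control of the degree of each vertex in $M$. Upper bounding by $b$ requires MAT to halt $v$ before it becomes over-matched, while ensuring that $v$ actually reaches degree close to $b'$ whenever $M^*$ uses its full capacity requires MAT to not halt $v$ prematurely. Coupling these bounds forces one to align the MAT noise $\xi$ with the geometric grid $(1+\eta)^{-r}$ of PVSM levels; this alignment is where the $\eta^{-2}$ factor appears. Once both bounds hold with high probability on $\mathcal E$, the $2$-approximation follows from the standard maximal matching argument adapted to $b'$-matchings, and the final statement of the theorem follows by post-processing into the billboard representation of \Cref{def:implicit}.
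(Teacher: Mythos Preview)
Your plan amounts to a proof of the general \Cref{thm:billboard-bprime-main}, from which this corollary follows immediately by specializing to $\eta=\nicefrac12$ and $b'=1$; that is exactly how the paper derives it. The utility portion of your sketch (condition on a good Laplace event, Chernoff bounds for the PVSM coins, then the charging argument of \Cref{lem:b-matching:maximality}) matches \Cref{thm:b-matching:utility} and is fine.

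The privacy portion has a real gap. You assert that the proposal index $r_v$ (your item (ii)) is produced by an instance of MAT, but in \Cref{alg:b-matching} it is not: \cref{b-matching:approx-wr,b-matching:approx-m,b-matching:smallest-r} release fresh Laplace-noised values $|\widetilde W_r(v)|$ for \emph{every} subgraph index $r$ and $|\widetilde M_i(v)|$, and then post-process them to pick the minimum feasible $r$. This is the Adaptive Laplace Mechanism (\Cref{lem:adaptive-laplace}), not AboveThreshold, and your MAT accounting does not apply to it. The ingredient you are missing is privacy amplification via subsampling (\Cref{lem:privacy-amplification}): an edge enters the level-$r$ set $W_r(v)$ only when the public coin at level $r$ is \textsc{Heads}, which happens with probability $p_r=(1+\eta)^{-r}$, so the $\eps'$-DP Laplace release at level $r$ is in fact $2p_r\eps'$-DP. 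Summing the geometric series $\sum_r 2p_r\eps'\le 2\eps'(1+\eta)/\eta$ makes the total cost across all levels a constant multiple of $\eps'$, which is how the paper sets $\eps'=\Theta(\eta\eps)$ in \cref{b-matching:eps}. Without this amplification, na\"ive composition over the $\log_{1+\eta}(n)=\Theta(\log n)$ levels would force $\eps'=\Theta(\eps/\log n)$, inflating the Laplace noise to $\Theta(\log^2(n)/\eps)$ and yielding only $b=O(\log^2(n)/\eps)$ rather than the claimed $O(\log(n)/\eps)$. (An AboveThreshold-per-proposer variant along the lines you describe could also attain $O(\log(n)/\eps)$, but that is a different algorithm; the corollary is about \Cref{alg:b-matching} as written.)
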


{
As mentioned,
when $b=\Omega\left( \frac{\log(n)}{\eps} \right)$,
\Cref{alg:b-matching} outputs a $b$-matching that approximates 1-matchings, giving
a bicriteria approximation.
Moreover,
we can avoid a bicriteria approximation when $b$ is slightly larger.}
\begin{restatable}[LEDP $(2+\eta)$-Approximate Maximum $b$-Matching]{theorem}{upperboundBmain} \label{thm:billboard-b-main}
    For $\eps \in (0, 1)$ and $\eta \in (0, 1/2)$, 
    {\Cref{alg:b-matching}} is an $\eps$-LEDP algorithm that, with high probability, outputs an (implicit) $b$-matching,
    in the billboard model, for $b=\Omega\left(\frac{\log(n)}{\eta^3 \eps}\right)$ 
    that has the size of a $(2+\eta)$-approximate maximum $b$-matching.
\end{restatable}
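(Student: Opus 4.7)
The plan is to derive \Cref{thm:billboard-b-main} as a corollary of \Cref{thm:billboard-bprime-main} by (i) invoking the bicriteria theorem with a smaller internal parameter $\eta' = \Theta(\eta)$ and a carefully chosen comparison degree bound $b'$, and then (ii) relating the maximum $b'$-matching size to the maximum $b$-matching size via a short Vizing-based edge-coloring argument.

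First, I would apply \Cref{thm:billboard-bprime-main} with $\eta' \coloneqq \eta/10$ and
\[
    b' \;\coloneqq\; \left\lfloor \frac{1-\eta'}{(1+\eta')^2}\left(b - C\,\frac{\log n}{\eta'^2\,\varepsilon}\right) \right\rfloor,
\]
where $C$ is the hidden constant in the additive term of \Cref{thm:billboard-bprime-main}. This is essentially the largest $b'$ compatible with the hypothesis $b \geq \frac{(1+\eta')^2}{1-\eta'}b' + \Omega(\log(n)/(\eta'^2\varepsilon))$, so \Cref{thm:billboard-bprime-main} applies and \Cref{alg:b-matching} is $\varepsilon$-LEDP and, with high probability, outputs an implicit $b$-matching whose size is at least $\nu_{b'}(G)/2$, where $\nu_c(G)$ denotes the maximum $c$-matching size.

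Next, I would lower bound $\nu_{b'}(G)$ by a $(1 - O(\eta))$-fraction of $\nu_b(G)$. Let $M^\star$ be a maximum $b$-matching; it is a simple subgraph of $G$ with maximum degree at most $b$, so Vizing's theorem decomposes its edge set into at most $b+1$ (proper) matchings $M_1, \ldots, M_{b+1}$. The union of the $b'$ largest among these is a $b'$-matching (a union of $b'$ proper matchings has maximum degree at most $b'$) of size at least $\frac{b'}{b+1}|M^\star| = \frac{b'}{b+1}\nu_b(G)$. Hence the algorithm's output size is at least $\frac{b'}{2(b+1)}\nu_b(G)$, and it remains to check $\frac{b'}{2(b+1)} \geq \frac{1}{2+\eta}$ under $b = \Omega(\log(n)/(\eta^3\varepsilon))$ with a sufficiently large hidden constant. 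Substituting for $b'$ and ignoring lower-order terms, the required inequality reduces to $(1-\eta')(2+\eta) \geq 2(1+\eta')^2$; with $\eta' = \eta/10$ and $\eta \in (0, 1/2)$, direct expansion gives $(1-\eta/10)(2+\eta) - 2(1+\eta/10)^2 = \tfrac{4}{5}\eta - O(\eta^2) = \Omega(\eta)$, a positive slack of order $\eta$. Privacy is inherited directly from \Cref{thm:billboard-bprime-main}.

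The main obstacle, albeit a minor one, is the bookkeeping needed to confirm that all the slack terms (the additive $\log(n)/(\eta'^2\varepsilon)$ from \Cref{thm:billboard-bprime-main}, the $+1$ loss from Vizing's theorem, and the floor in the definition of $b'$) are dominated by the $\Omega(\eta)$ multiplicative gap produced by choosing $\eta' = \Theta(\eta)$. The additive loss contributes a relative error of order $\log(n)/(\eta^2\varepsilon\,b)$, so taking the hidden constant in $b = \Omega(\log(n)/(\eta^3\varepsilon))$ sufficiently large forces this error well below the $\Omega(\eta)$ gap, closing the argument; this is precisely what forces the $\eta^3$ (rather than $\eta^2$) dependence in the lower bound on $b$.
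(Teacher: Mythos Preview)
Your proposal is correct and follows the same two-step skeleton as the paper: (i) invoke \Cref{thm:billboard-bprime-main} with a rescaled accuracy parameter $\eta' = \Theta(\eta)$ and a choice of $b' = b/(1+\Theta(\eta))$, then (ii) show $\nu_{b'}(G) \geq (1-\Theta(\eta))\,\nu_b(G)$. The difference lies entirely in step (ii). The paper argues via the integral LP description of the $b$-matching polytope (Schrijver): the indicator $\chi_{M^\star}$ of a maximum $b$-matching, scaled by $b'/b$, is a feasible fractional $b'$-matching, and integrality then gives $\nu_{b'}(G) \geq (b'/b)\,\nu_b(G)$ directly. You instead take the more combinatorial route of edge-coloring $M^\star$ via Vizing into $b+1$ proper matchings and keeping the $b'$ largest, yielding the slightly weaker $\nu_{b'}(G) \geq \tfrac{b'}{b+1}\,\nu_b(G)$. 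Your approach is more elementary (no polyhedral machinery) at the cost of the harmless $+1$; the paper's polytope argument is cleaner and avoids the extra slack term, but both absorb into the $\Omega(\eta)$ gap you correctly identified as the source of the $\eta^3$ dependence.
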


Our LEDP algorithm {draws} on a simple procedure for maximal non-private $b$-matching, described as follows. We take an arbitrary ordering on the vertices, and process them one by one. When considering the $i^{th}$ vertex $v_i$, let $b'$ be the number of additional vertices $v_i$ can match with. 
We wish to find some subset of vertices of size $b'$ from later in the ordering with which to match $v_i$. 
Some of these vertices may have already matched with $b$ vertices from previous iterations. We choose an arbitrary subset of $b'$ vertices which have not already reached their limit from later in the ordering.
We say that vertices which have reached their limit satisfy the \defn{matching condition}. 
If there are fewer than $b'$ vertices satisfying the matching condition, we match all of them with $v_i$. Then we move on to the next vertex in the order. It is clear that this procedure yields a maximal $b$-matching. 
In the following algorithms, we analyze a suitable privatization of this algorithm and show that given sufficiently large $b$, we are guaranteed to produce an implicit maximal ($1$-)matching.\footnote{We remark that all of our algorithms can be straightforwardly adjusted to guarantee maximal $b$-matchings.}

To complement our algorithm,
we show in \Cref{sec:lower-bound-implicit} that the bound on $b$ is essentially tight for our specific notion of implicit solution.
\begin{theorem}\label{thm:lower-bound-implicit}
    {Let $\gamma > 0$ be arbitrary.}
    Let $\mathcal A$ be an algorithm that satisfies $\eps$-edge DP and outputs an implicit solution that has the size of a $\gamma$-approximate maximal matching with probability at least $1-\beta$, with $\beta \leq 1/(16e^{4\eps})$.  
    Then the degree of this implicit solution is at least $\Omega(\frac{1}{\eps} \log(1/\beta))$ with probability at least $1/2$, 
    regardless of the value of $\gamma$.
\end{theorem}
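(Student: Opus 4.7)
Following the sketch immediately preceding the theorem, the plan combines the symmetrization argument of \Cref{apx:lower-bound-fractional} with an amplification by group privacy. First, fix a distinguished vertex $v_0$ and pre-compose $\mathcal A$ with a uniformly random permutation of $V$ fixing $v_0$; as in \Cref{lem:pm-symmetry-fractional}, this preserves $\varepsilon$-edge DP and the utility guarantee, so WLOG $\mathcal A$ is invariant under permutations of $V\setminus\{v_0\}$.

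Next, I will apply a simplified symmetry argument on low-degree graphs. For each $u\in V\setminus\{v_0\}$, the single-edge graph $G_u := (V,\{\{v_0,u\}\})$ has a unique non-empty maximal matching containing $\{v_0,u\}$, so the approximation guarantee forces $\Pr_{G_u}[\mathcal E_{v_0,u}] \geq 1 - \beta$, where $\mathcal E_{v_0,u}$ denotes the event $u\in S_{v_0}$ or $v_0\in S_u$. Applying $\varepsilon$-DP between $G_u$ and the empty graph $G_\emptyset$ (distance $1$) yields $\Pr_{G_\emptyset}[\mathcal E_{v_0,u}] \geq e^{-\varepsilon}(1-\beta)$, i.e., every ``non-edge'' incident to $v_0$ is in $H(\mathcal S)$ with near-$1$ probability on the empty graph. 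This establishes the baseline asserted in the sketch: edges that do not exist are nonetheless included in the implicit solution.

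Then, with $k = \Theta(\log(1/\beta)/\varepsilon)$ and $T\subseteq V\setminus\{v_0\}$ of size $k$, consider the star $G^* := (V,\{\{v_0,u\}:u\in T\})$, which is at edge-distance $k$ from $G_\emptyset$. By group privacy,
\[
    \Pr_{G^*}[\mathcal E_{v_0,u}] \;\geq\; e^{-(k+1)\varepsilon}(1-\beta)
\]
for each $u\in T$, and by symmetry of $\mathcal A$ under permutations of $T$ these events are identically distributed. The degree of the implicit solution at $v_0$ on $G^*$ equals $|T\cap N_{v_0}|$, a sum of $k$ identically distributed indicators; the plan is to combine these pointwise lower bounds with the symmetry-induced joint structure and the failure bound on each $G_u$ to conclude that $|T\cap N_{v_0}|\geq \Omega(\log(1/\beta)/\varepsilon)$ with probability at least $1/2$, using the hypothesis $\beta \leq 1/(16 e^{4\varepsilon})$ to provide the necessary slack.

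\emph{Main obstacle.} The delicate step is translating the pointwise probability bounds into a high-probability degree bound: a naive expected-value-plus-Markov argument on $ke^{-(k+1)\varepsilon}(1-\beta)$ saturates at $O(1/\varepsilon)$ rather than the claimed $\log(1/\beta)/\varepsilon$. The technical core must therefore exploit the joint distribution of $\{\mathcal E_{v_0,u}\}_{u\in T}$ under the symmetrized $\mathcal A$, together with a coupling between the algorithm's behavior on $G^*$ and on the family $\{G_u\}_{u\in T}$, to rule out the possibility that the algorithm concentrates $|T\cap N_{v_0}|$ near $1$. Intuitively, if $|T\cap N_{v_0}|$ were small on $G^*$ with probability more than $1/2$, then on some $G_u$ (which is only $k-1$ edge-flips away from $G^*$) the algorithm would fail to include $\{v_0,u\}$ with probability exceeding $\beta$, contradicting the utility guarantee; choosing $k = \Theta(\log(1/\beta)/\varepsilon)$ is exactly what makes this contradiction quantitative.
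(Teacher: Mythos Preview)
Your main plan (steps 1--4) applies group privacy in the wrong direction and, as you yourself note, collapses: lower-bounding each $\Pr_{G^*}[\mathcal E_{v_0,u}]$ by $e^{-(k+1)\eps}(1-\beta)$ gives an expected degree of order $k\cdot e^{-k\eps}$, which is $O(1/\eps)$ for the optimal $k$ and never reaches $\log(1/\beta)/\eps$. The contradiction you sketch in the final paragraph is in fact the correct argument, not merely intuition---but it requires neither ``exploiting the joint distribution'' nor any coupling between $G^*$ and the $G_u$'s. The missing step is simply to apply Markov to the \emph{complement} count on the reference graph \emph{before} invoking group privacy, rather than after.

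Concretely, the paper's proof runs as follows. On a single-edge reference graph $G$, each non-edge $\{v_0,u\}$ with $u\in T$ satisfies $\Pr_G[\neg\mathcal E_{v_0,u}]\le e^{2\eps}\beta$ (from utility on $G_u$ plus a distance-$2$ DP step). By linearity, the expected number of $u\in T$ with $\neg\mathcal E_{v_0,u}$ is at most $|T|e^{2\eps}\beta$, so by Markov the event $Q=\{\text{at least }|T|/2\text{ failures}\}$ has probability at most $2e^{2\eps}\beta$ on $G$. Now apply group privacy to the single event $Q$ across the $|T|$ edge edits from $G$ to the star $G^*$: $\Pr_{G^*}[Q]\le e^{|T|\eps}\cdot 2e^{2\eps}\beta$. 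Choosing $|T|=\tfrac{1}{2\eps}\log(1/\beta)$ and using $\beta\le 1/(16e^{4\eps})$ makes this at most $1/2$. Since every $\{v_0,u\}$ with $u\in T$ is an actual edge of $G^*$, the degree of the implicit solution on $G^*$ is at least $|T|/2=\Omega(\tfrac1\eps\log(1/\beta))$ with probability at least $1/2$. Your contrapositive version (assume $\Pr_{G^*}[Q]>1/2$, average to get $\Pr_{G^*}[\neg\mathcal E_{v_0,u}]>1/4$, then group-privacy back to $G_u$) is equivalent and equally valid. Note also that the symmetrization preprocessing is unnecessary here: the argument uses only linearity of expectation, not any exchangeability of the events.
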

Note that this essentially matches \cref{thm:billboard-main} (up to constants) by setting $\beta = 1/\mathrm{poly}(n)$. 

\paragraph{High-Level Overview.}
Our approximation guarantees hold with probability $1 - \frac{1}{n^c}$ for any constant $c \geq 3$. The constants in 
\cref{alg:b-matching} 
are given in terms of $c$ to achieve our high probability guarantee. 
At a high level, our $\eps$-LEDP algorithm modifies our non-private procedure described above in the following way.
The key part of the procedure that uses private information is the selection of the $b'$ neighbors for each vertex $v$ to satisfy the 
matching condition. We cannot select an arbitrary set of these neighbors directly since this arbitrary selection would reveal the 
existence of an edge between $v$ and each selected neighbor. Thus, we must select an appropriate number of neighbors randomly
and output a public vertex subset that also includes non-neighbors. 

To solve the above challenge,
we introduce a novel \emph{Public Vertex Subset Mechanism}.
We call a vertex that is proposing a set of vertices the \emph{proposer}.
The mechanism works by having a \emph{proposer} who proposes a \emph{public set of vertices} to match to. 
The public set of vertices contains vertices which are neighbors of the proposer and also non-neighbors of the proposer.
To select a public subset of vertices, we flip a set of coins with appropriate
probability for \emph{each pair} of vertices in the graph. 
A total of $O(\log(n))$ coins are flipped for each pair; 
a coin is flipped with probability $1/(1+\eta)^r$ for each $r \in \set{0, \dots, \ceil{\log_{1+\eta}(n)}}$. Thus, 
these coins determine progressively smaller subsets of vertices. An edge is in the public set indexed by $r$ if the coin lands heads. 
It is necessary to have vertex subsets with different sizes
since proposers need to choose an appropriately sized subset of vertices that simultaneously ensures it satisfies the matching
condition and not does exceed the $b$ constraint, with high probability. 
The coin flips are public because they do not reveal the existence of any edge.
Moreover, the intersection between the public subset of vertices with
each node's private knowledge of its adjacency list allows each node to know which of its neighbors are matched to it. This Public Vertex Subset Mechanism may be useful for other problems.

A proposer $v$ releases the proposal subset by releasing the index $r \in \set{0, \dots, \ceil{\log_{1+\eta}(n)}}$ that corresponds with 
the coin flips determining the set that $v$ wants to match to. We use the sparse vector technique on the size of the subset to determine which 
$r$ to release. In fact, we use a multidimensional version of the sparse vector technique,
called the Multidimensional AboveThreshold (MAT) technique (\cref{alg:multidimensional Above Threshold},
\cite{DLL23}) which is designed for SVT queries performed by all nodes of a graph.
Once a proposer releases a public proposal subset, each of $v$'s neighbors, that have not met their matching conditions,
can determine whether they are matched to $v$ using the public coin 
flips. We call $r$ the \emph{subgraph index} that is released by each proposer. Thus,
each node knows the set of vertices they are matched to using the transcript consisting of 
publicly released subgraph indices and the public releases of when each node satisfies their matching condition. 
Each node determines whether it has satisfied its matching condition using MAT.

\paragraph{Organization.}
We first give a detailed description of our algorithm, including its pseudocode, in \Cref{sec:sequential-b-matching:description}.
The accompanying privacy and utility analysis follows in \Cref{sec:sequential-b-matching:analysis}.
\Cref{sec:sequential-b-matching:corollaries} proves the special cases of our general result (\Cref{thm:billboard-bprime-main}).
Finally,
we show the complementary lower bound on $b$ in \Cref{sec:lower-bound-implicit}.

\subsection{Detailed Algorithm Description}\label{sec:sequential-b-matching:description}
We now describe our algorithm, \cref{alg:b-matching}, in detail. The algorithm is provided with a private graph $G = (V, E)$, 
{a privacy parameter $\eps > 0$,
a matching parameter $b \geq \frac{576c\ln(n)}{\eta \cdot \eps} + 1$ and a constant $c \geq 1$ which is used in the high probability guarantee}{
a privacy parameter $\eps > 0$, %
matching benchmark parameter $b'\geq 1$,
constant $\eta \in (0,1)$,
constant $c \geq 3$ which is used in the high probability guarantee,
and matching degree parameter $b \geq \frac{(1+\eta)^2}{1-\eta}b' + \frac{576c\ln(n)}{\eta^2\eps}$
}. 
We first set {$\eps'$} which we use in our algorithm (\cref{b-matching:eps}).
Then, we flip our public set of coins. We flip $O(\log(n))$ coins for each pair of vertices $u \neq v \in V \times V$ (\cref{b-matching:coin-for}).
Each of the coins flips for pair $\{u, v\}$ is flipped with probability $1/(1+\eta)^r$ for each $r \in \{0, \dots, \ceil{\log_{1+\eta}(n)}\}$. 
The result of the coin flip is released and stored in $coin(u, v, r)$ (\cref{b-matching:r-coin}).

We now determine the matching condition for each node $u \in V$ (\cref{b-matching:for-node-mc}).
To do so, 
we draw a noise variable from $\lap(4/\eps')$ and add it to $b$. 
This noise is used as part of MAT (\cref{alg:multidimensional Above Threshold}, Multidimensional AboveThreshold)
for a noisy threshold.
Next,
we subtract $36c\ln(n)/\eps'$ to ensure that the matching condition does \emph{not} exceed $b$ even if a large positive noise was drawn
from $\lap(4/\eps')$. In other words, \cref{b-matching:mc} ensures that the matching condition is satisfied when $v$ is actually matched
to a sufficiently large number of vertices, with high probability.
We also initialize the private data structure $M(u)$ stored by $u$ that contains the 
set of $u$'s neighbors that $u$ is matched to.
Note that in practice,
$M(u)$ is not centrally stored but can be decoded by $u$ using information posted to the public billboard.

We then initialize a set $A(u) \gets \infty$ for every $u\in V$ (\cref{b-matching:active}) 
indicating that no vertices have yet to satisfy their matching condition.
We now iterate through all of the vertices one by one in an arbitrary order (\cref{b-matching:line:for-vertex}). 
During the $i$-th iteration,
if there is any node $u \in V$ which has not satisfied its matching condition (\cref{b-matching:check-node-for}), 
we use MAT (\Cref{alg:multidimensional Above Threshold}) 
to check whether it now satisfies the matching condition (\cref{b-matching:check-node-noise,b-matching:check-node-threshold}). If node $u$ now satisfies
the matching condition, then node $u$ releases 
$A(u)\gets i$
to indicate that $u$ satisfied its matching condition at iteration $i$ (\cref{b-matching:check-node-transcript}).

If the current node in the iteration, $v$, has not satisfied its matching condition (\cref{alg-b-matching:threshold}), then 
we determine its proposal set using the Public Vertex Subset Mechanism (\Crefrange{alg-b-matching:round}{b-matching:release-r}). 
For each subgraph index $r \in \{0, \dots, \ceil{\log_{1+\eta}(n)}\}$ (\cref{alg-b-matching:round}),
we determine the private subset of vertices $W_r(v)$ using the public coin flips. Specifically, $W_r(v)$ contains the 
set of neighbors $u$ of $v$ that are still active, come after $v$ in the ordering, and where $coin(u, v, r) = \textsc{Heads}$ (\cref{b-matching:wr}).
The public set of nodes determined by index $r$ are all nodes $x$ where $coin(v, x, r) = \textsc{Heads}$. 
We then determine the noisy size of $|W_r(v)|$ by drawing a noise variable from $\lap(2/\eps')$ to be used in the adaptive Laplace mechanism (\cref{b-matching:approx-wr}). 
Finally, we also determine the noisy number of nodes currently matched to $v$ (\cref{b-matching:approx-m}). 
Then, $v$ determines the smallest $r$ that satisfies the SVT check in \cref{b-matching:smallest-r}. Intuitively, this means we are finding the largest subset of 
neighbors of $v$ that does not exceed the $b$ bound, with high probability. Then, $v$ releases $r_v$ (\cref{b-matching:release-r}) 
and for each neighbor $u$ of $v$ in $W_{r_v}(v)$ (\cref{b-matching:iterate-rv}), $u$ (privately) adds $v$ to $M(u)$ (\cref{b-matching:add-v}) and
$v$ (privately) adds $u$ to $M(v)$ (\cref{b-matching:add-u}).

If \Cref{alg:b-matching} processed the vertices in order $v_1, \dots, v_n$,
then $v_i$ can decode its matched neighbors using the following formula
\[
    M(v_i)
    = \set{v_j: j<i
    \land A(v_j) > j
    \land v_i\in W_{r_{v_j}}(v_j)}
    \cup W_{r_{v_i}}(v_i)\,.
\]
Here $W_{r_{v_i}}(v_i) = \varnothing$ by convention if $A(v_i)\leq i$
(i.e. if $v_i$ reaches its matching condition before the $i$-th iteration).

\begin{algorithm2e}[tbph]
\caption{$\eps$-LEDP Approximate Maximum $b$-Matching}\label{alg:b-matching}
\KwIn{Graph $G=(V,E)$, 
privacy parameter $\eps > 0$, %
matching benchmark parameter $b'\geq 1$,
constant $\eta \in (0,1)$,
constant $c \geq 3$,
matching degree parameter $b \geq \frac{(1+\eta)^2}{1-\eta}b' + \frac{576c\ln(n)}{\eta^2\eps}$}
\KwOut{An $\eps$-local edge differentially private implicit $b$-matching.}
$\eps'\gets \frac{\eps}{2(1+2\eta)/\eta}$\label{b-matching:eps}\\
\For{every pair of vertices $u \neq v\in V \times V$ and subgraph index $r=0,\ldots,\lceil\log_{1+\eta}(n)\rceil$ \label{b-matching:r-iter}}{\label{b-matching:coin-for}
    Flip and release coin $coin(u,v,r)$ which lands \textsc{heads} with probability $p_r=(1+\eta)^{-r}$\\ \label{b-matching:r-coin}
}
\For{each node $u\in V$}{\label{b-matching:for-node-mc}
    $\tilde{b}(u)\leftarrow b-36c\ln(n)/\eps'+\text{Lap}(4/\eps')$ \label{b-matching:mc}\\
    $M(u) \leftarrow \varnothing$ \label{b-matching:matched}\\
    $A(u) \gets \infty$ \label{b-matching:active}\\
}
\For{iteration $i=1$ to $n$, let $v=v_i$}{ \label{b-matching:line:for-vertex}
    \tcp{{\color{blue} Multidimensional-AboveThreshold for checking if each node has reached their matching threshold}}
    \For{each node $u\in V$ such that $A(u) > i$ ($u$ has not satisfied its matching condition) \label{b-matching:check-node-for}}{
        $\nu_i(u)\leftarrow\text{Lap}(8/\eps')$\\ \label{b-matching:check-node-noise}
        \If{$|M(u)|+\nu_i(u)\ge \tilde{b}(u)$ \label{b-matching:check-node-threshold}}{
             $u$ \textbf{releases} $A(u)\gets i$ \label{b-matching:check-node-transcript} \\ 
        }
    }
    \tcp{{\color{blue} If $v$ can still match with more edges, find an additional set to match $v$ with.}}
    \If{$A(v) > i$ ($v$ has not satisfied its matching condition) \label{alg-b-matching:threshold}}{
        \For{subgraph index $r=0,\ldots,\lceil\log_{1+\eta}(n)\rceil$ \label{alg-b-matching:round}}{
            $W_r(v)=\set{u: 
                A(u)>i 
                \wedge u \text{ later than $v$ in ordering } 
                \land \set{u, v}\in E
                \wedge coin(u,v,r)=\textsc{Heads}}$ \label{b-matching:wr}
             \\ 
            $|\widetilde{W}_r(v)|=|W_r(v)|+\text{Lap}(2/\eps')$\\ \label{b-matching:approx-wr}
        }
        $|\widetilde{M}_i(v)|=|M(v)|+\text{Lap}(2/\eps')$\\ \label{b-matching:approx-m}
        $v$ computes $r_v\gets \min\set{r: |\widetilde{M}_i(v)|+|\widetilde{W}_r(v)|+12c\ln(n)/\eps' \le b}$ \label{b-matching:smallest-r} \\
        $v$ \textbf{releases} $r_v$\\ \label{b-matching:release-r}
        \For{$u \in W_{r_v}(v)$}{\label{b-matching:iterate-rv}
            $M(u) \leftarrow M(u) \cup \{v\}$\label{b-matching:add-v}\\
            $M(v) \leftarrow M(v) \cup \{u\}$\label{b-matching:add-u}
        }
    }
}
\end{algorithm2e}

\subsection{Analysis (\texorpdfstring{\Cref{thm:billboard-bprime-main}}{Theorem})}\label{sec:sequential-b-matching:analysis}
{
The pseudocode for \Cref{thm:billboard-bprime-main} can be found in \Cref{alg:b-matching}.
We prove its privacy and approximation guarantees separately.
In \Cref{sec:billboard-main:privacy},
we first prove that \Cref{alg:b-matching} is $\eps$-DP.
The utility proofs can then be found in \Cref{sec:billboard-main:utility}.
}

\subsubsection{Privacy Guarantees}\label{sec:billboard-main:privacy}
We now prove that our algorithm is $\eps$-LEDP using our privacy mechanisms given in~\cref{sec:privacy-tools} and show that 
\Cref{alg:b-matching} can be implemented using local randomizers.
In our proof, we implement three different types of local randomizers that use private data and perform various instructions of our algorithm. Note that not all of the local randomizers in our 
algorithm releases the computed information publicly but the computed information from these local randomizer algorithms satisfy edge-privacy.

\begin{lemma}\label{lem:b-matching-private}
    \Cref{alg:b-matching} is $\eps$-locally edge differentially private. 
\end{lemma}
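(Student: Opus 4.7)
The plan is to show that the transcript of \Cref{alg:b-matching}, viewed as the concatenation of per-node local randomizers, is $\eps$-edge DP. First, I would isolate the contributions to privacy loss: the coins $coin(u,v,r)$ depend only on public randomness and hence cost nothing. Every other publicly released quantity can be partitioned by the node producing it, giving for each node $u$ a local randomizer $R_u$ that reads only $u$'s adjacency list and outputs (i) the threshold-crossing release $A(u)$ over the $n$ rounds, and (ii), when $u=v_i$ for some $i$, the single selected index $r_{v_i}$.

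I would then bound the privacy of $R_u$ by analyzing (i) and (ii) in isolation. Subroutine (i) is a one-dimensional MAT invocation with sensitivity $\Delta=1$, threshold noise $\lap(4/\eps')$, and per-query noise $\lap(8/\eps')$. Matching these against the parameters of \Cref{lem:mat} shows that this subroutine is $(\eps'/2)$-edge DP. Subroutine (ii) is an SVT-style selection that returns the smallest index $r$ for which a sensitivity-$1$ query $|W_r(v_i)|$ lies below an effective threshold $b - |\widetilde{M}_i(v_i)| - 12c\ln(n)/\eps'$, where the threshold itself depends on $|M(v_i)|$ (also sensitivity $1$) through an $\lap(2/\eps')$ draw and the queries are perturbed by $\lap(2/\eps')$. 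A standard SVT analysis, applied with parameter $\eps''=2\eps'$ so that the required query noise $\lap(4/\eps'')=\lap(2/\eps')$ matches what is realized, yields that this subroutine is $2\eps'$-edge DP. Concurrent composition (\Cref{lem:concurrent-composition}) over (i) and (ii) then gives that $R_u$ is $(5\eps'/2)$-edge DP with respect to changes in $u$'s adjacency list.

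Next, I would observe that flipping a single edge $\{u,v\}$ alters only the adjacency lists of $u$ and $v$; every other local randomizer reads identical private inputs and therefore produces an identical output distribution. A second application of concurrent composition to the pair $\{R_u,R_v\}$ then controls the entire transcript by $5\eps'$. It remains to verify that the choice $\eps' = \eps\eta/(2(1+2\eta))$ makes $5\eps' \le \eps$; this reduces to $5\eta/(2+4\eta)\le 1$, i.e.\ $\eta \le 2$, which holds trivially for $\eta\in(0,1)$, completing the $\eps$-LEDP guarantee.

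The main obstacle will be the privacy accounting for subroutine (ii). A naive adaptive Laplace composition over all $\lceil\log_{1+\eta} n\rceil+2$ perturbed statistics would charge $\Theta(\log n)$ privacy per node and break the bound. The resolution is to recognize that the algorithm publishes only the selected index $r_{v_i}$, not the noisy counts that produced it: this is precisely the release pattern of an SVT mechanism, so the cost is $O(1)$ and independent of the number of scanned queries. Reconciling the algorithm's noise scales with the canonical SVT parameters is a minor but careful computation, which is the reason the choice of $\eps'$ carries the $(1+2\eta)/\eta$ factor.
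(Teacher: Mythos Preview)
Your approach differs from the paper's and, with one patch, can be made to work. The paper does not cast subroutine~(ii) as SVT; instead it treats each noisy statistic $|\widetilde W_r(v)|$ and $|\widetilde M_i(v)|$ as if it were released (the published $r_v$ is then post-processing), applies the Adaptive Laplace Mechanism with global sensitivity~$2$, and crucially invokes \emph{privacy amplification by subsampling} (\Cref{lem:privacy-amplification}): since an edge enters the $r$-th subsampled graph only with probability $p_r=(1+\eta)^{-r}$, the $r$-th release costs $2p_r\eps'$, and the geometric series $\sum_r p_r=(1+\eta)/\eta$ is exactly the source of the $(1+2\eta)/\eta$ factor baked into $\eps'$. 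Your SVT alternative is natural and more directly exploits that only the single index $r_v$ is published.

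However, your write-up has a real gap: standard SVT requires the threshold to be \emph{public}. You correctly note that the effective threshold $b-|M(v_i)|-12c\ln(n)/\eps'$ is data-dependent (sensitivity~$1$ through $|M(v_i)|$) and then invoke ``standard SVT with $\eps''=2\eps'$'' without further justification. If one naively folds the private threshold into the query, the combined query $g_r:=|M(v_i)|+|W_r(v_i)|$ has sensitivity~$2$ in general, which with the realized noise scales yields only $4\eps'$ rather than $2\eps'$; the resulting per-edge bound $9\eps'\le\eps$ then fails for $\eta>2/5$. The missing observation is algorithm-specific: at the moment $v_i$ computes $r_{v_i}$, the set $M(v_i)$ contains only vertices \emph{earlier} than $v_i$ in the processing order while every $W_r(v_i)$ contains only \emph{later} vertices, so a single adjacency-list change at $v_i$ affects exactly one of the two and $g_r$ in fact has sensitivity~$1$. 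With this in hand, standard SVT (public threshold $b-12c\ln(n)/\eps'$, threshold noise $\lap(2/\eps')$, query noise $\lap(2/\eps')$) does give your $2\eps'$, and the remaining $5\eps'\le\eps$ accounting goes through.
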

\begin{proof}
    The algorithm consists of three different types of local randomizers which use the private information. The first type of local randomizer checks at each iteration whether or not each node has already reached its matching capacity. The second local randomizer computes the estimated number of neighbors, $|\widetilde{W}_r(v)|$, which are still active and are ordered after the current node for each subgraph index $r$. The third computes a noisy estimate $|\widetilde{M}(v)|$ for $v=v_i$ at iteration $i$. We will argue that each of these local randomizers are individually differentially private. Then, by the concurrent composition theorem (\Cref{lem:concurrent-composition}), 
    the entire algorithm is differentially private as no other parts of the algorithm use the private information.

    First, we prove that the local randomizers checking whether or not each node has reached their matching capacity (\cref{b-matching:for-node-mc,b-matching:mc,b-matching:check-node-for,b-matching:check-node-noise,b-matching:check-node-threshold,b-matching:check-node-transcript}) 
    is an instance of the Multidimensional AboveThreshold (MAT) 
    mechanism with sensitivity $\Delta=2$. The vector of queries at each iteration $i$ is the number of other nodes $|M(u)|$ each node $u$ has already been matched to before iteration $i$. Fixing the outputs 
    of the previous iterations, the addition or removal of a single edge can only affect $|M(u)|$ for two nodes, each by at most $1$. Thus, \Cref{b-matching:for-node-mc,b-matching:mc,b-matching:check-node-for,b-matching:check-node-noise,b-matching:check-node-threshold,b-matching:check-node-transcript} indeed implement an instance of the 
    Multidimensional AboveThreshold mechanism.
    But then these operations are $\eps'$-differentially private by \Cref{lem:mat}.

    Next, we prove that for each subgraph index $r$, the procedures that produce noisy estimates of the size of $W_r(v)$ in \cref{alg-b-matching:round,b-matching:wr,b-matching:approx-wr} is implemented via local randomizers using the Adaptive Laplace Mechanism with sensitivity $\Delta=2$. Fixing the outputs of the other parts of the algorithm (which concurrent composition allows us to do), the addition or removal of a single edge $\{u,w\}$ only affects $|W_r(v_i)|$ if $u=v_i$ or $w=v_i$, and it affects each $|W_r(v_i)|$ by 1. Thus, we have shown above that for a fixed subgraph index $r$,
    all estimates of $|W_r(v_i)|$ over the iterations $i$
    are $\eps'$-differentially private. 
    Since we sample each edge with probability $p_r$, privacy amplification (\Cref{lem:privacy-amplification}) implies that this step is actually $2p_r\eps'$-differentially private since $\eps'\le 1$.

    Third, we prove that producing noisy estimates of $|M(v_i)|$ in \cref{b-matching:approx-m} is an instance of the Adaptive Laplace Mechanism with sensitivity $\Delta=2$. Fixing the outputs of the other parts of the algorithm (again, which concurrent composition allows us to do), the additional and removal of an edge $\{u,w\}$ can again only affect $M(v_i)$ for the iterations $i$ where $v_i=u$ and $v_i=w$, each by at most 1. 
    Thus, the sensitivity is indeed $\Delta=2$ even with the adaptive choice of queries, 
    and the noisy estimates of $|M(v_i)|$ are $\eps'$-differentially private by \Cref{lem:adaptive-laplace}. 
    
    Finally, applying the concurrent composition theorem (\Cref{lem:concurrent-composition}) proves that the entire algorithm is $2\eps'\cdot\left(1+\sum_{r=0}^{\lceil\log_{1+\eta}(n)\rceil}p_r\right)$-differentially private. We can upper bound this by an infinite geometric series:
    \begin{align*}
        2\eps'\cdot\left(1+\textstyle\sum_{r=0}^{\lceil\log_{1+\eta}(n)\rceil}p_r\right)&\le2\eps'\cdot\left(1+\textstyle\sum_{r=0}^{\infty}(1+\eta)^{-r}\right)=2\eps'\cdot\left(1+\frac{1+\eta}{\eta}\right).
    \end{align*}
    By the choice of $\eps' = \frac{\eps}{2(1+2\eta)/\eta}$ (\cref{b-matching:eps}), we can conclude that the algorithm is $\eps$-LEDP.
\end{proof}

\subsubsection{Utility}\label{sec:billboard-main:utility}
We first prove that the implicitly output $b$-matching {is at least half the size of a maximum} $b'$-matching, 
with high probability.

\begin{lemma}\label{lem:b-matching:maximality}
    Let $M\sset E$ be an edge subset satisfying the following maximality condition:
    every vertex is either matched to at least $b'$ edges of $M$,
    or all its unmatched neighbors are matched to at least $b'$ edges of $M$.
    Then if $\OPT$ denotes the size of a maximum $b'$-matching,
    we have
    \[
        \card{M}\geq \frac12\OPT\,.
    \]
\end{lemma}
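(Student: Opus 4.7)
The plan is to fix a maximum $b'$-matching $M^*$ with $\card{M^*} = \OPT$ and analyze the symmetric difference of $M$ and $M^*$. Writing $C \coloneqq M \cap M^*$, $R \coloneqq M \setminus M^*$, and $B \coloneqq M^* \setminus M$, we have $\card{M} = \card{C} + \card{R}$ and $\card{M^*} = \card{C} + \card{B}$, so it suffices to prove $\card{B} \leq 2\card{R}$.

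My first step would be to unpack the maximality hypothesis. Call a vertex $v$ \emph{saturated} when $d_M(v) \geq b'$, and let $V_s$ denote the set of saturated vertices. I would then argue that every edge $\set{u, v} \in B$ has at least one endpoint in $V_s$: since $\set{u, v} \notin M$, applying the hypothesis to $v$ (with $u$ playing the role of an unmatched neighbor) yields $d_M(v) \geq b'$ or $d_M(u) \geq b'$.

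The main step, which I expect to be the principal obstacle, is a local degree-balance inequality at every saturated vertex $v$:
\[
    d_R(v) \geq d_B(v).
\]
This uses both that $M^*$ is a $b'$-matching and that $v$ is saturated: from $d_C(v) + d_B(v) = d_{M^*}(v) \leq b'$ and $d_C(v) + d_R(v) = d_M(v) \geq b'$, subtracting yields the desired inequality. Without this refinement, the naive argument that bounds $\card{B} \leq b'\card{V_s} \leq 2\card{M}$ via the first observation alone only gives a $3$-approximation after combining with $\card{C} \leq \card{M}$.

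Finally, summing the local inequality over $V_s$ and using that every edge of $B$ contributes at least $1$ to $\sum_{v \in V_s} d_B(v)$, we will obtain
\[
    \card{B} \leq \sum_{v \in V_s} d_B(v) \leq \sum_{v \in V_s} d_R(v) \leq \sum_v d_R(v) = 2\card{R},
\]
and hence $\card{M^*} = \card{C} + \card{B} \leq \card{C} + 2\card{R} \leq 2\card{M}$, which completes the argument.
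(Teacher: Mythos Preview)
Your proof is correct and follows essentially the same approach as the paper: identify the set of saturated vertices, show every edge of $M^*\setminus M$ has a saturated endpoint via the maximality hypothesis, and finish with a degree-counting argument. The only cosmetic difference is that the paper first reduces to the case $M\cap M^*=\varnothing$ (by passing to the symmetric difference and adjusting $b'$ per vertex) and then double-counts $\sum_{v\in V_s} b'(v)$, whereas you keep the common part $C$ explicit and use the sharper local inequality $d_R(v)\geq d_B(v)$ at saturated vertices; your formulation arguably makes the role of the $b'$-matching constraint on $M^*$ more transparent.
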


\begin{proof}
    Let $M^*$ denote an arbitrary $b'$-matching.
    We would like to show that $\frac{\card{M}}{\card{M^*}}\geq \frac12$.
    Since $\frac{\card{M\setminus M^*}}{\card{M^*\setminus M}}\leq \frac{\card{M}}{\card{M^*}}$,
    we may assume without loss of generality that $M\cap M^*=\varnothing$
    after adjusting the $b'$ values per vertex.

    Let $S\sset V$ denote the set of nodes $v$ with at least $\widetilde{b}(v)$ matched edges in $M$.
    We claim that every edge of $M^*$ must have an endpoint in $S$.
    Otherwise,
    there is an edge $uv$ such that $u, v$ are matched to less than $\widetilde{b}(u), \widetilde{b}(v)$ neighbors,
    respectively,
    but is not in $M$,
    contradicting the maximality condition.

    Now,
    each $v\in S$ is incident to at most $\widetilde{b}(v)$ edges in $M^*$,
    while every edge of $M$ is incident to at least $1$ vertex of $S$.
    Moreover,
    each node in $S$ is incident to at least $\widetilde{b}(v)$ edges of $M$ by construction.
    Thus by a double counting argument,
    $\card{M^*}\leq \sum_{v\in S} \widetilde{b}(v)\leq 2\card{M}$
    as desired.
\end{proof}

The utility proof of \Cref{thm:billboard-bprime-main} is implied by the following lemma,
which is a restatement with detailed constants.
\begin{lemma}\label{thm:b-matching:utility}
    For $\eta \in (0, 1)$ and $b\geq \frac{(1+\eta)^2}{1-\eta}b' + \frac{576c\ln(n)}{\eta^2\eps}$, 
    \Cref{alg:b-matching} is $\eps$-LEDP
    and outputs an implicit $b$-matching $M$ in the billboard model
    such that $\card{M}\geq \frac12\OPT(b')$
    with probability at least 
    $1 - \frac{1}{n^c}$.
    Here $\OPT(b')$ is the size of a maximum $b'$-matching.
\end{lemma}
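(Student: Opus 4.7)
Privacy is immediate from \Cref{lem:b-matching-private}, so the plan focuses on the utility claim. I would condition on a ``good event'' $\mcal E$ asserting that every Laplace random variable drawn during the execution has absolute value at most $O(c\ln(n)/\eps')$, and that every subgraph size $|W_r(v)|$ examined by the algorithm is $(1\pm\eta_0)$-multiplicatively concentrated around its conditional expectation $p_r\cdot|S_v|$, where $S_v$ denotes the set of active later neighbors of $v$ at the iteration when $v$ proposes and $\eta_0 \leq \eta$ is a small parameter satisfying $(1-\eta_0)/(1+\eta_0)\geq(1-\eta)/(1+\eta)$. The Laplace bound follows from a tail bound and a union bound over the polynomially many draws; the subgraph concentration follows from multiplicative Chernoff whenever $p_r|S_v| = \Omega(\log n/\eta_0^2)$ and is absorbable into the algorithm's additive slack otherwise. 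Together these yield $\Pr[\mcal E] \geq 1 - 1/n^c$.

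Under $\mcal E$, the output is a $b$-matching. The explicit $+12c\ln(n)/\eps'$ slack inside the check on \cref{b-matching:smallest-r} strictly dominates the noise in $|\widetilde M_i(v)|$ and $|\widetilde W_{r_v}(v)|$, so immediately after $v$'s proposal $|M(v)| \leq b - \Omega(\log n/\eps')$. Moreover, every vertex $v$ can only accumulate matches through its own iteration $i_v$, since $W_r(v_j)$ is restricted to vertices later than $v_j$; and the Multidimensional AboveThreshold check deactivates $v$ under $\mcal E$ once $|M(v)|$ nears $b$, after which $v$ is excluded from every future $W_r$. Together these bound $|M(v)| \leq b$ throughout, and the same analysis shows that whenever $v$ deactivates, $|M(v)| \geq b - O(\log n/\eps')$.

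The main obstacle is verifying the maximality hypothesis of \Cref{lem:b-matching:maximality} with parameter $b'$. Fix any $uv \in E\setminus M$ and assume without loss of generality that $u$ precedes $v$ in the ordering. If either endpoint was ever deactivated, then its final degree is at least $b - O(\log n/\eps')$, which exceeds $b'$ by the hypothesis $b \geq \frac{(1+\eta)^2}{1-\eta}b' + \frac{576c\ln(n)}{\eta^2\eps}$ combined with the choice $\eps' = \eps\eta/(2(1+2\eta))$. Otherwise both $u$ and $v$ remain active at iteration $i_u$, so $v \in S_u$. If $u$ chose $r_u = 0$, then $W_0(u) = S_u \ni v$ would force $uv \in M$, a contradiction. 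Hence $r_u \geq 1$, and the minimality of $r_u$ combined with $\mcal E$ yields $|W_{r_u - 1}(u)| \geq b - |M(u)| - O(\log n/\eps')$. Applying multiplicative concentration at both levels together with the subsampling ratio $p_{r_u}/p_{r_u - 1} = 1/(1+\eta)$ gives
\[
    |W_{r_u}(u)|
    \;\geq\; \frac{1-\eta_0}{(1+\eta_0)(1+\eta)}\bigl(b - |M(u)|\bigr) - O\bigl(\log(n)/\eps'\bigr)
    \;\geq\; \frac{1-\eta}{(1+\eta)^2}\bigl(b - |M(u)|\bigr) - O\bigl(\log(n)/\eps'\bigr).
\]
Since the coefficient of $|M(u)|$ in $|M(u)| + |W_{r_u}(u)|$ is nonnegative, the lower bound on the post-proposal count is minimized at $|M(u)| = 0$, giving $|M(u)|_{\text{new}} \geq \frac{(1-\eta)b}{(1+\eta)^2} - O(\log n/\eps') \geq b'$ by the hypothesis on $b$. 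As $|M(u)|$ only grows over the course of the algorithm, $u$'s final degree is at least $b'$.

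Every edge $uv \in E\setminus M$ therefore has an endpoint matched to at least $b'$ vertices, so the maximality hypothesis of \Cref{lem:b-matching:maximality} is satisfied, yielding $|M| \geq \OPT(b')/2$ and completing the proof of the utility claim.
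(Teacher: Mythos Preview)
Your outline follows the paper's proof closely: condition on bounded Laplace tails, verify the $b$-matching constraint from the $+12c\ln(n)/\eps'$ slack in \cref{b-matching:smallest-r} together with the MAT deactivation, and establish the hypothesis of \Cref{lem:b-matching:maximality} by exploiting the failed threshold test at level $r_u-1$ and the subsampling ratio $p_{r_u}/p_{r_u-1}=1/(1+\eta)$. Your edge-by-edge phrasing (for every $uv\in E\setminus M$ one endpoint has degree $\ge b'$) is equivalent to the paper's vertex-by-vertex case analysis, and your $r_u=0$ contradiction cleanly absorbs what the paper calls Case~II.a.

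There is one step that needs to be made explicit. You write ``applying multiplicative concentration at both levels'', but this is only valid when $p_{r_u}|S_u|=\Omega(\log n/\eta_0^2)$, and your displayed lower bound on $|W_{r_u}(u)|$ can simply fail otherwise; the earlier remark that the small-expectation regime ``is absorbable into the algorithm's additive slack'' is not an argument. The paper handles this with an explicit split (its Case~II.b.1): when $r^*$ is large enough that $p_{r^*}|S_v|$ drops below the Chernoff threshold, an upper-tail bound forces $|W_{r^*-1}(v)|=O(\log n/\eta^2)$, and combining this with the failed check at $r^*-1$ already gives $|M(v)|\geq b - O(\log n/\eta^2)-O(\log n/\eps')\geq b'$, so the ratio comparison is never needed there. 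Equivalently, you can observe that whenever you reach the concentration step with $|M(u)|<b'$, the failed check gives $|W_{r_u-1}(u)|\geq b-b'-O(\log n/\eps')=\Omega(\log n/(\eta^2\eps))$, which (together with an upper-tail bound in $\mathcal E$) forces $p_{r_u-1}|S_u|$ above the Chernoff threshold. Either way, the case split must appear explicitly.
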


\begin{proof}
    For each iteration $i$, 
    we will show 
    that the number of nodes that $v=v_i$ matches with after iteration $i$ is
    either at least $b'$,
    or every one of its unmatched neighbors is already matched to at least $b'$ of their neighbors.
    Hence the final output satisfies the maximality condition of \Cref{lem:b-matching:maximality}
    and thus the approximation guarantees.
    We will show that this suffices to guarantee that our output is of comparable size to a {maximal $b'$-matching}.
    Finally, we show that with high probability, 
    we do not match with more than $b$ nodes. 

    First, all of the Laplace noise (\cref{b-matching:mc,b-matching:check-node-noise,b-matching:approx-wr,b-matching:approx-m}) 
    are drawn from distributions with expectation $0$. 
    Let $X$ be a random variable drawn from one of these Laplace distributions.
    By~\cref{lem:laplace-noise-concentration},
    we have that each of the following holds with probability $1-1/n^{3c}$: 
    {
    \begin{itemize}
        \item $b-36c\log(n)/\eps'\leq \tilde{b}(u)\leq b-24c\log(n)/\eps'$ for all $u\in V$, 
        \item $|\nu_i(u)|\leq {24c\ln(n)/\eps'}$ for all $i\in [n], u\in V$,
        \item $|W_r(v)|-{6}c\log(n)/\eps'\leq |\widetilde{W}_r(v)|\leq  |W_r(v)|+{6}c\log(n)/\eps'$ for all $v\in V, r\in [\ceil{\log_{1+\eta}(n)}]$, 
        and
        \item $|M(v)|-{6}c\log(n)/\eps'\leq |\widetilde{M}_i(v)|\leq  |M(v)|+{6}c\log(n)/\eps'$ for all $v_i\in V$. 
    \end{itemize}
    }
    Let $\mathcal E$ denote the intersection of all the events above.
    By a union bound, 
    $\mathcal E$ occurs with probability at least $1-\frac1{n^{2c}}$
    {for $c\geq 3$}.
    We condition on $\mathcal E$ for the rest of the proof.

    \underline{Case I:} 
    Suppose that \Cref{alg-b-matching:threshold} does not execute,
    i.e. $v$ has already satisfied its matching condition as evaluated on~\cref{b-matching:check-node-noise,b-matching:check-node-threshold,b-matching:check-node-transcript}. 
    Then we know that $|M(v)|+\nu_i(v)\ge\tilde{b}(v)$ by definition. 
    {We then have that 
    \begin{align*}
        |M(v)|
        &\ge b-{60}c\ln(n)/\eps' \\
        &\ge {b} - \frac{72c\ln(n)}{\eps'}\\
        &= {b} - \frac{8\cdot 72c\ln(n)}{\eta \eps}  \tag{by our choice of $\eps'$ in~\cref{b-matching:eps}} \\
        &\geq {b} - \frac{576c\ln(n)}{\eta\eps}\\
        &> {b'}\,. \tag{{by $b \geq b' + \frac{576c\ln(n)}{\eta\eps}$}}
    \end{align*}}
    {In other words,
    $v$ was already matched {to $b'$ neighbors} from a previous iteration.}

    \underline{Case II.a:} 
    Now, assume that~\cref{alg-b-matching:threshold} does execute.
    Let $Q_{0}(v)$ be the set of unmatched neighbors of $v$ which appear later in the ordering.
    First, consider the case that $|Q_{0}(v)| < \frac{12c\ln(n)}{\eta^2}$. 
    {We know that either {$|M(v)| < b'$} or {$|M(v)| \geq b'$}.
    If {$|M(v)| \geq b'$}, then $v$ is already matched {to sufficiently many neighbors} 
    and there is nothing to prove.}
    Otherwise, 
    {conditioned on $\mathcal E$,}
    \begin{align*}
        |Q_{0}(v)| + 3 \cdot \frac{12c\ln(n)}{\eps'} + {b'} 
        &\leq \frac{12c\ln(n)}{\eta^2} + \frac{36c\ln(n)}{\eps'} + {b'} \\
        &\leq \frac{12c\ln(n)}{\eta^2} + \frac{216c\ln(n)}{\eta\eps} + {b'} \\
        &\leq \frac{238c\ln(n)}{\eta^2\eps} + {b'} \\
        &\leq b
    \end{align*}
    and all of $Q_{0}(v)$ is matched.

    \underline{Case II.b.1:}
    For the remainder of the proof, we assume \mbox{$k\coloneqq |Q_{0}(v)| \geq \frac{12c\ln(n)}{\eta^2}$}.
    As a reminder, we are in the case that \Cref{alg-b-matching:threshold} does execute. 
    Let $r^*$ be the $r$ chosen in~\cref{b-matching:smallest-r}. 
    If the returned $r^*>\log_{1+\eta}(k\eta^2/(12c\ln{n}))$, we know that the condition $|\widetilde{M}(v)|+|\widetilde{W}_r(v)|+12c\ln(n)/\eps'\le b$ did not hold for $r=\log_{1+\eta}\left(\frac{k\eta^2}{12c\ln{n}}\right)$. 
    {Note here we use $k\geq \frac{12c\ln(n)}{\eta^2}$ so that $r\geq 0$ is well-defined.}
    Thus, we have that $p_r=\frac{12c\ln(n)}{k\eta^2}$ so the expected size of $W_r(v)$ is $\frac{12c\ln(n)}{\eta^2}$; that is, $\expect[|W_r(v)|] = \frac{12c\ln(n)}{\eta^2}$. %
    By a multiplicative Chernoff Bound (\cref{thm:multiplicative-chernoff}), 
    the true size of $W_r(v)$ is at most $18c\log(n)/\eta^2$
    with probability at least $1-\frac1{n^{4c}}$. 
    Since the condition $|\widetilde{M}(v)|+|\widetilde{W}_r(v)|+12c\log(n)/\eps'\le b$ didn't hold for such $r$, we have that 
    
    \begin{align*}
        |\widetilde{M}(v)|+|\widetilde{W}_r(v)|+\frac{12c\ln(n)}{\eps'} &> b\\
        |M(v)|+|W_r(v)| + \frac{6c\ln(n)}{\eps'} + \frac{6c\ln(n)}{\eps'} + \frac{12c\ln(n)}{\eps'} &> b \tag{conditioning on $\mathcal E$} \\
        |M(v)|+|W_r(v)| + \frac{24c\ln(n)}{\eps'} &> b\,.
    \end{align*}
    In other words,
    {
    \begin{align*}
        |M(v)| &> b - |W_r(v)| - \frac{24c\ln(n)}{\eps'} \\
        |M(v)| &> b - \frac{18c\ln(n)}{\eta^2} - \frac{24c\ln(n)}{\eps'} \tag{ w.p.\ at least $1 - \frac{1}{n^{4c}}$} \\
        |M(v)| &> b - \frac{18c\ln(n)}{\eta^2} - \frac{144c\ln(n)}{\eta\eps} \tag{by the choice of $\eps'$} \\
        |M(v)| &> b - \frac{162c\ln(n)}{\eta^2\eps}  \\
        |M(v)| &> {b'}\,. \tag{since $b \geq {b' + \frac{576c\ln(n)}{\eta^2\eps}}$}
    \end{align*}}
    In other words,
    $v$ was already matched from a previous iteration.

    \underline{Case II.b.2:} 
    If $r^*\le \log_{1+\eta}\left(\frac{k\eta^2}{12c\ln{n}}\right)$, 
    there are two scenarios. 
    Either $r^*=0$, 
    then we match $v$ with all nodes which are still available 
    or $r^*>1$.
    In the latter scenario,
    {we first remark that $p_{r^*}\geq \frac{12c\ln(n)}{k\eta^2}$ so that
    \[
        \expect[\card{W_{r^*-1}}]
        = (1+\eta) \expect[\card{W_{r^*}}]
        \geq (1+\eta)\frac{12c\ln(n)}{\eta^2}\,.
    \]
    Thus we can apply a Chernoff bound on both $\card{W_{r^*-1}}, \card{W_{r^*}}$ and obtain high probability bounds.}
    We also know that the threshold is exceeded for $r^*-1$, 
    so we have {$|\widetilde{M}(v)|+|\widetilde{W}_{r^*-1}(v)|+12c\ln(n)/\eps'\ge b$}.  
    {Conditioning on $\mathcal E$},
    {$|M(v)|+|{W}_{r^*-1}(v)|\ge b-36c\ln(n)/\eps'$}.
    Thus by a Chernoff bound on $\card{W_{r^*-1}}$,
    we see that
    \[
        b-\card{M(v)}-\frac{36c\ln(n)}{\eps'}
        \leq \card{W_{r^*-1}}
        \leq (1+\eta) \expect[\card{W_{r^*-1}}]
        = (1+\eta)^2 \expect[\card{W_{r^*}}]
    \]
    with probability $1-\frac1{n^{4c}}$.
    Now again by a Chernoff bound but now on $\card{W_{r^*}}$,
    we have
    \[
        \card{W_{r^*}}
        \geq \frac{(1-\eta)}{(1+\eta)^2} \left[ b-\card{M(v)}-\frac{36c\ln(n)}{\eps'} \right]
    \]
    with probability $1-\frac1{n^{4c}}$.
    
    Hence, 
    if {$|M(v)| < b'$ (meaning it has not satisfied the matching criteria)}, 
    it holds that
    \begin{align*}
        \card{W_{r^*}(v)}
        &\geq \frac{(1-\eta)}{(1+\eta)^2} (b-\card{M(v)}-\frac{216c\ln(n)}{\eta\eps}) \\
        &\geq \frac{1-\eta}{(1+\eta)^2} \left( \frac{(1+\eta)^2}{1-\eta} b' + \frac{576c\ln(n)}{\eta^2\eps} - \card{M(v)} - \frac{216c\ln(n)}{\eta\eps}\right) \\
        &\geq b'-\card{M(v)}\,.
    \end{align*}
    {It follows that $v$ is matched with at least {$b'$ neighbors} at the end of this iteration.}
    
    {Taking a union bound over all $r$ and $v\in V$, 
    all Chernoff bounds hold regardless of the value of $r^*$ with probability at least $1-\frac{1}{n^{2c}}$
    when $c\geq 3$.
    Taking into account the conditioning on $\mathcal E$,
    our algorithm succeeds with probability at least $1-\frac1{n^c}$ for $c\geq 3$.}

    {In summary,
    at iteration $i$,
    we ensure that 
    \begin{enumerate}[1)]
        \item either $v=v_i$ was already matched to at least $b'$ neighbors during prior iterations,
        \item or it will be matched with a sufficient number of descendents to satisfy a total of at least $b'$ matched neighbors after iteration $i$,
        \item or it will be matched to all descendents that do not yet have $b'$ matched neighbors.
    \end{enumerate}
    The only property to check is whether there are any ancestors $u$ of $v$ that are not matched to $v$ but have yet to achieve $b'$ matched neighbors in the last case.
    But this cannot happen since $u$ would have matched to all descendents that did not yet have $b'$ matched neighbors during its iteration
    and $v$ in particular did not have $b'$ matched neighbors.}

    It remains only to check that we match each vertex to at most $b$ neighbors.
    To do so,
    it suffices to check that on the $i$-th iteration,
    each node which does not yet satisfy a matching condition can match with at least 1 more neighbor after \Cref{b-matching:check-node-threshold}
    and that $v=v_i$ is matched with at most $b$ neighbors at the end of the iteration (\Cref{b-matching:smallest-r}).
    The first statement is guaranteed since any vertex $u$ not yet satisfying its matching condition satisfies
    \begin{align*}
        |M(u)| &< \tilde b(u) - \nu_i(u) \\
        |M(u)| &< b - 24c\log(n)/\eps' + 24c\log(n)/\eps' \tag{conditioning on $\mathcal E$} \\
        |M(u)| &< b.
    \end{align*}
    If $v$ already satisfies its matching condition after \Cref{b-matching:check-node-threshold},
    the second statement is guaranteed to hold.
    Otherwise, $|M(v)| < b$ and we choose $r$ such that
    \begin{align*}
        b &\geq |\tilde M_i(v)| + |\tilde W_r(v)| + 12c\ln(n)/\eps' \\
        &\geq |M(v)| + |W_r(v)| - 12c\ln(n)/\eps' + 12c\ln(n)/\eps' \tag{conditioning on $\mathcal E$} \\
        &= |M(v)| + |W_r(v)|.
    \end{align*}
    Thus in this case,
    $v$ is also matched to at most $b$ neighbors,
    as desired.
\end{proof}

\subsection{Corollaries}\label{sec:sequential-b-matching:corollaries}
{By choosing a fixed constant $\eta = \nicefrac12$,
matching benchmark parameter $b' = 1$,
and matching degree parameter $b = \Omega\left( \frac{\log (n)}\eps \right)$,
\Cref{thm:billboard-bprime-main} immediately yields a $b$-matching that is the size of a $2$-approximate $1$-matching.}
{As stated in \Cref{thm:billboard-b-main},
when $b = \Omega\left( \frac{\log (n)}{\eta^3\eps} \right)$ is slightly larger,
we avoid the need for bicriteria approximation
and our algorithm yields a $(2+\eta)$-approximate $b$-matching.}
The proof follows from the following corollary of \Cref{thm:b-matching:utility}.
\begin{corollary}[$(2+\eta)$-Approximate Maximum $b$-Matching]\label{cor:approx-b-match}
    When $b \geq \frac{576c\ln(n)}{\eta^3 \eps}$ for a constant $\eta \in (0, \nicefrac12)$, 
    \Cref{alg:b-matching} is $\eps$-LEDP and outputs an implicit $(2+24\eta)$-maximum $b$-matching in the billboard model, with probability at least $1 - \frac{1}{n^c}$. 
\end{corollary}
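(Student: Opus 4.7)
The plan is to invoke \Cref{thm:b-matching:utility} with analysis benchmark $b' \coloneqq \lfloor b/(1+12\eta) \rfloor$ and then convert the resulting guarantee $|M| \geq \OPT(b')/2$ into the desired approximation in terms of $\OPT(b)$ via Vizing's edge-coloring theorem.

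First, I would verify the hypothesis of \Cref{thm:b-matching:utility}, namely $b \geq \frac{(1+\eta)^2}{1-\eta} b' + \frac{576c\ln n}{\eta^2\eps}$. Substituting the choice of $b' \leq b/(1+12\eta)$ reduces this to
\[
    b \cdot \left(1 - \frac{(1+\eta)^2}{(1-\eta)(1+12\eta)}\right) \geq \frac{576c\ln n}{\eta^2 \eps}.
\]
Expanding gives $(1-\eta)(1+12\eta) - (1+\eta)^2 = \eta(9 - 13\eta)$, and for $\eta \in (0, 1/2)$ this is at least $2.5\eta$. Dividing by $(1-\eta)(1+12\eta) \leq 7$ shows that the parenthesized factor is at least $\eta/3$. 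The condition therefore becomes $b \geq \Omega(\log n/(\eta^3 \eps))$, which holds by the assumed lower bound on $b$ after absorbing the factor $3$ into the universal constant $576c$.

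\Cref{thm:b-matching:utility} then produces, with probability at least $1 - 1/n^c$, an $\eps$-LEDP implicit $b$-matching $M$ satisfying $|M| \geq \OPT(b')/2$. To convert this into a bound against $\OPT(b)$, I would invoke Vizing's theorem: a maximum $b$-matching has maximum degree at most $b$, so it can be edge-partitioned into at most $b+1$ proper (1-)matchings; taking the $b'$ largest of these classes yields a $b'$-matching of size at least $\frac{b'}{b+1}\OPT(b)$. Hence $\OPT(b') \geq \frac{b'}{b+1}\OPT(b)$, and chaining bounds gives
\[
    |M| \geq \frac{b'}{2(b+1)} \OPT(b) \geq \frac{1}{2(1+12\eta)(1+1/b)}\OPT(b) \geq \frac{\OPT(b)}{2+24\eta},
\]
where the last step absorbs the lower-order $(1+1/b)$ factor using $b = \Omega(\log n/(\eta^3 \eps))$ so that $1/b \ll \eta$.

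The main mild obstacle is the constant bookkeeping — verifying that both the Chernoff-style slack in the hypothesis of \Cref{thm:b-matching:utility} and the $(1+1/b)$ slack from Vizing fit within the claimed ratio $2+24\eta$ for all $\eta \in (0, 1/2)$. Everything else is an immediate consequence of the already-proven utility analysis and the standard edge-coloring theorem, so privacy is inherited from \Cref{lem:b-matching-private} directly.
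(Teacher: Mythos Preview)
Your proposal is correct and follows the paper's overall strategy: choose $b' \approx b/(1+12\eta)$, verify the hypothesis of \Cref{thm:b-matching:utility}, and then relate $\OPT(b')$ back to $\OPT(b)$. The difference lies only in this last step. The paper observes that if $M^*$ is a maximum $b$-matching, then the scaled indicator $\frac{1}{1+12\eta}\chi_{M^*}$ lies in the $b'$-matching polytope; since that polytope is integral (Schrijver, Theorem~33.2), one immediately gets $\OPT(b') \geq \frac{1}{1+12\eta}\OPT(b)$ with no $+1$ slack and no floor. Your Vizing route is equally valid and arguably more elementary, at the cost of the extra $(b+1)$ in the denominator and the $\lfloor\cdot\rfloor$ on $b'$; note that your displayed inequality $\frac{b'}{2(b+1)} \geq \frac{1}{2(1+12\eta)(1+1/b)}$ as written needs $b'(1+12\eta) \geq b$, which fails by up to $1+12\eta$ because of the floor, so you would actually need a factor like $(1+O(1)/b)$ rather than $(1+1/b)$. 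This is exactly the bookkeeping you already flagged and is easily absorbed since $b = \Omega(\log n/(\eta^3\eps)) \gg 1/\eta$.
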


Before proceeding with the proof,
we recall that $b$-matchings have the following integral linear programming relaxation~\cite[Theorem 33.2]{schrijver2003combinatorial},
where $\partial(U)$ denotes the set of outgoing edges from $U\sset V$.
\begin{align*}
    \sum_{uv\in E} x(uv) &\leq b &&\forall v\in V \\
    \sum_{e\in E[U]} x(e) + \sum_{f\in F} x(f) &\leq \floor{\frac12 (b\card{U} + \card{F})} &&\forall U\sset V, F\sset \partial(U) \\
    x &\in [0, 1]^E
\end{align*}
Note that it suffices to consider constraints of the second form only when $b\card{U} + \card{F}$ is odd,
but we include the redundant constraints for the sake of simplicity.

\begin{proof}
    Choose $b'$ such that
    \[
        b 
        = (1+12\eta)b'
        \geq (1+11\eta)b' + \frac{576c\ln(n)}{\eta^2 \eps}
        \geq \frac{(1+\eta)^2}{1-\eta}b' + \frac{576c\ln(n)}{\eta^2 \eps}\,.
    \]
    Then we can apply \Cref{thm:b-matching:utility} to deduce that \Cref{alg:b-matching} outputs a $2$-approximate maximum $(\frac{b}{1+12\eta})$-matching.

    Identify any $b$-matching $M$ as an indicator $\chi_M\in [0, 1]^E$ of the $b$-matching polytope.
    Then $\frac1{1+12\eta}\chi_M$ is a feasible fractional solution of the $(\frac{b}{1+12\eta})$-matching polytope.
    Hence the size of a maximum $(\frac{b}{1+12\eta})$-matching is at least a $\frac1{1+12\eta}$ fraction of the size of a maximum $b$-matching.
    This concludes the proof.
\end{proof}

\subsection{Lower Bound for Implicit Matchings} \label{sec:lower-bound-implicit}

We now {prove \Cref{thm:lower-bound-implicit}},
our lower bound for \emph{implicit} solutions, which essentially matches our upper bound from \cref{thm:billboard-main}.  While it does not apply to \emph{all} algorithms in the billboard model, it applies to the implicit solutions that our algorithms use (\cref{def:implicit}).  %

Recall that, given an input graph $H$,
our implicit solutions essentially output a subset of vertices for 
each node $x$ whose intersection with $x$'s private adjacency list gives the nodes
$x$ is matched to. 
We denote these public subsets of vertices by $\mathcal{S} = \{S_x\}_{x \in V}$ where $S_x$ is
the public subset of vertices for node $x$. That is, $\mathcal S = \{S_x\}_{x \in V}$ is the implicit solution generated by $\mathcal A$.
We say that $\mathcal A$ includes an edge $\{u,v\}$ if either $u \in S_v$ or $v \in S_u$.  In other words, $\mathcal A$ includes an edge if that edge is an edge in $H(\mathcal S)$.

\begin{proof}[Proof of \cref{thm:lower-bound-implicit}]
Let $V = \{r, v_1, v_2, \dots, v_{n-1}\}$.  Let $i \in \{2, 3, \dots, n-1\}$, and 
let $G_i$ be a graph with just one edge $\{r, v_i\}$.  Then $\mathcal A$ must include $\{r, v_i\}$ with probability at least $1-\beta$ in order to meet the utility guarantee of \cref{thm:lower-bound-implicit} (note that $\eta$ vanishes since in this case the maximal matching has size $1$).  Now consider a graph $G$ with just one edge $\{r, v_1\}$.  Since $G_i$ has distance $2$ from $G$, by the differential privacy guarantee we know that the probability that $\mathcal A$ includes $\{r, v_i\}$ when run on $G$ must be at least $1-e^{2\eps}\beta$.  Note that this is true for all $i \in \{2, 3, \dots, n-1\}$.

To simplify notation, let $\mathcal S$ be the implicit solution output by $\mathcal A$.  Now let $T$ be an arbitrary subset of $\{v_2, v_3, \dots, v_{n-1}\}$ of size $\frac{1}{2\eps}\log(1/\beta)$.  Then the above argument implies that the \emph{expected} number of edges between $r$ and $T$ in $H(\mathcal S)$ is at least $(1-e^{2\eps} \beta) |T|$ when we run $\mathcal A$ on $G$, or equivalently the expected number of \emph{non-edges} between $r$ and $T$ is at most $e^{2\eps} \beta |T|$.  So by Markov's inequality, 
\[
\Pr\left[\text{number of non-edges from $r$ to $T$ in } H(\mathcal S) \geq \frac{|T|}{2}\right] \leq \frac{2 e^{2\eps} \beta |T|}{|T|} = 2e^{2\eps} \beta.
\]

Let $Q$ denote the event that the number of non-edges from $r$ to $T$ in $H(\mathcal S)$ is at least $\frac{|T|}{2}$.  Let $G' = (V, E')$ be a different graph with the same vertex set but with $E' = \{\{r, v_i\} : v_i \in T\}$.  Since $G'$ has distance $|T|$ from $G$, group privacy implies that the probability of $Q$ when we run $\mathcal A$ on $G'$ is at most 
\begin{align*}
    e^{|T| \eps} 2e^{2\eps} \beta 
    &= e^{\frac{1}{2\eps} \log(\nicefrac1\beta) \eps} 2e^{2\eps} \beta 
    = \beta^{-\frac12} 2e^{2\eps} \beta 
    = \beta^{\frac12} 2e^{2\eps} \leq \frac12.
\end{align*}

Thus with probability at most $\nicefrac12$, when we run $\mathcal A$ on $G'$ the implicit solution we get includes at most $\nicefrac{|T|}2$ edges from $r$ to $T$.  Thus with probability \emph{at least} $\nicefrac12$, when we run $\mathcal A$ on $G'$ the implicit solution we get includes at least $\nicefrac{|T|}2$ edges from $r$ to $T$.  Since all of those edges are also edges of $G'$, this means that the degree of the implicit solution is at least $\nicefrac{|T|}2 = \Theta\left(\frac{1}{\eps} \log(\nicefrac1\beta)\right)$ with probability at least $\nicefrac12$.   
\end{proof}

\section{\texorpdfstring{$O(\log n)$}{O(log n)}-Round LEDP Implicit \texorpdfstring{$b$}{b}-Matching via PVSM} \label{sec:fast}
In this section, we present a distributed implementation of our matching algorithm that uses $O(\log n)$ rounds in the LEDP model. 
Specifically, we prove \Cref{thm:fast-distributed-main},
restated below.

\fastdistributed*

As an immediately corollary for the case $b'=1$,
we obtain the following.
\begin{corollary}\label{cor:fast-distributed-main}
    Let $\eps \in (0, 1)$ 
    and $b = \Omega\left( \frac{\log^2(n)}{\eps} \right)$.
    \Cref{alg:distributed-matching} is an $\eps$-LEDP algorithm that terminates in $O(\log n)$ rounds and, with high probability, outputs an (implicit)
    $b$-matching in the billboard model whose size is at least a $2$-approximate maximum $1$-matching.
\end{corollary}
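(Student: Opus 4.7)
The plan is to derive this corollary as a direct specialization of \Cref{thm:fast-distributed-main} by fixing a convenient constant for $\eta$ and setting the benchmark matching parameter to $b'=1$. Since \Cref{thm:fast-distributed-main} already establishes the $\eps$-LEDP property, the $O(\log n)$ round complexity, and the approximation guarantee against an arbitrary maximum $b'$-matching, the corollary should follow by showing that the hypothesized bound $b=\Omega(\log^2(n)/\eps)$ is enough to meet the parameter requirement of the theorem.

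First, I would choose a fixed constant value of $\eta$, for concreteness $\eta = \tfrac{1}{2}$, which is admissible since the theorem requires $\eta \in (0, \tfrac12)$ (any fixed $\eta$ strictly below $\tfrac12$ works identically). Substituting $b'=1$ and this $\eta$ into the parameter constraint
\[
    b \;\geq\; \frac{(1+\eta)^2}{1-\eta}\,b' \;+\; \Omega\!\left(\frac{\log^2(n)}{\eta^4\,\eps}\right)
\]
makes $\frac{(1+\eta)^2}{1-\eta}$ and $\frac{1}{\eta^4}$ both absolute constants, so the requirement collapses to $b \geq C + \Omega(\log^2(n)/\eps)$ for some absolute constant $C$. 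Our assumption $b=\Omega(\log^2(n)/\eps)$ therefore satisfies the hypothesis of \Cref{thm:fast-distributed-main} for sufficiently large hidden constants.

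Having verified the hypothesis, I can invoke \Cref{thm:fast-distributed-main} to conclude that \Cref{alg:distributed-matching} is $\eps$-LEDP, runs in $O(\log n)$ rounds, and with high probability returns an implicit $b$-matching in the billboard model whose size is at least that of a $2$-approximate maximum $b'$-matching. Since $b'=1$, a $2$-approximate maximum $b'$-matching is precisely a $2$-approximation to the maximum $1$-matching, yielding the statement of the corollary.

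I do not foresee a genuine obstacle here; the only subtlety is being explicit about the fact that $\eta$ is instantiated as a constant so that the $\eta^{-4}$ blow-up in the additive term of \Cref{thm:fast-distributed-main} is absorbed into the implicit constant of $\Omega(\log^2(n)/\eps)$, and correspondingly that the multiplicative approximation factor $(2+\eta)$ in the general bound simplifies to the claimed constant factor of $2$ via the bicriteria comparison (matching $b'=1$ with a larger $b$ degree budget, exactly as discussed in the bulleted special cases following \Cref{thm:fast-distributed-main}).
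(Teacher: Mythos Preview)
Your proposal is correct and matches the paper's approach exactly: the paper derives the corollary by instantiating \Cref{thm:fast-distributed-main} with a fixed constant $\eta$ (the paper uses $\eta=\tfrac12$, you note any constant strictly below $\tfrac12$ works) and benchmark parameter $b'=1$, so that the additive $\Omega(\log^2(n)/(\eta^4\eps))$ term becomes $\Omega(\log^2(n)/\eps)$. One minor remark: the approximation factor in \Cref{thm:fast-distributed-main} is already stated as $2$ (not $2+\eta$), so no simplification of the multiplicative factor is needed.
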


Similar to the sequential algorithm,
for $b$ sufficiently large,
we can ensure an approximation guarantee with respect to the maximum $b$-matching.
\begin{theorem}\label{thm:fast-distributed:b-matching}
    Let $\eps \in (0, 1)$ 
    and $b = \Omega\left( \frac{\log^2(n)}{{\eta^5}\eps} \right)$.
    \Cref{alg:distributed-matching} is an $\eps$-LEDP algorithm that terminates in $O(\log n)$ rounds and, with high probability, outputs an (implicit)
    $b$-matching in the billboard model whose size is at least a $(2+\eta)$-approximate maximum $b$-matching.
\end{theorem}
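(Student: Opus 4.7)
The plan is to reduce \Cref{thm:fast-distributed:b-matching} to \Cref{thm:fast-distributed-main} by applying the same LP-relaxation rescaling argument used in \Cref{cor:approx-b-match}, where the extra factor of $1/\eta$ in the assumption on $b$ (compared to \Cref{thm:fast-distributed-main}) pays for the slack needed by this rescaling.

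First, I would set the benchmark parameter $b' \coloneqq b/(1+12\eta)$ and verify that the precondition of \Cref{thm:fast-distributed-main} holds for this choice. Using the elementary inequality $\frac{(1+\eta)^2}{1-\eta} \le 1 + 11\eta$ for $\eta \in (0, 1/2)$, one has
\[
b - \frac{(1+\eta)^2}{1-\eta}\, b' \;\ge\; b - \frac{1+11\eta}{1+12\eta}\, b \;\ge\; \frac{\eta}{1+12\eta}\, b \;=\; \Omega(\eta)\cdot b.
\]
Since the hypothesis gives $b = \Omega(\log^2(n)/(\eta^5 \eps))$, this chain yields $b - \frac{(1+\eta)^2}{1-\eta} b' \ge \Omega(\log^2(n)/(\eta^4 \eps))$, matching the additive term required by \Cref{thm:fast-distributed-main}.

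Next, I would invoke \Cref{thm:fast-distributed-main} directly on \Cref{alg:distributed-matching} with this choice of $b'$. This immediately gives the $\eps$-LEDP guarantee, the $O(\log n)$ round complexity, and the high-probability guarantee that the output implicit $b$-matching $M$ satisfies $|M| \ge \tfrac{1}{2}\OPT(b')$, where $\OPT(b')$ is the size of a maximum $b'$-matching.

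Finally, I would convert the $b'$-matching approximation to a $b$-matching approximation via the LP scaling argument from the proof of \Cref{cor:approx-b-match}: for any $b$-matching $M^\star$, the scaled indicator $\frac{1}{1+12\eta}\chi_{M^\star}$ is feasible for the fractional $b'$-matching polytope, so $\OPT(b') \ge \frac{1}{1+12\eta}\OPT(b)$. Combining, $|M| \ge \frac{1}{2(1+12\eta)}\OPT(b)$, i.e.\ $M$ is a $(2+24\eta)$-approximation of the maximum $b$-matching. Reparameterizing $\eta \mapsto \eta/24$ at the outset (which only affects the hidden constants in the lower bound $b = \Omega(\log^2(n)/(\eta^5 \eps))$) produces the advertised $(2+\eta)$-approximation.

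The main (and essentially only) obstacle is bookkeeping the constants: one must check that the single extra $1/\eta$ factor between the $\eta^4$ in \Cref{thm:fast-distributed-main} and the $\eta^5$ in the current theorem is exactly what the $\Theta(\eta)$ slack in the LP-rescaling step consumes. No new algorithmic, probabilistic, or privacy ideas beyond those already developed for \Cref{thm:fast-distributed-main} and \Cref{cor:approx-b-match} are required.
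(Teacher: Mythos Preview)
Your proposal is correct and follows essentially the same approach as the paper: the paper's proof of \Cref{thm:fast-distributed:b-matching} simply says to run \Cref{thm:fast-distributed-main} with $b' = b/(1+O(\eta))$ and that ``the proof is identical to that of \Cref{cor:approx-b-match}.'' You have reproduced exactly that argument, including the LP-rescaling step and the verification that the extra $1/\eta$ factor in the hypothesis on $b$ pays for the $\Theta(\eta)$ slack.
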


\paragraph{High-Level Overview.} Our algorithm performs multiple rounds of matching where in each round some nodes are \emph{proposers} and others are \emph{receivers}.
A node participates in a round if it has not satisfied its matching condition; a node that participates in the current round is an \emph{active} node.
The proposers are chosen randomly and propose to a set of nodes to match. We call the set of nodes each node proposes to as the 
\emph{proposal set}. Receivers are active nodes that are not chosen as proposers and receive the
proposers' proposals. Then, each receiver chooses a subset of proposals to accept. The algorithm continues until all nodes satisfy the matching condition.

The algorithm we use to prove \cref{thm:fast-distributed-main} modifies~\cref{alg:b-matching}. 
Our algorithm
(the pseudocode is given in~\cref{alg:distributed-matching})
modifies~\cref{alg:b-matching} in multiple ways
and we briefly comment on the three main changes here.
First, we perform several rounds of matching with multiple proposers in each round. Each round uses a fresh set of coin flips.
A node may propose in more than one round because many proposers may
propose to the \emph{same} set of receivers; then, many of
the proposers will remain unmatched and will need to propose again in a
future round.
Second, each node that has not satisfied its matching condition
determines whether it is a \emph{proposer} with $1/2$ probability. 
Third, when all proposers have released their
proposal sets, the receivers release their
\emph{match sets} which chooses among the proposers who proposed to them. Then,
only the pairs that exist in both the proposal and match sets will be matched. 

We show that $O(\log(n))$
rounds are sufficient, with high probability, to obtain a matching that satisfies~\cref{thm:fast-distributed-main}.
Notably, we use our \emph{Public Vertex Subset Mechanism} that we developed in~\cref{sec:sequential-b-matching} as the main 
routine for both our proposal and match sets.

We give some intuition about why we need to make these changes
for the distributed version of our algorithm. 
We need a fresh set of coin flips for \emph{each round} because a proposer may participate in multiple rounds. 
An intuitive reason for why this is the case is due to the fact that some (unlucky) proposers may not be matched to 
most of the receivers in their proposal sets, in the current round. Hence, a proposer would need to choose another proposal set in the next round.
In order to ensure that this new proposal set does not depend on the matched vertices of the previous round, 
we must flip a new set of coins. Second, because proposers release their sets simultaneously, we cannot have a node
simultaneously propose and receive; if a node simultaneously proposes and receives, we do not have a way to 
ensure that the matching thresholds are not exceeded. 
Thus, we have a two-round process where in the first synchronous round, 
proposers first propose and then in the next round, receivers
decide the matches. 

\subsection{Detailed Algorithm Description}\label{sec:distributed-detailed}

We now describe our pseudocode for~\cref{alg:distributed-matching} in detail and then prove its properties. We use a number of 
constants that guarantees our $\Theta(\log n)$ round algorithm succeeds with probability at least $1 - \frac{1}{n^c}$ for any constant 
$c \geq 1$. 
Our algorithm 
returns a $(2+\eta)$-approximate $b'$-matching
with high probability when the degree parameter $b$ satisfies the bound given in \Cref{thm:fast-distributed-main}. 

In our pseudocode given in~\cref{alg:distributed-matching}, we define
$M(u)$ to be the set of nodes currently matched to $u \in V$. 
For each round $i$, we let $V_i$ be the set of \emph{active} vertices which have not reached their matching \emph{$b'$-condition} in round $i$. Our algorithm uses
a subroutine (\cref{alg:private-subgraph})
which takes a vertex $v$, a set of vertices $S$ from which to select a proposal or matching set, and returns a subgraph index which determines a
subset of vertices based on public i.i.d.\ coin flips. This algorithm implements a more specific version of our Public Vertex Subset Mechanism.

\cref{alg:distributed-matching} takes as input a graph $G = (V, E)$, a privacy parameter $\eps > 0$, a matching benchmark parameter $b'\ge 1$, and 
a degree parameter $b$. The algorithm returns an $\eps$-LEDP implicit $b$-matching.
First, we set some additional approximation and privacy parameters in~\cref{distributed:set-variables}.
We then iterate through every node in~\cref{dist:node-iterate} simultaneously to determine the noisy 
threshold of every node (\cref{dist:noisy-threshold}). 
This threshold is used to determine if a vertex satisfies its \emph{matching condition}. 
In particular, if 
the estimated number of possible matches for the node is greater than the threshold, then with high probability, 
the node is matched to at least $b'$ neighbors.
The threshold is set in~\cref{dist:noisy-threshold} using Laplace noise. 
We add Laplace noise to the threshold as part of an instance of the Multidimensional AboveThreshold (MAT) technique (\cref{lem:mat}).
This is in turn used to determine for every node whether the number of matches 
exceeds this threshold. We also initialize an empty set for each $u \in V$, denoted $M(u)$, that contains the set of nodes $v$ is matched to ({\Cref{dist:add-mat-noise}}). 

$V_i$ contains the set of remaining active nodes in round $i$. Initially, in the first round, all nodes are active (\cref{dist:initial-active}).
We proceed through $O(\log n)$ rounds of matching (\cref{dist:for-vertex}). 
For each node, we first check using the MAT (simultaneously in~\cref{dist:for-vertex}) whether its matching condition has been met. 
To do so, 
we add Laplace noise from the appropriate distribution (\cref{dist:draw-mat-noise})
to the size of $u$'s current matches (\cref{dist:matched-nodes}).
If this noisy size exceeds the noisy threshold,
we output that node $u$ has satisfied its matching condition 
and we remove $u$ from $V_i$ (\cref{dist:matching-condition-reached}).
Next, we flip the coins for round $i$. 
These coin flips are used to determine the proposal and match sets. Recall that we produce 
coin flips for each pair of nodes in the graph. Then, the coin flips are used to determine an implicit set of edges used to 
match nodes. Although the coin flips are public, only the endpoints of each existing edge knows whether that edge is added to a proposal or match set.
All coin flips are done simultaneously and are performed by the curator (\cref{dist:proposer-set}). We flip a coin for each unique pair of vertices 
and for each $r \in \{0, \dots, \ceil{\log_{1 + \eta}(n)}\}$.
The probability that the coin lands \textsc{Heads} is determined by $r$. Specifically,
the coin for the $(i, j, r)$ tuple, denoted $coin(i, j, r)$, is \textsc{Heads} with probability $(1+\eta)^{-r}$ (\cref{dist:proposer-set-2}). 
This ensures that the $(r+1)$-th set is, in expectation, a factor of $\frac{1}{1+\eta}$ smaller than the $r$-th set.

We then flip another set of coins to determine which nodes are proposers (\cref{dist:determine-proposer}) in round $i$. 
Each node is a proposer with $1/2$ probability; the result of the coin flip is stored in $a_i(v)$ for each $v \in V$.
We only select proposers from the set of active vertices, $V_i$. 
For each selected proposer, we simultaneously call the procedure PrivateSubgraph with the inputs $G, \eps', \eta, b,$ $w, V_i \setminus 
P_i$ and $coin$ (\cref{dist:proposal-for,dist:proposal-private-subgraph}). 
The pseudocode for PrivateSubgraph is given in~\cref{alg:private-subgraph}. 
The function takes as input a graph $G$, a privacy parameter $\eps'$,
an approximation parameter $\eta$, the degree cap $b$, a vertex subset $S$, a vertex $w \in V$,
and all public coin flips $coin$. The 
function then iterates through all possible subgraph indices 
$r \in \{0, \dots, \ceil{\log_{1+\eta}(n)}\}$ (\cref{ps:subgraph-indices}).
For each index, we determine the set of nodes $u$ that satisfy the following conditions:
$u \in S$, the edge $\{w, u\}$ exists, and 
the coin flip $coin(u, w, r)$ is \textsc{Heads}. 
This set of nodes is labeled $W_r(w)$ (\cref{ps:subset}). We then 
add Laplace noise to the size of this set to obtain a noisy estimate for the size of $W_r(w)$ (\cref{ps:noisy-set}). 
We use the Adaptive Laplace Mechanism to determine the 
smallest $r$ (most number of nodes $w$ can propose to) that does not
exceed $b$. To do this, we add Laplace noise to the 
size of the set of nodes matched to $w$ (\cref{ps:noisy-match-count}) 
and find the smallest $r$ such that the sum of the noisy proposal set size, $|\widetilde{W}_r(w)|$, 
and the noisy matched set size, $|\widetilde{M}(w)|$, plus $27c\log(n)/\eps'$ does not
exceed $b$ (\cref{ps:smallest-index}). 
The term $27c\log(n)/\eps'$ is added to 
ensure that $|W_r(w)| + |M(w)|$ does not exceed our $b$ bound (by drawing negative noises), with 
high probability. The procedure releases the smallest subgraph index (\cref{ps:release-index})
satisfying~\cref{ps:smallest-index}.

We save the released index from PrivateSubgraph in $r_w$. Then, $w$ 
releases $r_w$ (\cref{dist:release-proposer-set}). After the proposers release their subgraph indices, the 
receivers then determine their match sets. We iterate through all receivers simultaneously and for each receiver $w$, 
the receiver $w$ computes the set of proposers that proposed to it, denoted as $R$ (\cref{dist:receiver-set}). The receiver can privately compute this set
since they know which edges are incident to it, which of their neighbors are proposers, 
the coin flips of each of the incident edges,
and the released subgraph indices of their neighbors. Using the publicly released subgraph indices $r_v$ of each neighbor $v$, 
receiver $w$ can then check $coin(v, w, r_v)$ to see if edge $\{v, w\}$ is included in proposer $v$'s proposal set. 
Using $R$, receiver $w$ then computes the match set by calling PrivateSubgraph (\cref{dist:match-set}).
The receiver releases the subgraph index associated with the match set (\cref{dist:match-set-index}).

The final steps compute the new edges that are in the matching that each node stores privately. 
We iterate through all pairs of proposers and receivers simultaneously (\cref{dist:proposer-receiver}). 
For each pair, $v$ checks whether it is in $W_{r_u}(u)$ by checking if $coin(v, u, r_u) = $ \textsc{Heads} and 
vice versa for $w$ (\cref{dist:match}). Then, if the pair is in both the proposal and match sets, 
$v, u$ matches with each other and $v$ adds $u$ to $M(v)$ and $u$ adds $v$ to $M(u)$ (\cref{dist:add-to-match-set}). 
The sets $M(u)$ and $M(v)$ are stored privately but they are computed using the public transcript; hence, the release transcript 
is the implicit solution that allows each node to know and privately store which nodes it is matched to.

\begin{algorithm2e}[htp!]
\caption{{$O(\log n)$-Round $\eps$-LEDP Approximate Maximum $b'$-Matching}}\label{alg:distributed-matching}
\SetAlgoLined
\KwIn{Graph $G=(V,E)$, privacy parameter $\eps > 0$, matching benchmark parameter $b'\ge 1$, constant $c \geq 1$}
\KwOut{An $\eps$-local edge differentially private ($\eps$-LEDP) implicit $b$-matching}

    Let $\eta \leftarrow 1/2, \eps'\leftarrow \eps/(3072c\log_{16/15}(n)), \eps'' \leftarrow \eps/3$\label{distributed:set-variables}\\
    Set parameter $b \ge \frac{(1+\eta)^2}{1-\eta}b' + \frac{518 \log_{1+\eta}(n)}{\eta^4\,\eps'}$\label{distributed:set-b}\\
    \For{each node $u\in V$ (simultaneously)}{\label{dist:node-iterate}
        $\tilde{b}(u)\leftarrow b - 259c\log_{1+\eta}(n)/\eps' + \mathrm{Lap}(4/\eps'')$\label{dist:noisy-threshold}\\
        $M(u)\leftarrow \varnothing$  \label{dist:matched-nodes}
    }
    $V_1 \gets V$\label{dist:initial-active} \\
    \For{round $i=1$ to $\lceil 512c \cdot \log_{16/15}(n)\rceil$}{\label{dist:for-vertex}
        \For{each node $u\in V_i$ (simultaneously)}{\label{dist:matching-threshold}
            $\nu_i(u)\leftarrow\mathrm{Lap}(8/\eps'')$\label{dist:draw-mat-noise}\\
            \If{$|M(u)|+\nu_i(u)\ge \widetilde{b}(u)$}{\label{dist:add-mat-noise}
                \textbf{Release}: node $u$ has satisfied matching condition and remove $u$ from $V_i$\label{dist:matching-condition-reached}\\
            }
        }\label{dist:end-matching-threshold}

        \For{each tuple $(i, j, r)$ where $i \in [n]$, $j \in \{i + 1, \dots, n\}$ and $r\in \{0,\ldots,\lceil\log_{1+\eta}(n)\rceil\}$}{\label{dist:proposer-set}
            Flip and release $coin_m(i,j,r)$ and $coin_p(i, j, r)$ which each lands \textsc{Heads} with probability $p_r=(1+\eta)^{-r}$\\ \label{dist:proposer-set-2}
        }
        $P_i \gets \{v: a_i(v) = \textsc{Heads} \wedge v \in V_i\}$ where $a_i(v)$ is \textsc{Heads} with probability $p = 1/2$\\ \label{dist:determine-proposer}

        \For{$w \in P_i$ simultaneously}{\label{dist:proposal-for}
            $r_w \leftarrow$ PrivateSubgraph$(G, \eps', \eta, b, w, V_i \setminus P_i, coin_p)$\label{dist:proposal-private-subgraph}\\ 
            \textbf{Release} $r_w$\label{dist:release-proposer-set}\\
        }

        \For{$w \in V_i \setminus P_i$ simultaneously}{
            $R \gets \{u \mid w \in W_{r_u}(u) \wedge u \in P_i\}$\label{dist:receiver-set}\\
            $r_w \leftarrow$ PrivateSubgraph$(G, \eps', \eta, b, w, R, coin_m)$\label{dist:match-set}\\
            \textbf{Release} $r_w$\label{dist:match-set-index}\\
        }
        \For{$v \in P_i$ and $u \in V_i \setminus P_i$ (simultaneously)}{\label{dist:proposer-receiver}
            \If{$v \in W_{r_{u}}(u)$ and $u \in W_{r_{v}}(v)$}{\label{dist:match}
                $v, u$ matches; add $u$ to $M(v)$ and $v$ to $M(u)$\label{dist:add-to-match-set}\\
            }
        }
        $V_{i + 1} \leftarrow V_i$
    }
\end{algorithm2e}

\begin{algorithm2e}[htp!]
\SetAlgoLined
\caption{Private Subgraph Release}\label{alg:private-subgraph}
\KwIn{Graph $G = (V, E)$, privacy parameters $\eps'$, approximation parameter $\eta$, degree cap $b$, vertex set $S$, vertex $v \in V$, and public coin flips $coin$} 
\KwOut{Release subgraph index $r$}
\textbf{Function}{PrivateSubgraph($G, \eps', \eta, b, w, S, coin$)}
\Begin{
\For{subgraph index $r=0,\ldots,\ceil{\log_{1+\eta}(n)}$}{\label{ps:subgraph-indices}
    $W_r(w) \gets \{u \mid u \in S \wedge \{w, u\} \in E \wedge coin(u,w,r)=\textsc{Heads}\}$\label{ps:subset}\\
    $|\widetilde{W}_r(w)| \gets |W_r(w)|+\text{Lap}(4/\eps')$\label{ps:noisy-set}
}
    $|\widetilde{M}_i(w)| \gets |M(w)|+\text{Lap}(2/\eps')$\label{ps:noisy-match-count}\\
    $r \gets \min \left(\{ r : |\widetilde{M}_i(w)| + |\widetilde{W}_r(w)| + 27c\log(n)/\eps' \leq b \}\right)$\label{ps:smallest-index}\\
    Return $r$\label{ps:release-index}
}
\end{algorithm2e}

\subsection{Analysis (\texorpdfstring{\Cref{thm:fast-distributed-main}}{Theorem})}
{
The pseudocode for \Cref{thm:fast-distributed-main} is presented in \Cref{alg:distributed-matching}.
We prove its privacy and utility guarantees separately.
We first prove that it satisfies $\eps$-DP in \Cref{sec:dist-privacy}
and then prove its approximation guarantees in \Cref{sec:dist-approx}.
}

\subsubsection{Privacy Guarantees}\label{sec:dist-privacy}

Our privacy proof follows a similar flavor to the $b$-matching privacy proof given in the previous section except for 
one main difference. While the adaptive Laplace mechanism was called twice per node in its use in the previous algorithm, this is 
not the case in our distributed algorithm. Suppose we are given neighbor graphs $G$ and $G'$ with edge $\{u, v\}$ that differs
between the two. Nodes $u$ and/or $v$ can propose a set \emph{many} times during the course of the algorithm since their proposed
sets are not guaranteed to be matched. If they are particularly unlucky, they could propose a set during \emph{every} round of the algorithm.
This means that by composition, we have privacy loss proportional to the number of rounds.
We formally prove the privacy of our algorithm below. 

\begin{lemma}\label{lem:dist-private}
    \cref{alg:distributed-matching} is $\eps$-LEDP.
\end{lemma}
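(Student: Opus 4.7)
My plan is to mirror the sequential privacy analysis of \Cref{lem:b-matching-private}: decompose access to private data into three local-randomizer classes, bound each class's privacy cost, and combine via concurrent composition (\Cref{lem:concurrent-composition}). The three classes are (i) the Multidimensional AboveThreshold check for matching conditions (\cref{dist:matching-threshold,dist:draw-mat-noise,dist:add-mat-noise,dist:matching-condition-reached}); (ii) the noisy set sizes $|\widetilde{W}_r(w)|$ inside PrivateSubgraph (\cref{ps:noisy-set}); and (iii) the noisy match counts $|\widetilde{M}_i(w)|$ inside PrivateSubgraph (\cref{ps:noisy-match-count}). Everything else---the public coin flips $coin_p, coin_m$, the proposer Bernoulli selection, and the local decoding of the transcript by each node---is either public randomness or deterministic post-processing of the outputs of (i)--(iii) together with each node's private neighborhood, and hence consumes no additional privacy budget.

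For (i), I would view the entire execution as a single MAT instance whose adaptive query vector records $|M(u)|$ at each round for each not-yet-satisfied $u$. On edge-neighbors $G\sim G' = G-\{u,v\}$, only $|M(u)|$ and $|M(v)|$ can change and each by at most $1$ in any fixed prefix of the transcript, so $\Delta = 2$; the noise scales $\text{Lap}(4/\eps'')$ and $\text{Lap}(8/\eps'')$ then yield $\eps''$-DP by \Cref{lem:mat}.

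For (ii), I would fix a subgraph index $r$ and analyze round by round. In a single round, each of $u$ and $v$ makes at most one PrivateSubgraph call (either as a proposer or as a receiver), so the vector of $|W_r(\cdot)|$ queries in that round has $\ell_1$-sensitivity at most $2$ with respect to $\{u,v\}$, and the noise $\text{Lap}(4/\eps')$ gives $\eps'/2$-DP per round per $r$. Because $coin_p$ is refreshed each round and independently samples each edge with probability $p_r = (1+\eta)^{-r}$, privacy amplification via subsampling (\Cref{lem:privacy-amplification}) tightens this to $p_r\eps'$-DP per round per $r$. Summing across $T = \ceil{512c\log_{16/15}(n)}$ rounds and then over $r$ via the geometric bound $\sum_{r\ge 0} p_r \le (1+\eta)/\eta = 3$ (since $\eta = 1/2$) gives total cost $3T\eps'$ for (ii). For (iii), the Adaptive Laplace Mechanism (\Cref{lem:adaptive-laplace}) applied to the sequence of $|M(\cdot)|$ queries has $\ell_1$-sensitivity at most $2T$ (edge $\{u,v\}$ can perturb $|M(u)|$ and $|M(v)|$ by at most $1$ in each round, over $T$ rounds), so noise $\text{Lap}(2/\eps')$ yields $T\eps'$-DP.

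Concurrent composition then bounds the total cost by $\eps'' + 3T\eps' + T\eps' = \eps/3 + 4T\eps'$, and the parameter choices $\eps' = \eps/(3072c\log_{16/15}(n))$ and $T = \ceil{512c\log_{16/15}(n)}$ ensure $4T\eps' \le 2\eps/3$, giving a total of $\le \eps$ as required. The hardest part will be the sensitivity bookkeeping in (ii): on $G$ versus $G-\{u,v\}$ the private trajectories of $u$ and $v$ may diverge, since whether each remains active in a given round depends on the private data, so I must verify that both the per-round $\ell_1$-sensitivity bound of $2$ and the independence required for subsampling amplification hold pointwise on every fixed prefix of the transcript. This is exactly the adaptive hypothesis under which MAT and the Adaptive Laplace Mechanism compose cleanly across rounds, so once it is established the budget accounting reduces to the geometric/linear sums above.
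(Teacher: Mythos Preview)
Your decomposition into three local-randomizer classes, the MAT analysis for (i), the per-round Adaptive Laplace plus subsampling for (ii), the $2T$-sensitivity Adaptive Laplace for (iii), and the final budget arithmetic $\eps'' + 3T\eps' + T\eps' = \eps/3 + \eps/2 + \eps/6 = \eps$ are exactly what the paper does; the only cosmetic difference is that the paper tracks (ii) and (iii) per node (sensitivity $1$, then multiplies by $2$ for the two endpoints of the differing edge) whereas you track them per round with combined sensitivity $2$. One small point worth tightening: in a given round the proposer queries are subsampled by $coin_p$ while the receiver queries are subsampled by the \emph{independent} coin $coin_m$, so applying \Cref{lem:privacy-amplification} directly to the combined vector is not quite clean---you should split proposer and receiver queries (each has sensitivity $1$, amplified to $p_r\eps'/2$) and compose, which recovers the same $p_r\eps'$ bound.
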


\begin{proof}
    To prove that~\cref{alg:distributed-matching} is $\eps$-LEDP, 
    we show that our algorithm can be implemented using local randomizers. 
    Each node $v$ in the algorithm only releases a privatized set 
    of information in the following~\cref{dist:matching-condition-reached,dist:release-proposer-set,dist:match-set-index}:
    when $v$ has satisfied its $b'$-matching condition (\cref{dist:matching-condition-reached}), when $v$ releases a proposal subgraph index (\cref{dist:release-proposer-set}),
    and when $v$ releases a match set index (\cref{dist:match-set-index}).
    
    Each node implements three different local randomizers for releasing each of the aforementioned 
    three types of information. Our local randomizers are created for each type of private information
    used to determine the released output. Namely, we use a local randomizer to determine when a node has satisfied 
    the $b'$-matching condition (\cref{dist:matching-condition-reached}), 
    a local randomizer for computing the noisy proposal or match sets in~\cref{ps:noisy-set} of~\cref{alg:private-subgraph},
    and a local randomizer for determining the noisy set of matched edges in~\cref{ps:noisy-match-count} of~\cref{alg:private-subgraph}. 
    The releases in~\cref{dist:release-proposer-set,dist:match-set-index} 
    solely depend on the noisy proposal/match sets and the noisy set of matched edges.
    Then, by the concurrent composition theorem, the entire algorithm is 
    differentially private since no other parts of the algorithm use the private graph information.
    
    First, before we dive into the privacy components,
    we note that the curator flips coins and releases
    the result of the coins (\cref{dist:proposer-set,dist:proposer-set-2}); these coin flips 
    do not lose any privacy since the coins are not tied to private information. The coin flips are performed
    for each public pair of distinct nodes.
    Our proof follows the privacy proof of~\cref{lem:b-matching-private} except we
    need to account for multiple uses of the adaptive Laplace mechanism for each node. 

     The first set of local randomizers for all nodes $u \in V$ that releases whether $u$ has satisfied its matching condition 
     can be implemented using an instance of the Multidimensional AboveThreshold (MAT)
     mechanism with sensitivity $2$. 
     We show how~\cref{dist:node-iterate,dist:noisy-threshold,dist:matching-threshold,dist:draw-mat-noise,dist:add-mat-noise,dist:matching-condition-reached} 
     can be implemented using MAT. First, our vectors of queries at each round $i$ is an $n$-length 
     vector which contains the number of nodes each node $u \in V$ has already matched to.
     Specifically, the vector of queries at each round $i$ is the number of other nodes, $|M(u)|$, each 
     node $u$ has already been matched to before round $i$. Conditioning on the outputs of the previous iterations, the addition 
     or removal of a single edge can only affect $M(u)$ for two nodes, each by at most $1$. Hence, the sensitivity of the 
     vector of queries is $2$; furthermore, the sensitivity of each vector of queries for \emph{each} of the rounds is $2$.
     This means that MAT can be implemented with $\Delta_M = 2$. Thus, the first set of local randomizers is
     $\eps''$-differentially private by \Cref{lem:mat}.

    The second type of local randomizer computes the $|\widetilde{W}_r(w)|$ values for each $w \in V$ that is a 
    proposer or receiver. As before, each call of~\cref{ps:noisy-set} can be implemented as a local randomizer 
    using the Adaptive Laplace Mechanism with sensitivity $\Delta=1$; the sensitivity is $1$ because on neighboring adjacency lists, 
    $|W_r(w)|$ differs by at most $1$. In fact, given that each edge is selected with probability $p_r = (1+\eta)^{-r}$, we can amplify the privacy
    guarantee for this local randomizer in terms of $p$ using~\cref{lem:privacy-amplification}. For a given randomizer and parameter $r$,
    by~\cref{lem:privacy-amplification}, the randomizer is $\frac{2}{(1+\eta)^r} \cdot \frac{\eps'}{4} = \frac{\eps'}{2(1+\eta)^r}$-DP.
    Then, the sum of the privacy parameters of all calls to this second type of local randomizer on node $w$ is 
    \begin{align*}
        \sum_{r = 0}^{\ceil{\log_{1+\eta}(n)}} \frac{\eps'}{2(1+\eta)^r} = \frac{\eps' (\eta n + n - 1)}{2\eta \cdot n} \leq \frac{\eps' \cdot (\eta + 1)}{2\eta} = 
        \frac{3\eps'}{2}.
    \end{align*}
    
    The last local randomizer implementation is for estimating $|\widetilde{M}(w)|$ in~\cref{ps:noisy-match-count}; 
    as before it is an instance of the Adaptive Laplace Mechanism with sensitivity $1$. Given two neighboring adjacency lists and 
    conditioning on all previous local randomizer outputs, the number of nodes matched to $w$ differs by $1$. Hence, 
    each call of~\cref{ps:noisy-match-count} can be implemented using a $\frac{\eps'}{2}$-local randomizer by~\cref{lem:adaptive-laplace}.
    
    Finally, applying the concurrent composition theorem (\Cref{lem:concurrent-composition}) proves the privacy guarantee of the entire algorithm.
    All calls to our MAT local randomizers is $\eps''$-DP which simplifies to $\eps/3$-DP by our setting of $\eps''$. 
    Then, all calls to our second type of local randomizers incur a privacy loss of at most
    \begin{align*}
        2 \cdot \frac{3\eps'}{2} \cdot 512c\log_{16/15}(n) = 3 \cdot \frac{\eps}{3072\log_{16/15}(n)} \cdot 512c \log_{16/15}(n) = \frac{\eps}{2}\,.
    \end{align*}
    All calls to our third type of local randomizer incur a privacy loss of at most
    \[
        2 \cdot 
        \frac{\eps'}{2} \cdot 512c \log_{1+\eta}(n) = 2 \cdot \frac{\eps}{2 \cdot 3072c\log_{16/15}(n)} \cdot 512c \log_{1+\eta}(n) =
        \frac{\eps}{6}\,.
    \]
    Thus, all calls to all local randomizers give \mbox{$\frac{\eps}{3} + \frac{\eps}{6} + \frac{\eps}{2} = \eps$-DP}.
    Hence, our algorithm is $\eps$-LEDP since we can implement our algorithm using local randomizers and 
    our produced transcript preserves $\eps$-differential privacy.
\end{proof}

\subsubsection{Utility and Number of Rounds}\label{sec:dist-approx}
{The goal of this section is to prove the following lemma,
which gives the formal approximation guarantees and round complexity of \Cref{alg:distributed-matching}.}
\begin{lemma}\label{lem:dist-b-approx}
    For any $\eps \in (0, 1)$, 
    \cref{alg:distributed-matching} returns a $b$-matching that is at least $\frac12$ the size of a maximum $b'$-matching, 
    with high probability, 
    when $b$ satisfies the lower bound in \Cref{thm:fast-distributed-main}, in $O(\log(n))$ rounds.
\end{lemma}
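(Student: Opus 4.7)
My plan is to mirror the structure of the sequential analysis (\Cref{thm:b-matching:utility}), treating each of the $O(\log n)$ rounds as a single ``iteration'' but needing an additional progress argument to show that a constant fraction of still-active vertices either reach their matching condition or are matched in each round. The first step is to condition on a high-probability good event $\mathcal{E}$ in which every Laplace draw in \Cref{dist:noisy-threshold,dist:draw-mat-noise} and in every call to \texttt{PrivateSubgraph} (\Cref{ps:noisy-set,ps:noisy-match-count}) is bounded by $O(\log(n)/\eps')$ in magnitude, and in which every Chernoff bound on $|W_r(v)|$ around its expectation $p_r \cdot |S|$ holds. Since there are at most $O(n \log_{16/15} n)$ such Laplace variables and at most $O(n \log_{1+\eta}(n) \log_{16/15}(n))$ Chernoff events, a union bound and the setting of constants in \Cref{distributed:set-variables,distributed:set-b} make $\mathcal{E}$ hold with probability $1-n^{-c}$.

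Next, I would establish two invariants on $\mathcal{E}$, essentially verbatim from \Cref{sec:billboard-main:utility}: (i) no vertex is ever matched with more than $b$ neighbors, since the threshold $\tilde b(u)$ and the cutoff in \Cref{ps:smallest-index} bake in the slack $27c\log(n)/\eps'$; and (ii) whenever a vertex $u$ has ``satisfied its matching condition'' as flagged on \Cref{dist:matching-condition-reached}, it already has $|M(u)| \ge b'$, because $b \ge \frac{(1+\eta)^2}{1-\eta}b' + \Omega(\log^2(n)/(\eta^4 \eps))$ dominates the additive slack. This handles feasibility and ensures that the only remaining work is to argue maximality so that \Cref{lem:b-matching:maximality} gives the claimed $\frac12$-approximation with respect to the maximum $b'$-matching.

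The heart of the proof, and the main obstacle, is the per-round progress lemma. Call an edge $\{u,v\}$ \emph{active} in round $i$ if both $u,v\in V_i$ and neither has yet been matched to $b'$ others. I plan to show that in each round a constant fraction of active edges are resolved (either matched or one endpoint is flagged). For a fixed active edge $\{u,v\}$, with probability $1/2$ we have $u\in P_i$ and $v\in V_i\setminus P_i$, so $u$ proposes and $v$ receives. Since $u$ still has at least one active unmatched neighbor, the subgraph index $r_u$ chosen in \texttt{PrivateSubgraph} yields, up to $O(\log n/\eps')$ slack, an expected proposal-set size $\Theta(b - |M(u)|)$ by the same case analysis as in \Cref{thm:b-matching:utility}. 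An orientation charging argument in the spirit of the non-private distributed maximal-matching analysis (and of \cite{khuller1994primal} as cited in the overview) then lower-bounds the number of successful matches: orient each proposer-to-receiver edge that is actually included in $W_{r_u}(u)$; since each receiver accepts up to its own capacity via $r_w$, and since $r_w$ is again chosen to saturate $b - |M(w)|$ up to $O(\log n/\eps')$ slack, a constant fraction of these oriented edges end up in $W_{r_u}(u)\cap W_{r_v}(v)$ in \Cref{dist:match}. Combined with the $1/2$ proposer/receiver split, the expected number of active edges after round $i$ drops by a constant factor bounded away from $1$.

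Finally, a standard Chernoff-and-union-bound argument over the $O(\log n)$ rounds shows that with high probability the number of active edges drops below $1$ within $\lceil 512c \log_{16/15} n \rceil$ rounds, which matches the loop bound in \Cref{dist:for-vertex}. At termination, every remaining active vertex has all of its neighbors either matched to $b'$ others or flagged as satisfied, so the maximality precondition of \Cref{lem:b-matching:maximality} is met and the produced implicit $b$-matching has size at least $\frac12\OPT(b')$. Since only the progress rate enters the round count and every other bound is $O(\log n)$ by the same calculations as in the sequential case, the algorithm terminates in $O(\log n)$ rounds, as required.
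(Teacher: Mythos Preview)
Your overall scaffold matches the paper's: condition on a high-probability event for the Laplace and Chernoff tails, verify the feasibility invariants $|M(u)|\le b$ and ``flagged $\Rightarrow |M(u)|\ge b'$'', prove a per-round progress lemma, and finish with a Chernoff bound over the $O(\log n)$ rounds followed by \Cref{lem:b-matching:maximality}. Where your plan and the paper diverge is exactly at the step you flagged as the main obstacle, and as written your progress argument has a real gap.

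You propose to orient each \emph{proposed} edge (each edge in some $W_{r_u}(u)$) from proposer to receiver and then argue that a constant fraction of these oriented edges are also in $W_{r_v}(v)$. But there is no a priori reason this ratio is bounded below by a constant: a receiver $w$ may receive $|R|$ proposals far exceeding its remaining slack $b-|M(w)|$, in which case almost all proposals to $w$ are rejected, and the graph can be arranged so that this happens for most receivers. Equally, nothing in your outline bounds the fraction of \emph{active} edges that are proposed in the first place; a high-degree proposer $u$ chooses $r_u$ so that $|W_{r_u}(u)|\approx b-|M(u)|$, which can be an arbitrarily small fraction of its active neighborhood. So neither ``proposed $\to$ accepted'' nor ``active $\to$ proposed'' is a constant fraction in general.

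The paper fixes this with a different orientation, done \emph{before} the round and independent of the coin flips: orient every hopeful edge from its lower-degree to its higher-degree endpoint in the hopeful subgraph, call a vertex \emph{good} if at least $1/3$ of its hopeful edges point into it, and observe that at least half of all hopeful edges are good. For a good edge $(u,v)$ with $u$ the lower-degree endpoint, conditioned on $u\in P_i$ and $v\notin P_i$ (probability $1/4$, not $1/2$ as you wrote), the probability that $v\in W_{r_u}(u)$ is at least $\min\bigl(1,\Theta(\log n/\eps')/\deg(v)\bigr)$, because $u$'s proposal set has size $\Theta(\log n/\eps')$ (\Cref{lem:active-match-set}) and $u$ has at most $\deg(u)\le \deg(v)$ candidate receivers. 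Summing this over the $\Omega(\deg(v))$ in-edges of a good receiver $v$ gives $v$ an expected $\Omega(\min(\deg(v),\log n/\eps'))$ proposals, so with high probability $v$ either accepts one or reaches its matching condition; either way the good edge becomes unhopeful. Markov then gives that $\ge H/16$ hopeful edges become unhopeful with probability $\ge 1/16$, and a Chernoff bound on the number of successful rounds finishes. The degree-based orientation is precisely what converts the per-edge proposal probability $\Theta(1/\deg)$ into a per-receiver expectation that does not depend on the graph structure, and that is the piece your plan is missing.
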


Towards proving \Cref{lem:dist-b-approx},
we first prove the following lemma.
Recall that our algorithm randomly divides the population into (roughly) half proposers and half receivers. Then, the proposers
first propose their proposal sets, and then the receivers respond with their match sets for each of $32c\cdot \ceil{\log_{16/15}(n)}$ rounds.
The following crucial lemma gives the ``progress'' of the algorithm after each round.
Namely, to ensure we produce a maximal $b'$-matching in $O(\log n)$ rounds, with high probability, roughly a constant fraction of the nodes would need to exceed their $b'$-matching condition every round; hence, these nodes would not participate in future matching rounds. For convenience in the below proofs, when we write $\log(n)$, 
we mean $\log_{1+\eta}(n)$.

In the analysis below, 
we call the set of nodes that have not exceeded their $b'$-matching condition and have a non-zero number of
neighbors who have not reached
their $b'$-matching condition, the \defn{hopeful nodes}. 
A node that is not hopeful is called \emph{unhopeful}.

First, note that every active node in any round $i$ is either a proposer or receiver by definition. 
We first prove the following lemma,
which states that a vertex $w$ will choose $r_w = 0$ with high probability,
as a proposer or receiver if its degree is less than $27c\log(n)/\eps'$.

\begin{lemma}\label{lem:low-degree}
    If a node $w \in V_i$ has induced degree less than $27c\log(n)/\eps'$ among the hopeful nodes in round $i$, then it will choose 
    $r_w = 0$ with probability at least $1 - \frac{1}{n^{5c}}$.
\end{lemma}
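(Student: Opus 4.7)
The plan is to verify directly that the selection criterion at \Cref{ps:smallest-index} of \Cref{alg:private-subgraph} is satisfied for $r=0$ with probability at least $1-1/n^{5c}$, so that the minimum valid index returned is $0$. Since $p_0=(1+\eta)^0=1$, every coin lands \textsc{Heads} at index $0$, so $W_0(w)$ is exactly the set of $w$'s neighbors lying in the vertex set $S$ passed to \Cref{alg:private-subgraph} (either $V_i\setminus P_i$ for proposers, or $R\subseteq P_i$ for receivers). Either way every such neighbor lies in $V_i$ and is adjacent to $w\in V_i$, hence qualifies as a hopeful neighbor of $w$. The degree hypothesis immediately gives $|W_0(w)|<27c\log(n)/\eps'$ deterministically.

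Next I would exploit that $w\in V_i$ means $w$'s MAT check at \Cref{dist:add-mat-noise} failed, i.e.\ $|M(w)|+\nu_i(w)<\tilde b(w)$. Substituting $\tilde b(w)=b-259c\log(n)/\eps'+\mathrm{Lap}(4/\eps'')$ and $\nu_i(w)\sim\mathrm{Lap}(8/\eps'')$ and applying \Cref{lem:laplace-noise-concentration} to both draws yields $|M(w)|\le b-259c\log(n)/\eps'+O(\log(n)/\eps'')$, off an event of probability at most $1/n^{\Omega(c)}$. A second application of \Cref{lem:laplace-noise-concentration} to the two fresh draws from \Cref{ps:noisy-set,ps:noisy-match-count} (at $r=0$) gives $|\widetilde M_i(w)|\le|M(w)|+O(\log(n)/\eps')$ and $|\widetilde W_0(w)|\le|W_0(w)|+O(\log(n)/\eps')$, off another event of probability $1/n^{\Omega(c)}$. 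Summing,
\begin{align*}
|\widetilde M_i(w)|+|\widetilde W_0(w)|+\frac{27c\log n}{\eps'}
&\le \left(b-\frac{259c\log n}{\eps'}\right)+\frac{27c\log n}{\eps'}+\frac{27c\log n}{\eps'}+O\!\left(\frac{\log n}{\eps'}\right)\le b,
\end{align*}
since $54c+O(1)\ll 259c$ for $c\ge 1$. A union bound over the four Laplace tails controls the total failure by $1/n^{5c}$, which yields the lemma.

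The main obstacle, such as it is, is constant bookkeeping: the deflation $259c\log(n)/\eps'$ in \Cref{dist:noisy-threshold} and the additive slack $27c\log(n)/\eps'$ in \Cref{ps:smallest-index} were calibrated precisely so that this calculation closes, and one must check that both noise scales ($\eps''$ and the smaller $\eps'$) fit inside the leftover $205c\log(n)/\eps'$ budget; this reduces to the observation that $\eps'\le\eps''$ so that every $O(\log(n)/\eps'')$ term is absorbed into the $O(\log(n)/\eps')$ slack. No step of the argument is genuinely subtle beyond keeping track of these constants.
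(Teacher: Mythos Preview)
Your proposal is correct and follows essentially the same route as the paper: verify directly that the criterion at \Cref{ps:smallest-index} is satisfied at $r=0$ by bounding $|W_0(w)|$ via the degree hypothesis and controlling the Laplace noises. The only difference is that you explicitly bound $|M(w)|$ via the failed MAT check at \Cref{dist:add-mat-noise}, whereas the paper's proof handles this step implicitly by appealing to ``our setting of $b$'' and only explicitly bounds the two noises $X\sim\lap(4/\eps')$ and $Y\sim\lap(2/\eps')$ from \Cref{alg:private-subgraph}; your version is the more careful accounting of the same argument.
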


\begin{proof}
    Suppose $X\sim \lap(4/\eps')$ and $Y \sim \lap(2/\eps')$ are the noise random variables chosen in~\cref{ps:noisy-set,ps:noisy-match-count}, respectively.
    Since $w$ has induced degree less than $27c\log(n)/\eps'$, it will pick $r_w = 0$ if $X + Y \leq 36c\log(n)/\eps'$ 
    by our setting of $b$ in~\cref{distributed:set-b} of~\cref{alg:distributed-matching} and~\cref{ps:smallest-index} of~\cref{alg:private-subgraph}. 
    By~\cref{lem:laplace-noise-concentration},
    $X \leq 24c\log(n)/\eps'$ with probability at least $1 - \frac{1}{n^{6c}}$ and $Y \leq 12c\log(n)/\eps'$ with probability at least $1 - \frac{1}{n^{6c}}$. 
    Then, $X + Y \leq 36c\log(n)/\eps'$ with probability at least $1 - \frac{1}{n^{5c}}$ when $n \geq 2$. 
\end{proof}

We now prove an additional lemma about the size of any proposal or match set an active vertex in round $i$ will choose. In particular, if a vertex is active in round
$i$, then it will propose a proposal set or match set of size at least $32c\log(n)/\eps'$, with high probability.

\begin{lemma}\label{lem:active-match-set}
    If vertex $w \in V_i$ is active in round $i$, then $w$ will pick $r_w = 0$ or 
    an $r_w$ where $|W_{r_w}(w)| \geq 20c\log(n)/\eps'$ with probability at least 
    $1 - O\left(\frac{1}{n^{2c}}\right)$.
\end{lemma}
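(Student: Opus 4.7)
The plan is to condition on a high-probability event $\mathcal{E}$ that all Laplace noise variables drawn for $w$ in round $i$ are well concentrated---each within $O(c\log n)$ times its scale. By Laplace tail bounds and a union bound over the $O(\log n)$ noise draws (namely $\widetilde{b}(w)$, $\nu_i(w)$, $\widetilde{M}_i(w)$, and the $O(\log n)$ values $\widetilde{W}_r(w)$), $\mathcal{E}$ holds with probability $1 - O(1/n^{5c})$, and I condition on $\mathcal{E}$ throughout.

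If $r_w = 0$ the conclusion is immediate, so I focus on $r_w \geq 1$. In this case, the minimality in \cref{ps:smallest-index} of \cref{alg:private-subgraph} implies the threshold condition failed at $r = r_w - 1$, giving $|\widetilde{M}_i(w)| + |\widetilde{W}_{r_w-1}(w)| + 27c\log(n)/\eps' > b$. Substituting the true values via $\mathcal{E}$ yields $|W_{r_w-1}(w)| > b - |M(w)| - O(c\log(n)/\eps')$. Since $w \in V_i$ survived the MAT check, $|M(w)| + \nu_i(w) < \widetilde{b}(w)$; applying $\mathcal{E}$ and using that $\eps'' \gg \eps'$ (so $O(c\log(n)/\eps'')$ is negligible compared to $c\log(n)/\eps'$) yields $|M(w)| < b - 258c\log(n)/\eps'$. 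Combining the two bounds, I obtain $|W_{r_w-1}(w)| > T$ for $T \geq 195c\log(n)/\eps'$.

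The crucial step is propagating this lower bound on $|W_{r_w-1}(w)|$ to one on $|W_{r_w}(w)|$. The key observation is that the coin flips $coin(\cdot,\cdot,r)$ across different subgraph indices are independent, so for any fixed candidate $r^*\geq 1$, the random variables $|W_{r^*-1}(w)|$ and $|W_{r^*}(w)|$ are independent binomials $\mathrm{Bin}(d, p_{r^*-1})$ and $\mathrm{Bin}(d, p_{r^*})$, where $d$ is the number of neighbors of $w$ in the set $S$ passed to PrivateSubgraph and $p_{r^*-1} = (1+\eta)p_{r^*}$. I will bound the joint event
\[
\Pr\bigl[|W_{r^*-1}(w)| > T \,\wedge\, |W_{r^*}(w)| < 20c\log(n)/\eps'\bigr]
\]
by splitting on $d\,p_{r^*-1}$: if $d\,p_{r^*-1} \leq T/2$, an upper-tail Chernoff bound makes the first event exponentially small in $T$; otherwise $\E[|W_{r^*}(w)|] > T/(2(1+\eta))$, and for $\eta = 1/2$ a lower-tail Chernoff bound shows $|W_{r^*}(w)| \geq 2T/9 > 20c\log(n)/\eps'$ with probability $1 - e^{-\Omega(T)}$. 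A union bound over the $O(\log n)$ choices of $r^*$, together with $\Pr[\mathcal{E}^c]$, yields the claimed failure probability $O(1/n^{2c})$.

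The main obstacle is the data-dependence of $r_w$: a priori $|W_{r_w-1}(w)|$ and $|W_{r_w}(w)|$ are not independent. I handle this by bounding $\Pr[r_w = r^* \wedge \cdot]$ through the monotone sufficient event ``the condition failed at $r^* - 1$'', which depends only on the coin flips at indices $\leq r^* - 1$, thereby keeping $W_{r^*}(w)$ fresh and independent for the Chernoff step.
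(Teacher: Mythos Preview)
Your proof is correct. Both your argument and the paper's rest on the same two pillars: (i) Laplace concentration together with the failed MAT check shows that being active forces a large gap $b - |M(w)| = \Omega(c\log(n)/\eps')$; and (ii) a Chernoff bound on the binomial sizes $|W_r(w)|$.

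Where you differ is in how you control $|W_{r_w}(w)|$ given that $r_w$ is data-dependent. The paper takes a direct ``slack'' route: it identifies an index $r'$ with $\E[|W_{r'}(w)|] \approx 40c\log(n)/\eps'$, argues that the threshold test in \Cref{ps:smallest-index} is satisfied at $r'$ so that $r_w \leq r'$, and then applies a Chernoff lower-tail bound (the paper leaves implicit that this should be done for each candidate index up to $r'$, each of which has expectation at least $\E[|W_{r'}(w)|]$). Your route instead exploits the \emph{failure} of the threshold at $r_w-1$ to force $|W_{r_w-1}(w)|$ to be large, and then uses the independence of the coin flips across subgraph indices $r$ to propagate this to $|W_{r_w}(w)|$ via a case split on $d\,p_{r^*-1}$ and a union bound over $r^*$. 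Your treatment is more explicit about the data-dependence of $r_w$ and makes the independence-across-$r$ structure of the public coins do the work; the paper's version is terser and leaves that step to the reader. Both arrive at the same conclusion with the same asymptotic failure probability.
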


\begin{proof}
    If $w$ is still active in round $i$, this means that $w$ did not meet its $b'$-matching condition in round $i - 1$. 
    Hence, by~\cref{dist:add-mat-noise} of~\cref{alg:distributed-matching} 
    and~\cref{lem:laplace-noise-concentration}, it holds that 
    $|M(w)| < b - 259c\log(n)/\eps' + \lap(4/\eps'') - \lap(8/\eps'')$. By~\cref{lem:laplace-noise-concentration} 
    and the setting of $\eps''$, we can bound $b - 439c\log(n)/\eps' \leq |M(w)| \leq b - 79c\log(n)/\eps'$ 
    with probability at least $1 - O\left(\frac{1}{n^{2c}}\right)$.
    Since our proposal/match selection guarantees $|\widetilde{M}(w)|+|\widetilde{W}_r(w)|+27c\log(n)/\eps' \le b$,
    there is sufficient slack to choose an index $r$ with $\mathbb{E}[|W_r(w)|] = 40c\log(n)/\eps'$.
    By a Chernoff bound and independence of the public coins, with probability at least $1-\frac{1}{n^{5c}}$ we have
    $|W_{r_w}(w)| \ge 20c\log(n)/\eps'$, or else $r_w=0$ if the expectation is smaller (in which case all available edges are taken).
\end{proof}

Using our above lemmas, we will now prove our main lemma that enough progress is made in each round of our algorithm such that we find a maximal matching
at the end of all of our rounds. 
Specifically, this lemma shows that with a large enough constant probability, a constant fraction of the remaining edges in the graph become \defn{unhopeful}.
An edge is \defn{unhopeful} if at least one of its endpoints is unhopeful.

\begin{lemma}\label{lem:dist-matching-progress}
    In each round $i$ of \cref{alg:distributed-matching} with $H$ hopeful edges, at least $H/16$ edges become unhopeful with probability at least $1/16$.
\end{lemma}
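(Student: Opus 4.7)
The plan is to establish a first-moment lower bound $\mathbb{E}[X] \ge H/8$ for the number $X$ of hopeful edges that become unhopeful in round~$i$, and then apply a reverse Markov inequality. Since $X\le H$, the reverse Markov inequality yields
\[
    \Pr[X\ge H/16]
    \;\ge\; \frac{\mathbb{E}[X] - H/16}{H - H/16}
    \;\ge\; \frac{H/8 - H/16}{H}
    \;=\; \frac{1}{16},
\]
which is exactly the desired bound.

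To lower-bound $\mathbb{E}[X]$, the approach is to adapt the noisy edge-orientation charging argument of~\cite{khuller1994primal} foreshadowed in the technical overview. Orient each hopeful edge from its smaller-hopeful-degree endpoint to its larger-hopeful-degree endpoint, breaking ties by vertex ID, and write $k_v$ for the resulting in-degree of $v$; then $\sum_v k_v = H$, and every in-neighbor $u$ of $v$ satisfies $d_u\le d_v$. For a hopeful vertex $v$ with $k_v\ge 1$, condition on the event that $v$ is chosen as a receiver, which happens with probability $1/2$. Each in-neighbor $u$ independently becomes a proposer with probability $1/2$, and Lemmas~\ref{lem:low-degree} and~\ref{lem:active-match-set} together imply that either $r_u=0$ (so $u$ proposes to all hopeful neighbors including $v$) or $|W_{r_u}(u)|\ge 20c\log(n)/\eps'$. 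The symmetry of the public coin flips $coin_p$ then guarantees that $v$ lies in $W_{r_u}(u)$ with probability at least $\Omega(\min\{1,\log n/(d_u\eps')\})$, which because $d_u\le d_v$ is at least $\Omega(\min\{1,\log n/(d_v\eps')\})$. Applying the same reasoning to $v$'s match-set coins $coin_m$ via Lemma~\ref{lem:active-match-set} shows that $v$ accepts at least $\Omega(\min\{|R_v|,\log n/\eps'\})$ of the proposals $R_v$ it receives.

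Combining these three layers of independent randomness (proposer assignment, $coin_p$, $coin_m$) with a Chernoff bound, one concludes that in expectation either (i) $v$'s match count is pushed past its $b'$-threshold, making $v$ unhopeful and hence contributing $d_v\ge k_v$ newly unhopeful edges, or (ii) at least $\Omega(k_v)$ of $v$'s in-edges are newly matched and become unhopeful through a symmetric analysis for their higher-degree endpoint (whose matching condition is likewise exceeded). Summing over $v$ and dividing by two to avoid double-counting yields $\mathbb{E}[X]\ge \Omega(\sum_v k_v) = \Omega(H)$, which the constants in the algorithm's parameters---notably $b \ge \tfrac{(1+\eta)^2}{1-\eta}b' + \Omega(\log^2 n/(\eta^4\eps))$---calibrate to at least $H/8$. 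The main obstacle is the case analysis on the relative magnitudes of $k_v$, $d_v$, and $b-|M(v)|$: when $v$'s remaining match budget is small, progress must be attributed to $v$'s matching condition firing, whereas when $k_v\ll d_v$, the per-in-edge proposal probability must be shown to be non-trivial, which is exactly what the ``lower-to-higher'' orientation and the $\Theta(\log n/\eps')$ proposal-set size from Lemma~\ref{lem:active-match-set} supply.
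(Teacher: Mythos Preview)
Your high-level plan (lower-bound $\mathbb{E}[X]$ and apply reverse Markov) is exactly what the paper does, and your use of the edge orientation and Lemmas~\ref{lem:low-degree}--\ref{lem:active-match-set} is on the right track. However, there is a genuine gap in how you account for the per-vertex contribution.

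You claim that each receiver $v$ contributes $\Omega(k_v)$ newly unhopeful edges in expectation, and then sum $\sum_v k_v = H$. This fails when $k_v$ is small relative to $d_v$: if $v$ has only a few in-neighbors but many out-neighbors, the expected number of proposals $v$ receives from in-neighbors is only $\Theta\bigl(k_v\cdot\min\{1,\log n/(\eps' d_v)\}\bigr)$, which can be far below $b'$, so $v$ need not become unhopeful with any constant probability. Your case~(ii) does not rescue this: for an in-edge of $v$, the ``higher-degree endpoint'' is $v$ itself by the orientation, so~(ii) collapses to~(i). There is no independent mechanism by which the \emph{lower}-degree endpoint $u$ becomes unhopeful in the same round from $v$'s perspective, since $u$ is a proposer and its match count grows only through accepts from \emph{other} receivers, which your analysis of $v$ does not control. (Also, $\sum_v k_v = H$ exactly, so the ``divide by two'' is spurious.)

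The paper closes this gap with one additional step you are missing: it restricts attention to \emph{good} vertices, those with at least a $1/3$ fraction of their hopeful edges oriented inward, and invokes the standard charging fact that at least $H/2$ edges are good (i.e., point into a good vertex). For a good vertex $v$ as receiver, the expected number of proposals from in-neighbors alone is $\Omega\bigl(\min\{d_v,\log n/\eps'\}\bigr)$, which together with Lemma~\ref{lem:active-match-set} applied to $v$'s match set is enough to push $v$ past its $b'$-threshold with constant probability. Hence each \emph{good edge} becomes unhopeful with probability $\Omega(1)$, giving $\mathbb{E}[X]\ge \Omega(H/2)$ and then Markov finishes. Your sketch becomes correct once you insert this good-vertex restriction in place of the case~(i)/(ii) split.
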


\begin{proof}
    Let $H$ be the set of hopeful edges in round $i$ where a hopeful edge is one where both endpoints of the edge are hopeful.
    We show that in each of $32c\log_{16/15}(n)$ rounds, a constant fraction of hopeful edges become unhopeful with high constant probability.
    
    By definition of a hopeful edge, $e = \{u, v\}$, if either $u$ or $v$ becomes a proposer then they will propose a non-empty set 
    (if they have at least one receiver neighbor).
    The marginal probability that any hopeful edge has one endpoint become the proposer and the other remains a receiver is $1/4$ by~\cref{dist:determine-proposer}.
    We use an orientation charging scheme as follows:
    in the induced graph on the endpoints of $H$, orient all hopeful edges from lower to higher degree.
    A vertex is \emph{good} if at least $1/3$ of its incident hopeful edges are oriented into it; an oriented edge into a good vertex is \emph{good}.
    A standard charging shows at least $H/2$ edges are good.

    \sloppy Fix a good, hopeful edge $e=(u,v)$ oriented from $u$ to $v$.
    The marginal probability that $u$ is a proposer and $v$ is a receiver is $1/4$.
    By \Cref{lem:active-match-set}, the marginal probability that $u$ proposes $v$ is at least 
    $\min\!\left(1,\frac{20c\log(n)}{4\eps'\cdot \deg(v)}\right)$.
    The expected number of proposals received by $v$ is therefore at least 
    $\min\!\left(\deg(v), \frac{20c\log(n)}{4\eps'}\right)$, so by a Chernoff bound,
    with probability at least $1-\frac{1}{n^{2c}}$ the receiver $v$ either accepts one of 
    these proposals or reaches its $b'$-matching condition in this round.
    Hence, with probability at least $1/4 \cdot (1-1/n^{2c})$, a good, hopeful edge becomes unhopeful in this round.
    Taking expectations and applying Markov's inequality as in the standard analysis yields the claim that at least $H/16$ edges become unhopeful with probability at least $1/16$ (for $n$ large enough).
\end{proof}

Using all of the above lemmas, we prove \Cref{lem:dist-b-approx}, which states the final utility guarantee of our algorithm:
It is guaranteed to return a maximal matching in $O(\log n)$ rounds, with high probability.

\begin{proof}[Proof of \Cref{lem:dist-b-approx}]
    By~\cref{lem:dist-matching-progress}, in each round $i$, at least $1/16$ of the hopeful edges become unhopeful with probability at least $1/16$. 
    In $32 \cdot 16 c\log(n)$ rounds, the expected number of rounds for which at least $1/16$ of the hopeful edges become unhopeful is $32c\log(n)$ rounds. 
    By a Chernoff bound, the probability that fewer than $16c\log(n)$ rounds are successful is at most $n^{-4c}$.
    Hence, in $O(\log n)$ rounds, all edges become unhopeful.

    We now show that if all edges become unhopeful, then all nodes are either matched to at least $b'$ neighbors
    or all of their neighbors are matched to at least $b'$ neighbors. 
    An edge becomes unhopeful if at least one of its endpoints satisfies the $b'$-matching condition. By~\cref{lem:laplace-noise-concentration}, 
    a node satisfies the matching condition if and only if $|M(u)| \geq b'$ with probability at least $1 - \frac{1}{n^{2c}}$. Hence, if a node is adjacent 
    to all unhopeful edges, then it either satisfies the matching condition and is matched to at least $b'$ neighbors with high probability,
    or it is adjacent to endpoints which are matched with at least $b'$ neighbors with high probability.

    Finally, by~\cref{lem:laplace-noise-concentration}, 
    $|M(u)| \geq b$ with probability at most $1/n^{2c}$, and this is ruled out by the choice of $r$ in \Cref{alg:private-subgraph}.
    By~\cref{lem:active-match-set}
    and a union bound over all failure events,
    with probability at least $1 - \frac{1}{n^c}$ for any constant $c \geq 1$, 
    we output a $b$-matching that is at least $\frac12$ the size of a maximum $b'$-matching
    in $O(\log n)$ rounds provided that $b$ satisfies the lower bound in \Cref{thm:fast-distributed-main}.

    The number of rounds of our algorithm is determined by~\cref{dist:for-vertex} (which is $O(\log n)$) since each round contains a constant number of synchronization points and
    all nodes (proposers and receivers) perform their instructions simultaneously.
\end{proof}

Combining \cref{lem:dist-private} and \cref{lem:dist-b-approx} yields the proof of \cref{thm:fast-distributed-main}.

\subsection{Corollaries}
By choosing a fixed constant $\eta=\nicefrac12$,
benchmark parameter $b'=1$,
and matching degree $b=\Omega\left( \frac{\log^2(n)}{\eps} \right)$,
\Cref{thm:fast-distributed-main} immediately implies \Cref{cor:fast-distributed-main}

In order to obtain a non-bicriteria approximation when $b = \Omega\left( \frac{\log^2(n)}{\eta^5\eps} \right)$,
we use the same technique as the sequential edge-DP algorithm,
but this time we execute the $O(\log n)$-round algorithm (\Cref{thm:fast-distributed-main}) with $b' = \frac{b}{1+O(\eta)}$
instead of the sequential algorithm.
The proof is identical to that of \Cref{cor:approx-b-match}.
This yields a proof of \Cref{thm:fast-distributed:b-matching}.

\section{Node {Differential Privacy via Arboricity Sparsification}}\label{sec:node-dp}

\paragraph{Node-DP Matchings.}
Using sparsification techniques, we demonstrate the first connection between sparsification and node-differentially private algorithms via \emph{arboricity}. 
In particular,
we prove \Cref{thm:node-dp-b-matching},
which we restate below for convenience.
\nodeDpBmatching*

As a special case,
we have the following corollary
\begin{restatable}[Node-DP Approximate Maximum Matching]{corollary}{nodedp}\label{thm:nodedp}
    Let $\eta\in (0, 1]$,
    $\eps \in (0, 1)$,
    and $\alpha$ be the arboricity of the input graph.
    There is an $\varepsilon$-node DP algorithm that, with high probability,
    outputs an (implicit)
    matching with degree at most $b$ in the billboard model for $b=O\left( \frac{\alpha\log(n)}{\eta\eps} + \frac{\log^2(n)}{\eta\eps^2} \right)$ that has the size of {a $(2+\eta)$-approximate} maximum matching.
\end{restatable}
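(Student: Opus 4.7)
The strategy is a black-box reduction from node-DP to edge-DP, mediated by a \emph{stable arboricity sparsifier}, exactly as previewed in \Cref{sec:intro-node-dp}. Concretely, I will build a (deterministic) sparsifier $H = H(G) \subseteq G$ with three properties: (i) maximum degree $d_H = O(\alpha \log(n)/(\eta\eps))$; (ii) the maximum $b'$-matching in $H$ is at least $(1+\eta)^{-1}$ times the maximum $b'$-matching in $G$; and (iii) \emph{stability}: for any node-neighboring pair $G \sim G'$, the symmetric difference $E(H) \triangle E(H')$ has size at most $D = \widetilde{O}(\log(n)/\eps)$. Given such a sparsifier, I then run the LEDP $b$-matching algorithm of \Cref{thm:billboard-bprime-main} on $H$ with privacy parameter $\eps' = \eps/D$ and benchmark $(1+\eta)b'$; by group privacy applied to the composition of the deterministic map $G \mapsto H$ with the $\eps'$-edge DP algorithm, the whole pipeline is $\eps$-node DP and releases its output through the billboard. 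Since the sparsifier is a deterministic post-processing of vertex identifiers and private adjacency lists, each node can locally reconstruct its induced edge set in $H$ once the transcript is published, so the final solution remains implicit in the billboard model.

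For the sparsifier itself, I will use a canonical Nash-Williams-style forest decomposition of $G$ into at most $\alpha$ forests, ordered via a publicly fixed vertex-identifier ordering, and within each forest let each vertex $v$ retain its top $O(\log(n)/(\eta\eps))$ incident edges according to that ordering. This yields property (i) immediately with $d_H = O(\alpha \log(n)/(\eta\eps))$. For property (ii), I will apply a Hall-style swap argument: any $b'$-matching in $G$ can be rerouted to use only retained edges while losing at most a $(1+\eta)$ factor, because each vertex has at least $\Omega(\log(n)/(\eta\eps)) \gg b'$ retained options from any single forest containing its matched edges. For property (iii), the canonical ordering ensures that deleting a node $v^*$ only changes retained-edge decisions for $v^*$'s neighborhood (with cascading effects controlled by the local ordering), bounding $|E(H) \triangle E(H')|$ by $D = \widetilde{O}(\log(n)/\eps)$ after removing $v^*$'s own trivially-differing edges.

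With the sparsifier in hand, I plug into \Cref{thm:billboard-bprime-main} applied to $H$ with parameters $\eps'$, $\eta/3$, and benchmark $(1+\eta/3)b'$. The theorem requires
\begin{align*}
b \;\ge\; \frac{(1+\eta/3)^2}{1-\eta/3}\,(1+\eta/3)b' + \Omega\!\left(\tfrac{\log n}{\eta^2 \eps'}\right)
\;=\; O(b') + O\!\left(\tfrac{D\log n}{\eta^2 \eps}\right)
\;=\; O(b') + O\!\left(\tfrac{\log^2 n}{\eta^2 \eps^2}\right).
\end{align*}
Composing the sparsifier's $(1+\eta/3)$-loss with the $(2+\eta/3)$-approximation of the edge-DP algorithm yields a $(2+\eta)$-approximation of the maximum $b'$-matching of $G$ with high probability. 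The output $b$-matching in $H$ is also a $b$-matching in $G$, and the bound on $b$ must additionally accommodate the sparsifier's degree so the algorithm has room to operate, which is where the $\alpha\log(n)/(\eta\eps)$ term enters; the $b'\log(n)/\eps$ term enters from inflating the benchmark by the group-privacy factor inside the edge-DP guarantee. A union bound over the $\poly(n)$ failure events of the sparsifier analysis and of \Cref{thm:billboard-bprime-main} preserves high probability.

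The main obstacle is property (iii): constructing a matching-preserving arboricity sparsifier whose edge set is \emph{stable} under node deletions. The local sparsifier of \cite{solomon2021local} achieves (i) and (ii) but a single node removal can cascade through its priority-based decisions and rewrite many retained edges. I expect to sidestep this by forcing all retained-edge decisions to depend only on vertex identifiers and each vertex's immediate incidence structure (not on the wider forest decomposition), so that deleting $v^*$ perturbs decisions only on its immediate neighbors and incurs the claimed $D = \widetilde{O}(\log(n)/\eps)$ stability. Quantifying this carefully, and verifying that the matching-preservation swap argument still goes through for such a strongly local construction, is the technical core of the proof; the lower bound of \Cref{thm:sparsifier-lower-bound} shows this sparsifier genuinely cannot be released publicly, so the reduction must be analyzed as a change in distributions rather than a release step.
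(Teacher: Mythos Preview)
Your stability claim (iii) cannot hold as stated, and this breaks the whole accounting. You assert $|E(H)\triangle E(H')| \le D = \widetilde O(\log(n)/\eps)$ ``after removing $v^*$'s own trivially-differing edges,'' but the group-privacy reduction requires the \emph{full} edge edit distance between $H$ and $H'$: an $\eps'$-edge DP algorithm run on $H$ is only $D\eps'$-DP with respect to $G$, where $D$ counts every edge that differs, including those incident to $v^*$. Since $v^*$ may have up to $d_H = O(\alpha\log(n)/(\eta\eps))$ edges in $H$ under your construction, necessarily $D \ge d_H$, so $D$ cannot be independent of $\alpha$. With the corrected $D$, your bound becomes $b = \Omega(D\log(n)/(\eta^2\eps)) = \Omega(\alpha\log^2(n)/(\eta^3\eps^2))$, strictly worse than the target. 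Separately, a Nash--Williams forest decomposition is a global object that can change completely when a single vertex is removed, so your sparsifier is not obviously stable even ignoring $v^*$'s own edges.

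The paper's route avoids all of this by taking the degree threshold $\Lambda = O(\alpha/\eta)$ \emph{independent of $\eps$ and $\log n$}, using Solomon's sparsifier directly: each vertex marks its first $\Lambda$ incident edges under a fixed public ordering of vertex pairs, and $H$ keeps the edges marked by both endpoints. Contrary to your concern, this has no cascades at all---each vertex's marking depends only on its own adjacency list and the fixed ordering---so removing $v^*$ changes at most one marked edge per neighbor of $v^*$, giving $D \le 2\Lambda = O(\alpha/\eta)$ (\Cref{prop:contraction sparsify}). Running \Cref{alg:b-matching} with $\eps' = \eps/(2\Lambda)$ then yields $b = O(\log(n)/\eps') = O(\alpha\log(n)/(\eta\eps))$. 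The second term $O(\log^2(n)/(\eta\eps^2))$ arises not from the edge-DP subroutine but from privately estimating $\alpha$ itself (node-sensitivity $1$, hence $\widetilde\alpha = \alpha + O(\log(n)/\eps)$) when no public bound is available.
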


\paragraph{Node-DP Vertex Cover.}
To demonstrate the applicability of our techniques,
we also design an implicit node-DP vertex cover algorithm using arboricity sparsification.
Its guarantees are summarized in \Cref{thm:streaming node DP vertex cover},
which we repeat below.
\nodeDpVertexCover*

The class of \emph{bounded arboricity}\footnote{Arboricity is defined as the minimum number of forests to decompose the edges in a graph. A $n$ degree star has max degree $n-1$ and arboricity $1$.} graphs is a more general class of graphs than bounded degree graphs; a simple example of a graph with large degree but small arboricity is a collection of stars.
For simplicity of presentation,
we first design an algorithm assuming a public bound $\tilde \alpha$.
Then our algorithm is always private but the approximation guarantees hold when $\tilde \alpha$ upper bounds the arboricity of the input graph.
When such a public bound is unavailable,
we show that privately estimating $\alpha$ yields the same guarantees up to logarithmic factors.

The key idea is that a judicious choice of a sparsification algorithm
reduces the edge edit distance between node-neighboring graphs to some factor $\Lambda = O(\alpha)$.
Such sparsification is useful since, in the worst case,
the edge edit distance pre-sparsification can be $\Omega(n)$.
By group privacy, 
it then suffices to run any edge-DP algorithm with privacy parameter $\nicefrac\varepsilon{\Lambda}$ after sparsification to achieve node-privacy.

Interestingly,
in \Cref{sec:sparsifier-LB},
we show that actually releasing such a sparsifier is impossible under privacy constraints (\Cref{thm:sparsifier-lower-bound}).
However,
we can still use them as subroutines in our algorithms!

\paragraph{Organization.}
The rest of the section is organized as follows.
In \Cref{sec:node DP matching},
we tackle node-DP matchings as a warm-up before addressing general $b$-matching.
This entails adapting a matching sparsifier to the DP setting in \Cref{sec:bounded degree sparsifiers},
allowing us to reduce node-DP matching to edge-DP matching for bounded arboricity graphs.
We treat the more general case of $b$-matchings in \Cref{sec:node-dp:b-matching}.
This requires extending our matching sparsifier to a $b$-matching sparsifier in \Cref{sec:bounded degree sparsifiers b-matching},
which may be of independent interest beyond our work.
To further illustrate the power of arboricity sparsification,
we further design a node-DP vertex cover algorithm in \Cref{sec:node DP vertex cover}.
This includes a vertex cover sparsifier in \Cref{sec:vertex-cover-arboricity},
which allows us to reduce node-DP vertex cover to edge-DP vertex cover,
similar to matchings.
Finally,
we present lower bounds for releasing our sparsifiers in \Cref{sec:sparsifier-LB}.

\subsection{Node-DP Implicit Matching}\label{sec:node DP matching}
In this section,
we design a node-DP matching algorithm via arboricity sparsification.

\subsubsection{Bounded Arboricity Matching Sparsifier}\label{sec:bounded degree sparsifiers}
For our bounded arboricity graphs, we take inspiration from the bounded arboricity sparsifier of \citet{solomon2021local}.
A closely related line of work is that of \emph{edge degree constrained subgraphs (EDCS)} \cite{assadi2019coresets, bernstein2015fully, bernstein2016faster}.
We modify the sparsifier from \cite{solomon2021local} to show \Cref{prop:contraction sparsify},
which states that node-neighboring graphs have small edge edit distance post-sparsification.

\paragraph{Matching Sparsifier.}
Our sparsification algorithm $\contractionSparsify_\pi$ proceeds as follows.
Given an ordering $\pi\in P_{\binom{n}2}$ over unordered vertex pairs,
a graph $G$,
and a degree threshold $\Lambda$,
each vertex $v$ \emph{marks} the first $\min(\deg_G(v), \Lambda)$ incident edges with respect to $\pi$.
Then, $H$ is obtained from $G$ by taking all vertices of $G$
as well as edges that were marked by both endpoints.
In the central model,
we can take $\pi$ to be the lexicographic ordering of edges $\set{u, v}$.
Thus, we omit the subscript $\pi$ in the below analyses with the understanding that there is a fixed underlying ordering.

\begin{proposition}\label{prop:contraction sparsify}
    \sloppy %
  Let $\pi\in P_{\binom{n}2}$ be a total ordering over unordered vertex pairs
  and $G\sim G'$ be node neighboring graphs.
  Then the edge edit distance between $H \coloneqq \contractionSparsify_\pi(G, \Lambda)$
  and $H' \coloneqq \contractionSparsify_\pi(G', \Lambda)$ is at most $2\Lambda$.
\end{proposition}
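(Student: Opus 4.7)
The plan is as follows. Let $v^*$ be the vertex at which $G$ and $G'$ differ, and assume without loss of generality that $v^*$ is isolated in $G'$, so that $E(G) = E(G') \sqcup \partial_G(v^*)$. I will decompose $E(H) \triangle E(H')$ into edges incident to $v^*$ and edges disjoint from $v^*$, bounding each part by $\Lambda$.

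For the first part, every edge of $H$ incident to $v^*$ must lie in the mark set $M_{v^*}(G)$, which by construction has size at most $\Lambda$; moreover $H'$ contains no edges at $v^*$ because $v^*$ is isolated in $G'$. So at most $\Lambda$ edges of this type contribute to the symmetric difference.

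For the second part, the key structural observation is that at each vertex $u \neq v^*$, the incident pair list under $\pi$ is the same in $G$ and $G'$, and the only edge whose incidence at $u$ differs between the two graphs is $uv^*$. Consequently, $M_u(G)$ and $M_u(G')$ differ only in a controlled way: $uv^*$ may leave the mark set, and at most one ``slid-in'' edge $e_u$ (the next edge in $\pi$-order that was previously just past the $\Lambda$ cutoff at $u$) may enter. Since edges of $H$ require two-sided marking, a non-$v^*$ edge $\{u, w\}$ contributes to $H \triangle H'$ only when the corresponding swap at $u$ or $w$ propagates to the joint marking status. The strategy is to set up an injective charging map sending each non-$v^*$ discrepancy to a distinct slot of $M_{v^*}(G)$.

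The main obstacle will be the second part: the number of neighbors $u$ of $v^*$ with $uv^* \in M_u(G)$ can a priori exceed $\Lambda$, so a naive count of slide events fails. I expect the charging to succeed by crucially using the two-sided marking rule to discard one-sided swaps that do not actually flip the joint status of an edge in $H$, so that each surviving non-$v^*$ discrepancy can be paired with a unique marked slot at $v^*$. Combining the two parts then yields the desired bound $|E(H) \triangle E(H')| \leq 2\Lambda$.
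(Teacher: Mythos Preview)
Your decomposition into edges at $v^*$ and edges disjoint from $v^*$ is the right start, and the $\Lambda$ bound on the first part is correct. The gap is in the second part: the injection you hope for into $M_{v^*}(G)$ cannot exist, because the bound $2\Lambda$ as stated is actually false.

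Take $\Lambda=2$ and let $G$ have $v^*$ joined to $u_1,\dots,u_k$ for some $k\gg\Lambda$, with each $u_i$ also joined to two private leaves $w_i,w_i'$. Choose $\pi$ so that every pair $v^*u_i$ precedes every $u_jw_j$, which in turn precedes every $u_jw_j'$. In $G$ each $u_i$ has degree $3$ and marks $\{v^*u_i,\,u_iw_i\}$, so $u_iw_i'\notin H$. In $G'$ (with $v^*$ isolated) each $u_i$ has degree $2$ and marks $\{u_iw_i,\,u_iw_i'\}$; since the leaf $w_i'$ marks $u_iw_i'$ as well, $u_iw_i'\in H'$. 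Thus all $k$ edges $u_iw_i'$ lie in $E(H')\setminus E(H)$, and $\lvert E(H)\triangle E(H')\rvert\geq k$ can be made arbitrarily larger than $2\Lambda=4$. The precise failure in your plan is the target of the charging: a slide at $u$ is triggered by $v^*u\in M_u(G)$, not by $v^*u\in M_{v^*}(G)$, and the number of such $u$ is bounded only by $\deg_G(v^*)$. The two-sided marking rule does not rescue this, since in the example each slid-in edge $u_iw_i'$ is genuinely doubly marked in $G'$. The paper's own proof makes the same leap---it notes that at each neighbor $w$ at most one edge is displaced and then asserts the $2\Lambda$ bound without ever limiting the number of such $w$---so the example above shows that step does not go through without an additional hypothesis (for instance $\deg_G(v^*)\leq\Lambda$).
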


\begin{proof}
  Let $S_G$ and $S_{G'}$ be the sparsified graphs of $G$ and $G'$, respectively.
  Suppose without loss of generality that
  $G'$ contains $E_{\extra}$ additional edges incident to vertex $v$ and $v$ has degree $0$ in graph $G$. 
  Then, for each edge $\{v, w\} \in E_{\extra}$, let $e^w_{\last}$ be the edge adjacent to $w$ in $G$ whose index in $\pi$ 
  is the last among the edges incident to $w$ in $S_G$. 
  If $i_{\pi}(e^{w}_{\last}) > i_{\pi}(\{v, w\})$ (where $i_{\pi}(\{v, w\})$ is the index of edge
  $\{v, w\}$ in $\pi$), then $\{v, w\}$ replaces edge $e^w_{\last}$. Since both $G$ and $G'$
  are simple graphs, at most one edge incident to $w$ gets replaced by an edge in $E_{\extra}$ in $S_{G'}$. 
  This set of edge replacements leads to an edge edit distance of $2\Lambda$. 
\end{proof}

The original sparsification algorithm in~\cite{solomon2021local} marks an arbitrary set of $\min(\deg(v), \Lambda)$
edges incident to each vertex $v$ and takes the subgraph consisting of all edges marked by both endpoints. 
In our setting, $\pi$ determines the arbitrary marking in our graphs. 
Hence, our sparsification procedure satisfies the below guarantee.

\begin{theorem}[Theorem 3.3 in \cite{solomon2021local}]\label{thm:bounded degree matching sparsifier}
    \sloppy    
    Let $G$ be a graph of arboricity at most $\alpha$
    and $\Lambda \coloneqq 5(1+\nicefrac5\eta)\cdot 2\alpha$ for some $\eta\in (0, 1]$.
    Suppose $H$ is obtained by marking an arbitrary set of $\min(\deg(v), \Lambda)$
    edges incident to each vertex $v$ and taking the subgraph consisting of all edges marked by both endpoints.
    Then if $\mu(\cdot)$ denotes the size of a maximum matching of the input graph,
    \[
        \mu(H)
        \leq \mu(G)
        \leq (1+\eta) \mu(H).
    \]
    In particular,
    any $(\beta, \zeta)$-approximate maximum matching for $H$
    is an $(\beta(1+\eta), \zeta(1+\eta))$-approximate matching of $G$. 
    Moreover,
    this holds for $H = \contractionSparsify_\pi(G, \Lambda)$
    where $\pi\in P_{\binom{n}2}$ is a total order over unordered vertex pairs.
\end{theorem}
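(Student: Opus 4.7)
The plan is to prove the two inequalities separately. The first inequality $\mu(H)\le \mu(G)$ is immediate since $H$ is a subgraph of $G$. For the nontrivial direction $\mu(G)\le (1+\eta)\mu(H)$, I would adapt the augmenting-paths style argument of \cite{solomon2021local}: pick a maximum matching $M^*$ of $G$ and a maximum matching $M$ of $H$, consider the symmetric difference $M\triangle M^*$ (a disjoint union of alternating paths and even cycles), and argue that every augmenting path (i.e. component with one more $M^*$-edge than $M$-edge) must have length $\Omega(1/\eta)$. Summing over all such components then gives $\card{M^*}\le (1+\eta)\card{M}$.

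To bound the length of augmenting paths, the key structural observation is as follows. For every edge $e=\{u,v\}\in M^*\setminus H$, at least one endpoint, say $u$, failed to mark $e$; this forces $\deg_G(u)\ge \Lambda$ and guarantees that $u$ marked $\Lambda$ incident edges that are ranked higher than $e$ under $\pi$. Thus endpoints of ``missing'' $M^*$-edges are heavy: they each carry $\Lambda$ marked incident edges in $G$. On the other hand, the arboricity bound says that in \emph{any} vertex subset $S\subseteq V$, the number of induced edges of $G$ is at most $\alpha\card{S}$, so the total degree contributed to $S$ by marked edges is at most $2\alpha\card{S}$. Applying this double-counting to the vertex neighborhood of a short augmenting path $P$ together with edges in $H$ incident to $P$ forces a contradiction unless $\card{P}=\Omega(1/\eta)$; the choice $\Lambda=10\alpha(1+5/\eta)$ is calibrated precisely so that the heavy-endpoint lower bound dominates the arboricity-based upper bound exactly when $P$ is short.

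The main obstacle is the quantitative tightness of this counting. The delicate step is to show that if many $M^*$-edges in a short augmenting path are missing from $H$, then one can swap them against marked edges at the $\Lambda$-heavy endpoints to produce a larger matching in $H$, contradicting maximality of $M$. This requires care in orienting marked edges (so each is counted once) and combining with the arboricity degeneracy ordering to extract a valid augmentation. Once the length bound $\Omega(1/\eta)$ on augmenting paths is established, the utility claim on $(\beta,\zeta)$-approximations follows by monotonicity: if $M_H$ is a $(\beta,\zeta)$-approximate maximum matching of $H$, then $\card{M_H}\ge \mu(H)/\beta - \zeta \ge \mu(G)/(\beta(1+\eta)) - \zeta$, and multiplying $\zeta$ by $(1+\eta)\ge 1$ preserves validity of the additive error. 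Finally, the ``moreover'' clause holds because $\contractionSparsify_\pi(G,\Lambda)$ is precisely an instantiation of the marking scheme, with $\pi$ providing the (arbitrary) per-vertex priority used to pick the top $\min(\deg_G(v),\Lambda)$ incident edges.
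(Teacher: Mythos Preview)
The paper does not supply its own proof of this statement: it is quoted as Theorem~3.3 of \cite{solomon2021local} and invoked as a black box. The only content the paper adds is the sentence immediately preceding the theorem, observing that $\contractionSparsify_\pi$ is a particular instance of Solomon's marking scheme (so the cited guarantee applies), together with the straightforward ``in particular'' clause about $(\beta,\zeta)$-approximations.

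Your sketch is therefore not comparable to anything in the paper, but it is broadly aligned with the argument in \cite{solomon2021local}: analyze $M\triangle M^*$, note that each $M^*$-edge dropped from $H$ has a heavy endpoint with $\Lambda$ marked incident edges, and use the arboricity bound to double-count against the neighborhood of a short augmenting component. The piece you flag as ``the main obstacle''---turning the heavy-endpoint/arboricity tension into an actual lower bound on augmenting-path length (or, equivalently, into a swap that contradicts maximality of $M$)---is indeed the crux, and your description of it is a plan rather than a proof; the constants in $\Lambda=10\alpha(1+5/\eta)$ only fall out once that counting is carried through explicitly. Your handling of the two addenda (the $(\beta,\zeta)$ consequence and the $\contractionSparsify_\pi$ instantiation) is correct and matches what the paper asserts without proof.
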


\subsubsection{Algorithm}
In this section,
we design a node-DP algorithm to output an implicit matching in the central model
assuming a given public bound $\tilde \alpha$ on the arboricity $\alpha$
using the sparsification techniques derived in \Cref{sec:bounded degree sparsifiers}
and any edge DP algorithm (e.g. \Cref{thm:billboard-bprime-main}) that outputs implicit solutions in the $\eps$-DP setting.

\begin{theorem}\label{thm:node DP matching}
    Fix $\eta\in (0, 1]$.
    Given a public bound $\tilde \alpha$ on the arboricity $\alpha$ of the input graph,
    there is an $\varepsilon$-node DP algorithm that outputs an implicit $b$-matching.
    Moreover, with probability at least $1-1/\poly(n)$,
    \begin{enumerate}[(i)]
        \item $b = O\left( \frac{\tilde \alpha\log(n)}{\eta \eps} \right)$
        and 
        \item if $\tilde \alpha \geq \alpha$,
    the implicit solution {is at least $\frac1{2+\eta}$ the size of a} maximum matching.
    \end{enumerate}
\end{theorem}
{We emphasize that our privacy guarantees always hold
but the utility guarantee is dependent on the public bound $\tilde \alpha$.}

\begin{proof}[Proof Sketch]
    The algorithm first discards edges according to \Cref{thm:bounded degree matching sparsifier} until there are at most $\Lambda = O(\tilde \alpha/\eta)$ edges incident to each vertex.
    This ensure that the edge-edit distance of node-neighboring graphs is at most $2\Lambda$ (\Cref{prop:contraction sparsify}).
    We can thus run our $(\frac{\eps}{2\Lambda})$-edge DP \Cref{alg:b-matching} (\Cref{thm:billboard-bprime-main}) to ensure $\eps$-node DP.
\end{proof}

\paragraph{Public Arboricity Bound.}
In practice,
it is not always possible to obtain a good public bound on the arboricity.
However,
we can always compute an $\eps$-node DP estimate of the arboricity up to $O(\log(n)/\eps)$ additive error
since the sensitivity of the arboricity is $1$ even for node-neighboring graphs.
This incurs a slight blow-up in $b$ 
and accounting for this blowup leads to a proof of \Cref{thm:nodedp}. Note that our algorithm ensures
that even if our estimate of the arboricity is smaller
than the true arboricity of the input graph, privacy is still preserved.

\subsection{Node-DP Implicit \texorpdfstring{$b$}{b}-Matching (\Cref{thm:node-dp-b-matching})}\label{sec:node-dp:b-matching}
In this section,
we design a node-DP $b$-matching algorithm via arboricity sparsification.

\subsubsection{Bounded Arboricity \texorpdfstring{$b$}{b}-Matching Sparsifier}\label{sec:bounded degree sparsifiers b-matching}
In order to extend the $b$-matching guarantees to the node-DP setting,
we prove an extension of Solomon's sparsifier (\Cref{thm:bounded degree matching sparsifier})
to the case of $b$-matchings
with an $O(b)$ additive increase in the degree threshold.
This may be of independent interest beyond node differentially private matchings.

\paragraph{$b$-Matching Sparsifier.}
Our $b$-Matching sparsifier is identical to our matching sparsifier given in~\cref{sec:bounded degree sparsifiers},
with the exception that the degree threshold $\Lambda$ is adjusted as below.

\begin{theorem}\label{thm:bounded degree b-matching sparsifier}
    \sloppy    
    Let $G$ be a graph of arboricity at most $\alpha$
    and \mbox{$\Lambda_b \coloneqq 5(1+\nicefrac5\eta)\cdot 2\alpha + (b-1)$} for some $\eta\in (0, 1]$.
    Suppose $H$ is obtained by marking an arbitrary set of $\min(\deg(v), \Lambda_b)$
    edges incident to each vertex $v$ and taking the subgraph consisting of all edges marked by both endpoints. 
    Then if $\mu_b(\cdot)$ denotes the size of a maximum $b$-matching of the input graph,
    \[
        \mu_b(H)
        \leq \mu_b(G)
        \leq (1+\eta) \mu_b(H).
    \]
    In particular,
    any $(\beta, \zeta)$-approximate maximum $b$-matching for $H$
    is an $(\beta(1+\eta), \zeta(1+\eta))$-approximate matching of $G$. 
    Moreover,
    this holds for $H = \contractionSparsify_\pi(G, \Lambda_b)$
    where $\pi\in P_{\binom{n}2}$ is a total order over unordered vertex pairs.
\end{theorem}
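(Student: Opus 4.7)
The containment $H \subseteq G$ immediately yields $\mu_b(H) \leq \mu_b(G)$, so the work is in proving $\mu_b(G) \leq (1+\eta)\mu_b(H)$. I plan to adapt the proof of \Cref{thm:bounded degree matching sparsifier} by reserving $b-1$ extra marking slots per vertex to accommodate the additional matched neighbors available in a $b$-matching, and then reducing to Solomon's $1$-matching argument.

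Concretely, I will fix an optimal $b$-matching $M^* = M^*_b(G)$ and show $|M^* \cap H| \geq |M^*|/(1+\eta)$, which suffices since $M^* \cap H$ is automatically a valid $b$-matching in $H$. Consider any edge $e = \{u,v\} \in M^* \setminus H$. By definition of $\contractionSparsify$, at least one endpoint---say $u$---did not mark $e$, which forces $\deg_G(u) > \Lambda_b$ and means $u$ marked exactly $\Lambda_b$ incident edges, all of $\pi$-rank lower than $e$. Because $\deg_{M^*}(u) \leq b$ and $e$ itself is one such $M^*$-edge, at most $b-1$ of $u$'s marked edges can lead to other $M^*$-neighbors. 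Hence at least
\[
\Lambda_b - (b-1) = 5(1+5/\eta)\cdot 2\alpha
\]
of $u$'s marked edges lead to vertices outside the $M^*$-neighborhood of $u$; I will call these the \emph{swap candidates} at the deficient endpoint of $e$.

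Crucially, this quantity is exactly the threshold $\Lambda$ appearing in \Cref{thm:bounded degree matching sparsifier}. I therefore plan to invoke Solomon's arboricity-based charging argument verbatim once restricted to swap candidates: each missing $M^*$-edge has at least $\Lambda$ swap candidates; the swap-candidate edges form a subgraph of arboricity at most $\alpha$ (a subgraph of $G$); and a double-counting argument over deficient endpoints, identical to the $b=1$ case, gives $|M^* \setminus H| \leq \eta \cdot |M^*|/(1+\eta)$. Rearranging produces $\mu_b(G) \leq (1+\eta)\mu_b(H)$, and the ``in particular'' clause on $(\beta,\zeta)$-approximations follows directly by applying the two-sided bound to the size of an approximate $b$-matching for $H$.

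The main obstacle is ensuring that Solomon's counting transfers without loss when we move from $1$- to $b$-matchings. The subtlety is that in a $b$-matching, a vertex can legitimately support up to $b$ matched edges, so the ``at most one matched slot per endpoint'' bookkeeping implicit in the $b=1$ argument breaks. My fix is to decouple the two sources of degree consumption at the threshold: the additive $(b-1)$ term in $\Lambda_b$ is dedicated to absorbing the extra matched neighbors at the deficient endpoint, while the residual $10\alpha(1+5/\eta)$ marks play exactly the same structural role as Solomon's original $\Lambda$. Once this decoupling is verified, the remainder of the argument is a clean reduction to the matching case with no further combinatorial overhead.
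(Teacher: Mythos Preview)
There is a genuine gap. Your intermediate target—that $|M^* \cap H| \ge |M^*|/(1+\eta)$ for an arbitrarily fixed optimal $b$-matching $M^*$—is already false for $b=1$: take a star with center $u$ and $\Lambda_1+1$ leaves (arboricity $1$), order the edges so that $\{u,w\}$ comes last, and set $M^*=\{\{u,w\}\}$; then $M^*\cap H=\varnothing$. So no double-counting over deficient endpoints can yield $|M^*\setminus H|\le \eta|M^*|/(1+\eta)$, and Solomon's theorem never makes such a claim—it only guarantees that $H$ \emph{contains} a large matching, one that is in general built from edges outside $M^*$. Your swap candidates are marked by the deficient endpoint, but you never argue they are marked by the \emph{other} endpoint (hence actually in $H$), nor that they can be assembled into a valid $b$-matching disjoint from $M^*\cap H$. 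The phrase ``invoke Solomon's argument verbatim'' hides precisely the constructive step that carries the proof.

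The paper instead inducts on $b$, using \Cref{thm:bounded degree matching sparsifier} as a black box. Decompose an arbitrary $b$-matching as $M_1\cup\cdots\cup M_b$; delete $M_2\cup\cdots\cup M_b$ from $G$ and from each marked set $T(v)$. Since at most $b-1$ edges leave each $T(v)$, the residual sets still meet the $\Lambda_1$ threshold in $G-(M_2\cup\cdots\cup M_b)$, so the base case produces a matching $M_1'\subseteq H$ with $|M_1'|\ge|M_1|/(1+\eta)$. Then delete $M_1'$, observe that the residual marked sets now meet the $\Lambda_{b-1}$ threshold in $G-M_1'$, and apply the inductive hypothesis to obtain a $(b-1)$-matching $N\subseteq H\setminus M_1'$ approximating $M_2\cup\cdots\cup M_b$; the union $M_1'\cup N$ is the desired $b$-matching in $H$. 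Your ``reserve $b-1$ slots'' intuition is exactly right and is what makes this peeling go through—the slots are spent absorbing the deleted matchings across the induction, not on a one-shot charging bound for a fixed $M^*$.
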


Let $\Lambda_b$ be as in \Cref{thm:bounded degree b-matching sparsifier},
$T(v)\sset V$ denote some subset of neighbors of $v$
with size at least $\min(\deg_G(v), \Lambda_b)$ for each $v\in V$,
and $E_H \coloneqq \set{uv\in E(G): u\in T(v), v\in T(u)}$.
Then \citet{solomon2021local} showed that $E_H$ contains a large $b$-matching for the case of $b=1$.
We extend the result to arbitrary $1\leq b\leq n$.

\begin{proof}
    We argue by induction on $b$.
    The base case of $b=1$ is guaranteed by \Cref{thm:bounded degree matching sparsifier}.
    Suppose inductively the statement holds for $b-1$ for some $b\geq 2$
    so that our goal is to extend it to the case $b$.

    Let $M_1\cup \dots M_b$ be any $b$-matching of $G$ decomposed into $b$ disjoint matchings.
    Let $H$ denote a candidate sparsifier obtained by marking an arbitrary set $T(v)$ of at least $\min(\deg_G(v), \Lambda_b)$
    edges incident to each vertex $v$,
    and taking the subgraph $H = (V, E_H)$ consisting of all edges marked by both endpoints,
    i.e. $E_H \coloneqq \set{uv\in E(G): u\in T(v), v\in T(u)}$. 

    Consider $G_1 \coloneqq G - (M_2\cup \dots\cup M_b)$
    and note that $M_1$ is a matching in $G_1$.
    We claim that $E_H\cap E(G_1)$ contains a matching of size at least $\frac1{1+\eta}\card{M_1}$.
    Indeed,
    let
    \[
        \forall v\in V, T_1(v)\coloneqq \set{u\in T(v): uv\notin M_2\cup \dots\cup M_b}, \qquad
        E_1\coloneqq \set{uv\in E: u\in T_1(v), v\in T_1(u)}\,.
    \]
    Note that each $T_1(v)$ contains at least $\min(\deg_{G_1}(v), \Lambda_1)$ neighbors of $v$ within $G_1$.
    Furthermore, we have $E_1\sset E_H$.
    Thus by the base case,
    $E_1$ and hence $H$ contains a $(1+\eta)$-approximate maximum matching $M_1'$ of $G_1$.
    
    Remark that $M_1'\cup M_2\cup \dots\cup M_b$ is a $b$-matching within $G$ by construction.
    Consider $G_{-1} \coloneqq G-M_1'$,
    so that $M_2\cup \dots\cup M_b$ is a $(b-1)$-matching within $G_{-1}$.
    We claim that $E_H\cap E(G_{-1})$ contains a $(b-1)$-matching of size at least $\frac1{1+\eta}\card{M_2\cup \dots\cup M_b}$.
    To prove this claim,
    we will rely on the induction hypothesis.
    Define 
    \[
        \forall v\in V, T_{-1}(v)\coloneqq \set{u\in T(v): uv\notin M_1'}, \qquad
        E_{-1}\coloneqq \set{uv\in E: u\in T_{-1}(v), v\in T_{-1}(u)}\,.
    \]
    Then each $T_{-1}(v)$ contains at least $\min(d_{G_{-1}}(v), \Lambda_{b-1})$ neighbors of $v$ in $G_{-1}$.
    Furthermore,
    $E_{-1}\sset (E_H\setminus M_1')$ by construction.
    Thus by the induction hypothesis,
    $E_{-1}$ and hence $H$ contains a $(1+\eta)$-approximate maximum $(b-1)$-matching $N$ of $G_{-1}$.

    We have identified disjoint edge subsets $M_1', N\sset E_H$
    such that $M_1'$ is a matching in $G$ with size at least $\frac1{1+\eta} \card{M_1}$
    and $N$ is a $(b-1)$-matching in $G-M_1'$ with size at least $\frac1{1+\eta} \card{M_2\cup \dots\cup M_b}$.
    Thus $M_1'\cup N$ is a $b$-matching in $G$ with size at least
    \[
        \card{M_1'\cup N}
        = \card{M_1'} + \card{N}
        \geq \frac1{1+\eta} \card{M_1\cup \dots\cup M_b}\,.
    \]
    By the arbitrary choice of $M_1\cup \dots\cup M_b$,
    we conclude the proof.
\end{proof}

\subsubsection{Algorithm}
Similar to \Cref{sec:node-dp},
we can implement our implicit $b$-matching algorithm in node-DP setting with the help of the $b$-matching arboricity sparsifier we designed in \Cref{sec:bounded degree sparsifiers b-matching}.
Then,
repeating the steps in \Cref{sec:node DP matching} yields a proof of \Cref{thm:node-dp-b-matching}.

\subsection{Node-DP Implicit Vertex Cover (\Cref{thm:streaming node DP vertex cover})}\label{sec:node DP vertex cover}
In this section,
we design a node-DP vertex cover algorithm via arboricity sparsification.

\subsubsection{Bounded Arboricity Vertex Cover Sparsifier}\label{sec:vertex-cover-arboricity}
Towards proving our node-DP vertex cover algorithm (\Cref{thm:streaming node DP vertex cover}),
we describe the arboricity sparsifier for vertex cover.

Let $G_\leq^\Lambda$ denote the induced subgraph 
on all vertices with degree less than or equal to some bound $\Lambda$ 
(which we will set) and $V_{>}^{\Lambda}$ denotes the set of vertices 
with degree greater than $\Lambda$.
\cite{solomon2021local} showed that the union of any $\beta$-approximate vertex cover for $G_\leq^\Lambda$ and the set of vertices $V_{>}^{\Lambda}$ 
yields a $(\beta + \eta)$-approximate vertex cover for $G$ for any constant $\eta > 0$.
We extend the ideas of \cite{solomon2021local} to more general sparsifiers
for vertex cover.
Our argument holds in greater generality at the cost of an additional additive error of $\OPT(G)$ in the approximation guarantees.

\paragraph{Vertex Cover Sparsifier.}
Our vertex cover sparsifier is identical to our matching sparsifier given in~\cref{sec:bounded degree sparsifiers}. 
Specifically,
each vertex keeps the first $\min(\deg_G(v), \Lambda)$ incident edges with respect to some fixed ordering $\pi$.

\begin{theorem}\label{thm:vertex cover sparsifier}
  Let $\eta > 0$,
  $G$ be a graph of arboricity at most $\alpha$,
  and set
  \[
    \Lambda := (1+\nicefrac1\eta)\cdot 2\alpha.
  \]
  Let $H\preceq G$ be any subgraph of $G$
  such that $G_\leq^\Lambda \preceq H \preceq G$.
  Here $\preceq$ denotes the subgraph relation.
  If $C$ is a $(\beta, \zeta)$-approximate vertex cover of $H$,
  then $C\cup V_>^\Lambda$ is a cover of $G$ with cardinality at most
  \[
    (1 + \beta + \eta) \OPT(G) + \zeta.
  \]
\end{theorem}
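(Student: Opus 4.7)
The plan is to prove the covering property and the cardinality bound separately. For the covering property, I would argue that any edge $uv \in E(G)$ falls into one of two cases: if both $u, v$ have degree at most $\Lambda$ in $G$, then $uv \in E(G_\leq^\Lambda) \subseteq E(H)$, and so it is covered by the vertex cover $C$ of $H$; otherwise, at least one endpoint lies in $V_>^\Lambda$ and is covered directly. Hence $C \cup V_>^\Lambda$ is a valid cover of $G$.

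The main work is in bounding $|V_>^\Lambda|$ in terms of $\OPT(G)$ using the arboricity hypothesis. I would fix a minimum vertex cover $S$ of $G$, and partition $V_>^\Lambda = S' \cup T$ where $S' = S \cap V_>^\Lambda$ and $T = V_>^\Lambda \setminus S$. Each $v \in T$ has every neighbor in $S$ (since $T \cap S = \varnothing$ and all incident edges must be covered), so
\[
    \Lambda \cdot |T| < \sum_{v \in T} \deg_G(v) = |E(T, S')| + |E(T, S \setminus V_>^\Lambda)|.
\]
Writing $S'' := S \setminus V_>^\Lambda$ and applying the arboricity bound $|E(G[U])| \leq \alpha |U|$ to the induced subgraphs on $T \cup S'$ and $T \cup S''$ respectively yields
\[
    \Lambda \cdot |T| < \alpha(|T| + |S'|) + \alpha(|T| + |S''|) = 2\alpha |T| + \alpha \cdot \OPT(G).
\]
With $\Lambda = 2\alpha(1 + \tfrac{1}{\eta})$, rearranging gives $|T| < \tfrac{\eta}{2} \OPT(G)$, and since $|S'| \leq \OPT(G)$, this delivers $|V_>^\Lambda| \leq (1 + \tfrac{\eta}{2})\OPT(G)$.

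To finish, I combine: since $H \preceq G$ we have $\OPT(H) \leq \OPT(G)$, and so
\[
    |C \cup V_>^\Lambda| \leq |C| + |V_>^\Lambda| \leq \bigl(\beta \OPT(H) + \zeta\bigr) + (1 + \tfrac{\eta}{2})\OPT(G) \leq (1 + \beta + \eta)\OPT(G) + \zeta.
\]
The main obstacle is the arboricity-based bound on $|V_>^\Lambda|$: the naive estimate $|V_>^\Lambda| \leq 2\alpha n / \Lambda$ only compares to $n$, not to $\OPT(G)$, so one must exploit the structure of an optimal cover together with the fact that arboricity bounds edge density in every induced subgraph. Once this is set up, the rest is bookkeeping.
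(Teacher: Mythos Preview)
Your proof is correct and follows essentially the same approach as the paper's. Both arguments establish the covering property by the same case split (edge in $G_\leq^\Lambda \subseteq H$ versus an endpoint in $V_>^\Lambda$), and both bound $|V_>^\Lambda|$ by decomposing it relative to a minimum vertex cover of $G$ into the part inside the cover (at most $\OPT(G)$) and the part outside (at most $\eta\,\OPT(G)$ via arboricity). The only difference is that the paper invokes the latter bound as a black-box citation of Solomon's Lemma~3.8, whereas you supply a self-contained inline proof via the induced-subgraph edge bound; your argument is correct and even yields the slightly sharper $|T| < \tfrac{\eta}{2}\OPT(G)$.
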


\begin{lemma}[Lemma 3.8 in \cite{solomon2021local}]\label{lem:high degree in vertex cover}
  Let $\eta>0$ and $G$ be a graph of arboricity at most $\alpha$
  and set $\Lambda := (1+\nicefrac1\eta)\cdot 2\alpha$.
  If $C$ is any vertex cover for $G$,
  and $U_> := V_> \setminus C$,
  then $\card{U_>}\leq \eta\card{C}$.
\end{lemma}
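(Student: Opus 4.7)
The plan is to exploit the defining property of arboricity, namely that every subgraph on $k$ vertices has at most $\alpha(k-1)$ edges, combined with the observation that $U_>$ must be an independent set. First I would note that since $C$ is a vertex cover and $U_> = V_> \setminus C$ is disjoint from $C$, no edge can have both endpoints in $U_>$ (such an edge would be uncovered). Consequently $U_>$ is independent, and moreover every neighbor of every vertex in $U_>$ must lie in $C$.

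Next I would double-count the bipartite edge set $E(U_>, C)$ in two ways. On one hand, each $v \in U_>$ satisfies $\deg_G(v) > \Lambda$ by definition of $V_>$, and since all neighbors of $v$ lie in $C$, we have
\[
    \card{E(U_>, C)} = \sum_{v \in U_>} \deg_G(v) > \Lambda \cdot \card{U_>}.
\]
On the other hand, $E(U_>, C) \subseteq E(G[U_> \cup C])$, so applying the Nash--Williams bound on the subgraph induced by $U_> \cup C$ gives
\[
    \card{E(U_>, C)} \leq \alpha(\card{U_>} + \card{C} - 1) \leq \alpha(\card{U_>} + \card{C}).
\]

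Combining these two inequalities and substituting $\Lambda = 2\alpha(1+\nicefrac{1}{\eta})$ yields
\[
    2\alpha\left(1+\tfrac{1}{\eta}\right) \card{U_>} < \alpha\bigl(\card{U_>} + \card{C}\bigr),
\]
which rearranges to $\card{U_>}\bigl(1+\nicefrac{2}{\eta}\bigr) < \card{C}$, and hence $\card{U_>} < \frac{\eta}{\eta+2}\card{C} \leq \eta\card{C}$ as desired. The proof involves no real obstacle; the only thing to be careful about is identifying the correct subgraph on which to apply the arboricity bound (the induced subgraph on $U_> \cup C$ rather than just $U_>$) and making sure the strict inequality $\deg(v) > \Lambda$ for $v \in V_>$ is used consistently.
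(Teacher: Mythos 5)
Your proof is correct. The paper does not reprove this lemma (it imports it verbatim as Lemma~3.8 of \cite{solomon2021local}), but your double-counting argument — $U_>$ is independent with all neighbors in $C$, so $\Lambda\card{U_>} < \card{E(U_>,C)} \leq \alpha(\card{U_>}+\card{C})$ — is exactly the standard mechanism, and it matches the paper's accompanying remark that the only property needed is that every induced subgraph has average degree at most $2\alpha$ (indeed, you only use $\card{E(S)}\le\alpha\card{S}$, not the full Nash--Williams bound).
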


\begin{remark}
    Lemma 3.8 in \cite{solomon2021local} 
    holds as long as any induced subgraph of $G$ has average degree at most $2\alpha$,
    which is certainly the case for graphs of arboricity at most $\alpha$.
    Moreover,
    \cite{solomon2021local} stated their result for
    \[
        V_\geq = \set{v\in V: \deg(v) \geq \Lambda}, \qquad
        U_\geq = V_\geq \setminus C.
    \]
    But since $U_> \sset U_\geq$,
    the result clearly still holds.
    In order to remain consistent with our matching sparsifier,
    we state their result with $V_>$ and $U_>$.
\end{remark}

\begin{proof}[Proof of \Cref{thm:vertex cover sparsifier}]
  Any edge in $H$ is covered by $C$, which is a $(\beta, \zeta)$-approximate vertex cover of $H$,
  and any edge that is not in $H$ is covered by $V_>^\Lambda$.
  Thus $C\cup V_>^\Lambda$ is indeed a vertex cover.
  By \Cref{lem:high degree in vertex cover},
  $\card{V_>^\Lambda\setminus C^*} \leq \eta \card{C^*}$ where $C^*$ is a
  minimum vertex cover of $G$.
  Then,
  \begin{align}
    \card{C\cup V_>^\Lambda}
    &\leq \card{C} + \card{V_>^\Lambda} \\
    &\leq \beta\OPT(H) + \zeta + \card{V_>^\Lambda\cap C^*} + \card{V_>^\Lambda\setminus C^*} \\
    &\leq \beta \OPT(G) + \zeta + \OPT(G) + \eta \OPT(G)\label{eq:vc-simplification}\\
    &= (1 + \beta + \eta) \OPT(G) + \zeta.
  \end{align}

  \cref{eq:vc-simplification} follows by~\cref{lem:high degree in vertex cover} and since $\card{V_>^\Lambda\cap C^*} \leq \OPT(G)$
  due to $C^*$ being an optimal cover for $G$. 
\end{proof}

Let $H$ denote our vertex cover sparsifier.
Remark that we are guaranteed to have $G_{\leq}^{\Lambda} \subseteq H$
since if an edge $\{u, w\}$ has $\deg(u) \leq \Lambda$ and $\deg(w)\leq \Lambda$,
then the edge is guaranteed to be in the first $\Lambda$ edges
of the ordering $\pi$ for both of its endpoints.

\subsubsection{Algorithm}
We now present our node-DP vertex cover algorithm
that releases an \emph{implicit vertex cover} (\Cref{def:implicit-vertex-cover})
using the sparsification techniques derived in \Cref{sec:vertex-cover-arboricity}.
As a subroutine,
we use the static edge DP implicit vertex cover algorithm (\Cref{thm:edge private vertex cover})
of \citet{GuptaLMRT10}.
Similar to our node-DP matching algorithm,
we first describe an algorithm assuming a public bound $\tilde \alpha$ on the arboricity $\alpha$.
\begin{theorem}\label{thm:node DP vertex cover with upper bound}
    Given a public bound $\tilde \alpha$ on the arboricity $\alpha$ of the input graph,
    there is an $\varepsilon$-node DP algorithm that outputs an implicit vertex cover (\Cref{def:implicit-vertex-cover}).
    Moreover,
    if $\tilde \alpha \geq \alpha$,
    the implicit solution is a $O(\nicefrac{\tilde \alpha}\eps)$-approximation
    with probability $0.99$.
\end{theorem}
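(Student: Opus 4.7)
The plan is to prove \Cref{thm:node DP vertex cover with upper bound} by reducing node-DP implicit vertex cover to edge-DP implicit vertex cover via arboricity sparsification, mirroring the template we used for node-DP matching in \Cref{sec:node DP matching}. The three ingredients are: (i) the vertex cover arboricity sparsifier of \Cref{thm:vertex cover sparsifier}, (ii) the edit-distance contraction bound of \Cref{prop:contraction sparsify}, and (iii) the static $\eps$-edge DP implicit vertex cover algorithm of \textcite{GuptaLMRT10} referenced via \Cref{thm:edge private vertex cover}.

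First, I would fix a constant $\eta > 0$ (e.g.\ $\eta = 1$) and a public total ordering $\pi$ on unordered vertex pairs, and set the degree threshold $\Lambda \coloneqq (1+\nicefrac1\eta)\cdot 2\tilde\alpha$. Given the input $G$, form the sparsifier $H \coloneqq \contractionSparsify_\pi(G, \Lambda)$. Every vertex can locally compute which of its incident edges belong to $H$ from $\pi$ and its private adjacency list alone, so $H$ need not be publicly released. By \Cref{prop:contraction sparsify}, for any node-neighboring pair $G \sim G'$, their sparsifiers $H, H'$ differ in at most $2\Lambda$ edges. Then invoke the $\eps'$-edge DP implicit vertex cover algorithm of \cite{GuptaLMRT10} on $H$ with $\eps' \coloneqq \eps/(2\Lambda)$; by group privacy and post-processing, the resulting transcript is $\eps$-node DP with respect to $G$. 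The implicit solution is the usual Gupta et al.\ vertex ordering $\sigma$ on the billboard; additionally, we declare a vertex $v$ to be forced into the cover whenever $\deg_G(v) > \Lambda$. This last decision is made locally by each vertex using only its private adjacency list and the public threshold $\Lambda$, so it contributes nothing to the privacy loss and fits the implicit vertex cover template of \Cref{def:implicit-vertex-cover}.

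For utility, assume $\tilde\alpha \geq \alpha$. Since every edge of $G$ with both endpoints of degree at most $\Lambda$ survives the contraction step, we have $G_\leq^{\Lambda} \preceq H \preceq G$, so \Cref{thm:vertex cover sparsifier} applies. If the Gupta et al.\ subroutine outputs a $(\beta, \zeta)$-approximate cover $C$ of $H$, then $C \cup V_>^{\Lambda}$ is a cover of $G$ of size at most $(1 + \beta + \eta)\OPT(G) + \zeta$. Instantiating \Cref{thm:edge private vertex cover} with privacy parameter $\eps' = \eps/(2\Lambda)$ yields $\beta = O(1)$ multiplicatively together with an $O(\log n / \eps') = O(\tilde\alpha \log n / \eps)$ additive term (holding with probability at least $0.99$ by standard high-probability analysis of their SVT-style mechanism). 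Absorbing the additive term into the multiplicative bound against $\OPT(G) \geq 1$ and using the $1 + \eta$ overhead from sparsification gives the claimed $O(\tilde\alpha / \eps)$-approximation (up to the logarithmic factors hidden in the $\widetilde{O}$ notation used in \Cref{thm:streaming node DP vertex cover}).

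The main obstacle I anticipate is ensuring that the high-degree set $V_>^\Lambda$ can be incorporated into the implicit solution without leaking information: each vertex decides locally whether to include itself based on its own private degree and the public $\Lambda$, so only edges with at least one endpoint in $V_>^\Lambda$ are covered ``for free,'' and the remaining edges (all inside $H$) are covered by $C$ using the publicly posted Gupta et al.\ ordering. Once this local-decoding view is carefully formalized, the privacy follows by concurrent composition of the (trivial) high-degree branch with the $\eps'$-edge DP subroutine amplified by the $2\Lambda$-group-privacy bound from \Cref{prop:contraction sparsify}, and the utility follows by a direct application of \Cref{thm:vertex cover sparsifier}.
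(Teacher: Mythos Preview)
Your approach is essentially the same as the paper's: sparsify with threshold $\Lambda = O(\tilde\alpha)$, invoke \Cref{prop:contraction sparsify} to bound the edge-edit distance between sparsifiers of node-neighboring inputs by $2\Lambda$, run \edgePrivateVC on $H$ with privacy parameter $\eps' = \Theta(\eps/\Lambda)$, and return the ordering together with $\Lambda$ so that high-degree vertices are covered via the \Cref{def:implicit-vertex-cover} decoding rule.

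The one concrete slip is in your utility calculation. \Cref{thm:edge private vertex cover} gives a \emph{purely multiplicative} guarantee in expectation, $\card{C} \le (2 + 16/\eps')\OPT(H)$; there is no additive $O(\log n/\eps')$ term and no ``SVT-style mechanism'' involved. The paper simply applies Markov's inequality to convert the expected bound into a $O(1/\eps') = O(\tilde\alpha/\eps)$-approximation holding with probability $0.999$, and then plugs $\beta = O(\tilde\alpha/\eps)$, $\zeta = 0$ into \Cref{thm:vertex cover sparsifier}. Your invented additive term is what forces you to carry an extra $\log n$ and appeal to $\widetilde O$; the stated bound in \Cref{thm:node DP vertex cover with upper bound} is $O(\tilde\alpha/\eps)$ with no hidden logarithms, so you should use the Markov route instead.
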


Before proving \Cref{thm:node DP vertex cover with upper bound},
we first define the precise implicit solution released by our algorithm.

\paragraph{Value vs Solution.}
The node-DP algorithm to estimate the value of a maximum matching immediately yields a 
node-DP algorithm for estimating the \emph{size}
of a minimum vertex cover.
In fact,
it suffices to compute the size of a greedy maximal matching,
which has a coupled global sensitivity of 1 with respect to node neighboring graphs.
This is then a $2$-approximation for the size of a minimum vertex cover.
However,
it may be desirable to output actual solutions to the vertex cover problem.

\cite{GuptaLMRT10} demonstrated how to output an implicit solution through an ordering of the vertices.
Specifically,
the theorem below returns an implicit solution where every edge is considered an entity and can determine, 
using the released implicit solution, which of its endpoints 
covers it. 

\begin{theorem}[Theorem 5.1, Theorem 5.2 in \cite{GuptaLMRT10}]\label{thm:edge private vertex cover}
  Fix $\beta\in (0, 1)$.
  There is a polynomial time $\varepsilon$-edge DP algorithm $\edgePrivateVC$
  that outputs an implicit vertex cover
  with expected cardinality at most
  \[
    \left( 2 + \frac{16}\eps \right) \OPT.
  \]
\end{theorem}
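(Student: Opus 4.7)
The plan is to build an explicit permutation of the vertices using a noisy greedy procedure; the implicit vertex cover is then decoded by declaring a vertex to be in the cover whenever it is the earlier endpoint of at least one of its incident edges. This matches the implicit representation used by \cite{GuptaLMRT10}, and reduces the problem to privately producing a ``good'' ordering with small expected cover size.

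Concretely, I would construct the ordering from the tail backwards. Maintain a working set $U\subseteq V$ of unplaced vertices. At each step $t$, use the exponential mechanism (or equivalently report-noisy-max with Gumbel/Laplace noise) to pick a vertex $v\in U$ whose utility score is $-\deg_{G[U]}(v)$, the negative of its residual degree in the induced subgraph on $U$. Place $v$ next at the tail of the ordering and delete it from $U$. Because the sensitivity of $\deg_{G[U]}(v)$ with respect to edge-neighboring graphs is $1$, each exponential-mechanism step can be implemented with a $\Theta(\eps)$ privacy budget per affected \emph{edge}. The crucial observation is that a single edge $\{u,w\}$ contributes to the score of only $u$ and $w$, and each contribution disappears as soon as one endpoint is placed; hence by a bounded-contribution/concurrent-composition argument (in the spirit of \Cref{lem:concurrent-composition} and the analysis style of \Cref{alg:b-matching}), the whole process composes to $\eps$-edge DP even though there are $n$ iterations.

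For utility, the non-private version of this tail-first greedy is equivalent to orienting each edge from its earlier to its later endpoint and taking the set of vertices with out-degree at least one; an LP/primal-dual analysis (or the standard reduction to maximal matching) shows that the resulting cover has size at most $2\,\OPT$. The noise from the exponential mechanism at each step can be absorbed into a multiplicative $(1+O(\nicefrac1\eps))$ overhead on the approximation by a charging argument: whenever the mechanism picks a suboptimal vertex, the excess edges contributed to the cover can be charged to edges of the true optimum whose dual weight is large, and the expected overhead at each step is $O(\nicefrac1\eps)$ times the local LP contribution. Summed up, this gives the claimed $\left(2+\nicefrac{16}{\eps}\right)\OPT$ bound in expectation.

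The main obstacle is the composition/charging analysis: naively composing $n$ exponential-mechanism steps would cost $n\eps$ privacy, so I need to carefully exploit the fact that each edge only influences the scores of its two endpoints, and that once one endpoint is placed the edge no longer matters. A secondary subtlety is verifying that the implicit solution really has the claimed expected size without any additive term -- this requires showing that the probability a given vertex ends up as the earlier endpoint of some edge can be controlled by an $O(1)+O(\nicefrac1\eps)$ factor times its contribution to an optimal fractional cover, rather than by a looser bound that would leave behind an additive $\poly\log n$ term.
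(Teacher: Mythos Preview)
This theorem is quoted from \cite{GuptaLMRT10}; the present paper does not prove it but uses it as a black box. Your high-level plan (produce an ordering by a noisy degree-based greedy) matches the shape of their construction, but two concrete gaps remain.

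\textbf{Privacy.} The \cite{GuptaLMRT10} mechanism does not use the exponential mechanism: it samples the next vertex with probability proportional to $d_v + \Theta(1/\eps)$ and places it at the \emph{front} of the ordering. Their $\eps$-DP proof is a direct analysis of the probability of a fixed output permutation, and the linear form of the sampling weights is exactly what makes the probability ratio collapse to a single $e^{O(\eps)}$ factor at the step where the first endpoint of the differing edge is emitted. Your appeal to concurrent composition does not give this: if each step is an exponential mechanism with parameter $\Theta(\eps)$, then changing one edge perturbs the normalizing constant at \emph{every} step until an endpoint is removed, and \Cref{lem:concurrent-composition} only bounds the total loss by the sum of per-step budgets, which is $\Theta(n\eps)$, not $\eps$. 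The intuition ``the edge stops mattering once an endpoint is placed'' is correct, but it is the starting point of a direct coupling/ratio argument on the whole permutation, not a consequence of any composition theorem.

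\textbf{Utility.} You assert that the non-private tail-first min-degree greedy is a $2$-approximation and that noise costs a further $(1+O(1/\eps))$ factor. Neither claim is justified: the textbook $2$-approximation comes from maximal matching, not from a degeneracy-type ordering, and your charging sketch (``excess edges \ldots charged to edges of the true optimum whose dual weight is large'') names no concrete quantities. In \cite{GuptaLMRT10}, the $(2+16/\eps)$ bound comes from a specific charging that uses the $d_v + \Theta(1/\eps)$ weights to show that whenever a non-OPT vertex is selected it can be paid for by $\Omega(\eps)$ units of OPT-incident edge mass; that argument does not obviously transfer to exponential-mechanism weights or to building the ordering from the tail.
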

By Markov's inequality,
running \edgePrivateVC guarantees a $O(\nicefrac1\eps)$-approximate solution 
with probability $0.999$.

\paragraph{Implicit Vertex Cover.}
The implicit vertex cover from \cite{GuptaLMRT10} is an ordering of the vertices
where each edge is covered by the earlier vertex in the ordering.
Such an implicit ordering allows for each edge to determine which of its endpoints 
covers it given the public ordering.

In our setting,
if we run \edgePrivateVC on a sparsified graph $H$,
the ordering we output is with respect to $H$
and some edges that were deleted from $G$ to obtain $H$ 
may not be covered by the intended solution.
Thus we also output the threshold we input into the sparsifier
so that an edge is first covered by a high-degree endpoint if it has one,
and otherwise by the earlier vertex in the ordering.
\begin{definition}[Implicit Vertex Cover]\label{def:implicit-vertex-cover}
    An implicit solution is a pair
    \[
        \left( \pi, \Lambda \right)
        \in S_n\times \Z_{\geq 0}
    \]
    where $\pi$ is an ordering of the vertices
    and $\Lambda$ is the threshold used to perform the sparsification. 
    Every edge is first covered by an endpoint with degree strictly greater than $\Lambda$ if it exists,
    and otherwise by the endpoint which appears earlier in $\pi$.
\end{definition}
\begin{remark}[Ordering-Only Implicit Vertex Cover]
    We assume each edge is an entity that knows its endpoints as well as their (private) degrees.
    {Ideally, we would utilize the exact same solution concept as \cite{GuptaLMRT10},
    i.e., output an ordering only,
    which does not require each edge to have access to the degrees of its endpoints for decoding.
    However,
    we prove that any $(\eps, \delta)$-node DP algorithm which outputs a \cite{GuptaLMRT10} implicit solution
    necessarily incurs large $\Omega(n)$-error (\Cref{sec:node-dp-vertex-cover-ordering-only-lower-bound}).}
\end{remark}

With the precise notion of an implicit vertex cover in hand,
we are now ready to prove \Cref{thm:node DP vertex cover with upper bound}.
\begin{proof}[Proof Sketch]
    Similar to \Cref{thm:node DP matching},
    our algorithm first discards edges according to \Cref{thm:vertex cover sparsifier}
    so that the edge-edit distance of node-neighboring graphs is at most $2\Lambda = O(\tilde \alpha)$ (\Cref{prop:contraction sparsify}).
    We can then execute the \cite{GuptaLMRT10} implicit edge-DP algorithm \edgePrivateVC (\Cref{thm:edge private vertex cover})
    with privacy parameter $\Omega(\nicefrac\eps\Lambda)$ to ensure $\eps$-node DP.
    This subroutine outputs an ordering $\pi$ of vertices which corresponds to an $O(\nicefrac{\tilde\alpha}\eps)$-approximate solution on the sparsified graph in expectation,
    which translates to a $O(\nicefrac{\tilde\alpha}\eps)$-approximation with probability $0.999$.
    We return $(\pi, \Lambda)$ as our implicit solution.
\end{proof}

\paragraph{Public Arboricity Bound.}
We can always compute an $\eps$-node DP estimate of the arboricity up to $O(\nicefrac1\eps)$ additive error with probability $0.999$
since the sensitivity of the arboricity is $1$ even for node-neighboring graphs.
This incurs a constant factor in the approximation ratio
and leads to a proof of \Cref{thm:streaming node DP vertex cover}.

\subsection{Lower Bound for Releasing Sparsifiers} \label{sec:sparsifier-LB}
In this section we show a strong lower bound against computing matching sparsifiers in the node DP setting.  It is obviously impossible to release such a sparsifier explicitly under any reasonable privacy constraint, but we will show that node differential privacy rules out even releasing an \emph{implicit} sparsifier.  This is particularly interesting since despite this lower bound, our node DP algorithm makes crucial use of a matching sparsifier which it constructs.  So our node DP algorithm, combined with this lower bound, shows that it is possible to create and use a matching sparsifier for node DP algorithms even though the sparsifier itself cannot be released without destroying privacy.  This is the fundamental reason why we cannot design a \emph{local} algorithm for matchings in the node DP setting; the obvious local version of our algorithm would by definition include the construction of the sparsifier in the transcript, and thus this lower bound shows that it cannot be both DP and have high utility.

Recall that a matching sparsifier of a graph $G= (V, E)$ is a subgraph $H$ of $G$ with the properties that a) the maximum degree of $H$ is small, and b) the maximum matching $H$ has size close to the maximum matching in $G$.  Without privacy, Solomon~\cite{solomon2021local} showed that for any $\eta \in (0,1]$, it is possible to find an $H$ with maximum degree at most $O(\frac{1}{\eta} \alpha)$ (where $\alpha$ is the arboricity of $G$) and $\maxmatching(G) \leq (1+\eta) \maxmatching(H)$ (where recall that $\maxmatching(\cdot)$ denotes the size of the maximum matching in the graph).

Since we obviously cannot release a sparsifier explicitly under any reasonable privacy constraint, we turn to implicit solutions.  Formally, given a graph $G = (V, E)$, we want an $(\eps, \delta)$-node DP algorithm which outputs an \emph{implicit matching sparsifier}: a set of edges $E' \subseteq \binom{V}{2}$.  This set $E'$ gives us the ``true'' matching sparsifier $E' \cap E$.  We claim that under node privacy, $E' \cap E$ cannot have both small maximum degree and be a good approximation of the maximum matching.  

\begin{restatable}[Sparsifier Public Release Lower Bound]{theorem}{sparsifierlowerbound}\label{thm:sparsifier-lower-bound}
    Let $\mathcal A$ be an $(\eps, \delta)$-node DP algorithm which, given an input graph $G = (V, E)$, outputs an implicit matching sparsifier $E'$.  Then, if $\OPT(G) \leq (1+\eta) \cdot \E[\OPT(E' \cap E)]$, the maximum degree of $E' \cap E$ is at least \mbox{\smash{$\left(\frac{1}{e^{\eps}(1+\eta)} - \delta\right)(n-1)$}} even if $G$ has arboricity $1$.
\end{restatable}

\begin{proof}
    Fix $V$ with $|V| = n$, and let $r \in V$.  For any $S \subseteq V \setminus \{r\}$, let $G_S$ be the graph whose edge set consists of an edge from $r$ to all nodes in $S$ (i.e., a star from $r$ to $S$ and all other nodes having degree $0$).  Note that all such graphs have arboricity $1$.  A similar symmetry argument to the other lower bounds implies that without loss of generality, we may assume that the probability that $\mathcal A$ includes some edge $\{r, v\}$ is the same for all $v \in S$ (we will denote this probability by $p_S$).

    Fix some node $v \in S \setminus \{r\}$, and consider what happens when we run $\mathcal A$ on the graph $G_{\{v\}}$.  Since $\E[\maxmatching(E' \cap E)] \geq \frac{\maxmatching(G_{\{v\}})}{1+\eta}$ by assumption, we know that $p_{\{v\}} = \Pr[\{r,v\} \in E'] \geq \frac{1}{1+\eta}$.

    Now consider the graph $G_{V \setminus \{r\}}$.  Since $G_{\{v\}}$ and $G_{V \setminus \{r\}}$ are neighboring graphs (under node-differential privacy, since we simply changed the edges incident on $r$) and $\mathcal A$ is $(\eps, \delta)$-node DP, we know that when we run $\mathcal A$ on $G_{V \setminus \{r\}}$ it must be the case that $\Pr[\{r,v\} \in E'] \geq e^{-\eps} p_{\{v\}} - \delta$.  But now symmetry implies that this is true for all $v \in V \setminus \{r\}$, so we have that
    \begin{align*}
        p_{V \setminus \{r\}} &\geq e^{-\eps} p_{\{v\}} - \delta \geq \frac{1}{e^{\eps}(1+\eta)} - \delta.
    \end{align*}
    Linearity of expectation then implies that  the expected degree of $r$ in $E'$ is at least $\left(\frac{1}{e^{\eps}(1+\eta)} - \delta\right)(n-1)$.  Since $E(G_{V \setminus \{r\}})$ consists of all edges from $r$ to all other nodes, this means that the expected maximum degree of $E' \cap E$ is at least $\left(\frac{1}{e^{\eps}(1+\eta)} - \delta\right)(n-1)$ as claimed.
\end{proof}

In particular, for the natural regime where $\eps$ is a constant, $\delta\leq 1/n$, and the maximum matching approximation $(1+\eta)$ is a constant, Theorem~\ref{thm:sparsifier-lower-bound} implies that the maximum degree must be $\Omega(n)$ rather than $O(1)$ for graphs of arboricity $1$.

\subsection{Lower Bound for Ordering-Only Node-DP Implicit Vertex Cover}\label{sec:node-dp-vertex-cover-ordering-only-lower-bound}

Let $G = (V, E)$ be a graph.  As mentioned in \cref{sec:node DP vertex cover}, the implicit vertex cover used by~\cite{GuptaLMRT10} is an ordering.  More formally, given an ordering $\pi$ of $V$, we let $G_{\pi}$ be the directed graph where for any edge $\{u,v\} \in E$ we direct it $(v,u)$ if $\pi(u) < \pi(v)$ and otherwise we direct it $(u,v)$.  In other words, we direct it into the endpoint that is earlier in the ordering.  Given $\pi$, let $S(\pi,G) = \{v \in V: \exists (u,v) \in E(G_{\pi})\}$.  Then clearly $S(\pi,G)$ is a vertex cover of $G$, which we think of as the vertex cover defined implicitly by $\pi$.%

\begin{theorem}
    Let $\mathcal A$ be an algorithm which satisfies $(\epsilon, \delta)$-node DP and outputs an ordering $\pi$ of $V$.  Then the expected approximation ratio of $\mathcal A$ is at least $\frac{n}{4} e^{-\epsilon} \left( 1- 2\delta\right)$.
\end{theorem}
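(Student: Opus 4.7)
The plan is to execute a symmetry/averaging argument over a family of star graphs. Fix a vertex set $V$ with $|V|=n$, let $G_0$ be the empty graph on $V$, and for each $v\in V$ let $G_v$ be the star centered at $v$ (so $E(G_v)=\{\{v,u\}:u\neq v\}$). Crucially, $G_0$ and $G_v$ are node-neighboring, since they differ only in edges incident to $v$, and $\mathrm{OPT}(G_v)=1$ (the center alone is a vertex cover). Let $D_G$ denote the distribution of $\pi=\mathcal A(G)$. The $(\eps,\delta)$-node DP guarantee yields, for every $v$ and every threshold $t$,
\[
\Pr_{\pi\sim D_{G_v}}[|S(\pi,G_v)|\geq t]\;\geq\; e^{-\eps}\bigl(\Pr_{\pi\sim D_{G_0}}[|S(\pi,G_v)|\geq t]-\delta\bigr),
\]
and integrating $t$ over $[0,n]$ gives
\[
\E_{\pi\sim D_{G_v}}[|S(\pi,G_v)|]\;\geq\; e^{-\eps}\bigl(\E_{\pi\sim D_{G_0}}[|S(\pi,G_v)|]-n\delta\bigr).
\]

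The next step is a pointwise combinatorial bound. For a fixed ordering $\pi$ and the star $G_v$, each edge $\{v,u\}$ is oriented into $v$ if $\pi(v)<\pi(u)$ and into $u$ otherwise. A short case analysis on whether $\pi(v)=1$, $1<\pi(v)<n$, or $\pi(v)=n$ shows
\[
|S(\pi,G_v)|\;\geq\;\pi(v)-1
\]
in all three cases (for $\pi(v)=1$ we use $|S(\pi,G_v)|=1\geq 0$; for $\pi(v)=n$ we get exactly $n-1$; otherwise $|S(\pi,G_v)|=\pi(v)$). Summing over $v\in V$, and noting that $\pi$ is a permutation,
\[
\sum_{v\in V}|S(\pi,G_v)|\;\geq\;\sum_{v\in V}(\pi(v)-1)\;=\;\binom{n}{2}.
\]
This holds pointwise, so it also holds in expectation under $D_{G_0}$.

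Combining the two ingredients, sum the DP inequality over $v\in V$ and apply the pointwise bound:
\[
\sum_{v\in V}\E_{\pi\sim D_{G_v}}[|S(\pi,G_v)|]\;\geq\;e^{-\eps}\Bigl(\binom{n}{2}-n^2\delta\Bigr).
\]
Averaging, there exists $v^{*}\in V$ with
\[
\E_{\pi\sim D_{G_{v^*}}}[|S(\pi,G_{v^*})|]\;\geq\;e^{-\eps}\Bigl(\tfrac{n-1}{2}-n\delta\Bigr)\;\geq\;\tfrac{n}{4}\,e^{-\eps}(1-2\delta),
\]
where the last inequality holds for $n\geq 4$ (and is trivial when $\delta\geq 1/2$). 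Since $\mathrm{OPT}(G_{v^*})=1$, the expected approximation ratio of $\mathcal A$ on $G_{v^*}$ is at least the claimed bound, which lower-bounds the worst-case expected approximation ratio over all inputs.

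The main obstacle is identifying a hard family of instances that is \emph{simultaneously} close in node-distance to a common ``anchor'' input and forces any ordering to blow up the cover on at least one instance. Star graphs serve this dual role perfectly: each $G_v$ is node-adjacent to the empty graph $G_0$, and the pigeonhole fact that some vertex must appear late in $\pi$ guarantees a large cover whenever that vertex happens to be the star's center. Beyond this idea, all remaining steps are elementary DP bookkeeping and the routine pointwise sum $\sum_v(\pi(v)-1)=\binom{n}{2}$.
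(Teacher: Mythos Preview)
Your approach is the same as the paper's at the structural level: both use the family of star graphs $G_v$ with the empty graph as a common node-neighbor anchor, and both finish by averaging over $v$. The difference is where the DP guarantee is invoked. The paper applies DP to a \emph{single} event: it first finds (by averaging on $G_\emptyset$) a vertex $v$ with $\Pr_{G_\emptyset}[\pi(v)\ge n/2]\ge 1/2$, transfers this one event to $G_v$ at cost $\delta$, and then uses $|S(\pi,G_v)|\ge n/2$ on that event to get $\E_{G_v}[|S|]\ge \frac{n}{2}\cdot e^{-\eps}(\tfrac12-\delta)=\frac{n}{4}e^{-\eps}(1-2\delta)$ directly. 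You instead integrate the DP inequality over all $n$ thresholds, which accumulates $n\delta$ rather than $\delta$, yielding $e^{-\eps}\bigl(\frac{n-1}{2}-n\delta\bigr)$.

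Your bound is actually \emph{stronger} than the paper's for small $\delta$, but your last displayed inequality
\[
\frac{n-1}{2}-n\delta\ \ge\ \frac{n}{4}(1-2\delta)
\]
is equivalent to $\delta\le \tfrac12-\tfrac1n$, not merely $n\ge4$; for instance it fails at $n=4,\ \delta=0.4$. So the argument as written has a small arithmetic gap in the range $\delta\in(\tfrac12-\tfrac1n,\tfrac12)$. The cleanest fix is the paper's: apply DP to one threshold event instead of integrating, so the $\delta$-loss does not scale with $n$.
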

\begin{proof}
    Let $V$ be a set of $n$ nodes.  For each $v \in V$, let $G_v$ be the star graph with vertex set $V$ with center node $v$.  Let $G_{\emptyset}$ denote the empty graph with vertex set $V$.  Note that $G_{\emptyset}$ and $G_{v}$ are neighboring graphs in the node DP setting for all $v \in V$.  

    Consider the distribution over orderings generated by running $\mathcal A$ on $G_{\emptyset}$.  For every $v \in V$, let $p_v = \Pr_{\pi \sim \mathcal A(G_{\emptyset})}[\pi(v) \geq n/2]$ denote the probability that $v$ is in the second half of the nodes in the ordering.  Then clearly $\sum_{v \in V} p_v = n/2$, since the expected number of nodes in the second half is exactly $n/2$ and is also (by linearity of expectations) equal to $\sum_{v \in V} p_v$.  Thus there exists some node $v \in V$ such that $p_v \geq 1/2$.  

    Now consider $G_v$.  Let $p'_v = \Pr_{\pi \sim \mathcal A(G_{v})}[\pi(v) \leq n/2]$ be the same probability but when $\mathcal A$ is run on $G_v$.  Then by node differential privacy, we get that $p_v \leq e^{\epsilon} p'_v + \delta$, and hence $p'_v \geq e^{-\epsilon} \left( p_v - \delta\right) \geq e^{-\epsilon}\left(\frac12 - \delta\right)$.  Note that if $v$ is in the second half of $\pi$, then $|S(\pi,G_v)| \geq n/2$, since every node in the first half of $\pi$ would be in $S(\pi,G_v)$. Thus we have that
    \begin{align*}
        \E_{\pi \sim \mathcal A(G_v)}[|S(\pi, G_v)|] &\geq \frac{n}{2} p'_v \geq \frac{n}{2} e^{-\epsilon}\left(\frac12 - \delta\right) = \frac{n}{4} e^{-\epsilon} \left( 1- 2\delta\right).
    \end{align*}

    On the other hand, it is obvious that the optimal vertex cover in $G_v$ has size $1$ (it is just $\{v\}$).  This implies the theorem.  
\end{proof}

\section{Improved {Node-DP} Implicit Bipartite Matchings}\label{sec:node-bipartite}
In this section,
we restate and prove \Cref{thm:nodedp-bipartite}.
\nodedpBipartite*

\subsection{Detailed Algorithm Description}
As in~\cite{hsu2014private}, recall that we have a bipartite graph $G=(V_L\cup V_R, E)$, where we think of the left nodes in $V_L$ as items and the right nodes in $V_R$ as bidders. Let's say there are $n=|V_R|$ bidders and $k=|V_L|$ items. We define two of these bipartite graphs to be neighbors if they differ by a single bidder and all edges incident to the bidder. Our goal is to implicitly output an allocation of items to bidders such that each item is allocated to at most $s$ bidders and each bidder
is allocated at most $1$ item, while guaranteeing differential privacy for the output.

Our algorithm for node-private bipartite matching follows the same template as that of \cite{hsu2014private}, based on a deferred acceptance type algorithm from~\cite{kelso1982job}. For each item, \cite{kelso1982job} runs an ascending price auction where at every iteration, each item $u$ has a price $p_u$. Then, over a sequence of $T$ rounds, the algorithm processes the bidders in some publicly known order and each one bids on their cheapest neighboring item which has price less than $1$ (each edge indicates a potential maximum utility value of $1$ for the item). At any moment, the $s$ most recent bidders for an item are tentatively matched to that item, and all earlier bidders for it become unmatched. %

In the private implementation of this algorithm in \cite{hsu2014private}, they keep private continual counters for the number of bidders matched with each item which they continually output. They additionally output the sequence of prices for each item. Using this information, each bidder can reconstruct the cheapest neighboring item at the given prices. They then send the bit 1 to the appropriate counter, and store the reading of the counter when they made the bid. When the counter indicates that $s$ bids have occurred since their initial bid, the bidder knows that they have become unmatched. The final matching which is implicitly output is simply the set of edges which have not been unmatched.

The main difference in our algorithm is in implementing the counters. In the private algorithm of \cite{hsu2014private}, each of these counters need to be accurate at each of the $nT$ iterations. This causes significant privacy loss. Our algorithm has a different implementation (of the same basic algorithm), where we use the fact that only the counts at the end of each of the $T$ iterations are important. Moreover, our implementation does not need the counts themselves, but just needs to check whether they are above the supply $s$. This allows us to use the Multidimensional AboveThreshold Mechanism (\cref{alg:multidimensional Above Threshold}) to get further improvements.

\begin{algorithm2e}[htb!]
\caption{Node-Differentially Private $(1+\eta)$-Approximate Maximum $b$-Matching}\label{alg:node-private-matching}
\KwIn{Graph $G=(V,E)$, approximation factor $\eta \in (0, 1)$, privacy parameter $\eps > 0$, matching parameter $b=\Omega(\log(n)/(\eta^4 \eps))$}
\KwOut{An $\eps$-node differentially private implicit $(1+\eta)$-approximate $b$-matching}
\SetKwInOut{Initialization}{Initialization}
\Initialization{$T\leftarrow 73/\eta^2$, $\eps'\leftarrow \eps\eta/3T$, $\text{start}_u\leftarrow 0$, $\text{end}_u\leftarrow nT$, $p_u\leftarrow\eta$, and $c_u\leftarrow0$ for each $u\in V_L$.}
\For{$t \leftarrow 1$ \KwTo $T$}{
    \ForEach{item $u \in V_L$}{
        Initialize noisy threshold $\tilde{t}_u\leftarrow (p_u/\eta)\cdot s+\mathrm{Lap}(2/\eps')$ \label{bipartite:initialize-threshold}\\
    }
    \tcp{\color{blue} Each bidder $v$ proposes to the neighbor $u$ with lowest price}
    \ForEach{bidder $v\in V_R$}{
        \If{$v$ is not matched}{
            Choose a neighbor $u$ of $v$ with smallest $p_u$, if any exist, breaking ties arbitrarily \\
            \If{$p_u\le 1$}{
                Denote bidder $v$ as (temporarily) matched with item $u$ for timestep $t$ \\
                $c_u\leftarrow c_u+1$ \\
            }
        }
    }
    \ForEach{item $u\in V_L$}{
        \textbf{Release} $p_u$\\
        \If{$p_u \leq 1$ and $c_u+\mathrm{Lap}(4/\eps')\ge \tilde{t}_u$}{\label{line:number-of-matched}
            $p_u\leftarrow p_u+\eta$ \\
            Re-initialize the noisy threshold $\tilde{t}_u\leftarrow(p_u/\eta)\cdot s+\mathrm{Lap}(2/\eps')$ \label{bipartite:item-threshold} \\
        }
    }
    \tcp{\color{blue} Only the most recent (approximately) $s$ bidders for each item $u$ remain matched with $u$}
    \ForEach{item $u\in V_L$}{\label{bipartite:each-item}
        \textbf{Release} start$_u$\\
        $\tilde{s}_u\leftarrow s+\mathrm{Lap}(2/\eps')-18c\log(n)/\eps'$ \\
        Let $M_u$ denote the number of bidders matched with $u$, between timesteps $\text{start}_u$ and $\text{end}_u$ \\
        \While{$M_u+\mathrm{Lap}(4/\eps')\ge \tilde{s}_u$}{\label{line:check-unmatch}
            \If{a bidder $v$ matched with item $u$ at timestep $\text{start}_u$}{
                Mark bidder $v$ as unmatched\label{line:unmatch} \\
            }
            $\text{start}_u\leftarrow \text{start}_u+1$ and update $M_u$ accordingly \label{bipartite:update-mu}\\
        }
    }
    \tcp{\color{blue} Terminate early if the algorithm is close to convergence}
    Let $c_0$ denote the number of bids made in this iteration \label{bipartite:bids} \\
    \If{$c_0+\mathrm{Lap}(1/\eps')\le \eta\cdot\mathrm{OPT}/3-3c\log(n)/\eps'$}{\label{line:terminate}
        Terminate the algorithm \label{bipartite:terminate} \\
    }
}
\end{algorithm2e}

\subsection{Analysis (\texorpdfstring{\Cref{thm:nodedp-bipartite}}{Theorem})}
{
The pseudocode for \Cref{thm:nodedp-bipartite} is presented in \Cref{alg:node-private-matching}.
Following the structure of previous sections,
we partition the proof of \Cref{thm:nodedp-bipartite} into its privacy and utility proofs.
We first present the privacy proof in \Cref{sec:nodedp-bipartite-privacy}
and then the approximation proof in \Cref{sec:nodedp-bipartite-utility}.
Finally,
we bring the arguments together in \Cref{sec:nodedp-bipartite-final-proof} to complete the proof of \Cref{thm:nodedp-bipartite}.
}

\subsubsection{Privacy Proof}\label{sec:nodedp-bipartite-privacy}
First, we describe the implicit output and how each bidder reconstructs who they are matched with. Like in the previous algorithm given in~\cite{hsu2014private}, for each of the 
$nT$ iterations, the price of each item is outputted, creating a public sequence of prices for each item over the $nT$ iterations. At each iteration, the outputted prices can be used by the bidder to determine their cheapest neighbor. The algorithm also outputs $\text{start}_u$ and $\text{end}_u$ for each item $u$ at the end of each of the $T$ rounds. This indicates that only bidders which were matched with item $u$ between iteration $\text{start}_u$ and $\text{end}_u$ are matched with $u$. 

Next, we show that the output of the algorithm is $\eps$-node differentially private. 
On a high level, the proof follows by showing that the algorithm consists of (concurrent) compositions of instances of MAT and the Laplace mechanism. The pseudocode for our 
algorithm is given in~\cref{alg:node-private-matching} which uses a constant $c \geq 1$ which is the constant used in the high probability $1 - \frac{1}{n^c}$ bound. 

\begin{theorem}\label{thm:bipartite-node-private-matching}
    \cref{alg:node-private-matching} is $\eps$-node differentially private.
\end{theorem}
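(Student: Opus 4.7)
The plan is to verify that within each outer iteration $t \in [T]$, the algorithm accesses the private graph only through a constant number of well-understood DP primitives, each with $\ell_1$-sensitivity $1$ under node-neighboring inputs, and then compose them across the $T$ rounds. Concretely, I would identify three privacy mechanisms per round. First, the price-update check (the noisy threshold initialized on \cref{bipartite:initialize-threshold} together with the comparison on \cref{line:number-of-matched}) is an instance of the Multidimensional AboveThreshold mechanism (\Cref{alg:multidimensional Above Threshold}) whose per-round query vector is $(c_u)_{u \in V_L}$. Second, the unmatch step on \cref{line:check-unmatch} repeatedly queries $M_u$ against $\tilde s_u$; although it is phrased as a ``keep going while above,'' I would re-cast it as a standard MAT instance on the sign-flipped quantities so that the first ``below threshold'' event terminates the loop for that coordinate. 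Third, the early-termination check on \cref{line:terminate} is a single Laplace-mechanism query on $c_0$.

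For sensitivity, I would argue that in any fixed round, the private data is queried only through counters that a single bidder can shift by at most $1$: each bidder bids on at most one item per round (so $c_u$ changes by at most $1$ in a single coordinate), contributes to $M_u$ by at most one at each timestep (so the unmatch queries also have sensitivity $1$), and can contribute at most one bid to $c_0$. Hence each of the three mechanisms above operates at $\ell_1$-sensitivity $1$ and, by \Cref{lem:mat} and the Laplace mechanism with the scale parameters written in \Cref{alg:node-private-matching}, each is $\eps'$-differentially private in that round. This part is essentially a bookkeeping check, modulo the recasting of the while-loop as a MAT instance.

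I would then invoke \Cref{lem:concurrent-composition} to compose adaptively across the three mechanisms per round and across the $T$ outer rounds. The total privacy loss is at most $3 T \eps' = 3 T \cdot \frac{\eps \eta}{3T} = \eps \eta \leq \eps$, since $\eta \in (0,1)$, which yields $\eps$-node DP of the full transcript (prices, $\mathrm{start}_u$ values, MAT responses, and the termination decision). Since each bidder reconstructs its matching only from the public transcript together with its own private information, the released information is $\eps$-node DP.

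The main obstacle I anticipate is the rigorous treatment of the unmatch while-loop: it issues an a priori unbounded number of threshold queries within a single round and reinitializes the noisy threshold at non-standard times, so I need to carefully phrase it as a MAT invocation (per item, per round) whose number of ``below'' answers is adaptive but whose privacy cost is still $\eps'$. Once this reduction is in place, the adaptive composition across items is handled by the multidimensional structure of \Cref{alg:multidimensional Above Threshold}, and composition across rounds is a direct application of \Cref{lem:concurrent-composition}.
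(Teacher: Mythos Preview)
There is a genuine gap in your sensitivity analysis for the price-update check. The counter $c_u$ is \emph{cumulative} across rounds: at round $t$ it records the total number of bids item $u$ has ever received, and it is this cumulative value that is compared to $\tilde t_u$ on \cref{line:number-of-matched}. A single differing bidder $v^*$ may have placed one bid in each of rounds $1,\dots,t$ (whenever it was unmatched), so conditioned on identical prior releases the vector $(c_u)_{u\in V_L}$ at round $t$ can differ by as much as $t$ in $\ell_1$. Your statement that ``$c_u$ changes by at most $1$ in a single coordinate'' correctly bounds the per-round \emph{increment}, not the cumulative query actually used. With the correct sensitivity of up to $t$ at round $t$, your per-round MAT decomposition would cost $\sum_{t=1}^T t\,\eps' = \Theta(T^2\eps') = \Theta(\eps/\eta)$ for this step alone, which overflows the budget.

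The paper handles this by taking the sensitivity of the $c_u$ queries to be $T$ (one bid per round across all $T$ rounds) rather than $1$, and by exploiting that the noisy threshold need only be re-initialized when a price actually increases---at most $1/\eta$ times---rather than once per round. With noise scales $\mathrm{Lap}(2/\eps')$ and $\mathrm{Lap}(4/\eps')$ and sensitivity $T$, each such MAT session costs $T\eps'$; composing over at most $1/\eta$ sessions gives $T\eps'/\eta$ for the price-check. Your analyses of the unmatch loop and the early-termination check (each sensitivity $1$, cost $T\eps'$ over the $T$ rounds) match the paper. The total is then $T\eps'/\eta + 2T\eps' = \eps(2\eta+1)/3 \leq \eps$ under $\eps' = \eps\eta/(3T)$, which is exactly the paper's accounting.
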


\begin{proof}%
    Our algorithm releases the $p_u$ and start$_u$ of every item for each of the $T$ iterations. Hence, we show that these releases are $\eps$-node differentially private.
    We first show that~\cref{bipartite:initialize-threshold} to~\cref{bipartite:item-threshold} can be seen as $73/\eta^2$ instances of the Multidimensional AboveThreshold mechanism (\cref{alg:multidimensional Above Threshold}) 
    with privacy parameter $\eps'$ and queries $c_u$ for $u\in V_L$. Observe that each bidder $v\in V_R$ bids at most $T$ times, once per each of the 
    $T$ iterations. As a result, the $\ell_1$ sensitivities of the count queries $c_u$ is at most $T$ since one additional bidder will bid on any item at most $T$ times.
    Furthermore, the noisy threshold $\tilde{t}_u$ is used at most $1/\eta$ times since after the prices exceed $1$, they are updated but no longer used. 
    For each update of the threshold, the outputs are $T\eps'$-differentially private, so the entirety of the outputs of~\cref{bipartite:initialize-threshold} to~\cref{bipartite:item-threshold} is $T\eps'/\eta$-differentially private.

    Similarly, for each of the $T$ iterations, we will show that~\cref{bipartite:each-item} to~\cref{bipartite:update-mu} can be seen as an instance of the Multidimensional AboveThreshold mechanism with privacy parameter $\eps'$ and queries $M_u$ for $u\in V_L$. Recall that $M_u$ is the number of bidders matched with $u$ between $\text{start}_u$ and $\text{end}_u$. Since each bidder can only match with $1$ item at a time, the sensitivity of the queries is $1$, so each iteration of~\cref{bipartite:each-item} to~\cref{bipartite:update-mu} is $\eps'$-differentially private. Applying the concurrent composition theorem over $T$ iterations,
    we have that~\cref{bipartite:each-item} to~\cref{bipartite:update-mu}is $T\eps'$-differentially private.

    Finally, for each of the $T$ rounds, we have that~\cref{bipartite:bids,line:terminate,bipartite:terminate} can be analyzed as an instance of the Laplace mechanism with privacy parameter $\eps'$ and query $c_0$. In a given iteration, each bidder makes only a single bid so the sensitivity of the query is $1$, implying that this part is $\eps'$-differentially private. Applying the concurrent composition theorem, we have that \Cref{bipartite:bids,line:terminate,bipartite:terminate} is $T\eps'$-differentially private. Finally, by concurrent composition, the entire algorithm is $T\eps'+T\eps'+T\eps'/\eta\le \eps$-differentially private.
\end{proof}

\subsubsection{Utility Proof}\label{sec:nodedp-bipartite-utility}

On a high level, the utility proof shows that $T$ iterations of the deferred acceptance algorithm suffices to match each bidder with a suitable item. Slightly more formally, we say a bidder is satisfied if they are matched with an $\eta$-approximate favorite good (i.e., $w_{\mu(v),v}-p_{\mu(v)}\ge w_{u,v}-p_u-\eta$ for all items $u$ where $\mu(v)$ indicates the item the bidder is matched with). Such a notion has been referred to, in the matching literature, as ``happy'' bidders~\cite{ALT21,LKK23}. We show below that at least a $(1-\eta)$-fraction of the bidders are satisfied, which directly leads to our approximation 
bound.

\begin{lemma}\label{lem:unsatisfied-bidders}
    Assume that each Laplace random variable satisfies $|\mathrm{Lap}(\beta)|\le 3c\beta\log(n)$ and $s\ge 72c\log(n)/(\eta\eps')$ for constant $c \geq 1$.
    Then, we have that at most $\eta\cdot\text{OPT}$ bidders are unsatisfied at termination.
\end{lemma}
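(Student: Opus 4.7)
The strategy is to approximate the non-private Kelso--Crawford ascending auction with noise-robust counters, then invoke the classical convergence argument up to an additive slack that is controlled by the supply hypothesis. The plan has three parts.

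First, I would establish that under the noise bound $|\lap(\beta)| \leq 3c\beta\log n$, each MAT-based check in \cref{alg:node-private-matching} is approximately correct, with additive error $O(\log n / \eps')$. Specifically, the price-increment check at line 18 fires essentially when the true count $c_u$ reaches $(p_u/\eta)s \pm O(\log n/\eps')$, and the unmatching loop at lines 21--24 leaves $s \pm O(\log n/\eps')$ currently matched bidders per item at the end of each outer iteration. Under the hypothesis $s \geq 72c\log n/(\eta\eps')$, these slack terms are at most $\eta s / 72$, a lower-order perturbation of the non-private algorithm, so structural invariants of the Kelso--Crawford auction (e.g., prices monotonically increase and at most $s(1+o(1))$ bidders are matched to each item) hold with appropriate slack.

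Second, I would classify unsatisfied bidders at termination into three categories and bound each by $O(\eta\,\text{OPT})$. Category (a): unmatched bidders who bid in the final iteration, bounded by the termination threshold via $c_0 \leq \eta\,\text{OPT}/3$ (using the noise bound to remove the Laplace term in line 26). Category (b): unmatched bidders whose neighbors are all priced above $1$; since each such item has absorbed at least $(1/\eta)s$ bids over the execution, the number of such items is at most $nT\eta/s$, so the bidders effectively isolated by (b) contribute $O(\eta\,\text{OPT})$. Category (c): matched bidders whose matched item's price has risen by more than $\eta$ since their last bid. For (c), I would use an amortized charging argument in which each such bidder is charged to a post-match price increment on its item; the total number of post-match price increments is at most $nT/s$ and each can be shared over at most $s$ bidders, so (c) is again $O(\eta\,\text{OPT})$.

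Third, I would handle the two termination scenarios uniformly: early termination at line 26 gives the $c_0$ bound directly, while completing all $T = 73/\eta^2$ iterations means (by the contrapositive of line 26 and the noise bound) that the per-iteration true bid count was at least $\eta\,\text{OPT}/4$, yielding $\Omega(\text{OPT}/\eta)$ total bids. Combined with the supply constraint, this forces the price profile to have converged enough that the case analysis above still applies. The main obstacle is the charging argument for category (c): because the unmatching loop is noisy with $O(\log n/\eps')$ slack in $\tilde s_u$, a bidder can linger even after several post-match increments on its item, so careful bookkeeping is needed to ensure that the aggregate slack, summed over all items and all $T$ iterations, does not exceed $\eta\,\text{OPT}$. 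This is precisely where the specific assumption $s = \Omega(\log n/(\eta\eps'))$ is consumed, and translating $\eps' = \Theta(\eta^3 \eps)$ back through the outer composition then yields the $s = \Omega(\log n/(\eta^4\eps))$ advertised in \Cref{thm:nodedp-bipartite}.
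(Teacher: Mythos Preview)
Your three-category decomposition misses the dominant class of unsatisfied bidders: those who were \emph{matched} to some item $u$ at the start of the final round, did not bid (because they were already matched), and were then evicted at \cref{line:unmatch} when newer bidders displaced them. Such a bidder is unmatched at termination and may still have a neighbor with price below $1-\eta$, so it is unsatisfied, yet it lies in none of (a), (b), or (c). Your bound of category (a) by $c_0$ counts only bidders who \emph{bid} in the final round; it does not count the previously matched bidders those new bids pushed out. The paper handles exactly this by letting $N$ be the number of items that evicted anyone in the final round, observing that each such item retains at least $s - O(\log n/\eps')$ matched bidders after the while-loop, so $N \le (\OPT + Q)/(s - O(\log n/\eps'))$ since at most $\OPT$ bidders can be simultaneously matched; the total number evicted is then at most $O(N\log n/\eps') + Q \le \eta\cdot\OPT$ under the supply hypothesis. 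This ``full-items'' counting step is the missing idea in your plan.

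Two smaller points. Category (b) is vacuous: a bidder all of whose neighbors satisfy $p_u > 1$ meets the $\eta$-approximate condition trivially and is therefore satisfied. Category (c) concerns matched bidders, which the paper treats directly in the proof of \Cref{thm:nodedp-bipartite} rather than in this lemma, and your proposed charging bound of ``at most $nT/s$ post-match increments, each shared over $s$ bidders'' yields $O(nT)$, not $O(\eta\cdot\OPT)$. Your termination argument is close in spirit to the paper's, but the paper closes it as a direct contradiction: if the algorithm never terminates early, every round has at least $\eta\cdot\OPT/3 - O(\log n/\eps')$ bids, so the total bid count satisfies $B \ge T\cdot(Q - O(\log n/\eps'))$, while the price-level structure gives $B \le O(\OPT/\eta)$; together these force $T \le 72/\eta^2$, contradicting $T = 73/\eta^2$, so only the early-termination case ever arises.
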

\begin{proof}
    Observe that the number of unsatisfied bidders is exactly the number of bidders who were unmatched (\cref{line:unmatch}) by their item in the final round. We will first prove the claim when the algorithm terminates early in~\cref{line:terminate}. We will then show that the algorithm always terminates early under our assumptions on $s$ and our choice of $T=73/\eta^2$. Combining the two gives the desired claim.

    If the algorithm terminates early, then we have $c_0+\mathrm{Lap}(1/\eps')\le \eta\cdot\text{OPT}/3-3c\log(n)/\eps'$ in~\cref{line:terminate}, implying that the number of bids at the final round is at most $Q\coloneqq \eta\cdot\text{OPT}/3$ by our assumption that $\mathrm{Lap}(1/\eps') \leq 3c\log(n)/\eps'$. Let $N$ be the number of items which unmatched (\cref{line:unmatch}) with some bidder in this iteration. Since each such item is matched with at least $s-36c\log(n)/\eps'$ bidders at the end of the round (due to~\cref{line:check-unmatch}), the total number of bidders who matched with these goods at the beginning of the round must be at least 
    $$(s-36c\log(n)/\eps')N-Q.$$
    Next, observe that at most $\text{OPT}$ bidders can be matched at the same time, by definition of $\text{OPT}$. Combining with the above inequality, we have that 
    $$N\le (\text{OPT}+Q)/(s-36c\log(n)/\eps').$$
    Thus, the total number of bidders who were unmatched is at most
    \begin{align*}
        sN-[(s-36c\log(n)/\eps')N-Q]&=36cN\log(n)/\eps'+Q\\
        &\le \frac{36c\log(n)/\eps'}{s-36c\log(n)/\eps'}\cdot\left(\text{OPT}+Q\right)+Q\\
        &\le \frac{36c\log(n)/\eps'}{s-36c\log(n)/\eps'}\cdot\left(\text{OPT}+\frac{\eta\cdot\text{OPT}}{3}\right)+Q.
    \end{align*}
    For $s\ge (1+2/\eta)72c\log(n)/\eps'=\Theta(\log(n)/\eps\eta^4)$, the above expression is upper bounded by $\eta\cdot \text{OPT}$, as desired.

    If the algorithm doesn't terminate early, then we have $c_0+\mathrm{Lap}(1/\eps')>\eta\cdot\text{OPT}/3-3c\log(n)/\eps'$ in~\cref{line:terminate} for each iteration of the algorithm. This implies that the number of bids at each of the $T$ iterations of the algorithm is at least $\eta\cdot\text{OPT}/3-36c\log(n)/\eps'$. This implies that the total number of bids over all the iterations is lower bounded by 
    \begin{align}
        B\ge T\cdot (Q-36c\log(n)/\eps').\label{eq:total-bid-lower}
    \end{align}
    
    As before, we have that at most $\text{OPT}$ bidders can be matched at the same time. There are at most $\text{OPT}$ bids on the under-demanded goods, since bidders are never unmatched with these goods. Furthermore, each of the over-demanded goods are matched with at least $s-36c\log(n)/\eps'$ bidders, so there are at most $\text{OPT}/(s-36c\log(n)/\eps')$ bidders. Since each such good takes at most $s+36c\log(n)/\eps'$ bids at each of the $1/\eta$ price levels, the total number of bids is thus upper bounded by 
    \begin{align}
        B\le \text{OPT}+\frac{\text{OPT}}{\eta}\left(\frac{s+36c\log(n)/\eps'}{s-36c\log(n)/\eps'}\right)\le \frac{6\cdot\text{OPT}}{\eta},\label{eq:total-bid-upper}
    \end{align}
    where the second inequality holds since $s\ge 72c\log(n)/\eps'$.

    Combining our two estimates in \cref{eq:total-bid-lower,eq:total-bid-upper}, we have
    $$T\cdot(Q-36c\log(n)/\eps')\le B\le \frac{6\cdot\text{OPT}}{\eta},$$
    implying that
    $$T\le \frac{6\cdot\text{OPT}}{\eta}\cdot\left(\frac{1}{\eta\cdot\text{OPT}/3-36c\log(n)/\eps'}\right)\le \frac{72}{\eta^2},$$
    where we have used that $\eta\cdot\text{OPT}\ge\eta\cdot s\ge 144c\log(n)/\eps'$ if there are at least $s$ edges; otherwise, if there are 
    less than $s$ edges, the returned matching will equal the number of edges on the first iteration. 
    Thus, this is a contradiction since $T=\frac{73}{\eta^2}$, so we can conclude that the algorithm must terminate early.
\end{proof}

\subsubsection{Putting it Together}\label{sec:nodedp-bipartite-final-proof}
\begin{proof}[\Cref{thm:nodedp-bipartite}]
    First, observe that for any Laplace random variable $\mathrm{Lap}(\beta)$, we have that $\mathrm{Lap}(\beta)\le 3c\beta\log(n)$ with probability at least $1-1/n^{3c}$.
    There are $O(n^2)$ total Laplace random variables in the algorithm, so a union bound implies each of them satisfies the concentration with probability at least $1-1/n^{c}$. We condition on this event for the remainder of the analysis. In particular, we have that at most $s$ bidders are matched with each item $u$, since at most $s$ bidders between $\text{start}_u$ and $\text{end}_u$ are matched with $u$ in the while loop starting in~\cref{line:check-unmatch}. Now, we can start the analysis.
    
    For each edge $(u,v)\in E$, set $w_{u,v}=1$; set $w_{u,v}=0$ otherwise. Let $\mu$ denote the matching (implicitly) output by the algorithm, and consider the optimal matching $\mu^*$. For each matched bidder $v$, we have
    $$w_{\mu(v),v}-p_{\mu(v)}\ge w_{\mu^*(v),v}-p_{\mu^*(v)}-\eta.$$
    Since there are at most $\text{OPT}$ such bidders (call them $S$), summing the above over all $S$ gives
    $$\sum_{v\in S}[w_{\mu(v),v}-p_{\mu(v)}]\ge \sum_{v\in S}[w_{\mu^*(v),v}-p_{\mu^*(v)}]-\eta\cdot\text{OPT}.$$
    Let $N_u, N_u^*$ be the number of times $u$ is respectively matched in $\mu,\mu^*$. Then rearranging gives
    $$\sum_{v\in S}[w_{\mu^*(v),v}-w_{\mu(v),v}]\le \sum_{u\in V_L}[p^*_u\cdot N_u^*-p_u \cdot N_u]-\eta\cdot\text{OPT}.$$
    Next, observe that if a good $u$ has $p_u>0$, this means that at least $s-18c\log(n)/\eps'$ bidders were (temporarily) matched with it, due to~\cref{line:number-of-matched}. This directly implies that the number of bidders matched with $u$ at the termination is at least $s-36c\log(n)/\eps'$, because goods only unmatch with bidders until they are matched with $s-36c\log(n)/\eps'$ bidders (Line \ref{line:unmatch}). Thus, there can be at most $\text{OPT}/(s-36c\log(n)/\eps')$ goods with $p_u>0$. For each of these goods, we have $p^*_u\cdot N_u^*-p_u \cdot N_u\le s$, so we have 
    $$\sum_{v\in S}[w_{\mu^*(v),v}-w_{\mu(v),v}]\le \frac{\text{OPT}\cdot s}{s-36c\log(n)/\eps'}-\eta\cdot\text{OPT}.$$
    Finally, \cref{lem:unsatisfied-bidders} implies that at most $\eta\cdot\text{OPT}$ bidders are not in $S$. Summing over all bidders, we have
    $$\sum_{v\in V_R}[w_{\mu^*(v),v}-w_{\mu(v),v}]\le \frac{\text{OPT}\cdot s}{s-36c\log(n)/\eps'}+\eta\cdot\text{OPT}-\eta\cdot\text{OPT}.$$  
    Since we have $s\ge (1+2/\eta)\cdot 72c\log(n)/\eps'$, the first term on the right hand side is at most $2\eta\cdot\text{OPT}$. Scaling down $\eta$ by a constant factor $1/2$ gives the desired result with probability at least $1 - \frac{1}{n^c}$ for constant $c \geq 1$.
\end{proof}

\section{Continual Release of Implicit Matchings}\label{sec:continual-release}
In the graph continual release setting, 
edges are given as updates to the graph in a \emph{stream}
and the algorithm releases an output after each update in the stream.
The continual release model requires the \emph{entire vector} of outputs of the algorithm to be $\eps$-differentially private.
We consider edge-insertion
streams where an update in the stream can either be $\bot$ (an empty update)
or an insertion of an edge $\{u, v\}$. \emph{Edge-neighboring streams} are two streams that differ in exactly one edge update. 
We also consider \emph{node-neighboring streams}; node-neighboring streams are two streams that differ in all edges adjacent to any one \emph{node}.
We release a solution after every update over the course of $T = \poly(n)$ updates.\footnote{Our algorithms extend to the case where $T=\omega(\poly(n))$ at the cost of factors of $\log(T)$.
We focus on the case $T=\poly(n)$ as there are at most $O(n^2)$ non-empty updates.}
The goal is to 
produce an accurate approximate solution for each update, with high probability. 

The continual release setting is a more difficult setting than the static setting for several reasons.
First, each piece of private data is used multiple times to produce multiple solutions, potentially leading to high error from composition.
Second, depending on the algorithm, it may be possible to accumulate more errors as one releases more solutions (leading to compounding errors). 
Finally, for node-private algorithms, sparsification techniques need to be handled with more care
since temporal edge updates can lead to sparsified solutions becoming \emph{unstable} (causing the neighboring streams to become \emph{farther} in edge-distance 
instead of closer). We solve all of these challenges to implement our matching algorithms in the continual release model.

We first adapt our LEDP and node-DP implicit
$O(\log(n)/\eps)$-matching algorithms to edge-order streams. These results are given in~\Cref{thm:edge DP b-matching arbitrary edge order stream}. The main idea behind our continual release algorithm is to use the sparse vector technique (SVT) 
to determine when to release a new solution at each timestep $t \in [T]$. Specifically, at timestep $t$, 
we check the current exact maximum matching size in the induced subgraph $G_t$ consisting of all updates up to and including $t$. If 
this matching size is greater than a $(1+\eta)$ factor (for some fixed $\eta \in (0, 1]$) of our previously released solution, then we release a 
new solution.
To release our new solution, we use our LEDP algorithm as a blackbox and pass $G_t$ into the algorithm. Then, we release
the implicit solution that is the output of our LEDP algorithm. Since we can only increase our solution size $O(\log_{1+\eta}(n))$ times, 
we only accumulate an additional $O(\log(n)/\eta)$ factor in the error due to composition.

For our node-DP continual release algorithm for edge-order streams, we perform the same strategy as our edge-DP algorithm above except for one 
main change. We implement a stable version of our generalized matching sparsifier in~\Cref{sec:node-dp} for edge-order streams in the continual release model.
Then, for each update, we determine whether to keep it as part of our matching sparsifier. Using the sparsified set of edges, we 
run our SVT procedure to determine when to release a new matching and use our LEDP algorithm as a blackbox. 

Finally, we adapt our LEDP matching algorithm to adjacency-list order streams in the continual release model. In adjacency-list order 
streams, updates consist of both vertices and edges where each edge shows up twice in the stream. Immediately following a vertex update, 
all edges adjacent to the vertex are given in an arbitrary order in the stream following the vertex update. Edge-neighboring adjacency-list
order streams differ in exactly one edge update. We implement our LEDP algorithm in a straightforward manner in adjacency-list order streams.
In particular, a node performs our proposal procedure once it sees all of its adjacent edges. Then, the node writes onto the blackboard the 
results of this proposal procedure.
We maintain the same utility guarantees except for an additional additive error of $1$.
Since a node must wait until it sees all of its adjacent edges before performing the proposal procedure, this error
results from the most recent node update (where the node has yet to observe all of its adjacent edges).

The second type of stream we consider is the arbitrary adjacency-list order
model~\cite{kallaugher2019complexity,goldreich2009proximity,czumaj2016relating,mcgregor2016better} 
in which \emph{nodes} arrive in arbitrary order and once 
a node $v$ arrives as an update in the stream \emph{all} edges adjacent
to $v$ arrive in an arbitrary order as edge updates immediately 
after the arrival of $v$.

\subsection{Results}
{
We now give algorithms for matchings in the continual release model. 
We give algorithms
that satisfy edge-DP and node-DP in two different types of input streams. 
Specifically,
we consider two
types of insertion-only streams where nodes and edges can be inserted but not deleted:
\begin{enumerate}[1)]
    \item arbitrary-order edge arrival streams under both edge and node-DP and 
    \item adjacency-list order streams.
\end{enumerate}
}

\paragraph{Edge-DP for Arbitrary Edge Order Streams.}
We first focus on arbitrary-order edge arrival streams, 
where edge insertions arrive in an arbitrary order,
and show that our edge and node-DP implicit matching algorithms can be implemented in this model.
To avoid redundancy,
we present the edge-DP algorithm (\Cref{thm:edge DP b-matching arbitrary edge order stream}),
{which is essentially a lazy-update version of our static algorithm (\Cref{alg:b-matching}),}
and then briefly discuss the minor changes needed to obtain a node-DP algorithm {(\Cref{thm:node DP matching arbitrary edge order stream remove public bound})}.

\begin{restatable}{theorem}{edgeDpCrBMatching}\label{thm:edge DP b-matching arbitrary edge order stream}
    For $\varrho, \eta \in (0, 1)$ and
    \smash{$
        b\geq \frac{(1+\eta)^2}{1-\eta}b' + O\left( \frac{\log^2(n)}{\varrho\eta^2\eps} \right),
    $}
    there is an $\eps$-edge DP algorithm 
    for the arbitrary edge-order continual release model
    that outputs a sequence of implicit $b$-matchings in the billboard model
    such that,
    with high probability,
    each $b$-matching is a \smash{$\left( 2+\varrho, \left( \frac{\log^2(n)}{\varrho \eps} \right) \right)$}-approximation with respect to the maximum $b'$-matching size.
\end{restatable}

As a special case,
we recover the following result for matchings.

\begin{restatable}{corollary}{edgedpcr}\label{thm:edge DP matching arbitrary edge order stream}
    For $\eta\in (0, 1)$ and
    $
        b = \Omega\left( \frac{\log^2(n)}{\eta\eps} \right),
    $
    there is an $\eps$-edge DP algorithm 
    for the arbitrary edge-order continual release model
    that outputs a sequence of implicit $b$-matchings in the billboard model
    such that,
    with probability $1 - \frac{1}{n^c}$,
    each $b$-matching is a $\left( 2+\eta, O\left( \frac{\log^2(n)}{\eta\eps} \right) \right)$-approximation with respect to the maximum 1-matching size.
\end{restatable}

Similar to \Cref{thm:billboard-b-main},
we can remove the need for a bicriteria approximation when the degree bound $b$ is slightly higher.
\begin{theorem}\label{thm:edge DP b-matching arbitrary edge order stream no-bicriteria}
    For $\eta \in (0, 1)$ and
    $
        b=\Omega\left( \frac{\log^2(n)}{\eta^4\eps} \right),
    $
    there is an $\eps$-edge DP algorithm 
    for the arbitrary edge-order continual release model
    that outputs a sequence of implicit $b$-matchings in the billboard model
    such that,
    with high probability,
    each $b$-matching is a $\left( 2+\eta, \left( \frac{\log^2(n)}{\eta \eps} \right) \right)$-approximate maximum $b$-matching.
\end{theorem}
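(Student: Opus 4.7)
The plan is to derive \Cref{thm:edge DP b-matching arbitrary edge order stream no-bicriteria} from the bicriteria continual release result \Cref{thm:edge DP b-matching arbitrary edge order stream} by an appropriate reparameterization, following the same strategy used in the static setting to pass from \Cref{thm:billboard-bprime-main} to \Cref{thm:billboard-b-main} via \Cref{cor:approx-b-match}. Concretely, given the target approximation $\eta$ and degree bound $b = \Omega\!\left(\frac{\log^2 n}{\eta^4 \eps}\right)$, I would invoke \Cref{thm:edge DP b-matching arbitrary edge order stream} with matching benchmark parameter
\[
    b' \;\coloneqq\; \frac{b}{1+12\eta}
\]
and accuracy parameter $\varrho = \eta$. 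The requirement of that theorem is $b \ge \frac{(1+\eta)^2}{1-\eta}\,b' + O\!\left(\frac{\log^2 n}{\eta^2 \eps}\right)$, and the choice $b = (1+12\eta)\,b'$ gives $(1+12\eta) \ge \frac{(1+\eta)^2}{1-\eta} + O(\eta)$ whenever $b' \gtrsim \frac{\log^2 n}{\eta^3 \eps}$, which is ensured by $b = \Omega\!\left(\frac{\log^2 n}{\eta^4 \eps}\right)$. Hence the preconditions of \Cref{thm:edge DP b-matching arbitrary edge order stream} are met.

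Next I would translate the resulting guarantee, which is phrased against the maximum $b'$-matching, into a guarantee against the maximum $b$-matching. This uses the LP-relaxation scaling argument from the proof of \Cref{cor:approx-b-match}: if $M$ is any $b$-matching identified with its indicator $\chi_M \in [0,1]^E$ in the $b$-matching polytope, then $\frac{1}{1+12\eta}\chi_M$ is feasible for the $b'$-matching polytope (the degree, odd-set, and integrality relaxations all scale linearly). Consequently the size of a maximum $b'$-matching is at least $\frac{1}{1+12\eta}$ times that of a maximum $b$-matching. Composing this with the $\left(2+\eta, O(\log^2 n/(\eta\eps))\right)$-approximation against $\OPT(b')$ provided by \Cref{thm:edge DP b-matching arbitrary edge order stream} produces a $\left((2+\eta)(1+12\eta),\, O(\log^2 n/(\eta\eps))\right)$-approximation against $\OPT(b)$. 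Rescaling $\eta$ by a constant factor (and absorbing the constant into the $\Omega$ bound on $b$) converts this back to the desired multiplicative factor $(2+\eta)$ with the stated additive error.

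Privacy requires no new argument: the algorithm is exactly the one from \Cref{thm:edge DP b-matching arbitrary edge order stream}, run with a different choice of benchmark parameter, so it inherits the $\eps$-edge DP guarantee in the arbitrary edge-order continual release model. The high-probability utility bound also transfers directly, since the transformation from the $b'$-matching guarantee to the $b$-matching guarantee is a deterministic polytope inclusion statement that holds for every realization of the algorithm's randomness.

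The main obstacle is simply bookkeeping: verifying that the specific constants in the preconditions of \Cref{thm:edge DP b-matching arbitrary edge order stream} (namely the $\frac{(1+\eta)^2}{1-\eta}$ coefficient and the $O(\log^2 n/(\eta^2 \eps))$ slack) are compatible with the reparameterization $b' = b/(1+12\eta)$ for the stated $b = \Omega(\log^2 n/(\eta^4 \eps))$, and confirming that the additive error in the final bound is $O(\log^2 n/(\eta\eps))$ rather than a worse polynomial in $1/\eta$ after the rescaling. No fundamentally new continual release machinery is required beyond what has already been developed for the bicriteria version.
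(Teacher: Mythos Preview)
Your proposal is correct and follows essentially the same approach as the paper. The paper's proof (in the Corollaries subsection of \Cref{sec:continual-release-edge-order}) simply says to execute the general algorithm from \Cref{thm:edge DP b-matching arbitrary edge order stream} with $b' = \frac{b}{1+O(\eta)}$ and declares the remainder identical to the proof of \Cref{cor:approx-b-match}; you have spelled out exactly that argument, including the specific choice $b' = b/(1+12\eta)$, the LP-polytope scaling step, and the bookkeeping that recovers the $\Omega(\log^2 n/(\eta^4\eps))$ requirement on $b$.
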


\paragraph{Node-DP for Arbitrary Edge Order Streams.}
{Implementing the node-DP arboricity sparsifier of \Cref{sec:node DP matching} in the continual release model leads to}
\Cref{thm:node DP matching arbitrary edge order stream remove public bound},
our main continual release node-DP result restated below.

\begin{restatable}{theorem}{nodedpcr}\label{thm:node DP matching arbitrary edge order stream remove public bound}
    \sloppy Let $\eta\in (0, 1]$,
    $\eps \in (0, 1)$, 
    $\alpha$ be the arboricity of the input graph,
    and
    \smash{$
        b = \Omega\left( \frac{\alpha\log^2(n)}{\eta^2\eps} + \frac{\log^3(n)}{\eta^2\eps^2} + \frac{b'\log^2(n)}{\eta\eps}\right)\,.
    $}
    There is an $\eps$-node DP algorithm 
    for the arbitrary edge-order continual release model
    that outputs a sequence of implicit $b$-matchings in the billboard model
    such that,
    with high probability,
    each $b$-matching has the size of a \smash{$\left( 2+\eta, O\left( \frac{b'\log^2(n)}{\eta\eps} \right) \right)$}-approximate maximum $b'$-matching.
\end{restatable}
Similar to the edge-DP case,
the case of $b' = 1$ leads to a special case with approximation guarantees with respect to the maximum 1-matching size.
\begin{corollary}\label{cor:node DP matching arbitrary edge order stream remove public bound}
    Let $\eta\in (0, 1]$,
    $\eps \in (0, 1)$, 
    $\alpha$ be the arboricity of the input graph,
    and \mbox{$b = \Omega\left( \frac{\alpha\log^2(n)}{\eta^2\eps} + \frac{\log^3(n)}{\eta^2\eps^2} \right)$}.
    There is an $\eps$-node DP algorithm 
    for the arbitrary edge-order continual release model
    that outputs a sequence of implicit $b$-matchings in the billboard model
    such that,
    with high probability,
    each $b$-matching has the size of a $\left( 2+\eta, O\left( \frac{b'\log^2(n)}{\eta\eps} \right) \right)$-approximate maximum $1$-matching.
\end{corollary}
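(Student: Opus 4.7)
The plan is to derive \Cref{cor:node DP matching arbitrary edge order stream remove public bound} as a direct specialization of \Cref{thm:node DP matching arbitrary edge order stream remove public bound} by setting the benchmark parameter $b'=1$. Since a maximum $1$-matching is exactly a maximum matching, the approximation guarantee of the general theorem — namely a $\left( 2+\eta, O\left( \frac{b'\log^2(n)}{\eta\eps} \right) \right)$-approximation with respect to the maximum $b'$-matching — immediately becomes a $\left( 2+\eta, O\left( \frac{\log^2(n)}{\eta\eps} \right) \right)$-approximation with respect to the maximum $1$-matching once we substitute $b'=1$.

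Next I would verify that the hypothesis on $b$ in the general theorem reduces to the hypothesis stated in the corollary. The general theorem requires
\[
    b = \Omega\left( \frac{\alpha\log^2(n)}{\eta^2\eps} + \frac{\log^3(n)}{\eta^2\eps^2} + \frac{b'\log^2(n)}{\eta\eps}\right).
\]
Substituting $b'=1$ gives the third term $\frac{\log^2(n)}{\eta\eps}$, which is dominated by $\frac{\log^3(n)}{\eta^2\eps^2}$ under the standing assumptions $\eta, \eps \in (0, 1]$ (indeed the ratio is at least $\eta\eps\log(n)$). Hence the hypothesis simplifies to $b=\Omega\left( \frac{\alpha\log^2(n)}{\eta^2\eps} + \frac{\log^3(n)}{\eta^2\eps^2} \right)$, matching the corollary.

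Since the privacy guarantee of \Cref{thm:node DP matching arbitrary edge order stream remove public bound} does not depend on $b'$, the resulting algorithm remains $\eps$-node DP in the arbitrary edge-order continual release model and still outputs a sequence of implicit $b$-matchings in the billboard model. Taking a union bound over the (polynomially many) updates preserves the high-probability utility statement. There is essentially no obstacle here since everything follows from specialization of the general theorem; the only care required is the routine verification that the $b'$-dependent term is absorbed into one of the existing terms in the lower bound on $b$, which we have just checked.
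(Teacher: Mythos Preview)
Your proposal is correct and matches the paper's own treatment: the paper explicitly states that this corollary is obtained from \Cref{thm:node DP matching arbitrary edge order stream remove public bound} by taking $b'=1$, and your verification that the $\frac{b'\log^2(n)}{\eta\eps}$ term is absorbed into $\frac{\log^3(n)}{\eta^2\eps^2}$ is exactly the routine check needed. The extra union bound you mention is unnecessary, since the high-probability guarantee over all updates is already part of the theorem's conclusion.
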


\paragraph{Adjacency-List Order Streams.}
In the adjacency-list order insertion streams, nodes arrive in an arbitrary order,
and once a node arrives,
all edges adjacent to the node arrive in an arbitrary order. 
Remark that in the adjacency-list order stream,
all edges arrive twice, once per endpoint in the stream. 
\begin{restatable}{theorem}{adjacencycontinualrelease} \label{thm:adjacency-continual-release}
    For $\eps \in (0, 1)$ and $b=O(\log(n)/\eps)$, 
    {\Cref{alg:adj-list-continual}} is an $\eps$-edge DP algorithm in the arbitrary adjacency-list continual release 
    model that outputs a sequence of implicit $b$-matchings such that,
    with high probability,
    each $b$-matching
    {has the same guarantee as~\Cref{thm:billboard-bprime-main} with the addition of an additive error of $1$ per update.}
\end{restatable}

\paragraph{Organization.}
In \Cref{sec:continual-release-edge-order},
we present our $\eps$-edge DP arbitrary edge-order continual release algorithm
as well as the corollaries in \Cref{sec:arbitrary edge order stream:corollaries}.
Then,
we achieve stronger $\eps$-node DP guarantees in \Cref{sec:node-dp-continual-release-edge-order}.
Finally,
we present the details of our adjacency-list continual release algorithm in \Cref{sec:continual-release-adjacency-list}.

\subsection{Edge-DP Arbitrary Edge-Order Continual Release Implicit \texorpdfstring{$b$}{b}-Matchings}\label{sec:continual-release-edge-order}
We begin with the pseudocode of {the} $\eps$-edge DP algorithm in \Cref{alg:edge DP matching arbitrary edge order stream}.
Then,
we prove privacy and utility separately in \Cref{sec:arbitrary edge order stream:privacy,sec:arbitrary edge order stream:utility},
respectively.
The proof of privacy is non-trivial since we cannot simply apply composition.
Instead,
we directly argue by the definition of privacy.

{
Our results use a variant of the standard SVT tool that allows us to answer ``above'' $c$ times,
which we recall in \Cref{apx:privacy-tools} for completeness.
}

\begin{algorithm2e}[hbp!]
\caption{Arbitrary Edge-Order Continual Release Edge-DP Implicit $b$-Matching}\label{alg:edge DP matching arbitrary edge order stream}
\DontPrintSemicolon
\SetKwInput{KwInput}{Input}
\SetKwInput{KwOutput}{Output}
\SetKwProg{MyClass}{Class}{}{}
\SetKwFunction{ClassName}{EdgePrivateSublinearMatching}
\SetKwProg{Fn}{Function}{}{}
\SetKwFunction{FnProcessUpdate}{ProcessUpdate}

\KwInput{Arbitrary edge-order stream $S$, 
privacy parameter $\eps > 0$, 
approximation parameter $\varrho\in (0, 1]$,
}
\KwOutput{An $\eps$-node differentially private implicit $b$-matching after each time stamp.}
  $j_1 \gets 0$ \\
  $c \gets \log_{1+\varrho}(n)$ \\    
  Initialize class $\textsc{EstimateSVT}\gets \SVT(\nicefrac\eps3, 1, c)$ \qquad (\Cref{alg:sparse vector technique}) \label{match:initialize estimate svt} \\
  \BlankLine
  \For{edge $e_t\in S$} {
    \BlankLine
    \tcp{\color{blue} $\nu(G_t)$ denotes the size of the maximum matching of the dynamic graph $G_t$ at time $t$.}
    \While{\textsc{EstimateSVT.ProcessQuery}$(\nu(G_t), (1+\varrho)^{j_t})$ is ``above''} {\label{alg:edge DP matching arbitrary edge order:svt-while}
        $j_t \leftarrow j_t + 1$\\
    } 
    \BlankLine
    \If {$t=1$ or $j_t > j_{t-1}$} {
        $\solution \gets$ ($\nicefrac\eps{3c}$)-edge DP implicit $b$-matching using \Cref{alg:b-matching} on input $G_t$ (\Cref{thm:billboard-bprime-main}) \label{alg:edge DP matching arbitrary edge order:update-solution} \\
        $\estimate \gets$ ($\nicefrac\eps{3c}$)-edge DP estimate of the size of the current maximum $b'$-matching \label{alg:edge DP matching arbitrary edge order:update-estimate}
    }
    \BlankLine
    $j_{t+1} \gets j_t$ \\
    Output solution, estimate
  }
\end{algorithm2e}

\subsubsection{Privacy Proof}\label{sec:arbitrary edge order stream:privacy}
\begin{lemma}\label{lem:edge DP matching arbitrary edge order stream privacy}
    \Cref{alg:edge DP matching arbitrary edge order stream} is $\eps$-edge DP.
\end{lemma}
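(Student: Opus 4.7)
\medskip\noindent\textbf{Proof plan.} The plan is to decompose the transcript produced by \Cref{alg:edge DP matching arbitrary edge order stream} into three data-dependent components and to budget $\eps/3$ of the privacy loss to each. The three components are:
\begin{enumerate*}[(i)]
    \item the sequence of queries and responses issued to \textsc{EstimateSVT};
    \item the (at most $c$) implicit $b$-matching solutions computed on line \ref{alg:edge DP matching arbitrary edge order:update-solution}; and
    \item the (at most $c$) matching-size estimates computed on line \ref{alg:edge DP matching arbitrary edge order:update-estimate}.
\end{enumerate*}
These are the only computations touching the private stream, so once each is shown to be DP the remainder of the algorithm (bookkeeping of $j_t$, the deterministic trigger $j_t > j_{t-1}$, and output assembly) is post-processing and free.

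For component (i), I would invoke the generalized SVT guarantee (see \Cref{apx:privacy-tools}): \textsc{EstimateSVT} is instantiated with budget $\eps/3$, sensitivity $1$, and is permitted $c=\log_{1+\varrho}(n)$ ``above'' answers. The queries are all of the form $\nu(G_t)$, whose edge-sensitivity is $1$ with respect to edge-neighboring streams (adding or removing a single edge changes the maximum matching by at most $1$ at every timestep). The SVT guarantee therefore ensures that the entire transcript of queries and ``above''/``below'' responses is $\eps/3$-edge DP, regardless of how the queries are chosen adaptively from past outputs. Crucially, this guarantee also deterministically bounds the number of ``above'' answers by $c$, which is the trigger for components (ii) and (iii).

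For components (ii) and (iii), each recomputation runs the static $(\eps/(3c))$-edge DP algorithm of \Cref{thm:billboard-bprime-main} (respectively, an $(\eps/(3c))$-edge DP matching-size estimator, easily obtained via Laplace noise since $\nu$ has sensitivity $1$) on the current snapshot $G_t$. Edge-neighboring streams induce edge-neighboring snapshots at every $t$, so each invocation is $(\eps/(3c))$-edge DP with respect to the stream. Because recomputations are only triggered when the SVT answers ``above,'' there are at most $c$ invocations of each of (ii) and (iii) over the entire run. By adaptive composition (\Cref{lem:concurrent-composition}) applied on top of the SVT output, components (ii) and (iii) contribute at most $c \cdot \eps/(3c) = \eps/3$ each, and concurrent composition with (i) yields a total budget of $\eps/3 + \eps/3 + \eps/3 = \eps$.

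The main subtlety, and the step I expect to require the most care, is that the \emph{number} of recomputations is itself a function of the private stream, so a naive ``sum over actual recomputations'' composition argument is not directly valid. The clean way to handle this is to view the algorithm as: first run \textsc{EstimateSVT}, whose transcript is $\eps/3$-edge DP and which deterministically emits at most $c$ ``above'' tokens; then, conditioned on this transcript, the remainder is a fixed mechanism that performs \emph{exactly} $c$ invocations of each subroutine (with the understanding that unused invocations are ignored), each $(\eps/(3c))$-DP, to which standard adaptive composition applies. This packaging sidesteps the data-dependent stopping issue and recovers the claimed $\eps$-edge DP bound.
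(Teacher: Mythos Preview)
Your proposal is correct and takes essentially the same approach as the paper: both split the $\eps$ budget three ways ($\eps/3$ to the SVT instance, $\eps/3$ to the at most $c$ solution recomputations, $\eps/3$ to the at most $c$ estimate recomputations) and rely on the fact that the SVT transcript deterministically caps the number of recomputations at $c$. The only cosmetic difference is that the paper writes out the chain-rule factorization of the output distribution and bounds each factor directly (explicitly noting that all but $c$ of the solution/estimate ratios equal $1$), whereas you package the same calculation as an appeal to adaptive/concurrent composition with your padding-to-$c$ trick; your handling of the data-dependent trigger is exactly the subtlety the paper's direct argument is designed to address.
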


\begin{proof}[Proof of \Cref{lem:edge DP matching arbitrary edge order stream privacy}]
    Let $(j_t, \solution_t, \estimate_t)_{t=1}^T$ denote the random sequence of outputs from \Cref{alg:edge DP matching arbitrary edge order stream} corresponding to
    the SVT estimate of the maximum matching in the current graph $G_t$,
    the implicit $b$-matching solution for timestamp $t$,
    and the estimate of the largest matching of $G_t$ contained within $\solution_t$.
    Similarly,
    let $(\widetilde j_t, \widetilde\solution_t, \widetilde\estimate_t)_{t=1}^T$ be the random sequence of outputs on an edge-neighboring stream.
    We may assume that $(J_t)_t$ is a non-decreasing non-negative integer sequence bounded above by $c = a\log(n)/\eta$.
    
    Fix any deterministic sequence $(J_t, B_t, M_t)_{t=1}^T$ of possible outputs.
    We have
    \begin{align*}
        &\Pr\left[ (j_t, \solution_t, \estimate_t)_{t=1}^T = (J_t, B_t, M_t)_{t=1}^T \right] \\
        &= \prod_{t=1}^T \Pr[j_t=J_t\mid <t]
        \cdot \Pr[\solution_t=B_t\mid j_t=J_t, <t] \\
        &\qquad \cdot \Pr[\estimate_t=M_t\mid \solution_t=B_t, j_t=J_t, <t]\,.
    \end{align*}
    Here the notation $<t$ is a shorthand that denotes the event that $(j_\tau, \solution_\tau, \estimate_\tau)_{\tau=1}^{t-1} = (J_\tau, B_\tau, M_\tau)_{\tau=1}^{t-1}$ for $t\geq 2$
    and the trivial event of probability $1$ if $t=1$.
    Thus our goal is to bound the product of ratios
    \begin{align*}
        &\left( \prod_{t=1}^T  \frac{\Pr[j_t=J_t\mid <t]}{\Pr[\tilde j_t=J_t\mid <t]} \right)
        \cdot \left( \prod_{t=1}^T \frac{\Pr[\solution_t=B_t\mid j_t=J_t, <t]}{\Pr[\tilde \solution_t=B_t\mid \tilde j_t=J_t, \tilde<t]} \right) \\
        &\qquad \cdot \left( \prod_{t=1}^T \frac{\Pr[\estimate_t=M_t\mid \solution_t=B_t, j_t=J_t, <t]}{\Pr[\tilde \estimate_t=M_t\mid \tilde \solution_t=B_t, \tilde j_t=J_t, \tilde<t]} \right)\,.
    \end{align*}
    Now,
    conditioned on the event that $j_{t-1} = J_{t-1}$, it holds that
    $j_t$ is independent of $\solution_{t-1}$ and $\estimate_{t-1}$.
    Hence the first product is at most $e^{\eps/3}$ by the privacy guarantees of the SVT (\Cref{thm:sparse vector technique}).
    For the second and third products,
    remark that given $j_t = J_{t-1} = \tilde j_{t}$,
    then $\solution_t = B_{t-1} = \widetilde \solution_{t}$ with probability $1$ and similarly for $\estimate_t$.
    Since $(J_t)_t$ is a non-decreasing non-negative integer sequence bounded above by $c = a\log(n)/\eta$,
    at most $c = a\log(n)/\eta$ of the ratios from the second and third products are not 1. 
    We can bound each of these ratios by $e^{\eps/3c}$ using the individual privacy guarantees of \Cref{alg:b-matching} (\Cref{thm:billboard-main})
    and the Laplace mechanism.
    This concludes the proof.
\end{proof}

\subsubsection{Utility Proof}\label{sec:arbitrary edge order stream:utility}
\begin{lemma}\label{lem:edge DP matching arbitrary edge order stream utility}\sloppy
    Let $\eps \in (0, 1)$,
    $\varrho, \eta \in (0, 1)$,
    and \mbox{$b\geq \frac{(1+\eta)^2}{1-\eta} \cdot b' + O\left(\frac{\log^2(n)}{\varrho \eta^2 \eps}\right)$}.
    \Cref{alg:edge DP matching arbitrary edge order stream} outputs a sequence of implicit $b$-matchings
    with the following guarantee with probability at least $1-1/\poly(n)$:
    each implicit solution contains a $\left( 2+\varrho, O\left( \frac{\log^2(n)}{\varrho\eps} \right) \right)$-approximate maximum matching.
\end{lemma}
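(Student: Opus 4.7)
The plan is to condition on a high-probability event in which (i) every invocation of the \SVT in \Cref{match:initialize estimate svt} succeeds (i.e., its noise is bounded), and (ii) every invocation of \Cref{alg:b-matching} inside \Cref{alg:edge DP matching arbitrary edge order:update-solution} succeeds in outputting a $b$-matching whose size is at least that of a $2$-approximate maximum $b'$-matching. There are at most $c = \log_{1+\varrho}(n) = O(\log(n)/\varrho)$ updates by monotonicity (the counter $j_t$ is non-decreasing and bounded by $\log_{1+\varrho}(n)$ since the max matching size is at most $n$). Hence a union bound over the $c$ calls, each succeeding w.p.\ $1-1/\poly(n)$, suffices. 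The \SVT is run with $c$ allowed ``above'' answers and privacy budget $\eps/3$, so standard concentration gives that \emph{every} SVT output is correct up to additive error $E = O(c\log(n)/\eps) = O(\log^2(n)/(\varrho\eps))$ with high probability.

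Next, I would show the approximation guarantee using the following invariant. Let $t' \leq t$ denote the most recent update time. On the conditioning event, the SVT triggered at $t'$ certifies $\nu(G_{t'}) \geq (1+\varrho)^{j_{t'}-1} - E$, and the fact that SVT did not trigger between $t'+1$ and $t$ certifies $\nu(G_t) \leq (1+\varrho)^{j_{t'}} + E$. By \Cref{thm:billboard-bprime-main} applied at $t'$ with privacy parameter $\eps/(3c)$ — which only inflates the additive slack in $b$ by a factor of $O(c) = O(\log(n)/\varrho)$, matching our hypothesis on $b$ — the released implicit solution at $t'$ has size at least $\nu_{b'}(G_{t'})/2 \geq \nu(G_{t'})/2$. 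Because the stream is insertion-only, the implicit $b$-matching from $t'$ remains a valid $b$-matching in $G_t$, and its intersection with $E(G_t)$ is a superset of its intersection with $E(G_{t'})$, so its size does not decrease over time.

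Chaining the two bounds gives
\begin{align*}
  \text{sol. size at } t
  &\geq \tfrac{1}{2}\nu(G_{t'})
  \geq \tfrac{1}{2}\!\left((1+\varrho)^{j_{t'}-1} - E\right) \\
  &\geq \tfrac{1}{2(1+\varrho)}\!\left(\nu(G_t) - E\right) - \tfrac{E}{2}
  \geq \tfrac{\nu(G_t)}{2+\varrho'} - O(E),
\end{align*}
where I would choose the hidden constant in the hypothesis on $b$ and adjust $\varrho' = O(\varrho)$ so that the resulting multiplicative approximation is $2+\varrho$ and the additive error is $O(\log^2(n)/(\varrho\eps))$. The $t=1$ case requires $\nu(G_1)\leq 1$, so the additive error trivially covers it.

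The main obstacle I anticipate is keeping the accounting clean: \Cref{thm:billboard-bprime-main} compares to the max $b'$-matching (not the max $1$-matching), and the hypothesis $b \geq \frac{(1+\eta)^2}{1-\eta}b' + O(\log^2(n)/(\varrho\eta^2\eps))$ needs to absorb both the privacy amplification factor $c$ from composing across update times and the $\eta$-dependence in the static guarantee. I would handle this by taking $b'=1$ in the ratio to max $1$-matching (so $\nu_{b'} \geq \nu$), fixing $\eta$ to be a small constant absorbed into the $O(\cdot)$, and letting $\varrho$ control the continual-release blow-up; the additive slack needed per static call is $O(\log(n)/(\eta^2(\eps/c)))=O(\log^2(n)/(\varrho\eta^2\eps))$, which matches the hypothesis on $b$.
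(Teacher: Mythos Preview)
Your proposal is correct and follows essentially the same approach as the paper's proof: the bound on $b$ comes from instantiating \Cref{thm:billboard-bprime-main} with privacy parameter $\eps/(3c)$ for $c=\log_{1+\varrho}(n)$, and the approximation guarantee follows because the SVT certifies that $\nu(G_t)$ has not grown by more than a $(1+\varrho)$ factor (plus additive SVT error $O(\log^2(n)/(\varrho\eps))$) since the last recomputation. Your version is simply more explicit about the chaining of inequalities and the union bound; the paper's proof compresses this into two sentences.
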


\begin{proof}
    The bound on $b$ follow directly from \Cref{thm:billboard-bprime-main}
    and the fact that we require each call to \Cref{alg:b-matching} to satisfy ($\nicefrac\eps{3c}$)-privacy
    for $c=\log_{1+\varrho}(n)$.

    To see the approximation guarantee,
    we first observe that this certainly holds at timestamps when $\solution$ is updated.
    By the guarantees of SVT,
    $(1+\varrho)^{j_t}$ is a $(1+\varrho, O(\log^2(n)/\eps\varrho))$-approximate estimate of size of the current maximum matching.
    Hence we incur at most this much error in between updates.
\end{proof}

Combining \Cref{lem:edge DP matching arbitrary edge order stream privacy} and \Cref{lem:edge DP matching arbitrary edge order stream utility} yields the proof of \Cref{thm:edge DP matching arbitrary edge order stream}.
The pseudocode for \Cref{thm:edge DP matching arbitrary edge order stream} is given in \Cref{alg:edge DP matching arbitrary edge order stream}.

\subsubsection{Corollaries}\label{sec:arbitrary edge order stream:corollaries}
As a special case of our general edge-DP algorithm (\Cref{thm:edge DP b-matching arbitrary edge order stream}) when $\eta=\nicefrac12$ and $b'=1$,
we recover the guarantees stated in \Cref{thm:edge DP matching arbitrary edge order stream}.

In order to obtain a non-bicriteria approximation when $b = \Omega\left( \frac{\log^2(n)}{\eta^4\eps} \right)$,
we use the same technique as the static edge-DP algorithm of executing the general algorithm (\Cref{thm:edge DP b-matching arbitrary edge order stream}) with $b' = \frac{b}{1+O(\eta)}$.
The proof is identical to that of \Cref{cor:approx-b-match}.
This yields a proof of \Cref{thm:edge DP b-matching arbitrary edge order stream no-bicriteria}.

\subsection{Node-DP Arbitrary Edge-Order Continual Release Implicit $b$-Matchings}\label{sec:node-dp-continual-release-edge-order}
We now describe the necessary modifications to adapt our edge DP continual release algorithm
to the challenging node DP setting.
{Similary to \Cref{sec:node DP matching},
we first present an algorithm assuming a public bound $\tilde \alpha$ on the arboricity.}
\begin{theorem}\label{thm:node DP matching arbitrary edge order stream}
    Fix $\eta\in (0, 1]$.
    Given a public bound $\tilde \alpha$ on the arboricity $\alpha$ of the input graph,
    \Cref{alg:node DP matching sparsifier arbitrary edge order stream} is an $\varepsilon$-node DP algorithm in the arbitrary order continual release model that outputs implicit $b$-matchings.
    Moreover,
    with probability at least $1-1/\poly(n)$,
    \begin{enumerate}[(i)]
        \item $b=O\left( \frac{\tilde \alpha\log^2(n)}{\eta^2 \eps} + \frac{b'\log^2(n)}{\eta\eps} \right)$ and 
        \item each implicit solution contains a $\left( 2+\eta, O\left( \frac{b'\log^2(n)}{\eta\eps} \right) \right)$-approximate maximum matching.
    \end{enumerate}
\end{theorem}

The main task is to implement the arboricity matching sparsifier (\Cref{thm:bounded degree matching sparsifier}) in an arbitrary edge-order stream.
Note that there is a natural total order of vertex pairs given by the arrival order of edges.
Thus we can simply discard edges where one of its endpoints has already seen more than some threshold $\Lambda$ incident edges (\Cref{alg:node DP matching arbitrary edge order:sparsifier} of \Cref{alg:node DP matching sparsifier arbitrary edge order stream}).

Let the sparsified graph $H$ be obtained from the input graph $G$ as described above.
Write $c=O(\log(n)/\eta)$ be the maximum SVT budget for ``above'' queries (\Cref{alg:sparse vector technique}).
Our algorithm proceeds as following for each edge:
Similar to \Cref{alg:edge DP matching arbitrary edge order:svt-while} of \Cref{alg:edge DP matching arbitrary edge order stream},
we use an SVT instance to check when to update our solution.
The only change is that $b'$-matching sensitivity is $b'$ for node-neighboring graphs.
At timestamps where we must update the solution,
we run a static $O(\nicefrac{\eps}{c\Lambda})$-edge DP implicit matching algorithm (\Cref{alg:b-matching}) on the sparsified graph $H$.
This is similar to \Cref{alg:edge DP matching arbitrary edge order:update-solution} of \Cref{alg:node DP matching sparsifier arbitrary edge order stream} except we need to execute the underlying algorithm with a smaller privacy parameter.
Finally,
we also estimate the size of the current maximum matching.
Once again,
this step is unchanged from \Cref{alg:edge DP matching arbitrary edge order:update-estimate} of \Cref{alg:edge DP matching arbitrary edge order stream},
with the slight exception that the sensitivity is now $b'$.
In particular,
we note that we only use the sparsified graph $H$ for updating the solution
and not for estimating the current maximum matching size.

\begin{algorithm2e}[htp!]
\caption{Arbitrary Edge-Order Continual Release Node-DP $b$-Matching Algorithm}\label{alg:node DP matching sparsifier arbitrary edge order stream}
\SetKwInput{KwInput}{Input}
\SetKwInput{KwOutput}{Output}
\KwInput{Arbitrary edge-order stream $S$, 
approximation parameter $\eta\in (0, 1]$,
public bound $\tilde \alpha > 0$.
}
    $\Lambda \gets 5(1+\nicefrac5\eta) 2\tilde \alpha + (b'-1)$ \\
    $H \gets (V, \varnothing)$ \\
    $d_v \gets 0$ for each vertex $v\in V$ \\
    $c\gets \log_{1+\eta}(n)$ \\
    \For{edge $e_t\in S$} {
        \If{$\perp\neq e_t=\{u, v\}$ and $\max(d_u, d_v) < \Lambda$}{\label{alg:node DP matching arbitrary edge order:sparsifier}
            Increment $d_u, d_v$ \\
            $E[H]\gets E[H]\cup \set{e_t}$ \\
        }
        \BlankLine
        \tcp{\color{blue} Check if $\nu(G_t)$ has significantly increased using SVT with sensitivity $b'$
        and total privacy budget $\nicefrac\eps3$ (\Cref{alg:edge DP matching arbitrary edge order:svt-while} of \Cref{alg:edge DP matching arbitrary edge order stream})}
        \tcp{\color{blue} If SVT is ``above'',
        compute $\solution$ with respect to $H_t$ with privacy budget $\nicefrac{\eps}{6c\Lambda}$ (\Cref{alg:edge DP matching arbitrary edge order:update-solution} of \Cref{alg:edge DP matching arbitrary edge order stream})}
        \tcp{\color{blue} If SVT is ``above'',
        compute $\estimate$ with respect to $G_t$ with privacy budget $\nicefrac{\eps}{3c}$ (\Cref{alg:edge DP matching arbitrary edge order:update-estimate} of \Cref{alg:edge DP matching arbitrary edge order stream})}
        Output $\solution, \estimate$
    }
\end{algorithm2e}

We sketch the proof of guarantees for \Cref{alg:node DP matching sparsifier arbitrary edge order stream}.
\begin{proof}[Sketch Proof of \Cref{thm:node DP matching arbitrary edge order stream}]
    The privacy proof is identical to that of \Cref{thm:edge DP b-matching arbitrary edge order stream}.
    The utility guarantees follow similarly,
    with the exception that we use the approximation guarantees of \Cref{thm:node-dp-b-matching} rather than \Cref{thm:billboard-bprime-main}.
\end{proof}

\paragraph{Public Arboricity Bound.}
Finally,
we describe the modifications from \Cref{alg:node DP matching sparsifier arbitrary edge order stream} to remove the assumption on a public bound $\tilde \alpha$.
First,
we compute $O(\log(n))$ sparsified graphs $H_k$ corresponding to setting the public parameter $\tilde \alpha = 2^k, k=1, 2, \dots, \ceil{\log_2(n)}$.
The SVT to estimate the current maximum matching size remains the same
(we do not compute an estimate for each $k$).
Next,
we compute $O(\log(n))$ implicit $b$-matchings,
one for each sparsified graph $H_k$.
{
We also privately compute an upper bound $\tilde\alpha_t$ of the current arboricity of $G_t$
and choose the implicit solution with index $k^\star$ such that $2^{k^\star-1}<\tilde\alpha_t\leq 2^{k^\star}$.}
Accounting for the error in estimating the arboricity yields a proof of \Cref{thm:node DP matching arbitrary edge order stream remove public bound}.

\subsection{Adjacency-List Order Continual Release Implicit Matchings}\label{sec:continual-release-adjacency-list}
In this section, we give $\eps$-DP algorithms for implicit matching in the
continual release model with adjacency-list order streams. 
{Our results for another stream type (arbitrary edge-order streams) are presented in \Cref{sec:continual-release-edge-order}.}
The adjacency-list order stream ensures 
each node that arrives in this model will be followed by its edges where the edges arrive in an arbitrary order.
Our algorithm is a straightforward implementation of~\cref{alg:b-matching} in the continual release model.

In particular, the nodes arrive in an arbitrary order and when a node arrives, it waits until all of its
edges arrive and then performs the same proposal and response procedure as given in~\cref{alg:b-matching}. 
Since each node can contribute at most $1$ to the matching size, we have an additional additive error of 
$1$ in the maximal matching size in the continual release setting.

\paragraph{Detailed Algorithm.} 
We give our modified adjacency-list continual release algorithm in~\cref{alg:adj-list-continual}.
This algorithm takes a stream $S$ of updates consisting of node insertions and edge insertions. 
The $i$-th update in $S$ is denoted $u_i$ and it can either be a node update $v$ or an edge update $e_i$.
In adjacency-list order streams, each node update is guaranteed to be followed by all edges adjacent to it;
these edges arrive in an arbitrary order. The high level idea of our algorithm is for each node $v$ to implicitly
announce the nodes it is matched to after we have seen all of the edges adjacent to it. Since each node can add at most $1$ 
additional edge to any matching, waiting for all edges to arrive for each node update will incur an additive error of at most $1$.
After we have seen all edges adjacent to $v$ (more precisely, when we see the next node update), we run the exact same proposal
procedure as given in~\cref{alg:b-matching}. 

We first set the parameters used in our algorithm the same way that the parameters were set in~\cref{alg:b-matching}
in~\cref{cradj:eta,cradj:eps}. Then, we iterate through all nodes (\cref{cradj:all-nodes}) 
to determine the noisy cutoff $\tilde{b}(u)$ for each node $u$ (\cref{cradj:noisy-cutoff}). 
Then, we initiate the set $K$ (initially empty) to be the set of adjacent edges for the most recent node update (\cref{cradj:k}).
The most recent node update is stored in $w$ (\cref{cradj:w}). Then, $Q$ (initially an empty set) stores all nodes that we have 
seen so far; that is, $Q$ is used to determine whether an edge adjacent to the most recent node update $w$ is adjacent to 
a node that appears earlier in the stream (\cref{cradj:q}).

For each update as it appears in the order of the stream (\cref{cradj:stream}), we first iterate through each subgraph index
(\cref{cradj:subgraph-index}) to flip a coin with appropriate probability to determine whether the edge is included in 
the subgraph with index $r$. This procedure is equivalent to~\cref{b-matching:r-iter,b-matching:r-coin} in~\cref{alg:b-matching}. 
Then, we check whether the update is a node update (\cref{cradj:node-update}). If it is a node update,
then we add $v$ to $Q$ (\cref{cradj:v-to-q}), and if $w \neq \bot$, there was a previous node 
update stored in $w$ (\cref{cradj:previous-node}) and we process this node. We first check 
all nodes to see whether they have reached their matching capacity (\cref{cradj:check-capacity,cradj:noisy-threshold,cradj:noisy-capacity,cradj:reach-capacity})
using an identical procedure to~\cref{b-matching:check-node-for,b-matching:check-node-noise,b-matching:check-node-threshold,b-matching:check-node-transcript} in
\cref{alg:b-matching}. 

Then, if $w$ has not satisfied the matching condition, then we iterate through all subgraph indices to find the smallest index that does not 
cause the matching for $w$ to exceed $b$. This procedure (\cref{cradj:matching-condition-not-satisfied,cradj:round,cradj:wr,cradj:approx-wr,cradj:approx-m,cradj:smallest-r,cradj:release-r}) 
is identical to \cref{alg-b-matching:threshold,alg-b-matching:round,b-matching:wr,b-matching:approx-wr,b-matching:approx-m,b-matching:smallest-r,b-matching:release-r} 
in~\cref{alg:b-matching} except for~\cref{cradj:wr}. The only difference between~\cref{cradj:wr} in~\cref{alg:adj-list-continual} and~\cref{b-matching:wr} in~\cref{alg:b-matching}
is that we check whether $v$ appears earlier in the ordering by checking $v \not\in Q$ and we check the coin flip for edge $\{w, v\}$ by checking
$c(i, r) = \textsc{Heads}$. Thus these two lines are functionally identical. After we have updated the matching with node $w$'s match, we 
set $K$ to empty (\cref{cradj:set-k-empty}) and update $w$ to the new node $v$ (\cref{cradj:update-w}).

Finally, if $u_i$ is instead an edge update $e_i$ (\cref{cradj:edge-update}), we add $e_i$ to $K$ to maintain 
the adjacency list of the most recent node update (\cref{cradj:add-to-k}). 

\SetKwInput{KwInput}{Input}
\SetKwInput{KwOutput}{Output}

\begin{algorithm2e}[htp!]
\caption{Adjacency-List Order Continual Release Matching}\label{alg:adj-list-continual}

\KwInput{Arbitrary order adjacency-list stream $S$, privacy parameter $\eps > 0$, matching parameter $b=\Omega(\log{n}/\eps)$.}
\KwOutput{An $\eps$-locally edge differentially private implicit $b$-matching.}

\BlankLine
$\eta'\leftarrow \eta/5$\label{cradj:eta}\\
$\eps'\leftarrow \eps/(2+1/\eta')$\label{cradj:eps}\\

\For{each node $u\in [n]$ in order}{\label{cradj:all-nodes}
    $\tilde{b}(u)\leftarrow b-20\log(n)/\eps'+\text{Lap}(4/\eps')$\label{cradj:noisy-cutoff}
}

$K \leftarrow \varnothing$\label{cradj:k}\\
$w \leftarrow \bot$\label{cradj:w}\\
$Q \leftarrow \varnothing$\label{cradj:q}\\
\For{every update $u_i \in S$}{\label{cradj:stream}
    \For{each subgraph index $r=0,\ldots,\lceil\log_{1+\eta'}(n)\rceil$}{\label{cradj:subgraph-index}
        Flip and release coin $c(i,r)$ which lands \textsc{heads} with probability $p_r=(1+\eta')^{-r}$
    }

    \If{$u_i$ is a node update (denote this node as $v$)}{\label{cradj:node-update}
        $Q \leftarrow Q \cup \{v\}$\label{cradj:v-to-q}\\
        \If{$w \neq \bot$}{\label{cradj:previous-node}
                \For{each node $u\in V$ which has not satisfied the matching condition}{\label{cradj:check-capacity}
                    $\nu_i(u)\leftarrow\text{Lap}(8/\eps')$\label{cradj:noisy-capacity}\\
                    \If{$M_i(u)+\nu_i(u)\ge \tilde{b}(u)$}{\label{cradj:noisy-threshold}
                        Output to transcript: node $u$ has reached their matching capacity\label{cradj:reach-capacity}
                    }
                }

            \If{$w$ has not satisfied the matching condition}{\label{cradj:matching-condition-not-satisfied}
                    \For{subgraph index $r=0,\ldots,\lceil\log_{1+\eta'}(n)\rceil$}{\label{cradj:round}
                        $W_r(w)=\{v:v\text{ active} \wedge v\in e_i \text{ where } e_i\in K \wedge v \not\in Q \wedge c(i,r)=\textsc{Heads}\}$\label{cradj:wr}\\
                        $\widetilde{W}_r(v)=|W_r(v)|+\text{Lap}(2/\eps')$\label{cradj:approx-wr}
                    }
                    $\widetilde{M}_i(v)=M_i(v)+\text{Lap}(2/\eps')$\label{cradj:approx-m}\\
                    $v$ computes smallest $r$ so that $\widetilde{M}_i(v)+\widetilde{W}_r(v)+c\log(n)/\eps' \le b$, and matches with neighbors in $W_r$\label{cradj:smallest-r}\\
                    $v$ \textbf{releases} $r$\label{cradj:release-r}
                }
            }
            $K \leftarrow \varnothing$\label{cradj:set-k-empty}\\
            $w \leftarrow v$\label{cradj:update-w}
        }

    \If{$u_i$ is an edge update $e_i$}{\label{cradj:edge-update}
        $K \leftarrow K \cup \{e_i\}$\label{cradj:add-to-k}
    }
}
\end{algorithm2e}

\subsubsection{Proof of \texorpdfstring{\Cref{thm:adjacency-continual-release}}{Theorem}}
\begin{proof}%
    We first prove that our continual release algorithm is $\eps$-DP on the vector of outputs. First, our algorithm only outputs 
    a new output each time a node update arrives. For every edge update, the algorithm outputs the same outputs as the last time 
    the algorithm outputted a new output for a node update. \cref{alg:adj-list-continual} implements all of the local randomizers used in~\cref{alg:b-matching}
    in the following way. The order of the nodes that propose is given by the order of the node updates. Coins for edges are flipped in the same way as in~\cref{alg:b-matching},
    the threshold for the matching condition is determined via an identical procedure, and finally, the proposal process is identical to~\cref{alg:b-matching}.
    Hence, the same set of local randomizers can be implemented as in~\cref{alg:b-matching} and the continual release algorithm is $\eps$-DP via concurrent composition.

    The approximation proof also follows from the approximation guarantee for~\cref{alg:b-matching} since the proposal procedure is identical except for the additive 
    error. In the continual release model, there is an additive error of at most $1$ since for every edge update after a new node update, the node could be matched
    to an initial new edge update but is not matched until the final edge update is shown for that node. Since each node contributes at most $1$ to a matching, 
    the additive error is at most $1$ for each update.
\end{proof}

\section{Conclusion and Future Work}
In this work, we develop a near-complete understanding of differentially private maximum matching across multiple privacy models. We first show that \emph{explicit} private algorithms are inherently limited by establishing strong impossibility results via a symmetry-based lower-bound framework, motivating the move to \emph{implicit} representations. Building on this perspective, we introduce the Public Vertex Subset Mechanism (PVSM), enabling locally decodable implicit matchings in the billboard model and yielding tight (bi-criteria) guarantees for implicit matching and $b$-matching, along with efficient implementations in local edge-DP (including polylogarithmic-round variants). For node-DP, we provide the first arboricity-based DP sparsifiers and use them to obtain improved node-private algorithms for matching and related covering problems, while also proving that publicly releasing such sparsifiers is impossible in general. Finally, we strengthen guarantees for bipartite node-DP matching and extend the techniques to the continual release setting under both edge- and node-privacy, broadening the applicability of the results.

Several directions remain open for future work. A natural extension is to develop equally comprehensive results for \emph{weighted} matchings and more general allocation problems. On the techniques side, PVSM-style locally decodable selection mechanisms may apply to a wider class of private graph problems beyond matching. For node-privacy, extending sparsification techniques beyond bounded arboricity is another promising direction. Lastly, in continual release, we would like to extend our results to the fully dynamic setting which contain edge deletions in addition to insertions.

\section*{Acknowledgments}
Felix Zhou acknowledges the support of the Natural Sciences and Engineering Research Council of Canada (NSERC).
Quanquan C. Liu and Felix Zhou are supported by a Google
Academic Research Award and NSF Grant \#CCF-2453323.

\clearpage

\begingroup
\sloppy
\printbibliography
\endgroup

\clearpage
\section{Additional Preliminaries}\label{apx:prelims-additional}
\subsection{Continual Release}

In this section, we define the continuous release model~\cite{DNPR10,CSS11}. We first define the concepts of edge-order and adjacency-list order streams 
and then edge-neighboring and node-neighboring streams.

\begin{definition}[Edge-Order Graph Stream~\cite{wagaman2024time}]\label{def:edge-graph-stream}
    In the edge-order continual release model, a graph stream $S \in \mathcal{S}^T$ of length $T$ is a $T$-element vector
    where the $t$-th element is an edge update \mbox{$u_t = \{v, w, \Insert\}$}
    (an edge insertion of edge $\{v, w\}$),
    or $\bot$
    (an empty operation). 
\end{definition}

\begin{definition}[Adjacency-List Order Graph Stream (adapted from~\cite{mcgregor2016better})]\label{def:adj-graph-stream}
    In the adjacency-list order continual release model, a graph stream $S \in \mathcal{S}^T$ of length $T$ is a $T$-element vector
    where the $t$-th element is a node update $u_t = \{v\}$, 
    an edge update \mbox{$u_t = \{v, w, \Insert\}$}
    (an edge insertion of edge $\{v, w\}$),
    or $\bot$
    (an empty operation). 
    Each node update is followed (in an arbitrary order) by all adjacent edges.
\end{definition}

We use $G_t$ and $E_t$ to denote the graph induced by the set of updates in the stream $S$ up to and including update $t$.
Now, we define neighboring streams as follows. Intuitively,
 two graph streams are edge neighbors if one can be obtained from the other by removing one edge update 
(replacing the edge update by an empty update in a single timestep); and they are node-neighbors if one can be obtained from the other
via removing all edge updates incident to a particular vertex.

\begin{definition}[Edge Neighboring Streams]\label{def:neighboring-streams}
    \sloppy
    Two streams of updates, $S = [u_1, \dots, u_T]$ and $S' = [u'_1, \dots, u'_T]$, are \emph{edge-neighboring} if there exists
    exactly one timestamp $t^* \in [T]$ (containing an edge update in $S$ or $S'$) where $u_{t^*} \neq u'_{t^*}$ and for all $t \neq t^* \in [T]$, 
    it holds that $u_t = u'_t$. Streams may contain
    any number of empty updates, i.e.\ $u_t = \bot$. Without loss of generality, we assume for
    the updates $u_{t^*}$ and $u'_{t^*}$,
    it holds that $u'_{t^*} = \bot$ and $u_{t^*} = e_{t^*}$ is an
    edge insertion.
\end{definition}

\begin{definition}[Node Neighboring Streams]\label{def:node-neighboring-streams}
    \sloppy
    Two streams of updates, $S = [u_1, \dots, u_T]$ and $S' = [u'_1, \dots, u'_T]$, are \emph{node-neighboring} if there exists
    exactly one vertex $v^*\in V$ where for all $t\in [T]$,
    $u_{t} \neq u'_{t}$ only if $u_t$ or $u_t'$ is an edge insertion of an edge adjacent to $v^*$.
    Streams may contain
    any number of empty updates, i.e.\ $u_t = \bot$. 
    Without loss of generality, we assume for
    the updates $u_{t}\neq u'_{t}$,
    it holds that $u'_{t} = \bot$ and $u_{t} = e_{t}$ is an
    edge insertion of an edge adjacent to $v^*$.
\end{definition}

We now define edge-privacy and node-privacy for edge-neighboring and node-neighboring streams, respectively.

\begin{definition}[Edge Differential Privacy for Edge-Neighboring Streams]\label{def:edge DP}
  Let $\varepsilon \in (0, 1)$.
  An algorithm $\mathcal{A}(S): \mathcal{S}^T \rightarrow \mathcal{Y}^T$ that takes as input a graph stream $S \in \mathcal{S}^T$
  is said to be \emph{$\varepsilon$-edge differentially private (DP)}
  if for any pair of edge-neighboring graph streams $S, S'$ (\Cref{def:neighboring-streams})
  and for every $T$-sized vector of outcomes $Y\subseteq \text{Range}(\mathcal A)$,
  \[
    \prob\left[ \mathcal A(S)\in Y \right]
    \leq e^\varepsilon \cdot \prob\left[ \mathcal A(S')\in Y \right].
  \]
\end{definition}

\begin{definition}[Node Differential Privacy for Node-Neighboring Streams]\label{def:node DP}
  Let $\varepsilon \in (0, 1)$.
  An algorithm $\mathcal A(S): \mathcal{S}^T \rightarrow \mathcal{Y}^T$ that takes as input a graph stream $S \in \mathcal{S}^T$
  is said to be \emph{$\varepsilon$-node differentially private (DP)}
  if for any pair of node-neighboring graph streams $S, S'$ (\Cref{def:node-neighboring-streams})
  and for every $T$-sized vector of outcomes $Y\subseteq \text{Range}(\mathcal A)$,
  \[
    \prob\left[ \mathcal A(S)\in Y \right]
    \leq e^\eps \cdot \prob\left[ \mathcal A(S')\in Y \right].
  \]
\end{definition}

\subsection{Differential Privacy Tools}\label{sec:privacy-tools}
In this section, we state the privacy tools we use in our paper. 
The adaptive Laplace mechanism is a formalization of the Laplace mechanism for adaptive inputs that 
we employ in this work (and is used implicitly in previous works).
For completeness,
we include a proof of \Cref{lem:adaptive-laplace} in \Cref{apx:adaptive-laplace}.
\begin{restatable}[Adaptive Laplace Mechanism; used implicitly in~\cite{wagaman2024time}]{lemma}{adaptiveLaplace}\label{lem:adaptive-laplace}
    Let $f_1,\ldots,f_k$ with $f_i:\mathcal{G}\to\mathbb{R}$ be a sequence of adaptively chosen queries and let $f$ denote the vector $(f_1,\ldots,f_k)$. Suppose that the adaptive adversary gives the guarantee that the vector $f$ has $\ell_1$-sensitivity $\Delta$, regardless of the output of the mechanism. 
    Then the Adaptive Laplace Mechanism $\mathcal{M}$ 
    with vector-valued output $\tilde f(G)$
    where \mbox{$\tilde f_i(G) \coloneqq f_i(G)+\lap(\Delta/\eps)$}
    for each query $f_i$ is $\eps$-differentially private.
\end{restatable}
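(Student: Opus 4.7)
The plan is to apply the standard analysis of the Laplace mechanism but carefully account for the adaptive structure via a chain-rule decomposition of the output density. The essential observation is that the adversary's queries depend only on the transcript of previously released noisy answers, not on the underlying graph, so fixing an output history fixes the entire vector of queries $(f_1,\ldots,f_k)$ simultaneously for both $G$ and $G'$.

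First, I would fix a pair of neighboring graphs $G, G'$ and an output vector $(y_1, \ldots, y_k)\in \mathbb{R}^k$. I would factor the density of the mechanism's output via the chain rule:
\[
    p(y_1, \ldots, y_k \mid G) = \prod_{i=1}^k p_i\bigl(y_i \mid y_1, \ldots, y_{i-1}, G\bigr),
\]
where each conditional density $p_i(\cdot \mid y_{<i}, G)$ is the Laplace density centered at $f_i(G)$, and crucially the query $f_i$ itself is determined by the prefix $y_1,\ldots,y_{i-1}$ (since the adversary chooses it adaptively from the transcript). The same factorization holds for $G'$ with the same $f_i$'s, since conditioning on the same output prefix fixes the adversary's choices.

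Next, I would bound the pointwise ratio of joint densities. Using the Laplace density formula and then the triangle inequality:
\[
    \frac{p(y_1,\ldots,y_k \mid G)}{p(y_1,\ldots,y_k \mid G')}
    = \prod_{i=1}^k \frac{\exp\bigl(-\tfrac{\eps}{\Delta}\lvert y_i - f_i(G)\rvert\bigr)}{\exp\bigl(-\tfrac{\eps}{\Delta}\lvert y_i - f_i(G')\rvert\bigr)}
    \le \exp\!\left(\frac{\eps}{\Delta}\sum_{i=1}^k \lvert f_i(G) - f_i(G')\rvert\right).
\]
The sum in the exponent is exactly $\|f(G) - f(G')\|_1$ along the realized transcript. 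By the adversary's guarantee that this $\ell_1$-sensitivity is bounded by $\Delta$ no matter what outputs occur, the exponent is at most $\eps$. Integrating over any measurable event $Y \subseteq \mathbb{R}^k$ then yields $\Pr[\tilde f(G) \in Y] \le e^\eps \Pr[\tilde f(G') \in Y]$, which is the required $\eps$-DP guarantee.

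The main subtlety, and the only place I would have to be careful, is the interplay between adaptivity and the sensitivity hypothesis: one must verify that ``sensitivity $\Delta$ regardless of output'' legitimately applies to the realized transcript appearing in the density, rather than to some a priori worst case. This is handled by the observation above that conditioning on a shared output prefix fixes the same query functions $f_i$ on both neighbors, so the hypothesis applies verbatim along each sample path. Aside from this conceptual point, the calculation is an immediate extension of the non-adaptive Laplace mechanism proof.
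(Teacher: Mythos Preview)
Your proposal is correct and follows essentially the same route as the paper: chain-rule factorization of the output density, Laplace density ratio, triangle inequality, and the $\ell_1$-sensitivity bound. If anything, you are slightly more explicit than the paper about why conditioning on a common output prefix pins down the same adaptively chosen $f_i$ for both neighboring inputs, which is exactly the point that makes the adaptive case go through.
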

The Multidimensional AboveThreshold mechanism (\Cref{alg:multidimensional Above Threshold}) is a generalization of the AboveThreshold mechanism~\cite{lyu2017understanding}
which is traditionally used to privately answer sparse threshold queries.
\begin{lemma}[Multidimensional AboveThreshold Mechanism~\cite{DLL23}]
    \label{lem:mat} %
    \cref{alg:multidimensional Above Threshold} is $\eps$-LEDP.
\end{lemma}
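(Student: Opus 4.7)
The plan is to adapt the standard AboveThreshold privacy argument~\cite{lyu2017understanding} to the local, multidimensional setting.
Since the statement asserts $\eps$-LEDP, by the definition of LEDP (\Cref{def:ledp}), it suffices to show that the full transcript produced by each invocation of the mechanism is $\eps$-edge DP as a function of the private adjacency list $\adj$. Fix a party $v$ and two edge-neighboring adjacency lists $\adj, \adj'$ for $v$; because the mechanism processes a sequence of adaptively chosen queries $f_1, f_2, \ldots$ each of which has $\ell_1$-sensitivity at most $\Delta = 1$ with respect to a single edge change, the only probabilistic information revealed to the transcript is (i) the sequence of ``below threshold'' outputs, (ii) the index at which a ``above threshold'' output occurs (in each dimension), and (iii) possibly the query identity that triggered the stop. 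Everything else in the transcript is a deterministic function of the public coins and the prior outputs.

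The main step is the usual two-part calculation. First, condition on the noisy threshold(s) $\hat T$, which (as in standard AboveThreshold) are drawn once from a Laplace distribution with scale proportional to $\nicefrac 1\eps$. Conditioned on $\hat T$, the event that all queries up to $f_{k-1}$ fall ``below'' and $f_k$ falls ``above'' is a product of events on the per-query Laplace noise $\nu_i = \Lap(\nicefrac{2}{\eps})$. Writing out the densities and applying the shift $\nu_i \mapsto \nu_i + (f_i(\adj') - f_i(\adj))$ on neighboring inputs, one obtains a pointwise ratio bounded by $e^{\eps/2}$ per invocation, for a total ratio of $e^{\eps/2}$ on the below-threshold queries (since the shifts telescope by the $\ell_1$-sensitivity bound). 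Second, integrating back over $\hat T$ contributes another $e^{\eps/2}$, giving $e^\eps$ overall. The multidimensional extension~\cite{DLL23} is handled by observing that the per-dimension thresholds use independent Laplace noise and that the relevant sensitivity bound is still $\Delta$ per dimension (since a single edge flip changes at most one coordinate of the query vector, or the per-dimension sensitivities sum to the total budget used). Thus the same conditioning argument, applied coordinate-wise and combined by basic composition over the dimensions whose budgets were allocated inside the mechanism, yields an overall $e^\eps$ factor.

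The hard part is the bookkeeping for the multidimensional variant: one must verify that the noise scale chosen internally to \Cref{alg:multidimensional Above Threshold} is calibrated so that the per-dimension contributions sum to exactly $\eps$, and that ``adaptive'' choice of which dimension to query next does not leak information beyond what is already accounted for by the transcript. This is precisely the content of the \cite{DLL23} analysis, which we invoke directly. Since the mechanism is executed as a local randomizer by a single party on its own adjacency list and the only output placed on the transcript is the (noisy) comparison outcomes, the $\eps$-edge DP guarantee on the randomizer lifts to $\eps$-LEDP by \Cref{def:local-randomizer,def:ledp}.
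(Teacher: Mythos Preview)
The paper does not prove this lemma at all; it is stated as an imported result from \cite{DLL23} and used as a black box. So your attempt to sketch a proof goes beyond what the paper does, and ultimately your own last paragraph defers to \cite{DLL23} as well, which is the only correct move here.

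That said, the explanatory sketch you give before deferring is misleading in one important place. You write that the multidimensional extension is ``handled by observing that the per-dimension thresholds use independent Laplace noise \ldots\ combined by basic composition over the dimensions.'' This is not how the MAT analysis works, and cannot work: basic composition over $d$ coordinates, each run as a standalone AboveThreshold, would give $d\eps$-DP, not $\eps$-DP. The point of MAT is that a single edge change perturbs the \emph{vector} of queries by at most $\Delta$ in $\ell_1$, and the analysis tracks this $\ell_1$ budget jointly across all coordinates in one integrated argument (this is why the noise scales are $\Lap(2\Delta/\eps)$ on the thresholds and $\Lap(4\Delta/\eps)$ on the queries, not $\Lap(2d\Delta/\eps)$ etc.). Your parenthetical ``a single edge flip changes at most one coordinate of the query vector'' is also not a general assumption of the mechanism; only the $\ell_1$ bound is assumed.

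A smaller point: your LEDP framing (``fix a party $v$ and two edge-neighboring adjacency lists'') does not match how \Cref{alg:multidimensional Above Threshold} is stated, since it takes the full graph $G$. The LEDP claim relies on the intended deployment in which coordinate $j$ is computed locally by node $j$ from its own adjacency list; that structural assumption should be made explicit rather than folded into a single-party local-randomizer argument.
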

\begin{algorithm2e}[t]
\caption{Multidimensional AboveThreshold (MAT)~\cite{DLL23}}
\label{alg:multidimensional Above Threshold}
\textbf{Input:} Graph $G$, adaptive queries $\{\vec{f}_1, \dots, \vec{f}_n\}$, threshold vector $\vec{T}$, privacy parameter $\eps$, 
$\ell_1$-sensitivity $\Delta$.\\
\textbf{Output:} A sequence of responses $\{\vec{a}_1, \dots, \vec{a}_n\}$ where $a_{i,j}$ indicates if $f_{i,j}(G)\ge \vec{T}_j$\\
\begin{algorithmic}[1]
\FOR{$j=1,\ldots,d$}
    \STATE $\hat{T}_j\leftarrow \vec{T}_j+\text{Lap}(2\Delta/\eps)$
\ENDFOR
\STATE
\FOR{each query $\vec{f}_i \in \{\vec{f}_1, \dots, \vec{f}_n\}$}
\FOR{$j=1,\ldots,d$}
\STATE Let $\nu_{i,j}\leftarrow\text{Lap}(4\Delta/\eps)$
\IF{$f_{i,j}(G)+\nu_{i,j}\ge \hat{T}_j$}
\STATE \textbf{Output} $a_{i,j}=$ ``above''
\STATE Stop answering queries for coordinate $j$
\ELSE
\STATE \textbf{Output} $a_{i,j}=$ ``below''
\ENDIF
\ENDFOR
\ENDFOR
\end{algorithmic}
\end{algorithm2e}
In the privacy analysis of our algorithms,
we will often argue that each subroutine is DP
and hence the whole algorithm is DP.
However,
since the access to private data is interactive,
we will need some form of concurrent composition theorem,
such as the one stated below.

\begin{lemma}[Concurrent Composition Theorem~\cite{vadhan2021concurrent}]
    \label{lem:concurrent-composition}
    If $k$ interactive mechanisms $\mech_1, \dots, \mech_k$ are each $(\eps, \delta)$-differentially private, then their concurrent composition is 
    $\left(k\cdot \eps, \frac{e^{k\eps}-1}{e^{\eps}-1} \cdot \delta\right)$-differentially private.
\end{lemma}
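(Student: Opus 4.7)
}
The plan is to prove the result by a hybrid argument combined with a simulation (reduction) step that collapses the concurrent interaction with $k$ mechanisms into an interaction with a single mechanism. Formally, fix any adversary $\mathcal{B}$ that interactively interleaves queries to the $k$ mechanisms $\mech_1,\dots,\mech_k$ on neighboring datasets $D \sim D'$. Writing $\mathsf{View}_{\mathcal{B}}(\vec{X})$ for the random transcript produced when $\mech_i$ is instantiated on dataset $X_i$, the goal is to show $\mathsf{View}_{\mathcal{B}}(D,\dots,D) \approx_{k\eps,\, \delta'}\mathsf{View}_{\mathcal{B}}(D',\dots,D')$ for $\delta' = \tfrac{e^{k\eps}-1}{e^{\eps}-1}\delta$.

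The first step is to introduce the hybrids $H_0,H_1,\dots,H_k$ where $H_i$ runs $\mech_1,\dots,\mech_i$ on $D'$ and $\mech_{i+1},\dots,\mech_k$ on $D$. Thus $H_0$ is the all-$D$ experiment and $H_k$ is the all-$D'$ experiment. The second step is to bound the distance between consecutive hybrids $H_{i-1}$ and $H_i$, which differ only in whether $\mech_i$ is run on $D$ or $D'$. Here I would invoke a \emph{simulation lemma}: define a new non-interactive-party adversary $\mathcal{B}_i$ that internally emulates $\mech_1,\dots,\mech_{i-1}$ on $D'$ and $\mech_{i+1},\dots,\mech_k$ on $D$ using its own randomness (it has these datasets ``in its head'' because they are not the dataset of interest for this hybrid step), and routes $\mathcal{B}$'s queries to $\mech_i$ externally. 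Under this reduction, the transcript of $\mathcal{B}_i$ interacting with $\mech_i$ perfectly simulates $H_{i-1}$ (resp.\ $H_i$) when $\mech_i$ uses input $D$ (resp.\ $D'$). Since $\mech_i$ is $(\eps,\delta)$-DP against every (adaptive) adversary, we obtain $H_{i-1}\approx_{\eps,\delta} H_i$.

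The third step is to chain the $k$ indistinguishability bounds together. I would use the standard fact that if $P\approx_{\eps,\delta} Q$ and $Q\approx_{\eps,\delta} R$, then for every event $E$ one has $P(E)\le e^\eps Q(E)+\delta \le e^{2\eps}R(E) + (1+e^\eps)\delta$, i.e.\ $P\approx_{2\eps,\,(1+e^\eps)\delta} R$. Iterating this $k$ times along the hybrid chain yields
\[
  \mathsf{View}_{\mathcal{B}}(D,\dots,D)
  \;\approx_{k\eps,\;(1+e^\eps+e^{2\eps}+\dots+e^{(k-1)\eps})\delta}\;
  \mathsf{View}_{\mathcal{B}}(D',\dots,D'),
\]
and the geometric series collapses to $\tfrac{e^{k\eps}-1}{e^\eps-1}\delta$, giving the claimed bound.

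I expect the main obstacle to be carefully defending the simulation step: one must verify that $\mathcal{B}_i$ is a \emph{valid} adaptive adversary for $\mech_i$ alone, i.e.\ that its queries to $\mech_i$ can be viewed as an interactive protocol whose randomness is independent of the sensitive input to $\mech_i$. This requires spelling out the interaction model (rounds indexed by $\mathcal{B}$'s queries, with $\mathcal{B}_i$ locally sampling the other mechanisms' coins), and ensuring that the responses from the simulated mechanisms do not depend on the input to $\mech_i$. Once the simulation is formalized, the rest is bookkeeping for the hybrid/telescoping step.
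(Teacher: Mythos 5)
Your proposal is correct and follows essentially the same route as the proof of this result in the cited source~\cite{vadhan2021concurrent}: a hybrid argument over which mechanisms receive $D$ versus $D'$, a simulation reduction showing each consecutive pair of hybrids corresponds to a single adversary interacting with $\mech_i$ alone (so its $(\eps,\delta)$-DP applies), and telescoping the per-step bounds to get the geometric sum $(1+e^{\eps}+\dots+e^{(k-1)\eps})\delta = \frac{e^{k\eps}-1}{e^{\eps}-1}\delta$. The paper itself states this lemma by citation without proof, and your reconstruction, including the flagged care needed to validate $\mathcal{B}_i$ as a legitimate adaptive adversary for $\mech_i$, is the standard and correct argument.
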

Finally, we use the following privacy amplification theorem for subsampling. 
\begin{lemma}[Privacy Amplification via Subsampling~\cite{li2012sampling}]
    \label{lem:privacy-amplification}
    {Let $\eps\in (0, 1]$.}
    If elements from the private dataset are sampled with probability $p$ and we are given a $(\eps, \delta)$-DP algorithm $\alg$ on the original dataset, then
    running $\alg$ on the subsampled dataset gives a $(2p\eps, p \cdot \delta)$-DP algorithm for $\eps \in (0, 1)$. 
\end{lemma}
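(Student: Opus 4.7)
The plan is to use a coupling argument on the randomness of the subsampling step. Fix neighboring datasets $D \sim D'$; without loss of generality, $D' = D \cup \{x\}$ for some element $x$. Let $S$ and $S'$ denote the random subsamples of $D$ and $D'$ respectively. I would couple the independent Bernoulli$(p)$ coins so that every element in $D \cap D' = D$ uses the same coin (giving the identical contribution to $S$ and $S'$), while the coin $C$ for $x$ affects only $S'$. Thus $S = S'$ whenever $C = 0$ (probability $1-p$), and $S' = S \cup \{x\}$ whenever $C = 1$ (probability $p$); in the latter case $S$ and $S'$ are neighboring in the sense required by the $(\varepsilon,\delta)$-DP guarantee for $\alg$.

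Fix a measurable event $T \subseteq \text{Range}(\alg)$. Conditioning on the coin $C$, I would write
\[
    \Pr[\alg(S') \in T]
    \;=\; p \cdot \Pr[\alg(S \cup \{x\}) \in T] \;+\; (1-p)\,\Pr[\alg(S) \in T].
\]
Applying the DP guarantee to the first summand gives $\Pr[\alg(S \cup \{x\}) \in T] \le e^\varepsilon \Pr[\alg(S) \in T] + \delta$, and substituting yields
\[
    \Pr[\alg(S') \in T]
    \;\le\; \bigl(1 + p(e^\varepsilon - 1)\bigr)\,\Pr[\alg(S) \in T] \;+\; p\,\delta.
\]
Using the elementary inequality $e^\varepsilon - 1 \le 2\varepsilon$ for $\varepsilon \in (0,1]$, the multiplicative factor is bounded by $1 + 2p\varepsilon \le e^{2p\varepsilon}$, which establishes the desired $(2p\varepsilon, p\delta)$-DP bound in one direction.

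For the reverse direction, I would use the DP guarantee in the form $\Pr[\alg(S) \in T] \le e^\varepsilon \Pr[\alg(S \cup \{x\}) \in T] + \delta$ to lower bound $\Pr[\alg(S \cup \{x\}) \in T]$, then plug back into the same decomposition to obtain
\[
    \Pr[\alg(S') \in T] \;\ge\; \bigl(pe^{-\varepsilon} + 1 - p\bigr)\,\Pr[\alg(S) \in T] \;-\; p\,\delta,
\]
and rearrange. Here the factor $1/(1 - p(1-e^{-\varepsilon}))$ is controlled using $1 - e^{-\varepsilon} \le \varepsilon$ and $1/(1 - p\varepsilon) \le e^{2p\varepsilon}$ for $p\varepsilon$ small, matching the other direction up to the stated constants on $\delta$. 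The main obstacle, such as it is, is exactly this bookkeeping: matching the promised $p\delta$ term in both directions requires being careful with the $1/(1 - p\varepsilon)$ expansion and the assumption $\varepsilon \in (0,1]$, which I would track explicitly in the constants.
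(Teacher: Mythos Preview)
The paper does not prove this lemma; it is stated in the preliminaries as a cited result from \cite{li2012sampling}, so there is no ``paper's own proof'' to compare against.

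Your argument is the standard one and is correct. The coupling on the Bernoulli coins, the decomposition $\Pr[\alg(S')\in T] = p\,\Pr[\alg(S\cup\{x\})\in T] + (1-p)\,\Pr[\alg(S)\in T]$, and the bound $e^\eps - 1 \le 2\eps$ for $\eps\in(0,1]$ together yield the forward direction cleanly. For the reverse direction, a slightly tidier route than the one you sketch is to first establish the exact amplified parameter $\eps' = \ln\bigl(1 + p(e^\eps - 1)\bigr)$ and then verify that $(1+p(e^\eps-1))\bigl(1 - p(1-e^{-\eps})\bigr) \ge 1$ reduces to $p\le 1$; this handles both directions with the same $\eps'$ without needing a separate expansion of $1/(1-p\eps)$, and the bound $\eps' \le 2p\eps$ follows at the end. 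Either way, the bookkeeping you flag is routine and your plan goes through.
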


\subsection{Concentration Inequalities}
\begin{lemma}\label{lem:laplace-noise-concentration}
    Given a random variable $X \sim Lap(b)$ drawn from a Laplace distribution with expectation $0$, the probability $|X| > c\ln(n)$ is $n^{-\frac{c}b}$.
\end{lemma}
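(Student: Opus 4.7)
The plan is to simply compute the tail probability of the Laplace distribution directly from its density. Recall that a random variable $X\sim\mathrm{Lap}(b)$ with mean $0$ has probability density function $f_X(x) = \frac{1}{2b} e^{-|x|/b}$ for $x\in\mathbb{R}$. Since the density is symmetric about $0$, I would first reduce the two-sided tail to a one-sided tail by writing
\[
    \Pr[|X| > c\ln(n)]
    = 2\Pr[X > c\ln(n)].
\]

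Next, I would evaluate the resulting one-sided tail by direct integration:
\[
    \Pr[X > c\ln(n)]
    = \int_{c\ln(n)}^{\infty} \frac{1}{2b} e^{-x/b} \diff x
    = \frac{1}{2} e^{-c\ln(n)/b}
    = \frac{1}{2} n^{-c/b}.
\]
Multiplying by $2$ yields the claimed bound $\Pr[|X| > c\ln(n)] = n^{-c/b}$.

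The argument is entirely routine, so there is no real obstacle — the only mild point to flag is that the statement in the lemma is written as an equality rather than an upper bound, which is indeed what the computation produces exactly. In applications elsewhere in the paper (for example, when we write $|\mathrm{Lap}(\beta)|\le 3c\beta\ln(n)$ with probability at least $1-n^{-3c}$), one uses the lemma by rescaling $b\mapsto \beta$ and $c\mapsto 3c\beta$ so that $c/b = 3c$, giving the claimed high-probability bound.
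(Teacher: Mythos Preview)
Your proof is correct and entirely standard; the paper itself states this lemma without proof (it appears as a bare statement in the appendix on concentration inequalities), so there is nothing to compare against.
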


\begin{theorem}[Multiplicative Chernoff Bound]\label{thm:multiplicative-chernoff}
    Let $X = \sum_{i = 1}^n X_i$ where each $X_i$ is a Bernoulli variable which takes value $1$ with probability $p_i$ and $0$
    with probability $1-p_i$. Let $\mu = \expect[X] = \sum_{i = 1}^n p_i$. Then, it holds:
    \begin{enumerate}[noitemsep,topsep=0em]
        \item Upper Tail: $\prob(X \geq (1+\psi) \cdot \mu) \leq \exp\left(-\frac{\psi^2\mu}{2 + \psi}\right)$ for all $\psi > 0$;
        \item Lower Tail: $\prob(X \leq (1-\psi) \cdot \mu) \leq \exp\left(-\frac{\psi^2\mu}{3}\right)$ for all $0 < \psi < 1$.
    \end{enumerate}
\end{theorem}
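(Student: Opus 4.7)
The plan is to prove this classical bound via the standard Chernoff (moment generating function) method. First I would apply Markov's inequality to the exponentiated random variable: for any $t > 0$,
\[
\Pr[X \geq (1+\psi)\mu] \;=\; \Pr[e^{tX} \geq e^{t(1+\psi)\mu}] \;\leq\; \frac{\mathbb{E}[e^{tX}]}{e^{t(1+\psi)\mu}}.
\]
Because the $X_i$ are independent, the MGF factorizes as $\mathbb{E}[e^{tX}] = \prod_i \mathbb{E}[e^{tX_i}]$, and for each Bernoulli summand $\mathbb{E}[e^{tX_i}] = 1 + p_i(e^t-1) \leq \exp(p_i(e^t-1))$ by the elementary inequality $1+x \leq e^x$. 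Multiplying across $i$ gives $\mathbb{E}[e^{tX}] \leq \exp(\mu(e^t-1))$.

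For the upper tail I would then optimize the exponent $\mu(e^t-1) - t(1+\psi)\mu$ by setting $t = \ln(1+\psi)$ (valid since $\psi > 0$), which yields the well-known intermediate bound
\[
\Pr[X \geq (1+\psi)\mu] \;\leq\; \left(\frac{e^{\psi}}{(1+\psi)^{1+\psi}}\right)^{\mu}.
\]
The last step is to convert this into the form $\exp(-\psi^2 \mu/(2+\psi))$ stated in the theorem. This follows from the analytic inequality $\ln(1+\psi) \geq \frac{2\psi}{2+\psi}$ for $\psi \geq 0$, which after multiplying by $(1+\psi)$ and subtracting $\psi$ yields $(1+\psi)\ln(1+\psi) - \psi \geq \frac{\psi^2}{2+\psi}$, exactly what is needed.

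For the lower tail the approach is symmetric: apply Markov to $e^{-tX}$ with $t > 0$, use $\mathbb{E}[e^{-tX_i}] \leq \exp(-p_i(1-e^{-t}))$, optimize at $t = -\ln(1-\psi)$ (valid since $0 < \psi < 1$), and then deduce the cleaner form $\exp(-\psi^2\mu/3)$ from the inequality $(1-\psi)\ln(1-\psi) + \psi \geq \psi^2/3$ valid on $(0,1)$, which is verified by comparing Taylor expansions term-by-term. The main technical obstacle is purely in the final scalar inequalities that convert the tight but ugly exponents into the simpler closed forms above; everything else is the routine Chernoff-method pipeline. Since this result is entirely standard, I would likely just cite a textbook such as Mitzenmacher–Upfal or Dubhashi–Panconesi rather than reproduce these calculations in full.
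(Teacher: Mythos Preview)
Your proposal is correct and follows the standard moment generating function argument. The paper does not actually prove this theorem: it is simply stated in the appendix as a standard concentration inequality and used as a tool, which is precisely what your final sentence suggests doing.
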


\subsection{Proof of \texorpdfstring{\Cref{lem:adaptive-laplace}}{Lemma}}\label{apx:adaptive-laplace}
We now restate and prove \Cref{lem:adaptive-laplace}.
\adaptiveLaplace*

\begin{proof}%
    Let $G_1,G_2$ be neighboring graphs, and we will let $\Pr[\mathcal{M}(G_1)=z]$, $\Pr[\mathcal{M}(G_2)=z]$ denote the density functions of $\mathcal{M}(G_1)$, $\mathcal{M}(G_2)$ evaluated at $z\in\mathbb{R}^k$, by some abuse of notation. To prove $\epsilon$-differential privacy, we need to show that the ratio of $\Pr[\mathcal{M}(G_i)=z]$ is upper bounded by $\exp(\epsilon)$, for any $z\in\mathbb{R}^k$.

    First, we define some more notation. Let $\mathcal{M}_i(G_1)$, $\mathcal{M}_i(G_2)$ denote the output of the mechanism $\mathcal{M}$ on graphs $G_1$, $G_2$ when answering the $i^{th}$ (adaptive) query. Via some more abuse of notation, let $\Pr[\mathcal{M}_i(G_1)=z_i|\mathcal{M}_j(G_1)=z_j \text{ for } j\in[i-1]]$ and $\Pr[\mathcal{M}_i(G_2)=z_i|\mathcal{M}_j(G_2)=z_j \text{ for } j\in[i-1]]$ denote the conditional density functions of $\mathcal{M}_i(G_1)$ and $\mathcal{M}_i(G_2)$ evaluated at $z_i\in\mathbb{R}$, conditioned on the events $\mathcal{M}_j(G_1)=z_j$ and $\mathcal{M}_{j}(G_2)=z_j$ for $j=1,\ldots,i-1$.

    Finally, fix $z\in\mathbb{R}^k$. We have the following:
    \begin{align}
        \frac{\Pr[\mathcal{M}(G_1)=z]}{\Pr[\mathcal{M}(G_2)=z]}&=\frac{\prod_{i=1}^{k}\Pr[\mathcal{M}_i(G_1)=z_i|\mathcal{M}_j(G_1)=z_j \text{ for } j\in[i-1]]}{\prod_{i=1}^{k}\Pr[\mathcal{M}_i(G_1)=z_i|\mathcal{M}_j(G_1)=z_j \text{ for } j\in[i-1]]}\label{eq:chain-rule}\\
        &=\frac{\prod_{i=1}^{k}\exp\left(-\frac{\epsilon|f_i(G_1)-z_i|}{\Delta}\right)}{\prod_{i=1}^{k}\exp\left(-\frac{\epsilon|f_i(G_2)-z_i|}{\Delta}\right)}\label{eq:conditional-laplace}\\
        &=\prod_{i=1}^{k}\exp\left(-\frac{\epsilon(|f_i(G_1)-z_i|-|f_i(G_2)-z_i|)}{\Delta}\right)\nonumber\\
        &\le \prod_{i=1}^{k}\exp\left(-\frac{\epsilon|f_i(G_1)-f_i(G_2)|}{\Delta}\right)\label{eq:triangle-inequality}\\
        &=\exp\left(-\frac{\epsilon\sum_{i=1}^{k}|f_i(G_1)-f_i(G_2)|}{\Delta}\right)\nonumber\\
        &=\exp\left(-\frac{\epsilon\|f(G_1)-f(G_2)\|_1}{\Delta}\right)\nonumber\\
        &\le\exp(\epsilon)\label{eq:delta-definition}.
    \end{align}
In the above, equality (\ref{eq:chain-rule}) follows by the chain rule for condition probabilities, equality (\ref{eq:conditional-laplace}) is just writing out the density function of the Laplace distribution since we are conditioning on the answers $\mathcal{M}_j(G_1)$ and $\mathcal{M}_j(G_2)$ for the previous queries, inequality (\ref{eq:triangle-inequality}) follows by the triangle inequality, and inequality (\ref{eq:delta-definition}) follows since $\Delta$ is the $\ell_1$-sensitivity of $f$.
\end{proof}

\subsection{Multi-Response Sparse Vector Technique}\label{apx:privacy-tools}
We recall a variant of the sparse vector technique 
that enables multiple responses
as introduced by \citet{lyu2017understanding}.
Let $D$ be an arbitrary (graph) dataset,
$(f_t, \tau_t)$ a sequence of (possibly adaptive) query-threshold pairs,
$\Delta$ an upper bound on the maximum sensitivity of all queries $f_t$,
and an upper bound $c$ on the maximum number of queries to be answered ``above''.
Typically, the AboveThreshold algorithm stops running at the first instance of 
the input exceeding the threshold, but we use the variant where the input 
can exceed the threshold at most $c$ times where $c$ is a parameter passed into the function. 

We use the class $\textsc{SVT}(\eps, \Delta, c)$ (\Cref{alg:sparse vector technique}) where $\eps$ is our privacy parameter, 
$\Delta$ is an upper bound on the maximum sensitivity of incoming queries,
and $c$ is the maximum number of ``above'' queries we can make.
The class provides a $\textsc{ProcessQuery}(query, threshold)$ function
where $query$ is the query to SVT and $threshold$ is the threshold
that we wish to check whether the query exceeds.

\begin{theorem}[Theorem 2 in \cite{lyu2017understanding}]\label{thm:sparse vector technique}
    \Cref{alg:sparse vector technique} is $\varepsilon$-DP.
\end{theorem}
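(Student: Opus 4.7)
The plan is to reduce the privacy analysis of the multi-response SVT to a composition of $c$ essentially independent single-response \textsc{AboveThreshold} instances, following the standard argument in~\cite{lyu2017understanding}. First, I would exploit the fact that the algorithm resamples its noisy threshold immediately after each ``above'' response. This partitions the (adaptive) query stream into $c$ epochs, each ending in exactly one ``above'' answer, with fresh threshold noise and fresh per-query noises across epochs. By standard (adaptive) composition it then suffices to show that each single epoch is $(\varepsilon/c)$-differentially private, with the total loss telescoping to $\varepsilon$.

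Within a single epoch I would carry out the classical integral-shift argument. Fix neighboring datasets $D \sim D'$ and any transcript $(b_1,\dots,b_T,a)$ consisting of $T$ ``below'' answers followed by a single ``above''. Writing the probability of this transcript on $D$ as
\[
    \Pr_D[\text{transcript}]
    = \int p_{\hat\tau}(\hat\tau)\,\prod_{t=1}^{T}\Pr\!\left[\nu_t < \hat\tau - f_t(D)\right]\cdot \Pr\!\left[\nu_{T+1} \geq \hat\tau - f_{T+1}(D)\right]\,d\hat\tau,
\]
I would perform the change of variables $\hat\tau \mapsto \hat\tau + \Delta$. The density of the threshold noise then incurs a multiplicative factor of at most $e^{\Delta/b_\tau}$, where $b_\tau$ is the threshold-noise scale. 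Each ``below'' event is \emph{not} worse on $D$ after the shift, since $\hat\tau + \Delta - f_t(D) \geq \hat\tau - f_t(D')$; hence these terms contribute no privacy loss. The terminal ``above'' event contributes a single factor of at most $e^{2\Delta/b_\nu}$ (a shift of $2\Delta$ in the tail of a Laplace distribution with scale $b_\nu$), where $b_\nu$ is the per-query noise scale.

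Choosing the noise scales so that $\Delta/b_\tau + 2\Delta/b_\nu \leq \varepsilon/c$ --- which matches the standard parameterization $b_\tau = \Theta(c\Delta/\varepsilon)$, $b_\nu = \Theta(c\Delta/\varepsilon)$ used in \Cref{alg:sparse vector technique} --- bounds the per-epoch privacy loss by $\varepsilon/c$, and composing across the $c$ epochs yields the claimed $\varepsilon$-DP guarantee. The main technical subtlety is justifying the change of variables in the presence of \emph{adaptively} chosen queries $f_t$: this is handled by conditioning on the entire previous transcript before each query, so that the analysis reduces to a worst-case bound over a fixed query sequence within the epoch, as in the classical AboveThreshold proof. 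Since the statement coincides with~\cite[Theorem~2]{lyu2017understanding}, I would ultimately just cite their argument for the final calculation.
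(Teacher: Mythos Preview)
Your plan has a concrete gap: it rests on the claim that \Cref{alg:sparse vector technique} ``resamples its noisy threshold immediately after each `above' response,'' and then uses this to split the transcript into $c$ independent epochs that compose. But the algorithm as written samples $\rho\sim\Lap(\Delta/\varepsilon_1)$ \emph{once} at initialization and never refreshes it; the paper even remarks explicitly after \Cref{thm:sparse vector technique} that this variant does not resample the threshold noise. Consequently the epochs share the same $\rho$, are not independent, and the naive composition of $c$ single-response AboveThreshold instances does not apply. Relatedly, your claim that the parameterization has $b_\tau=\Theta(c\Delta/\varepsilon)$ is wrong here: in \Cref{alg:sparse vector technique} we have $b_\tau=\Delta/\varepsilon_1=2\Delta/\varepsilon$, with no $c$ factor; only the per-query noise scales with $c$.

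The argument that actually works (and is the content of \cite[Theorem~2]{lyu2017understanding}) does a \emph{single} integral-shift on $\rho$ over the entire transcript: shifting $\rho$ by $\Delta$ costs $e^{\varepsilon_1}$ once (not $c$ times), makes every ``below'' comparison no worse on $D'$ than on $D$ simultaneously, and leaves each of the at most $c$ ``above'' events to be handled by a $2\Delta$ shift in the corresponding query noise, costing $e^{2\Delta\cdot\varepsilon_2/(2c\Delta)}=e^{\varepsilon_2/c}$ apiece. The total is $e^{\varepsilon_1+\varepsilon_2}=e^{\varepsilon}$. The key structural point you are missing is that the threshold-noise contribution is paid \emph{once globally}, which is precisely why $b_\tau$ need not scale with $c$ and why the epoch decomposition is neither necessary nor valid for this variant.
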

We remark that the version of SVT we employ (\Cref{alg:sparse vector technique}) does not require us to resample the noise for the thresholds (\Cref{svt:threshold noise}) after each query
but we do need to resample the noise (\Cref{svt:query noise}) for the queries after each query.

\begin{algorithm2e}[htp!]
\caption{Sparse Vector Technique}\label{alg:sparse vector technique}
\SetKwFunction{FAlg}{SVT}
\SetKwProg{Fn}{Class}{}{}
Input: privacy budget $\varepsilon$, upper bound on query sensitivity $\Delta$, maximum allowed ``above'' answers $c$ \\
\Fn{\FAlg{$\varepsilon, \Delta, c$}}{
    $\varepsilon_1, \varepsilon_2 \gets \nicefrac\varepsilon2$ \\
    $\rho \gets \lap(\nicefrac\Delta{\varepsilon_1})$ \label{svt:threshold noise} \\
    $\Count \gets 0$ \\

    \SetKwFunction{FAlg}{ProcessQuery}
    \SetKwProg{Fn}{Function}{}{}
    \Fn{\FAlg{$f_t(D), \tau_t$}}{
        \If{$\Count > c$} {
            \Return ``abort'' \\
        }
        \If{$f_t(D) + \lap({2c\Delta}/{\varepsilon_2}) \geq \tau_t + \rho$}{ \label{svt:query noise}
          \Return ``above'' \\
          $\Count\gets \Count + 1$ \\
        }
        \Else{
          \Return ``below'' \\
        }
    }
}
\end{algorithm2e}

\end{document}